\newtheorem{theorem}{Theorem}[section]
\newtheorem{definition}[theorem]{Definition}
\newtheorem*{definition*}{Definition}
\newtheorem{proposition}[theorem]{Proposition}
\newtheorem{lemma}[theorem]{Lemma}
\newtheorem{conjecture}[theorem]{Conjecture}
\newtheorem*{conjecture*}{Conjecture}
\DeclareMathOperator{\vol}{vol}
\DeclareMathOperator{\perm}{perm}
\DeclareMathOperator{\tw}{tw}
\author{Ankit Kumar Jha}
\author{Ion Nechita}
\email{ion.nechita@univ-tlse3.fr}
\address{Laboratoire de Physique Th\'eorique, Universit\'e de Toulouse, CNRS, UPS, France}
\title[Random classical marginal problem \& quantum information theory]{On random classical marginal problems\\with applications to quantum information theory}
\begin{document}

\begin{abstract}
    In this paper, we study random instances of the classical marginal problem. We encode the problem in a graph, where the vertices have assigned fixed binary probability distributions, and edges have assigned random bivariate distributions having the incident vertex distributions as marginals. We provide estimates on the probability that a joint distribution on the graph exists, having the bivariate edge distributions as marginals. Our study is motivated by Fine's theorem in quantum mechanics. We study in great detail the graphs corresponding to CHSH and Bell-Wigner scenarios providing rations of volumes between the local and non-signaling polytopes. 
\end{abstract}

\maketitle

\tableofcontents

\section{Introduction}
Alongside quantum entanglement, quantum nonlocality is one of the main features of quantum theory that sets it apart from classical mechanics. It is the principle that some statistics observed in quantum experiments do not allow for a local realistic explanation. Arguably one of the most impactful developments in the foundations of quantum theory, nonlocality is often modeled in terms of \emph{Bell inequalities} \cite{bell1964einstein,clauser1969proposed}, mathematical relations that impose bounds on the correlations that can be explained by local theories; these inequalities have been experimentally violated, excluding local hidden variable models as possible theories explaining Nature \cite{aspect1982experimental,hensen2015loophole}. Violations of Bell inequalities show that quantum correlations that can be obtained in some experimental setting contain as a \emph{strict} subset classical correlations. Importantly, Popescu and Rohrlich introduce a superset of correlations, called \emph{non-signaling}, that obey the principle from special relativity that no faster-than-light communication is allowed. These non-signaling correlations \cite{popescu1994quantum} form a strict superset of quantum correlations. Understanding how the three sets of classical, quantum, and non-signaling correlations that can be obtained in a given setting is a central problem in the foundations of quantum theory \cite{brunner2014bell,scarani2019bell}. 

This work continues this line of investigation by analizing the containment of the set of local correlations inside the set of non-signaling correlations, in various scenarios encoded by graphs. We connect the problem of computing the ratio of the volumes of these two convex sets (which are polytopes) to two other mathematical problems and, using this connection, we provide exact computations in various specific and general scenarios. The volume of the classical ($\mathcal L$), quantum ($\mathcal Q$), and non-signaling ($\mathcal N$) sets of correlations in the setting of the CHSH game have been computed in \cite{cabello2005much}: 
$$\vol(\mathcal L) = \frac{2^5}{3} < \vol(\mathcal Q) = \frac{3 \pi^2}{2} < \vol(\mathcal N) = 2^4,$$
which leads to a ratio of 
$$\frac{\vol(\mathcal L)}{\vol(\mathcal N)} = \frac 2 3.$$
Further research on the relative volume of classical and quantum correlations have been performed with the help of the tensor norm formalism in \cite{gonzalez2017random,duarte2018concentration}. 

In this work, we shall focus not on the set of correlations itself, but on the set of conditional probabilities (sometimes called behaviours) that yield the correlations in a Bell scenario: $\mathbb P(a,b | x,y)$. The non-signaling condition, which is satisfied by both classical and quantum strategies, states that the marginal distribution with respect to one party should be independent of that party's question: 
\begin{align*}
    \sum_b \mathbb P(a,b | x,y) \quad &\text{does not depend on $y$}\\
    \sum_a \mathbb P(a,b | x,y) \quad &\text{does not depend on $x$}.
\end{align*}
In other words, one can define marginal distributions $\mathbb P(a|x)$ and $\mathbb P(b|y)$. We propose in this work an analysis of the set of local and non-signaling conditional probabilities having a \emph{fixed set of marginals}. We study this problem in a much more general setting than that of non-local games, by considering a bijection with combinatorial objects known as the \emph{correlation polytope} and its relaxation. The correlation polytope is motivated by 0-1 programming and combinatorial optimization, and it is defined via its extremal points. To an arbitrary graph $G=(V,E)$ we associate a polytope $\mathsf{COR}(G)$ in $\mathbb R^{|V|+|E|}$ defined by its extremal points $u = v \sqcup w$, where $v$ is an arbitrary bit string of length $|V|$, while $w$ is a bit string indexed by the edges $e=(x,y) \in E$ with $w_e = v_x \cdot v_y$. Pitowsky \cite{pitowsky1986range} realized that there is an intimate connection between correlation polytopes associated to some graphs and the conditional probabilities that corresponds to classical strategies various non-local games. Moreover, it turns out that these questions are also instances of \emph{classical marginal problems}, where one asks whether a given family of probability distributions is compatible, that is whether there exist a joint probability having the elements of the family as marginals. The connection between the local vs.~non-signaling polytope in quantum information theory and the classical marginal problem is one of the main \emph{conceptual contributions} of this work. These three equivalent formulations admit corresponding \emph{relaxations}, i.e. larger polytopes with a simpler structure. We summarize this situation in the first two columns of the following table. 

\medskip
\begin{center}
\bgroup
\def\arraystretch{1.5}
\begin{tabular}{|r|l|c|}
\hline
\rowcolor[HTML]{C0C0C0} 
\textbf{Polytope}       & \textbf{Relaxation}       & \textit{\textbf{Slice}}             \\ \hline
\rowcolor[HTML]{FFFFFF} 
Classical strategies    & Non-signaling strategies & \textit{fixed Alice/Bob marginals}  \\ \hline
\rowcolor[HTML]{FFFFFF} 
Correlation polytope    & LP-relaxation             & \textit{fixed vertex probabilities} \\ \hline
\rowcolor[HTML]{FFFFFF} 
Compatible probabilites & All probabilities         & \textit{fixed 1-site marginals}     \\ \hline
\end{tabular}
\egroup
\end{center}
\medskip

In this paper, we shall study the \emph{volume ratio} of \emph{slices} of these polytopes. In the non-local game strategy point of view, the slices that we consider correspond to having fixed marginals $\mathbb P(a|x)$ and $\mathbb P(b|y)$. The corresponding notion of slice for the other equivalent formulations are summarized in the third column of the table above. It is worth mentioning the formulation in terms of compatible probability distributions. We are given a graph $G=(V,E)$ and, and, for each vertex $v \in V$, some probability $p_v \in [0,1]$. The volume ratio discussed above corresponds to the following probability: 

\medskip
\noindent\emph{For each edge $e=(v,w) \in E$, sample uniformly a joint probability distribution $(X^{(e)}_v,X^{(e)}_w)$ such that $\mathbb P(X^{(e)}_v = 1) = p_v$ and $\mathbb P(X^{(e)}_w = 1) = p_w$. What is the probability that these pairwise distributions are compatible, i.e. that there exists a family $(Y_v)_{v \in V}$ such that 
$$\forall (v,w) \in E, \qquad (Y_v, Y_w) \overset{\mathrm{dist}}{=} (X^{(e)}_v,X^{(e)}_w) \, ?$$}
\medskip

Similar volume ratio computations have been performed in the optimization literature, where the local and the non-signaling polytopes are known (at least in the case of complete graphs) as the \emph{boolean quadric polytope} and its relaxation \cite{padberg1989boolean}. In particular, the ratio of volumes has been proposed as a measure of the quality of approximation of the local polytope by the (simpler) non-signaling polytope \cite{ko1997volume,lee2020volume}. In the quantum information theory literature, the relative volumes of the classical and the non-signaling polytopes have been computed, in the CHSH game scenario, in \cite{cabello2005much}. 

The \emph{technical contribution} of this work is twofold. On the one hand, we compute the volume ratio of slices the local and the non-signaling polytopes associated of very simple graphs: the triangle, the square, and cyclic graphs in general, for different value of the parameters defining the slices. Note that the triangle graph corresponds to the \emph{Bell-Wigner scenario} \cite{pitowsky1989quantum}, the square graph corresponds to the \emph{CHSH game} \cite{clauser1969proposed}, while the cyclic graph on 5 vertices corresponds to the \emph{KCBS scenario} \cite{klyachko2008simple}. Below are the plots corresponding to the triangle graph (with symmetric and non-symmetric slice parameters) and the square graph. The slice parameter $t$ corresponds to Alice's and Bob's marginals being fixed Bernoulli distributions with paramter $t$. We refer to \cref{prop:K3-vol,prop:K3-skew-vol,prop:K22-vol} for the exact formulas. 
\begin{center}
    \includegraphics[width=.31\textwidth]{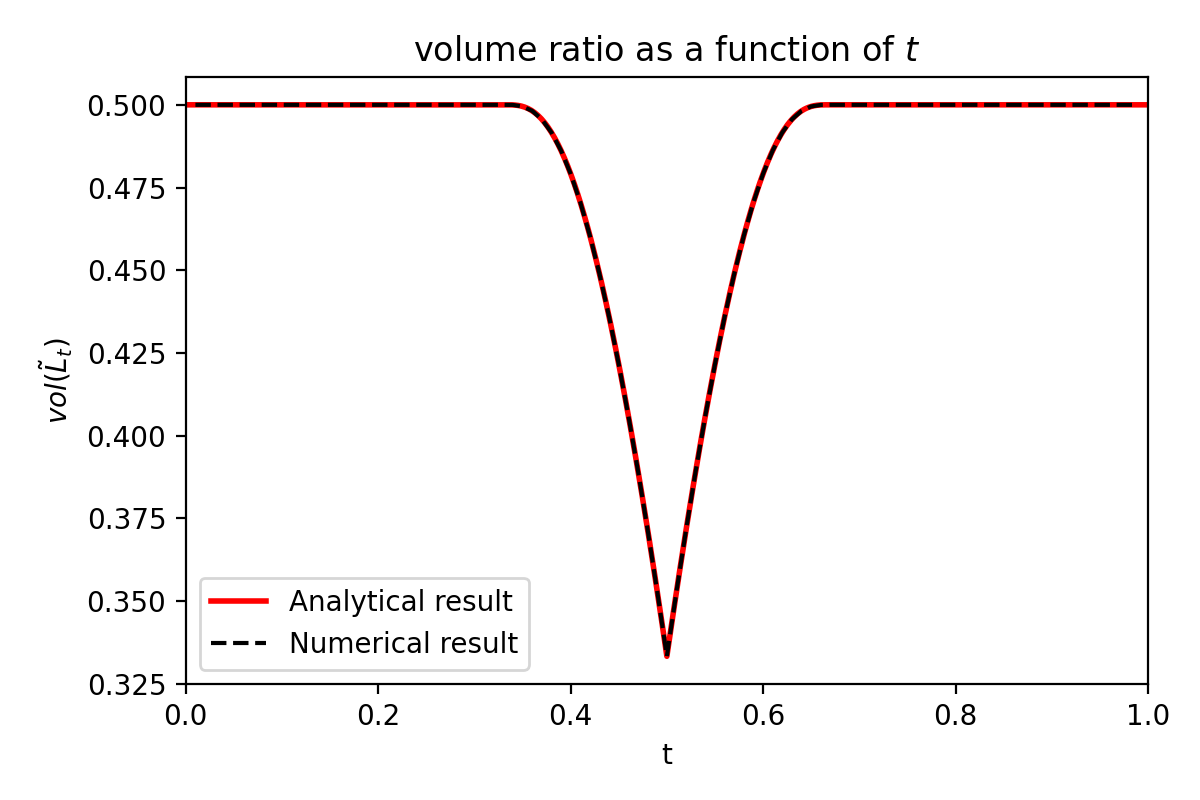}\quad
    \includegraphics[width=.31\textwidth]{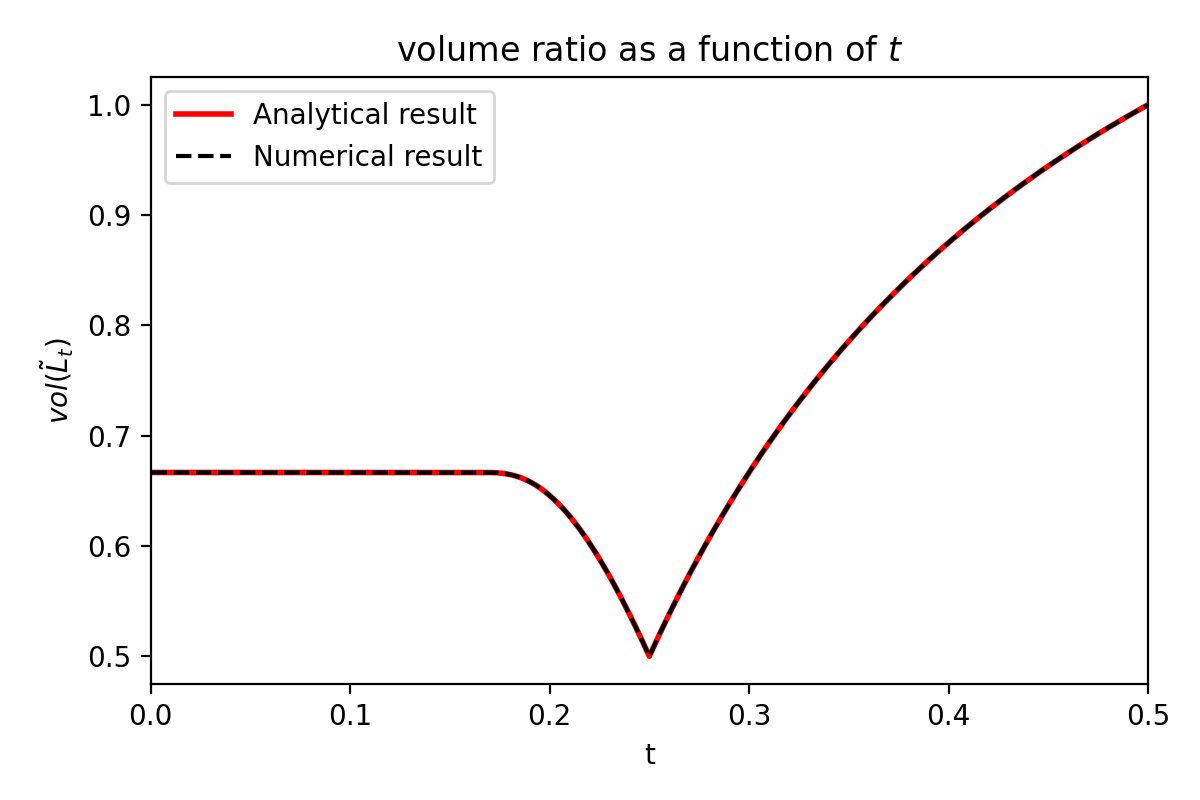}\quad
    \includegraphics[width=.31\textwidth]{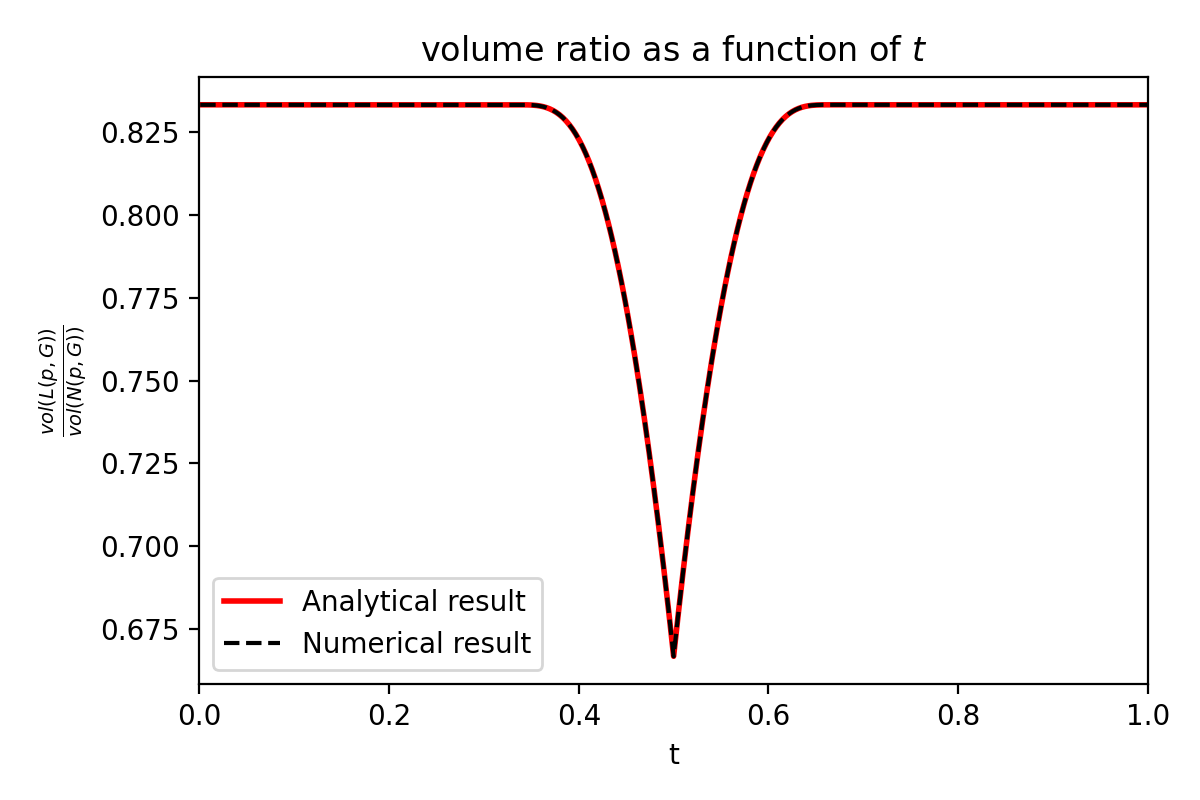}
\end{center}

The shape of the first and of the third graphs above (corresponding to slices having the same parameter) lead us to the second main technical contribution of this paper. One notices on this example that the volume ratio stays constant for values of the parameter $t$ that are close to 0 or close to 1. We define the \emph{fall-off} value $\tau(G)$ of the graph $G$ to be the largest value of $t$ for which the volume ratio is constant on the interval $[0,t]$. In the picture above, we have that the fall-off value of the triangle graph and that of the square graph in the symmetric cases (first and last graph) are equal to $\tau = 1/3$. This leads us to the second contribution of our work, which is more conceptual. 

We study the value of the fall-off parameter for general graphs, and we conjecture that its inverse is one plus the \emph{treewidth} of the graph. The conjecture is trivial for trees, and we prove it for graphs of treewidth two, i.e.~\emph{series-parallel} graphs. The conjecture is also supported by the values computed for simple graphs, such $K_4$ or $K_5$ (see \cref{appendix} for tables containing these values for graphs with 4 and 5 vertices). Moreover, we study how the volume ratio and the fall-off value behave under simple graph operations, using Fourier-Motzkin elimination. 

\bigskip

The paper is structured as follows. We start in \cref{sec:marginal-problem} by introducing the topic from the marginal problem perspective. The next two sections contain a presentation of the equivalent formulations: \cref{sec:slices-correlation} from the perspective of correlation polytopes, while \cref{sec:quantum-info} from the perspective of quantum information theory. In \cref{sec:random-marginal} we define precisely define the volume ratio as the probability that random bivariate distributions are compatible. Sections \cref{sec:triangle,sec:square,sec:cycle,sec:K4} contain the results about, respectively, the triangle graph (or the Bell-Wigner scenario), the square graph (or $K_{2,2}$, or the CHSH scenario), arbitrary cycle graphs, and the complete graph on four vertices $K_4$. In \cref{sec:FM} we gather several important results about general graphs, in particular our results about the relation between the fall-off value of the volume ratio and the treewidth of the graph. In the final \cref{sec:conclusion} we summarize our work and present some open problems and future research directions.

\section{The classical marginal problem}\label{sec:marginal-problem}

The classical marginal problem can be informally stated as follows: 

\medskip

\noindent\emph{When can a set of probability distributions 
$$\{p_J(x_{j_1}, \ldots, x_{j_{|J|}})\}_{J \in \mathcal J}$$
be extended to a joint probability distribution of all the variables $(x_i)$?
}

\medskip

This is well-studied question in probability theory and statistics which goes back at least to Hoeffding \cite{hoeffding1940masstabinvariante} and Fr\'echet \cite{frechet1951tableaux}. One can formalize it as follows. 

\begin{definition}\label{def:classical-marginal-general}
Let $G=(V,E)$ be a (finite) hypergraph, where each vertex $v \in V$ comes with a finite alphabet $\mathcal X_v$. Each hyperedge $E \ni e \subsetneq V$ comes with a probability distribution $p_e$ over the finite set
$$X_e:= \bigtimes_{v \in e} \mathcal X_v.$$

The \emph{classical marginal problem} associated with the hypergraph $G$ and the probability distributions $\{p_e\}_{e \in E}$ asks whether there exists a \emph{joint} probability distribution $p$ over the alphabet $\bigtimes_{v \in V} \mathcal X_v$ having all the $p_e$'s as marginals: 
$$\forall e \in E, \, \forall x \in \mathcal X_e \qquad p_e(x) = \sum_{y \in \mathcal X_{\bar e}} p(x,y),$$
where the variables in $y$ correspond to vertices that do not belong to the hyperedge $e$. If such a $p$ exist, we call the probabilities $(p_e)_{e \in E}$ $G$-compatible. 
\end{definition}

An obvious necessary condition for the marginal distributions to be $G$-compatible is that they should agree on their intersection: 
\begin{equation}\label{eq:marginal-compatibility}
    \forall e, f \in E, \, \forall x \in \mathcal X_{e \cap f}, \qquad \sum_{y \in \mathcal X_{e \setminus f}} p_e(x,y) = \sum_{z \in \mathcal X_{f \setminus e}} p_f(x,z).
\end{equation}

Importantly, these conditions are not sufficient in the general case, making the classical marginal problem an interesting one. In order to give an example to this point, let us introduce the specialization of the classical marginal problem that we will deal with in the current work. We shall consider the following special case, consisting of two main points: 
\begin{itemize}
    \item The individual random variables will be \emph{binary}, i.e.~$\mathcal X_v = \{0,1\}$, for all $v \in V$
    \item All the given marginals will be \emph{bivariate}, i.e.~$|e|=2$, for all $e \in E$. In other words, $G=(V,E)$ will be a (simple, undirected) graph. 
\end{itemize}

To illustrate the type of questions that we shall discuss in this work, consider the following triangle graph
\begin{center}
\begin{tikzpicture}[auto]

    \tikzstyle{vertex}=[circle, draw=blue, fill=blue!10!, ultra thick]
    \tikzstyle{edge}=[draw=black, thick]
    
    \node[vertex] (v1) at (-2,2) {1};
    \node[vertex] (v2) at (2,2) {2};
    \node[vertex] (v3) at (0, 0) {3};
    
    \draw[edge]  (v1) edge node{$p_{12}$} (v2);
    \draw[edge]  (v2) edge node{$p_{23}$} (v3);
    \draw[edge]  (v3) edge node{$p_{13}$} (v1);
    
\end{tikzpicture}
\end{center}
where the three edge bivariate probabilities are identical:
$$p_{12}(\cdot, \cdot) = p_{13}(\cdot, \cdot) = p_{23}(\cdot, \cdot) = 
\bgroup
\def\arraystretch{1.5}
\begin{tabular}{|c|c|}
    \hline 
    0 & 1/2 \\
    \hline 
    1/2 & 0 \\
    \hline
\end{tabular}
\egroup
$$
meaning that all three probabilities are given by
$$p_{ij}(a,b) = \begin{cases} 
1/2 \qquad &\text{ if } a \neq b\\
0 \qquad &\text{ if } a = b,
\end{cases}$$
for $a,b \in \{0,1\}$.
These edge marginals clearly satisfy the condition from Eq.~\eqref{eq:marginal-compatibility}, since their marginals are all symmetric Bernoulli distributions $\mathsf b(1/2)$:
$$p_i(a) = \frac 1 2 \quad a=1,2.$$
Note however that this particular classical marginal problem does not have a solution, that is the three marginals $p_{12}, p_{13}, p_{23}$ are not $K_3$-compatible. Indeed, if they where, there would exist binary random variables $X_{1,2,3}$ on a common probability space having the property
$$\mathbb P(X_1 \neq X_2) = \mathbb P(X_1 \neq X_3) = \mathbb P(X_2 \neq X_3) = 1,$$
which clearly is impossible since $X_i \in \{0,1\}$. This is a toy example of the phenomenon of \emph{frustration}, which has received a lot of attention in statistical physics. 

\section{Slices of the correlation polytope} \label{sec:slices-correlation}

As stated in the previous section, we shall focus in this work on the special case of the classical marginal problem where the given marginals are 2-partite. In this section we further specialize the problem, by fixing the single-variable marginals. This procedure can be very naturally formulated in terms of \emph{correlation polytopes}, and slices thereof. They were introduced by Pitowsky in \cite{pitowsky1986range} and have received a lot of attention in the recent years, see \cite{deza1994applicationsI, deza1994applicationsII} and the references therein. In this section, we follow the presentation from \cite{pitowsky1991correlation}, adapted to the setting of this work. We first introduce the correlation polytope associated to a graph. 

\begin{definition}\label{def:correlation-polytope}
    Given a graph $G=(V,E)$, we define its \emph{correlation polytope}
    $$\mathsf{COR}(G):=\operatorname{conv}\Big\{u_f \, ; \, f:V \to \{0,1\}\Big\} \subset \mathbb R^{|V|+|E|},$$
    where $u_f$ is the 0/1 vector having coordinates
    \begin{align*}
        \forall i \in V \qquad &u_f(i) = f(i) \\
        \forall e=(i,j) \in E \qquad &u_f(e) = f(i)f(j).
    \end{align*}
\end{definition}

As an example, for the triangle graph $K_3$ discussed previously, the $2^3=8$ vectors $u_f$ defined above are the rows of Table \ref{tbl:VE-K3}.
\begin{SCtable}[2][htb]
\bgroup
\def\arraystretch{1.5}
\begin{tabular}{|ccc|ccc|}
\hline
\rowcolor[HTML]{C0C0C0} 
\multicolumn{3}{|c|}{$V$-part}                         & \multicolumn{3}{c|}{$E$-part}                            \\ \hline
\rowcolor[HTML]{EFEFEF} 
1 & 2 & 3 & 12 & 13 & 23 \\ \hline
\hline
0 & 0 & 0 & 0 & 0 & 0 \\ \hline
0 & 0 & 1 & 0 & 0 & 0 \\ \hline
0 & 1 & 0 & 0 & 0 & 0 \\ \hline
0 & 1 & 1 & 0 & 0 & 1 \\ \hline
1 & 0 & 0 & 0 & 0 & 0 \\ \hline
1 & 0 & 1 & 0 & 1 & 0 \\ \hline
1 & 1 & 0 & 1 & 0 & 0 \\ \hline
1 & 1 & 1 & 1 & 1 & 1 \\ \hline
\end{tabular}
\egroup
\caption{The extremal points of the correlation polytope corresponding to the triangle graph $K_3$. The 8 extremal points have coordinates corresponding to the vertices of the graph (``the $V$-part'') and to the edges of the graph (``the $E$-part''). An entry $e_{ij}$ of the $E$-part is computed using the \textsf{AND} operation of the vertices $v_i$ and $v_j$.}
\label{tbl:VE-K3}
\end{SCtable}

The correlation polytope is related to basic probability theory by the following crucial result.

\begin{proposition}[{{\cite[Theorem 1.1]{pitowsky1991correlation}}}]\label{prop:Boole}
    A vector $p$ belongs to the correlation polytope $\mathsf{COR}(G)$ if and only if there exist probability events $(A_i)_{i \in V}$ on some common probability space such that
    \begin{align*}
        \forall i \in V, \qquad &p(i) = \mathbb P(A_i) \\
        \forall e=(i,j) \in E, \qquad &p(e) = \mathbb P(A_i \cap A_j).
    \end{align*}
\end{proposition}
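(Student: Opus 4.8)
The plan is to prove both implications directly from the definition of $\mathsf{COR}(G)$ as the convex hull of the 0/1 vectors $u_f$.

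\medskip

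\noindent\textbf{($\Leftarrow$) Probabilistic data gives a point of the polytope.} Suppose we are given events $(A_i)_{i\in V}$ on a probability space $(\Omega, \mathcal F, \mu)$ with $p(i) = \mu(A_i)$ and $p(e) = \mu(A_i\cap A_j)$ for $e=(i,j)\in E$. For each $\omega\in\Omega$ let $f_\omega: V\to\{0,1\}$ be the indicator function $f_\omega(i) = \mathbf 1_{A_i}(\omega)$. Then $f_\omega(i)f_\omega(j) = \mathbf 1_{A_i\cap A_j}(\omega)$, so the vector $u_{f_\omega}$ records exactly which of the events (and pairwise intersections) contain $\omega$. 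Integrating, $p = \int_\Omega u_{f_\omega}\,d\mu(\omega)$, which exhibits $p$ as an average (a convex combination, in the finite case, obtained by grouping $\omega$'s according to the value of $f_\omega$) of the extremal points $u_f$. Hence $p\in\mathsf{COR}(G)$. I would phrase this cleanly by passing to the at most $2^{|V|}$ distinct functions $f:V\to\{0,1\}$ and setting $\lambda_f = \mu(\{\omega : f_\omega = f\})$, so that $p = \sum_f \lambda_f u_f$ with $\lambda_f\ge 0$ and $\sum_f\lambda_f = 1$.

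\medskip

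\noindent\textbf{($\Rightarrow$) A point of the polytope gives probabilistic data.} Conversely, suppose $p\in\mathsf{COR}(G)$, so $p = \sum_f \lambda_f u_f$ for some convex weights $(\lambda_f)_{f:V\to\{0,1\}}$. Take $\Omega$ to be the set of functions $f:V\to\{0,1\}$ with probability measure $\mu(\{f\}) = \lambda_f$, and define the events $A_i := \{f\in\Omega : f(i) = 1\}$. Then $\mu(A_i) = \sum_{f: f(i)=1}\lambda_f = \sum_f \lambda_f f(i) = \sum_f \lambda_f u_f(i) = p(i)$, and likewise, for $e=(i,j)\in E$, $\mu(A_i\cap A_j) = \sum_{f: f(i)=f(j)=1}\lambda_f = \sum_f \lambda_f f(i)f(j) = \sum_f \lambda_f u_f(e) = p(e)$. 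This is exactly the required representation.

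\medskip

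\noindent\textbf{Remark on difficulty.} Neither direction presents a genuine obstacle; the result is essentially a bookkeeping identity, and the only point requiring a little care is ensuring the convex combination in the $(\Leftarrow)$ direction is finite — which it is, since $u_{f_\omega}$ depends on $\omega$ only through the finitely many values $(\mathbf 1_{A_i}(\omega))_{i\in V}$. (For an infinite probability space one could alternatively invoke that $\mathsf{COR}(G)$, being a polytope, is closed, so the integral representation still lands inside it; but the grouping argument makes even this unnecessary.) I would present the two directions as above, emphasizing that the correspondence $f\leftrightarrow$ "atom of the partition generated by the $A_i$" is the conceptual heart of the statement.
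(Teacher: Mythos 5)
Your proof is correct, and both directions are the standard bookkeeping argument (the $(\Leftarrow)$ direction via the atoms $\bigcap_{i:f(i)=1}A_i\cap\bigcap_{i:f(i)=0}A_i^c$ of the partition generated by the events, the $(\Rightarrow)$ direction via the discrete space $\{0,1\}^V$ weighted by the convex coefficients). The paper itself gives no proof --- it imports the statement as Theorem 1.1 of Pitowsky --- and your argument is essentially the one found there, so there is nothing to reconcile.
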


The connection to the classical marginal problem from Definition \ref{def:classical-marginal-general} is now clear, by considering the distribution of the indicator random variables $X_v = \mathds 1_{A_v}$. 

The correlation polytope is introduced via its $V$-representation, by a convex hull construction, giving an overcomplete list of vertices as the vectors $u_f$ in Definition \ref{def:correlation-polytope}. We refer the reader to \cite{ziegler2012lectures} for the background on the convex geometry of polytopes. We recall here that a polytoepe has two mathematically equivalent representations: 
\begin{itemize}
	\item the $V$-representation, as a convex hull of its (finitely many) extremal points
	\item the $H$-representation, as an intersection of (finitely many) half-spaces defined by its facets. 
\end{itemize}
Obtaining a concise list of facet-defining inequalities (i.e.~the $H$-representation) for correlation polytopes is an active field of current research; partial results are known for small graphs of interest, while the general question of testing membership in $\mathsf{COR}(G)$ is NP-complete in general \cite[Section 3]{pitowsky1991correlation}. The complete set of inequalities (the $H$-representation) for the triangle graph $K_3$ consists of 16 inequalities and is given explicitly in \cref{eq:K3-N,eq:K3-L}.

We now come to the main object of study of this paper, particular slices of the correlation polytope, obtained by fixing the uni-variate probabilities $\{p_i\}_{i \in V}$. 

\begin{definition}\label{def:correlation-slice}
    Given a graph $G=(V,E)$ and a vector of probabilities $p \in [0,1]^{|V|}$, we define the \emph{correlation slice}
    $$\mathsf{COR}(p,G):=\{u \in \mathsf{COR}(G) \, : \, \forall i \in V, \, u(i)=p(i)\} \subset \mathbb R^{|E|}.$$
\end{definition}

Clearly, $\mathsf{COR}(p,G)$ is a non-empty polytope, since it contains the point
$$\Big(p(i)p(j)\Big)_{(i,j) \in E}$$
corresponding to independent events $A_i$ in Proposition \ref{prop:Boole}.

For every edge $e=(i,j) \in E$, the value $p_e \equiv p_{ij}$ completely determines the bivariate probability in the setting of the classical marginal problem: 

\medskip
\begin{center}
\bgroup
\def\arraystretch{1.5}
\begin{tabular}{lllcc}
                                 &                           &                        & $\sum=1-p_j$               & $\sum=p_j$                        \\
                                 &                           &                        & $\uparrow$                     & $\uparrow$                            \\
                                 &                           &                        & 0                      & 1                             \\ \cline{4-5} 
\multicolumn{1}{r}{$\sum = 1-p_i$} & \multicolumn{1}{c}{$\leftarrow$} & \multicolumn{1}{l|}{0} & \multicolumn{1}{c|}{$1-p_i-p_j+p_{ij}$} & \multicolumn{1}{c|}{$p_j-p_{ij}$}        \\ \cline{4-5} 
\multicolumn{1}{r}{$\sum=p_i$}     & \multicolumn{1}{c}{$\leftarrow$} & \multicolumn{1}{l|}{1} & \multicolumn{1}{c|}{$p_i-p_{ij}$} & \multicolumn{1}{c|}{$p_{ij}$} \\ \cline{4-5} 
\end{tabular}
\egroup
\end{center}

\medskip

Hence, in order for the vector $(p_{ij})_{(i,j) \in E}$ to be an element of the correlation slice, the following four elements need to be positive:
\begin{equation} \label{eq:trans-ineq}
    \begin{aligned}
    1-p_i-p_j+p_{ij} &\geq 0\\
    p_i-p_{ij} &\geq 0\\
    p_j-p_{ij} &\geq 0\\
    p_{ij} &\geq 0.
\end{aligned}
\end{equation}

These conditions state precisely the fact that the bivariate probability distribution above belongs to the transportation polytope defined by the Bernoulli distributions $\mathsf b(p_i)$ and $\mathsf b(p_j)$. We are thus led to introduce the following polytope. 

\begin{definition}\label{def:transportation-slice}
    Given a graph $G=(V,E)$ and a vector of probabilities $p \in [0,1]^{|V|}$, we define the \emph{transportation slice}
    $$\mathsf{TRA}(p,G):=\bigtimes_{e=(i,j) \in E} \Big[\max(0,p_i+p_j-1), \min(p_i,p_j) \Big].$$
\end{definition} 

The transportation slice is a cartesian product of slices of transportation polytopes, and encodes the trivial, uncoupled, inequalities that the bivariate probabilities $p_{ij}$ need to satisfy in order to belong to the correlation slice. For example, in our running example of the triangle graph $K_3$, these are the inequalities~\eqref{eq:K3-N} (at fixed $p_i, p_j$). We have the following obvious result. 

\begin{proposition}\label{prop:COR-pG-subset-TRA-pG}
    For all graphs $G=(V,E)$, and for all vectors $p \in [0,1]^{|V|}$,
    \begin{equation}\label{eq:COR-pG-subset-TRA-pG}
        \mathsf{COR}(p,G) \subseteq \mathsf{TRA}(p,G).
    \end{equation}
\end{proposition}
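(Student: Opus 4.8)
The plan is to show the inclusion $\mathsf{COR}(p,G) \subseteq \mathsf{TRA}(p,G)$ by verifying that every point of the left-hand polytope satisfies the coordinate-wise inequalities defining the right-hand one. Since $\mathsf{TRA}(p,G)$ is defined by the box constraints $\max(0, p_i + p_j - 1) \leq p_{ij} \leq \min(p_i, p_j)$ for each edge $e = (i,j)$, which are exactly the four inequalities of Eq.~\eqref{eq:trans-ineq}, it suffices to check these four inequalities hold for each edge.

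First I would take an arbitrary $u \in \mathsf{COR}(p,G)$. By Proposition \ref{prop:Boole}, there exist events $(A_i)_{i \in V}$ on a common probability space with $u(i) = \mathbb P(A_i) = p_i$ for every $i \in V$ and $u(e) = \mathbb P(A_i \cap A_j) = p_{ij}$ for every edge $e = (i,j) \in E$. Then I would simply read off the four elementary probabilistic inequalities: $p_{ij} = \mathbb P(A_i \cap A_j) \geq 0$; $p_i - p_{ij} = \mathbb P(A_i \setminus A_j) = \mathbb P(A_i \cap A_j^c) \geq 0$; symmetrically $p_j - p_{ij} = \mathbb P(A_j \cap A_i^c) \geq 0$; and $1 - p_i - p_j + p_{ij} = \mathbb P(A_i^c \cap A_j^c) \geq 0$ by inclusion–exclusion. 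Hence each edge coordinate $p_{ij}$ lies in the interval $[\max(0, p_i + p_j - 1), \min(p_i, p_j)]$, so $u \in \mathsf{TRA}(p,G)$.

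Alternatively — and this is the cleaner route given how the polytopes were introduced — one can argue directly on vertices. The polytope $\mathsf{COR}(G)$ is the convex hull of the points $u_f$ for $f : V \to \{0,1\}$, and for such a vertex the edge coordinate $u_f(e) = f(i)f(j)$ trivially satisfies the four inequalities of Eq.~\eqref{eq:trans-ineq} with $p_i = f(i)$, $p_j = f(j)$, since these reduce to relations among $0$'s and $1$'s. Restricting to $\mathsf{COR}(p,G)$, any element is a convex combination $\sum_f \lambda_f u_f$ with $\sum_f \lambda_f f(i) = p_i$; since the constraints in Eq.~\eqref{eq:trans-ineq} are linear in $(p_i, p_j, p_{ij})$ jointly and hold at each vertex, they are preserved under convex combination, giving the claim.

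There is essentially no obstacle here: the statement is flagged in the excerpt as an ``obvious result,'' and both arguments above are a few lines of elementary probability or convexity. The only point requiring a word of care is that the box-defining inequalities of $\mathsf{TRA}(p,G)$ in Definition \ref{def:transportation-slice} coincide exactly with the system \eqref{eq:trans-ineq}; once that bookkeeping is noted, the inclusion is immediate. I would present the probabilistic version as the main proof, since it makes transparent why the transportation slice is precisely the relaxation obtained by forgetting the global joint-distribution requirement and keeping only the per-edge two-marginal constraints.
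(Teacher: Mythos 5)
Your proof is correct, and your main (probabilistic) argument is exactly the reasoning the paper uses implicitly: the proposition is stated as an ``obvious result'' immediately after the discussion showing that any element of the correlation slice determines a bivariate probability table whose four entries $1-p_i-p_j+p_{ij}$, $p_i-p_{ij}$, $p_j-p_{ij}$, $p_{ij}$ must be nonnegative, i.e.\ Eq.~\eqref{eq:trans-ineq}, which is precisely the defining condition of $\mathsf{TRA}(p,G)$. Your alternative convexity argument on the vertices $u_f$ is also valid and slightly more self-contained, but both routes amount to the same bookkeeping.
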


Actually, in order to mimic the construction of the correlation slice from Definition \ref{def:correlation-slice}, one can construct a transportation body associated to a graph $G$, such that the transportation slice can be obtained by fixing the $V$-coordinates of the transportation body. Note that we choose to call this polytope the transportation body in order to avoid any confusion with the established terminology of transportation polytope. 

\begin{definition}\label{def:transportation-body}
    Given a graph $G=(V,E)$, we define the \emph{transportation body}
    $$\mathsf{TRA}(G):=\{ (p,q) \in [0,1]^{|V|} \times [0,1]^{|E|} \, : \, q \in \mathsf{TRA}(p,G)\}.$$
\end{definition}
Similarly to Proposition \ref{prop:COR-pG-subset-TRA-pG}, we have the following inclusion:
$$\mathsf{COR}(G) \subseteq \mathsf{TRA}(G).$$

\bigskip

The goal of the rest of the paper is to study the inclusion in Eq.~\eqref{eq:COR-pG-subset-TRA-pG} and to quantify how close it is to being an equality. It is a well-know fact that for trees, we have an equality. 

\begin{proposition}\cite[Theorem V.2]{budroni2010extension}\label{prop:tree}
    If $G=(V,E)$ is a tree, then 
    $$\mathsf{COR}(G) = \mathsf{TRA}(G).$$
    Equivalently, for all vectors $p \in [0,1]^{|V|}$ we have 
    $$\mathsf{COR}(p,G) = \mathsf{TRA}(p,G).$$
\end{proposition}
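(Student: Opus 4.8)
The plan is to prove the statement $\mathsf{COR}(p,G) = \mathsf{TRA}(p,G)$ for trees by induction on the number of vertices, building the joint distribution one leaf at a time. Since the inclusion $\mathsf{COR}(p,G) \subseteq \mathsf{TRA}(p,G)$ is already established in \cref{prop:COR-pG-subset-TRA-pG}, only the reverse inclusion needs work: given a point $(p_{ij})_{(i,j)\in E}$ satisfying the four transportation inequalities~\eqref{eq:trans-ineq} on every edge, I must produce binary random variables $(X_v)_{v\in V}$ on a common probability space with $\mathbb P(X_v=1)=p_v$ and $\mathbb P(X_i=1, X_j=1)=p_{ij}$ for each edge. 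By \cref{prop:Boole} this is exactly membership in the correlation slice.

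The base case is a single vertex (or a single edge), where the claim is immediate: for one edge $(i,j)$, the $2\times 2$ table displayed before \eqref{eq:trans-ineq} is a bona fide probability distribution precisely when the four quantities there are nonnegative, which is the defining condition of $\mathsf{TRA}(p,G)$. For the inductive step, pick a leaf $\ell$ of the tree $T$ with unique neighbor $m$, and let $T'$ be the tree obtained by deleting $\ell$. The restriction of $(p_{ij})$ to the edges of $T'$ still satisfies \eqref{eq:trans-ineq}, so by the induction hypothesis there is a joint law for $(X_v)_{v \in V(T')}$ realizing all those marginals. Now I extend this law to $X_\ell$ by specifying the conditional distribution of $X_\ell$ given $(X_v)_{v\in V(T')}$, and the key point is that this conditional law may be chosen to depend only on $X_m$: set $\mathbb P(X_\ell = 1 \mid X_m = 1) = p_{\ell m}/p_m$ and $\mathbb P(X_\ell=1\mid X_m=0) = (p_\ell - p_{\ell m})/(1-p_m)$, which lie in $[0,1]$ exactly because of the transportation inequalities \eqref{eq:trans-ineq} for the edge $(\ell,m)$ (with the usual convention when $p_m \in \{0,1\}$, where one of the two conditional branches has probability-zero conditioning and can be assigned arbitrarily). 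One then checks $\mathbb P(X_\ell=1)= p_\ell$ and $\mathbb P(X_\ell =1, X_m=1) = p_{\ell m}$ by summing over $X_m$, while all marginals internal to $T'$ are untouched. Since $\ell$ is a leaf, it is incident to no other edge, so no further compatibility constraints arise — this is precisely where the tree structure is used, and it is the crux of the argument.

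The main obstacle is essentially bookkeeping rather than a deep difficulty: one must handle the degenerate cases $p_m = 0$ and $p_m = 1$ (and similarly $p_\ell \in \{0,1\}$) carefully, since then some of the conditional probabilities are $0/0$; in those cases one of the edge inequalities in \eqref{eq:trans-ineq} forces $p_{\ell m}$ to a boundary value and the conditional law on the relevant branch is vacuous, so it can be defined arbitrarily without affecting the marginals. I would also remark that the ``equivalently'' clause in the statement — that the body-level equality $\mathsf{COR}(T)=\mathsf{TRA}(T)$ follows from the slice-level equality for all $p$ — is immediate from the definitions of $\mathsf{COR}(G)$ and $\mathsf{TRA}(G)$ as unions of their slices over $p \in [0,1]^{|V|}$, combined with the description of $\mathsf{COR}(p,G)$ via \cref{prop:Boole}. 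This gives a complete, self-contained proof modulo the cited results.
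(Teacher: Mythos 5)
Your proof is correct and is essentially the paper's argument recast in inductive form: the joint law you build by attaching each leaf $\ell$ conditionally independently of the rest given its neighbor $m$ is exactly the product formula $\prod_{e=(v,w)\in E} p_e(x_v,x_w)\,/\,\prod_{v\in V} p_v(x_v)^{d(v)-1}$ that the paper writes down and verifies by the same recursive leaf elimination. Your packaging has the small advantage of dealing explicitly with the degenerate cases $p_v\in\{0,1\}$, where the paper's closed-form expression reads $0/0$.
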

\begin{proof}
For a tree $G$ with vertex degrees $d(v)$, one can show that the following joint probability distribution has the correct 2-marginals
$$p(x_1, x_2, \ldots, x_{|V|}) := \frac{\prod_{e=(v,w) \in E} p_e(x_v, x_w)}{\prod_{v \in V} p_v(x_v)^{d(v)-1}},$$
by recursively eliminating the leafs of the tree; see \cref{def:classical-marginal-general}.
\end{proof}

The reciprocal implication is also true. 
\begin{proposition}[{{\cite[Proposition 8]{padberg1989boolean}}}]\label{prop:forest}
    The correlation and the transportation bodies are equal $\mathsf{COR}(G) = \mathsf{TRA}(G)$ iff $G$ is a forest, i.e.~a collection of trees. 
\end{proposition}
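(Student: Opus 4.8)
The plan is to prove the two implications separately, with one direction already covered by Proposition~\ref{prop:tree}.

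\emph{Forest $\implies$ equality.} If $G$ is a forest, it is a disjoint union of trees $T_1, \ldots, T_k$. Both $\mathsf{COR}(G)$ and $\mathsf{TRA}(G)$ factor as Cartesian products over the connected components: for $\mathsf{TRA}(G)$ this is immediate from Definitions~\ref{def:transportation-slice} and~\ref{def:transportation-body} (the defining constraints only couple a vertex to its incident edges, hence never across components), and for $\mathsf{COR}(G)$ it follows from Proposition~\ref{prop:Boole} by taking the events on a product probability space, making distinct components independent. So it suffices to treat each tree, and equality for trees is exactly Proposition~\ref{prop:tree}. (Strictly, isolated vertices are trivial and one can fold them in as degenerate trees.)

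\emph{Not a forest $\implies$ strict inclusion.} If $G$ is not a forest it contains a cycle $C$ of some length $\ell \geq 3$. The idea is to exhibit a point $p \in [0,1]^{|V|}$ and a vector $q \in \mathsf{TRA}(p,G) \setminus \mathsf{COR}(p,G)$, which by Definition~\ref{def:transportation-body} witnesses $\mathsf{TRA}(G) \neq \mathsf{COR}(G)$. The natural candidate generalizes the frustration example from Section~\ref{sec:marginal-problem}: set $p_v = 1/2$ for every vertex $v$ on the cycle, and on each cycle edge put the "anti-correlated" bivariate distribution $p_e(a,b) = \tfrac12 \mathds 1_{a \neq b}$, i.e.\ $q_e = p_e(1,1) = 0$. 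One checks easily that $q_e = 0 \in [\max(0, p_i+p_j-1), \min(p_i,p_j)] = [0, 1/2]$, so $q$ restricted to the cycle edges lies in the transportation slice; extend $q$ to the remaining edges of $G$ arbitrarily within their transportation intervals (e.g.\ the independent point $p_ip_j$), and set the remaining vertex probabilities to anything in $[0,1]$. To see $q \notin \mathsf{COR}(p,G)$: by Proposition~\ref{prop:Boole}, membership would give events $A_v$ with $\mathbb P(A_v) = 1/2$ and $\mathbb P(A_i \cap A_j) = 0$ for consecutive vertices on the cycle, i.e.\ $\mathbb P(\mathds 1_{A_i} \neq \mathds 1_{A_j}) = 1$ around an odd-or-even cycle. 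Going around the cycle, $\mathds 1_{A_v}$ would have to alternate values with probability $1$ at every step, which is a contradiction once you traverse the full cycle (the parity obstruction for odd cycles is immediate; for even cycles one instead uses a direct inclusion–exclusion / union bound argument — see below).

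The main obstacle is the even-cycle case, since the simple parity argument above only rules out odd cycles. For an even cycle $C_{2m}$, the anti-correlated assignment is actually \emph{consistent} (two-coloring works), so that exact point does lie in $\mathsf{COR}$. The fix is to perturb: on one edge $e_0$ of the cycle replace the anti-correlated distribution by a slightly \emph{positively} correlated one, $q_{e_0} = \varepsilon > 0$ small, keeping all other cycle edges anti-correlated with $q_e = 0$. This still lies in $\mathsf{TRA}(p,G)$. But for any events realizing these numbers, a telescoping / union-bound estimate gives
\begin{equation*}
\mathbb P(A_{v_0} \cap A_{v_1}) \leq \sum_{\text{edges } e \neq e_0 \text{ of } C} \mathbb P(\mathds 1_{A_i} = \mathds 1_{A_j}) + (\text{slack from } e_0),
\end{equation*}
and with the anti-correlated edges forcing $\mathbb P(\mathds 1_{A_i} = \mathds 1_{A_j}) = 0$ one derives $q_{e_0} \leq 0$, contradicting $q_{e_0} = \varepsilon > 0$. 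In other words, a facet inequality of $\mathsf{COR}(p,G)$ of "cycle type" (a Pitowsky cycle inequality, cf.\ the known $H$-representations) is violated. I would phrase this cleanly by invoking the cycle inequalities for $\mathsf{COR}$ directly: for any cycle with edge set $F$, $\sum_{e \in F} q_e - \sum_{v \in C} p_v + (\text{appropriate constant}) \geq 0$ type relations hold on $\mathsf{COR}$ but can be violated on $\mathsf{TRA}$ by the construction above, uniformly covering odd and even $\ell$. This reduces the whole second implication to checking one explicit linear inequality on a single small example embedded in $G$, plus the routine verification that the chosen $q$ sits inside $\mathsf{TRA}(p,G)$.
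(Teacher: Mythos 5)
Your argument is correct in substance; note that the paper does not actually prove this proposition but imports it from Padberg (\emph{Proposition 8} on the boolean quadric polytope), so what you have written is a self-contained alternative rather than a variant of an in-paper proof. The forward direction is fine: both $\mathsf{COR}$ and $\mathsf{TRA}$ factor over connected components (the convex hull of a Cartesian product of finite point sets is the product of the convex hulls, and the extreme points $u_f$ split as $(u_{f|_{V_1}},u_{f|_{V_2}})$), after which \cref{prop:tree} handles each tree. For the converse your construction works, but two steps deserve tightening. First, make explicit that the coordinate projection of $\mathsf{COR}(G)$ onto the vertices and edges of a subgraph $C$ lands in $\mathsf{COR}(C)$ (restrict the events $A_v$); this is what lets you deduce non-membership in $\mathsf{COR}(G)$ from non-membership on the cycle alone. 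Second, the ``telescoping / union bound'' for the even cycle is vaguely stated and is cleaner as an almost-sure parity argument: with $p_i=p_j=1/2$ and $q_{ij}=0$ one gets $\mathbb P(A_i\cup A_j)=1$ and $\mathbb P(A_i\cap A_j)=0$, hence $\mathds 1_{A_j}=1-\mathds 1_{A_i}$ almost surely; following the path of odd length $2m-1$ from $v_1$ to $v_0$ that avoids $e_0$ forces $\mathds 1_{A_{v_0}}=1-\mathds 1_{A_{v_1}}$ a.s., so $\mathbb P(A_{v_0}\cap A_{v_1})=0$, contradicting $q_{e_0}=\varepsilon>0$. In fact this one perturbed point (small $\varepsilon>0$ on a single cycle edge, $0$ elsewhere) handles odd and even cycles uniformly, so your case split is unnecessary. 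The standard proof in the optimization literature runs through the triangle/odd-cycle facet inequalities, which is exactly the linear-inequality reformulation you sketch in your last paragraph (cf.\ \cref{eq:odd-cycle-ineqaulity}); your probabilistic version is more elementary and equally valid.
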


\section{Relation to quantum information theory and contextuality}\label{sec:quantum-info}

In this section we would like to shed light on the connection between the transportation body and correlation polytope and the no-signaling and local polytopes obtained in non-local games in quantum information theory. This connection was the main motivation for our work, and the main objects we shall investigate in the rest of the paper are inspired by quantum information theory.

Let us first describe the setting of \emph{non-local games} from quantum information theory \cite{palazuelos2016survey}. These mathematical scenarios are modern formulations of (thought \cite{bell1964einstein})  experiments \cite{aspect1982experimental,hensen2015loophole} in foundational quantum mechanics related to non-locality. 
We consider below the case of the complete bipartite graph $G = K_{2,2}$ which is connected to the CHSH game \cite{clauser1969proposed} in quantum information theory. Let us first discuss how the the polytopes $\mathsf{COR}(K_{2,2})$ and $\mathsf{TRA}(K_{2,2})$ are related to the correlations $P(ab|xy)$ corresponding to the CHSH game. In that game, two players, Alice and Bob, receive binary questions $x,y \in \{0,1\}$ and provide binary answers $a,b \in \{0,1\}$. In the CHSH game, the players win if their answers satisfy $a \oplus b = x\cdot y$, but this rule is irrelevant to us; we focus here only on possible $4$-tuples $x,y,a,b$ and their probability distribution. Alice and Bob cannot communicate during the game, hence the conditional probabilities they generate must obey the \emph{non-signaling relations}: the marginal over Bob's answer $b$ of $P$ cannot depend on Bob's question $y$, and vice-versa
\begin{align}
 \label{eq:non-signaling-A}   \forall a,x, \qquad \sum_{b=0,1}P(a,b|x,0) &= \sum_{b=0,1}P(a,b|x,1) \\
\label{eq:non-signaling-B}    \forall b,y, \qquad \sum_{a=0,1}P(a,b|0,y) &= \sum_{a=0,1}P(a,b|1,y).
\end{align}
This means that the four bivariate probability distributions $P(\cdot, \cdot|x,y)$, for $x,y \in \{0,1\}$ have compatible marginals as described above, so one can attach them to the edges of the bipartite graph $K_{2,2}$, as below: 

\smallskip

\begin{figure}[htb]
\begin{tikzpicture}[auto]

    \tikzstyle{node_style}=[rectangle, draw=blue, fill=blue!10!, ultra thick]
    \tikzstyle{edge_style}=[draw=black, ultra thick]
    \tikzstyle{plus}=[]

    \node[node_style] (x0) at (-4, 2) {$x=0$};
    \node[node_style] (x1) at (-4, -2) {$x=1$};
    \node[node_style] (y1) at (4, -2) {$y=1$};
    \node[node_style] (y0) at (4, 2) {$y=0$};
    
    \draw[edge_style]  (x0) edge node[above]{$P(\cdot, \cdot| 0,0)$} (y0);
    \draw[edge_style]  (x0) edge node[above left]{} (y1);
    \draw[edge_style]  (x1) edge node[above right]{} (y0);
    \draw[edge_style]  (x1) edge node[below]{$P(\cdot, \cdot| 1,1)$} (y1);

    \coordinate [distance=0cm,label={$P(\cdot, \cdot| 0,1)$}] (O) at (3,-1.2);
    \coordinate [distance=0cm,label={$P(\cdot, \cdot| 1,0)$}] (O) at (3, 0.4);

\end{tikzpicture}
\end{figure}

\smallskip

Hence, to a conditional probability $P$ satisfying the non-signaling equations \eqref{eq:non-signaling-A}-\eqref{eq:non-signaling-B}, we associate the following element of the polytope $\mathsf{TRA}(K_{2,2})$:
\begin{align}\label{eq:vector-TRA}
    \Bigg[&\overbrace{P(a=1|0, \cdot), P(a=1|1, \cdot), P(b=1|\cdot, 0), P(b=1|\cdot, 1)}^{\text{vertices}}, \\
\nonumber    &\qquad\qquad\qquad\qquad \underbrace{P(1,1|0,0), P(1,1|0,1), P(1,1|1,0), P(1,1|1,1)}_{\text{edges}} \Bigg] \in \mathbb R^{4+4}.
\end{align}
Importantly, from the data above, using the non-signaling equations, one can recover the whole probability distribution $P$. Given now some fixed set of marginals 
\begin{align*}
    p^A_0 = P(a=1|x=0, y = \cdot) \\
    p^A_1 = P(a=1|x=1, y = \cdot) \\
    p^B_0 = P(b=1|x=\cdot, y = 0) \\
    p^B_1 = P(b=1|x=\cdot, y = 1),
\end{align*}
one can consider the slice $\mathsf{TRA}\big( (p^A_0,p^A_1,p^B_0,p^B_1), K_{2,2})$, consisting of 4-tuples 
$$\Big(P(1,1|0,0), P(1,1|0,1), P(1,1|1,0), P(1,1|1,1)\Big).$$ 
From this information, one can recover the whole set of joint game probabilities $P$ having fixed vertex-marginals $(p^A_0,p^A_1,p^B_0,p^B_1)$. This type of slice will be the main focus of our work. 

Having related the (slices of the) transportation polytope $\mathsf{TRA}(K_{2,2})$ to conditional probabilities appearing in the CHSH game and satisfying the non-signaling conditions (i.e.~$\mathsf N(K_{2,2})$), let us now describe the correlation polytope $\mathsf{COR}(K_{2,2})$ in terms of the local probabilities $P$ appearing in the CHSH game. To this end, recall that a game strategy $P$ is called \emph{local} if there exists a \emph{hidden variable} $\lambda$ with probability distribution $Q$, and local probabilities $P_A$, $P_B$ such that
$$P(a,b|x,y) = \sum_{\lambda} Q(\lambda) P_A(a|x, \lambda) P_B(b|y, \lambda).$$
The convex set of local probabilities $P$ is the convex hull of its extremal points, the set of \emph{deterministic} strategies, where $Q$ is a delta function, and, for each pair $(\text{player}, \text{question})$ (i.e.~for each vertex of the graph $K_{2,2}$), there is a deterministic answer, $0$ or $1$. This choice is encoded precisely in the function $f$ from \cref{def:correlation-polytope}, while the corresponding vector \cref{eq:vector-TRA} is given by $u_f$. This shows how the correlation polytope $\mathsf{COR}(K_{2,2})$ is related to the local polytope $\mathsf L(K_{2,2})$. One can rephrase this using \emph{Fine's theorem} \cite{fine1982hidden}: a conditional probability $P$ is local iff it can be written as a convex mixture of local deterministic processes:
\begin{equation}
    P(ab|xy) = \sum_{j, k}r_{j,k} \mathds 1_{a=j(x)} \mathds 1_{b=k(y)}
\end{equation}
Finally, let us point out that non-local games where Alice and Bob receive $m$, respectively $n$ questions, and must provide binary answers, can be easily reformulated in terms of the transportation body and correlation polytope for the \emph{bipartite complete graph} $K_{m,n}$. 

Let us now put forward the connection between the transportation body and correlation polytope \emph{contextuality theory}. In the Bell scenario, we are only concerned with correlations of outcomes of measurements which are spatially separated, hence the notion of locality and no-signaling. Contextuality is the generalization of Bell scenarios to include correlations among all compatible observables. The set of these compatible observables forms a \emph{context}. Hence, for a set of observables 
 $X = \{X_0, X_1, \ldots , X_n\}$, a subset $\mathsf c \subseteq X$ forms a context if $\forall i, j : X_i, X_j \in \mathsf c$, $X_i$ is compatible with $X_j$.

Following \cite{chaves2012entropic} (we refer the reader to \cite{scarani2019bell} for a detailed description of Bell scenarios), we start with a hidden-variable $\lambda$ which completely defines the process of obtaining outcomes corresponding to any observable $X_i$. Thus, the probabilities $\rho(\lambda)$ associated with different processes must follow $\rho(\lambda) \geq 0$  and $\sum_\lambda \rho(\lambda) = 1$. Completeness implies that the distribution $P(x_i|X_i\lambda)$ is independent of all other observables.

Clearly, for two compatible observables $X_i$ and $X_j$, we have,
\begin{equation}
    P(x_ix_j|X_iX_j) = \sum_\lambda \rho(\lambda)P(x_i|X_i)P(x_j|X_j)
\end{equation}

This is equivalent to saying that the observables are non-contextual (compatible) iff it can be written as a convex mixture of deterministic processes. \cite{fine1982hidden}
\begin{equation} \label{eq:Fine-1}
    P(x_ix_j|X_iX_j) = \sum_{k, l}r_{k,l} \mathds 1_{x_i=k(X_i)} \mathds 1_{x_j=l(X_j)}
\end{equation}
where, $r_{j, k}\geq 0$ and $\sum_{j, k} r_{j, k} = 1$. This is true in general for any number of compatible observables. We will only be studying cases with cardinality of contexts less than or equal to 2. 

Notice that each row in the truth table for the non-contextual polytope denotes one of these deterministic processes and the convex sum of the rows yields any point in the correlation polytope. Hence, the correlation polytope and the non-contextual polytope have the same mathematical structure.

Now, consider dichotomic outputs $\{0, 1\}$ for all observables and define \(P(1|X_1)\) as \(p_{a_1}\) and \(P(1|X_2)\) as \(p_{b_1}\) where $X_1$ and $X_2$ form a context. Clearly, \(p_{a_0} = 1-p_{a_1}\) and \(p_{b_0} = 1-p_{b_1}\). Notice that these probabilities exactly mimic the behaviour described by inequalities \cref{eq:trans-ineq} and hence, the no-disturbance body has the same structure as that of the transportation body. This establishes the connection between the objects studied in this paper and contextuality. Finally, let us point out that there exists another theoretical framework for contextuality, based on (hyper-)graphs \cite{cabello2014graph,acin2015combinatorial}, where vertices correspond to outcomes and hyperedges to measurements. It is argued in \cite[Appendix D]{acin2015combinatorial} that the observable based approach and the hypergraph based approach are equivalent. Note that in both the setting of the current paper and in \cite{acin2015combinatorial} (the probabilistic model), to each vertex of a graph one associates a number $p(v) \in [0,1]$. The meaning of this assignment is completely different: in our setting, there are no other vertices $v_2, \ldots, v_k$ such that the sum $p(v) + p(v_2) + \cdots + p(v_k) = 1$, whereas in \cite[Definition 2.4.1]{acin2015combinatorial} these vertices appear explicitly in the graph. 

From now on, for the sake of brevity, and in order to emphasize further the connection with quantum information theory, we shall use the notation $\mathsf L$ (resp.~$\mathsf N$) to denote the correlation polytope $\mathsf{COR}$ (resp.~the transportation polytope $\mathsf{TRA}$) and their slices:
    \begin{align*}
        \mathsf L(G) &:= \mathsf{COR}(G)\\
        \mathsf N(G) &:= \mathsf{TRA}(G)\\
        \mathsf L(p,G) &:= \mathsf{COR}(p,G)\\
        \mathsf N(p,G) &:= \mathsf{TRA}(p,G),
    \end{align*}
where $G=(V,E)$ is any graph and $p$ is a vector of proabilities $p \in [0,1]^{|V|}$.

To conclude, we have shown in this section that the marginal problem we introduced in \cref{sec:marginal-problem}, in the case of the square graph $C_4 = K_{2,2}$, is intimately related to the CHSH non-local game, with the constraint of fixed marginals for Alice and Bob. This situation can be naturally generalized to all bipartite complete graphs $K_{m,n}$, which correspond to non-local games with two answers per player and $m$, respectively $n$, questions. The volume ratio question for the $K_{2,2}$ graph will be discussed at length in \cref{sec:square}. Other graphs, such as the triangle graph, cyclic graphs, or the $K_4$ graph, are discussed in \cref{sec:triangle,sec:cycle,sec:K4}.

\section{Random marginal problems} \label{sec:random-marginal}

Motivated by questions in quantum information theory, we have introduced in the previous sections, for a given graph $G=(V,E)$ and a given vector of probabilities $p \in [0,1]^{|V|}$, the \emph{local slice} (the {correlation slice}), resp.~the \emph{non-signaling slice} (the {transportation slice})
$$\mathsf{L}(p,G) \subseteq \mathsf{N}(p,G) \subset \mathbb R^{|E|}.$$

Given its cartesian product structure, the set $\mathsf N(p,G)$ comes equipped with a natural probability measure that is easy to compute with, the normalized Lebesgue measure:
\begin{equation}
    \eta(p,G) := \bigotimes_{(i,j) \in E} \frac{\mathds 1_{[\max(0,p_i+p_j-1), \min(p_i,p_j)]}}{\min(p_i,p_j) - \max(0,p_i+p_j-1)}\,\mathrm{d}x
\end{equation}

The probability measure $\eta$ defined above is precisely the uniform (normalized) volume measure on the (scaled) hypercube introduced in Definition \ref{def:transportation-slice}.

In what follows, we shall provide a partial answer to the following fundamental question: 

$$\mathbb P_{q \sim \eta(p,G)} \Big[ q= (q_{ij})_{(i,j) \in E} \in \mathsf L(p,G) \Big] = \, ?$$

{
Equivalently, this quantity can be also understood as the volume ratio 
$$\frac{\vol(\mathsf L(p, G))}{\vol(\mathsf N(p, G))}$$
between corresponding slices of the local and the non-signaling polytopes. As mentioned in the introduction, a similar question for the non-sliced bodies has been studied in the context of (approximating) the boolean quadric polytope \cite{padberg1989boolean,ko1997volume,lee2020volume}. 
}

To quantify the properties of the volume ratio for sliced bodies, we define the following parameters:

\begin{definition}\label{def:fall-off}
    For a given graph $G(V, E)$ and symmetric marginals $p_i = t$ $\forall i \in V$, we define the \emph{fall-off value} $\tau(G)$ as the value of $t$ after which the volume ratio $\vol(\mathsf L(G)_t)/\vol(\mathsf N(G)_t)$ is no longer constant:
    
    $$\tau(G):= \sup\bigg\{ t\, : \, \frac{\vol(\mathsf L(G)_t)}{\vol(\mathsf N(G)_t)} \text{ is constant on }(0,t) \bigg\} \in [0,1/2].$$
    Above, we write 
    $$\mathsf{S}(G)_t:= \mathsf S(p=(t,t,\ldots, t), G) \qquad \text{ for a set } \mathsf S = \mathsf L, \mathsf N.$$
\end{definition}

For trees $T$, we have seen in \cref{prop:tree} that $\tau(T) = 1/2$

\begin{proposition}
    For any graph $G$, $\tau(G) > 0$. In other words, the volume ratio $\vol(\mathsf L(G)_t)/\vol(\mathsf N(G)_t)$ is constant on some non-empty interval $t \in (0, \tau)$. 
\end{proposition}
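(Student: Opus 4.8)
The plan is to show that for small $t$, the correlation slice $\mathsf L(G)_t$ actually coincides with the full transportation slice $\mathsf N(G)_t$, so that the volume ratio is identically $1$ on some initial interval. Since $\mathsf N(G)_t = \bigtimes_{(i,j)\in E} [\,0,\,t\,]$ whenever $t \le 1/2$ (because $\max(0,2t-1)=0$ and $\min(t,t)=t$), it suffices to produce, for every vector $q=(q_{ij})_{(i,j)\in E}$ with $0 \le q_{ij} \le t$, a genuine joint distribution on $\{0,1\}^{V}$ whose one-site marginals are all $\mathsf b(t)$ and whose two-site marginals on each edge $e=(i,j)$ are the bivariate distribution encoded by $q_{ij}$ (cf.\ the $2\times 2$ table in Eq.~\eqref{eq:trans-ineq}). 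By \cref{prop:Boole}, this is equivalent to exhibiting events $(A_i)_{i\in V}$ with $\mathbb P(A_i)=t$ and $\mathbb P(A_i\cap A_j)=q_{ij}$.

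First I would write down an explicit candidate. Take $|V|+1$ disjoint atoms: one atom $\omega_0$ of mass $1-\sum$, and for each vertex $i$ a "private" atom together with, for each edge $e=(i,j)$, a "shared" atom $\omega_e$ of mass $q_{ij}$ assigned to $A_i\cap A_j$; give each vertex $i$ a private atom of mass $t-\sum_{e\ni i} q_{ij}$ belonging to $A_i$ only; and let $\omega_0$ belong to no $A_i$. One checks $\mathbb P(A_i)=\big(t-\sum_{e\ni i}q_{ij}\big)+\sum_{e\ni i}q_{ij}=t$ and $\mathbb P(A_i\cap A_j)=q_{ij}$ automatically (two distinct edges share no atom, and an edge atom lies in exactly two of the $A_i$). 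The only constraints are the nonnegativity of the masses: $q_{ij}\ge 0$ (given), $t-\sum_{e\ni i}q_{ij}\ge 0$ for every vertex $i$, and $1-\sum_{i}\big(t-\sum_{e\ni i}q_{ij}\big)-\sum_{e}q_{ij}\ge 0$, i.e.\ $|V|\,t - \sum_e q_{ij}\le 1$. If $q_{ij}\le t$ for all edges, then $\sum_{e\ni i} q_{ij}\le t\cdot d(i)$, so it is enough to have $t\cdot\max_i d(i)\le t$, which fails in general — so a naive bound is too crude, and one instead needs $\sum_{e\ni i} q_{ij}\le t$; since each $q_{ij}\le t$ this holds once $t\,d(i)\le \text{(something)}$, which again is not automatic. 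The fix is to shrink the interval: restrict to $t\le \tau_0$ where $\tau_0$ is chosen so that $q_{ij}\le t$ forces $\sum_{e\ni i}q_{ij}\le t$; taking $\tau_0 = 1/\Delta(G)$ with $\Delta(G)=\max_i d(i)$ does \emph{not} suffice from $q_{ij}\le t$ alone. The clean way is different: observe that the map $q\mapsto$ (joint distribution) need only be defined for $q$ in the interior near the independent point $q_{ij}=t^2$; for $t$ small, $t^2\ll t$, and I claim the whole cube $[0,t]^{|E|}$ is still reachable by instead splitting each edge atom's mass between $A_i\cap A_j$ and a compensating configuration. Concretely, use the standard fact (Proposition~\ref{prop:tree}) that on a spanning tree $T\subseteq G$ every $q|_T\in[0,t]^{|E(T)|}$ is realizable, then handle the remaining edges perturbatively.

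So the cleanest route: let $T$ be a spanning tree of $G$ and let $F=E\setminus E(T)$. By \cref{prop:tree}, the slice $\mathsf L(T)_t$ equals the full cube, and the tree-realizing joint distribution $p^{(q|_T)}$ depends continuously (indeed piecewise-linearly) on $q|_T$ and on $t$. Its induced marginal on a non-tree edge $e\in F$ is some value $q_e^{\star}(q|_T,t)$, which at $t=0$ equals $0$ and depends continuously on parameters; moreover by perturbing the joint distribution within the affine space of distributions with the prescribed tree-marginals one can move $q_e$ over an interval around $q_e^\star$ of radius bounded below by a fixed multiple of $t$ (the moves are the usual "cycle swaps" along the fundamental cycle of $e$, each of size comparable to the smallest atom mass, which is $\Theta(t)$ for $t$ small and $q$ in the cube). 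Hence for $t$ below some explicit threshold $\tau_0(G)>0$, the reachable set of $(q_e)_{e\in F}$ contains $[0,t]^{|F|}$, and thus $\mathsf L(G)_t=\mathsf N(G)_t$, giving ratio $1$; consequently $\tau(G)\ge\tau_0(G)>0$. The main obstacle is making the perturbation argument quantitative — showing the cycle-swap moves genuinely sweep out a full $t$-cube in the $F$-coordinates without violating nonnegativity of any atom — and I would handle this by an induction on $|F|$, adjusting one fundamental cycle at a time and tracking the worst-case atom mass, which stays $\Omega(t)$ uniformly over the cube as long as $t$ is below a constant depending only on the combinatorics of $G$.
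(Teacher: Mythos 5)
Your strategy is to prove the stronger statement that $\mathsf L(G)_t = \mathsf N(G)_t$ for all sufficiently small $t>0$, so that the ratio is identically $1$ near $0$. That statement is false for every graph containing a cycle, so the approach cannot be repaired. Concretely, for $G=K_3$ and any $t\in(0,1/2]$, the point $q_{12}=q_{13}=t$, $q_{23}=0$ lies in $\mathsf N(K_3)_t=[0,t]^3$ but violates the local inequality $q_{12}+q_{13}-q_{23}\le t$ of \cref{eq:K3-Lp}; probabilistically, $\mathbb P(A_1\cap A_2)=\mathbb P(A_1\cap A_3)=\mathbb P(A_i)=t$ forces $A_1=A_2=A_3$ up to null sets, hence $\mathbb P(A_2\cap A_3)=t\neq 0$. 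Accordingly \cref{prop:K3-vol} gives $\rho_{0+}(K_3)=1/2$, not $1$, and \cref{prop:forest} says the two bodies coincide only for forests. You in fact notice that your first explicit construction fails, but the fallback (realize the spanning-tree marginals and sweep the non-tree coordinates by cycle swaps) is aimed at the same false conclusion: no amount of perturbation can reach the point above, since the obstruction is a valid facet inequality of $\mathsf L(G)_t$ that is independent of $t$. Your argument also has no mechanism for proving constancy of a ratio that is strictly less than $1$, which is what the proposition actually requires.

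The correct (and the paper's) argument is different and much softer: pass to the rescaled variables $q_e=t\,x_e$. The polytope $\mathsf L(G)$ has finitely many facet inequalities $\sum_i a_i p_i+\sum_e b_e q_e\le c$ with $c\ge 0$ (the origin is a vertex of $\mathsf{COR}(G)$); after substituting $p_i=t$ and rescaling, each becomes $\sum_e b_e x_e\le c/t-\sum_i a_i$. Those with $c=0$ are $t$-independent; those with $c>0$ have right-hand side tending to $+\infty$ and hence are trivially satisfied once $t$ is below some positive threshold depending on the inequality. Since there are finitely many inequalities, the rescaled local slice $\tilde{\mathsf L}_t$ is a \emph{fixed} polytope for all $t$ in some interval $(0,\tau_0)$ with $\tau_0>0$, while $\tilde{\mathsf N}_t$ is the unit cube there; the volume ratio is therefore constant on that interval (equal to the volume of that fixed polytope, which is generally $<1$). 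This is the content of the paper's one-line proof, and it is where your proposal needed to go.
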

\begin{proof}
    This follows since we have finitely many inequalities that can cut at some \(t>0\), so the polytopes $\mathsf L(G)_t$ and $\mathsf N(G)_t$ are equal up to some time $\tau$.
\end{proof}

We introduce next two important values of the volume ratio: the one of the constant portion $t \in (0,\tau)$ and the one in the middle, $t=1/2$. 
\begin{definition}
    For a given graph $G(V, E)$ and associated marginals $p_i = t$ $\forall i \in V$, we define the \emph{
    initial ratio} 
    $\rho_{0+}(G)$, as the volume ratio for all values $t \in (0, \tau(G))$; in particular
    $$\rho_{0+}(G):= \frac{\vol(\mathsf L(G)_{\tau/2})}{\vol(\mathsf N(G)_{\tau/2})}.$$
     We also define 
     \emph{middle ratio} $\rho_{1/2}(G)$ as the volume ratio at $t=1/2$:
     $$\rho_{1/2}(G):= \frac{\vol(\mathsf L(G)_{1/2})}{\vol(\mathsf N(G)_{1/2})}.$$
\end{definition}

\section{The triangle graph}\label{sec:triangle}

Having set the stage, it is instructional that we consider specific examples. We start by looking at the complete graph with three vertices, $K_3$ This is also the smallest non-trivial graph which is not a tree (recall that for trees, $\mathsf L = \mathsf N$, see \cref{prop:tree}). 

The $K_3$ graph corresponds to the Bell-Wigner Polytope \cite{pitowsky1989classical}, \cite{pitowsky1989george}. The physical scenario in question is measurements performed on a bipartite state (such as the singlet state) with each component of the state being measured in one of three distinct directions (3 questions) with binary outputs (2 answers for each question) \cite{wigner1997hidden}.

\smallskip

\begin{figure}[htb]
\begin{tikzpicture}[auto, scale = 1.2]

    \tikzstyle{vertex}=[circle, draw=blue, fill=blue!10!, ultra thick]
    \tikzstyle{edge}=[draw=black, ultra thick]
    
    \node[vertex] (v1) at (-2,2) {$p_1$};
    \node[vertex] (v2) at (2,2) {$p_2$};
    \node[vertex] (v3) at (0, 0) {$p_3$};
    
    \draw[edge]  (v1) edge node{$q_{12}$} (v2);
    \draw[edge]  (v2) edge node{$q_{23}$} (v3);
    \draw[edge]  (v3) edge node{$q_{13}$} (v1);
    
\end{tikzpicture}
\end{figure}

\smallskip

The $V$-representation of $\mathsf L(K_3) = \mathsf{COR}(K_3)$ for this graph has already been given in \cref{tbl:VE-K3}. The corresponding $H$-representation reads:

\begin{equation}\label{eq:K3-N}
\begin{aligned}
  0 \leq q_{ij} \leq \min{(p_{i}, p_{j})} \\
  p_i + p_j -q_{ij} \leq 1
\end{aligned}
\end{equation}
\vspace{5px}
\begin{equation}\label{eq:K3-L}
    \begin{aligned}
        p_{1} + p_{2} + p_{3} - q_{12} - q_{13} - q_{23} &\leq 1 \\
        -p_{1} + q_{12} + q_{13} - q_{23} &\leq 0\\
         -p_{2} + q_{12} - q_{13} + q_{23} &\leq 0\\
         -p_{3}- q_{12} + q_{13} + q_{23} &\leq 0
     \end{aligned}
\end{equation}

Note that we have in this case 6 variables: 3 corresponding to the vertices of $K_3$ ($p_1,p_2,p_3$) and 3 corresponding to the edges ($q_{12},q_{13},q_{23}$). Hence, polytopes of interest have 6 coordinates: 
$$\mathsf L(K_3) = \mathsf{COR}(K_3) \subseteq \mathsf N(K_3) = \mathsf{TRA}(K_3) \subset \mathbb R^6.$$

Inequalities in \cref{eq:K3-N} are precisely those appearing in  \cref{def:correlation-slice} and thus define the set $\mathsf N(K_3)$. The inequalities \cref{eq:K3-L} are the additional constraints distinguishing $\mathsf L(K_3)$ from $\mathsf N(K_3)$. The volumes of these bodies have been computed respectively in \cite[Theorems 16 and 19]{lee2020volume}:
$$\vol(\mathsf N(K_3)) = \frac{1}{120}\qquad \text{and} \qquad \vol(\mathsf L(K_3)) = \frac{1}{180} = \frac 2 3 \vol(\mathsf N(K_3)).$$

As stated in the introduction, the main philosophy of our work is to understand the inclusion $\mathsf L(K_3) \subseteq \mathsf N(K_3)$ via \emph{slices} of these polytopes, and to see the slice inclusion problem as a a classical marginal problem. 

In this section we shall consider two such slices, studied separately in the following two subsections. The slices are obtained by fixing the values of the vertex parameters $p$ and studying the 3-dimensional polytopes of the $q$ variables.

\subsection{Symmetric slices \texorpdfstring{$(p_1, p_2, p_3) = (t, t, t)$}{ttt}}

We first consider the most symmetric case, where all the vertex variables $p_{1,2,3}$ of the correlation polytope of the triangle graph are equal:
$$p_1 = p_2 = p_3 =: t \in [0,1].$$

Plugging in these marginals, we obtain the $H$-representation defining our slice $\mathsf L(p=(t,t,t), K_3)$:
\begin{equation}\label{eq:K3-Np}
\begin{aligned}
  \max\{0, 2t-1\} \leq q_{ij} \leq t
\end{aligned}
\end{equation}

\begin{equation}\label{eq:K3-Lp}
    \begin{aligned}
        q_{12} + q_{13} + q_{23} &\geq 3t-1 \\
        q_{12} + q_{13} - q_{23} &\leq t \\
        q_{12} - q_{13} + q_{23} &\leq t\\
        -q_{12} + q_{13} + q_{23} &\leq t.
     \end{aligned}
\end{equation}
As before, \cref{eq:K3-Np} corresponds to the no-signaling slice $\mathsf N(p = (t,t,t), K_3)$. In what follows, we shall denote the polytopes of interest by
\begin{align*}
    \mathsf N_t &:= \mathsf N(p=(t,t,t), K_3) = \{(q_{12},q_{13},q_{23}) \in \mathbb R^3 \, : \, \text{ \cref{eq:K3-Np} holds}\}\\
    \mathsf L_t &:= \mathsf L(p=(t,t,t), K_3) = \{(q_{12},q_{13},q_{23}) \in \mathbb R^3 \, : \, \text{ \cref{eq:K3-Np} and \cref{eq:K3-Lp} hold}\}.
\end{align*}

Since our measurements are dichotomic (with outcomes say, $0$ and $1$), the assignment of possible outcomes to the random variables is symmetric with respect to swapping the two possible outcomes (bit-flip). Hence, we have the following obvious symmetry. 

\begin{proposition} \label{prop:symVol}
    The involution $(q_{12},q_{13},q_{23}) \mapsto (1-q_{12},1-q_{13},1-q_{23})$ maps isometrically $\mathsf N_t \leftrightarrow \mathsf N_{1-t}$ and $\mathsf L_t \leftrightarrow \mathsf L_{1-t}$ for all $t \in [0,1]$.
\end{proposition}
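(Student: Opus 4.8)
The plan is to show that the bit-flip involution
$$\Phi: (q_{12},q_{13},q_{23}) \mapsto (1-q_{12},1-q_{13},1-q_{23})$$
carries the defining inequalities of $\mathsf N_t$ to those of $\mathsf N_{1-t}$, and likewise for $\mathsf L_t$; since $\Phi$ is an affine isometry (it is the composition of a reflection and a translation, hence preserves Euclidean volume), this suffices. First I would treat the no-signaling constraints \eqref{eq:K3-Np}. Substituting $q_{ij} = 1 - q'_{ij}$ into $\max\{0,2t-1\} \le q_{ij} \le t$ gives $\max\{0,2t-1\} \le 1 - q'_{ij} \le t$, i.e. $1-t \le q'_{ij} \le 1 - \max\{0,2t-1\} = \min\{1, 2(1-t)-1+1\}$; a short check shows $1 - \max\{0,2t-1\} = 1$ when $t \le 1/2$ and $= 2(1-t)$ when $t \ge 1/2$, which is exactly $\min\{1, \ldots\}$ — more simply, $1-\max\{0,2t-1\} = \max\{\text{nothing}\}$... the cleanest statement is that the interval $[\max\{0,2t-1\}, t]$ maps under $x \mapsto 1-x$ to $[1-t, 1-\max\{0,2t-1\}]$, and one verifies directly that this equals $[\max\{0,2(1-t)-1\}, 1-t]$ by checking the two cases $t \le 1/2$ and $t \ge 1/2$. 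Hence $\Phi(\mathsf N_t) = \mathsf N_{1-t}$.

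Next I would handle the local constraints \eqref{eq:K3-Lp}. Under $q_{ij} \mapsto 1 - q'_{ij}$, the first inequality $q_{12}+q_{13}+q_{23} \ge 3t-1$ becomes $3 - (q'_{12}+q'_{13}+q'_{23}) \ge 3t-1$, i.e. $q'_{12}+q'_{13}+q'_{23} \le 4 - 3t = 3(1-t) + 1$... wait, that should read $\le$ and compare with the fourth type of inequality. Concretely: $3 - \sum q'_{ij} \ge 3t - 1 \iff \sum q'_{ij} \le 4 - 3t$. On the other hand, each of the three inequalities like $q_{12}+q_{13}-q_{23} \le t$ becomes $2 - q'_{12} - q'_{13} + q'_{23} \le t$, i.e. $-q'_{12} - q'_{13} + q'_{23} \le t - 2$, equivalently $q'_{12} + q'_{13} - q'_{23} \ge 2 - t$. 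The point is that the system \eqref{eq:K3-Lp} is \emph{not} preserved inequality-by-inequality, but the \emph{whole system at parameter $t$} maps to the \emph{whole system at parameter $1-t$}. To see this I would add the appropriate multiples: note that $\sum q'_{ij} \le 4-3t$ together with the three relations $q'_{12}+q'_{13}-q'_{23} \ge 2-t$ (and cyclic) can be re-derived from the $1-t$ system. In fact the slickest route is to observe that the four inequalities in \eqref{eq:K3-Lp} are, by \cref{prop:Boole}/Fine's theorem, exactly the facets of $\mathsf{COR}(p,K_3)$, and the bit-flip $A_i \rightsquigarrow A_i^c$ at the level of events sends a family realizing vertex probabilities $(t,t,t)$ to one realizing $(1-t,1-t,1-t)$, with $\mathbb P(A_i^c \cap A_j^c) = 1 - p_i - p_j + q_{ij} = 1 - 2t + q_{ij}$; combined with \cref{prop:tree}-style bookkeeping this shows $\Phi$ maps realizable points to realizable points in both directions, hence $\Phi(\mathsf L_t) = \mathsf L_{1-t}$.

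So the two-line structure of the argument is: (i) $\Phi$ is an affine isometry of $\mathbb R^3$, so it preserves volume and maps polytopes to polytopes; (ii) $\Phi$ intertwines the $H$-descriptions at $t$ and $1-t$ — for $\mathsf N$ this is the elementary interval computation above, and for $\mathsf L$ it is cleanest to argue at the level of the realizing events $A_i \mapsto A_i^c$ using \cref{prop:Boole} rather than manipulating \eqref{eq:K3-Lp} directly. The main obstacle I anticipate is purely bookkeeping: making sure the case split $t \lessgtr 1/2$ in the no-signaling bounds is handled cleanly (so that $\max\{0,2t-1\}$ and $\min\{p_i,p_j\}$ transform correctly), and being careful that the local inequalities transform as a \emph{system} and not individually. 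Neither is genuinely hard — the proposition is, as the text says, "obvious" — but writing it without sloppiness about the $\max$/$\min$ requires a moment's care. An alternative that sidesteps all of this: since $\mathsf N_t$ and $\mathsf L_t$ are both invariant under the full symmetric-group action permuting the three edges, and the bit-flip commutes with that action, one only needs the statement up to that symmetry, which slightly shortens the $\mathsf L$ verification.
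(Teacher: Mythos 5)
Your instinct to verify the claim on the $H$-representations is the right one, but you did not actually carry the computation through, and if you do, it \emph{refutes} the statement for the map as written rather than proving it. Take $t<1/2$, so $\mathsf N_t=[0,t]^3$ by \eqref{eq:K3-Np}. Your map $\Phi:q_{ij}\mapsto 1-q_{ij}$ sends this cube to $[1-t,1]^3$, whereas $\mathsf N_{1-t}=[\max\{0,2(1-t)-1\},\,1-t]^3=[1-2t,\,1-t]^3$. These two cubes have the same side length but are different sets (e.g.\ $t=1/4$: $[3/4,1]^3$ versus $[1/2,3/4]^3$), so $\Phi(\mathsf N_t)\neq\mathsf N_{1-t}$. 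The interval identity you assert (``one verifies directly that $[1-t,\,1-\max\{0,2t-1\}]$ equals $[\max\{0,2(1-t)-1\},\,1-t]$'') is false: the first interval has $1-t$ as its \emph{left} endpoint and the second has it as its \emph{right} endpoint. The same failure occurs for $\mathsf L$: $(0,0,0)\in\mathsf L_{1/4}$ but $\Phi(0,0,0)=(1,1,1)\notin\mathsf N_{3/4}\supseteq\mathsf L_{3/4}$. (There are also smaller arithmetic slips: $(1-q'_{12})+(1-q'_{13})-(1-q'_{23})=1-q'_{12}-q'_{13}+q'_{23}$, not $2-\cdots$.)

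The irony is that your ``slickest route'' contains the correct statement and you walk past it. Complementing the events, $A_i\rightsquigarrow A_i^c$, sends $p_i=t$ to $1-t$ and $q_{ij}=\mathbb P(A_i\cap A_j)$ to $\mathbb P(A_i^c\cap A_j^c)=1-p_i-p_j+q_{ij}=1-2t+q_{ij}$ --- exactly as you compute. But $1-2t+q_{ij}$ is \textbf{not} $1-q_{ij}$: the involution induced by the bit-flip on the symmetric slice is the \emph{translation} $q\mapsto q+(1-2t)\mathbf 1$, not the coordinatewise reflection $q\mapsto\mathbf 1-q$. With the translation everything does work, inequality by inequality: $[0,t]^3\mapsto[1-2t,1-t]^3=\mathsf N_{1-t}$, the constraint $\sum q_{ij}\ge 3t-1$ becomes $\sum q'_{ij}\ge 3(1-t)-1$, and $q_{12}+q_{13}-q_{23}\le t$ becomes $q'_{12}+q'_{13}-q'_{23}\le 1-t$, so no ``system-level'' argument is needed. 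In short, the proposition's displayed involution is misstated (it should be $q_{ij}\mapsto 1-p_i-p_j+q_{ij}$, i.e.\ $q_{ij}\mapsto 1-2t+q_{ij}$ on the slice; the map $q_{ij}\mapsto 1-q_{ij}$ corresponds to $\mathbb P(A_i^c\cup A_j^c)$, not $\mathbb P(A_i^c\cap A_j^c)$), and your proof, by asserting rather than performing the two key verifications, proves the corrected statement while attributing it to the wrong map. The paper itself offers no proof beyond calling the symmetry obvious, so flagging and fixing the involution is the substantive content a proof here should supply.
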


Thus, we will keep our study limited to $t \in [0,1/2]$. This ensures that \cref{eq:K3-Np} simplifies to \(0 \leq q_{ij} \leq t\). Note that the case $t=0$ (or, equivalently, $t=1$), corresponding to the deterministic scenario, is degenerate: 
$$\mathsf N_0 = \mathsf L_0 = \{(0,0,0)\}.$$
We shall thus assume from now on $t \in (0,1/2]$.

Before we explicitly calculate the volumes of $\mathsf L_t$ and $\mathsf N_t$, we will do one final simplification. Substituting \(q_{ij} \to tx_{ij}\), the inequalities \eqref{eq:K3-Np} and \eqref{eq:K3-Lp} become:

\begin{equation} \label{eq:K3-Npx}
    0 \leq x_{ij} \leq 1
\end{equation}

\begin{subequations} \label{eq:K3-Lpx}
\begin{align}
    \label{eq:K3-Lpx-1} x_{12} + x_{13} + x_{23} &\geq 3-\frac{1}{t} \\
    \label{eq:K3-Lpx-2} x_{12} + x_{13} - x_{23} &\leq 1 \\
    \label{eq:K3-Lpx-3} x_{12} - x_{13} + x_{23} &\leq 1 \\
    \label{eq:K3-Lpx-4} -x_{12} + x_{13} + x_{23} &\leq 1
\end{align}
\end{subequations}

We introduce the new, scaled, polytopes: 
\begin{align*}
    \tilde{\mathsf N}_t &:= \{(x_{12},x_{13},x_{23}) \in \mathbb R^3 \, : \, \text{ \cref{eq:K3-Npx} holds}\}\\
    \tilde{\mathsf L}_t &:= \{(x_{12},x_{13},x_{23}) \in \mathbb R^3 \, : \, \text{ \cref{eq:K3-Npx} and \cref{eq:K3-Lpx} hold}\}.
\end{align*}
Clearly, one has $\vol(\mathsf N_t) = t^3 \vol(\tilde{\mathsf N}_t)$, and similarly for the local polytope; hence 
$$\frac{\vol(\mathsf L_t)}{\vol(\mathsf N_t)} = \frac{\vol(\tilde{\mathsf L}_t)}{\vol(\tilde{\mathsf N}_t)}.$$

From inequality \eqref{eq:K3-Npx}, it becomes easy to see that $\tilde{\mathsf N}_t$ is the unit cube $[0,1]^3$ and thus $\vol(\tilde{\mathsf N}_t) =1$, independently of the value of $t$.

The computation for the scaled local polytope slice $\tilde{\mathsf L}_t$ is more involved. First, we note that \cref{eq:K3-Lpx-1} depends on $t$, while \crefrange{eq:K3-Lpx-2}{eq:K3-Lpx-4} are independent of $t$. Importantly, \cref{eq:K3-Lpx-1} is trivially true for all $t \in (0, 1/3]$. Hence, the volume of the local slice is constant in this region. 

We assume for now $t \leq 1/3$. In this regime, to explicitly compute the volume, we go to the $V$-representation of $\tilde{\mathsf L}_t$. The polytope $\tilde{\mathsf L}_t$ has the following vertices 
$$(0, 0, 0), (1, 0, 0), (0, 1, 0), (0, 0, 1), (1, 1, 1).$$

The corresponding 3-dimensional body is depicted in Figure \ref{fig:K3-t-small}. We partition it into two disjoint bodies:
\begin{itemize}
    \item a triangular pyramid 
    $$\operatorname{conv} \Big\{  (0, 0, 0), (1, 0, 0), (0, 1, 0), (0, 0, 1)\Big\}$$
    which has volume $1/6$
    \item a regular tetrahedron
    $$\operatorname{conv} \Big\{  (1, 1, 1), (1, 0, 0), (0, 1, 0), (0, 0, 1)\Big\}$$
    with side length $\sqrt 2$ and volume $1/3$.
\end{itemize}

\begin{figure}[htb]
\hspace{-2cm}\begin{minipage}[l]{.45\textwidth}
    \includegraphics[scale=0.8]{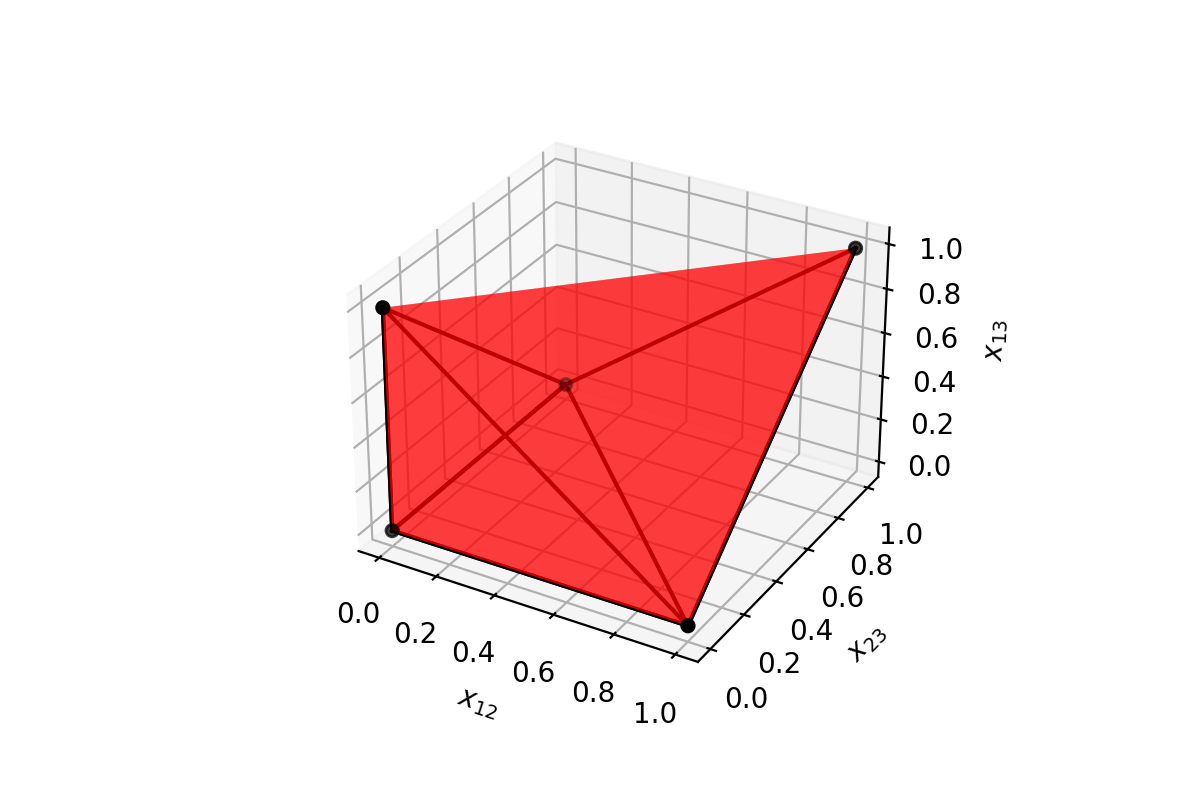}
\end{minipage}\hfill
\begin{minipage}[r]{.45\textwidth}
    \includegraphics[scale=0.5]{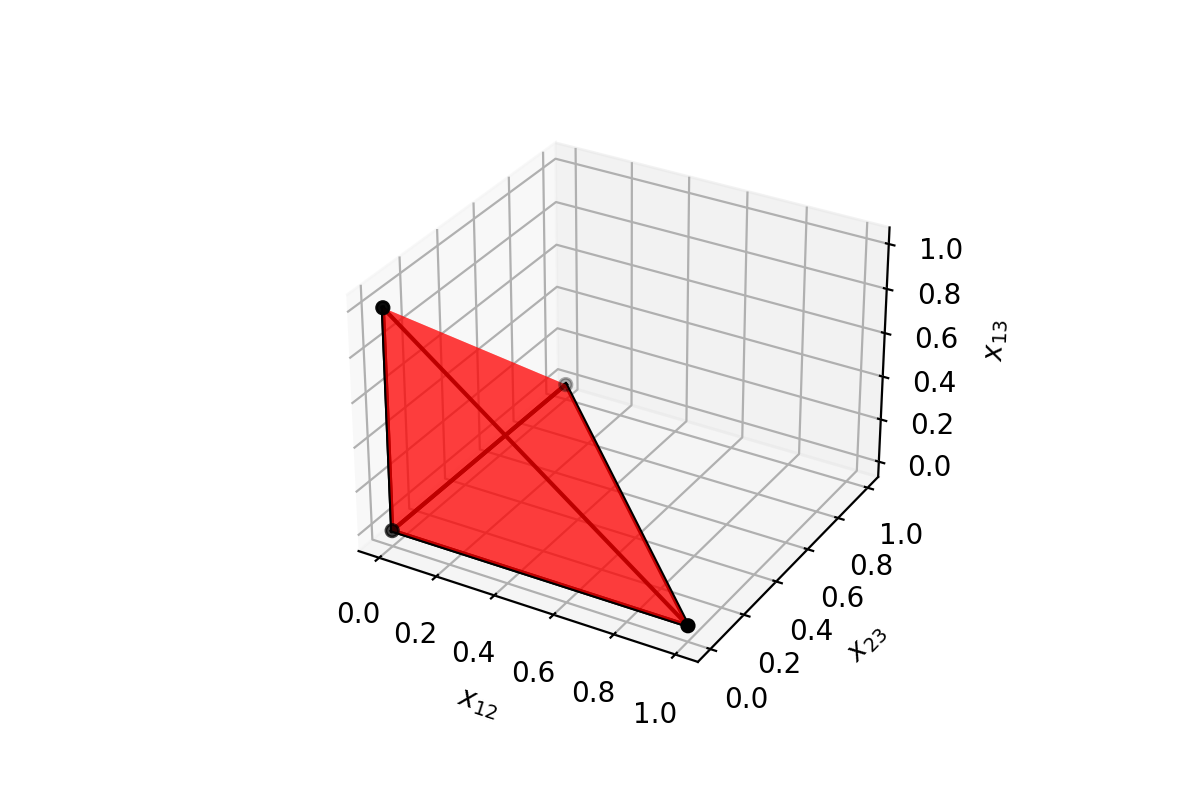} \\
    \includegraphics[scale=0.5]{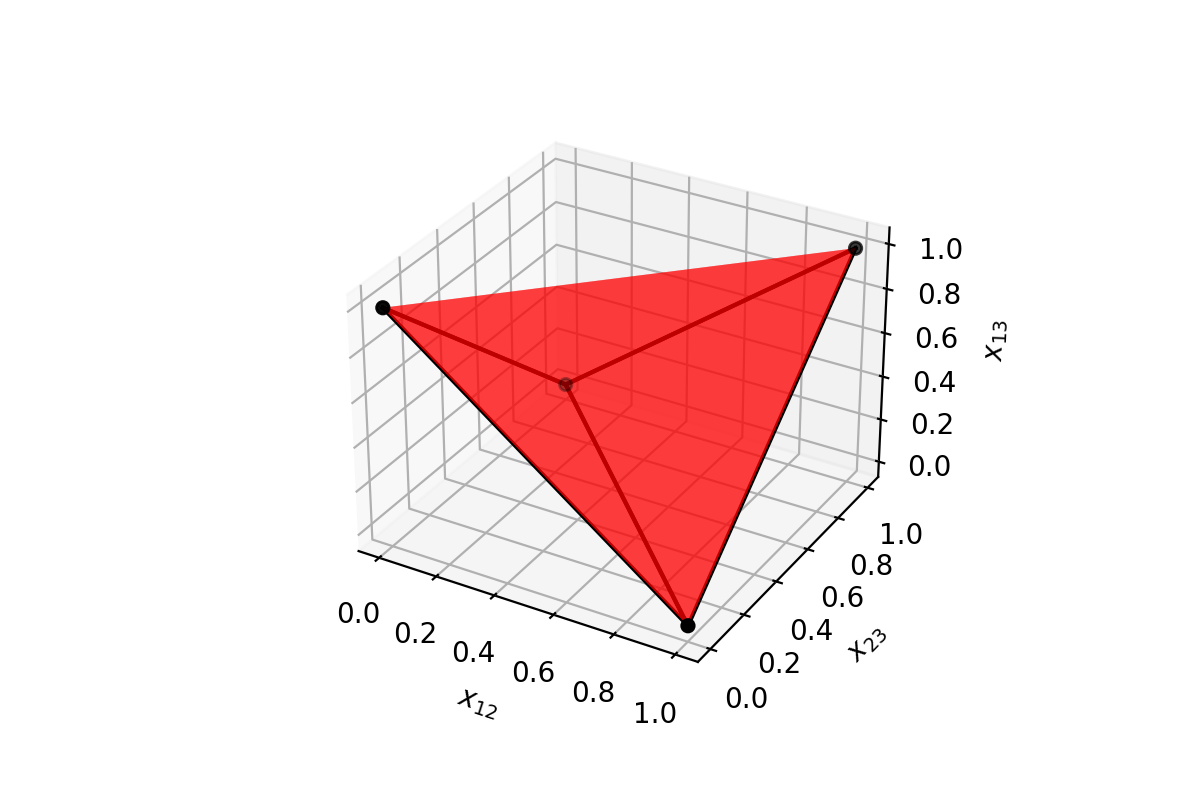}
\end{minipage}

\caption{The scaled local polytope slice $\tilde{\mathsf L}_t$, for $0 < t\leq1/3$ (left). In the right panel, we partition it into a triangular pyramid and a regular tetrahedron.}
\label{fig:K3-t-small}
\end{figure}

Hence, the volume of the combined region is:
\begin{equation}
    \forall t \in (0,1/3] \qquad \vol(\tilde{\mathsf L}_t) = \frac{1}{3}+\frac{1}{6} = \frac{1}{2}.
\end{equation}

\medskip

We now move on to the case $t \in (1/3, 1/2]$. The inequality \eqref{eq:K3-Lpx-1} is non-trivial, and the body $\tilde{\mathsf L}_t$ will depend on the actual value of the parameter $t$. The $V$-representation for $\tilde{\mathsf L}_t$ in this parameter region is 
$$\operatorname{conv} \Big\{(s, 0, 0),(0, s, 0), (0, 0, s),(0, 1, 0), (0, 0, 1), (1, 1, 1)\Big\},$$ 
where \(s := 3-1/t\). The 3-dimensional polytope spanned is shown in \cref{fig:K3-t-large}.

\begin{figure}[htb]
    \centering
    \includegraphics[scale=1]{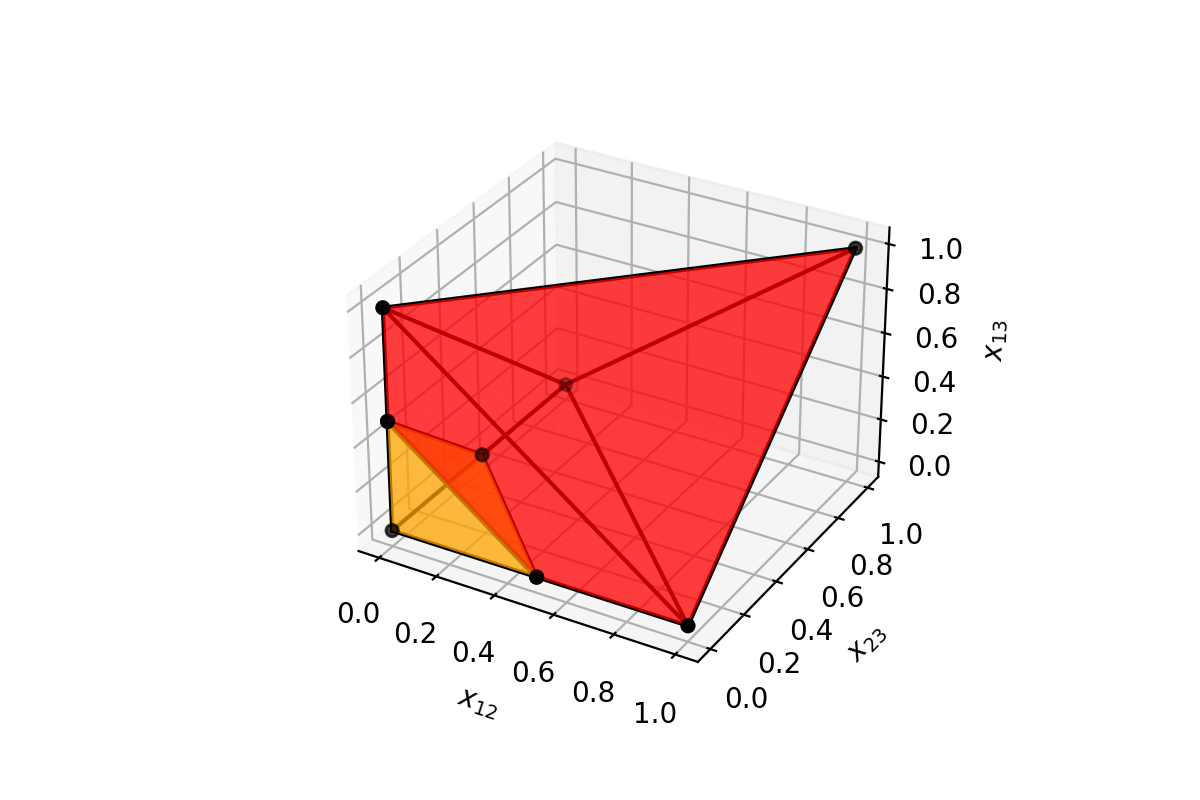}
    \caption{The scaled local polytope slice $\tilde{\mathsf L}_t$, for $1/3 < t\leq 1/2$ (in red). The yellow region represents the difference between this case and the case when $t \leq 1/3$, see \cref{fig:K3-t-small}.}
    \label{fig:K3-t-large}
\end{figure}

The enclosed region has the same volume as the region in \cref{fig:K3-t-small}, left panel, minus the volume of region \(\operatorname{conv} \Big\{(s, 0, 0),(0, s, 0), (0, 0, s),(0, 0, 0)\Big\}\) which is represented in yellow in \cref{fig:K3-t-large}. This again is just the region displayed in \cref{fig:K3-t-small}, top-right panel, scaled by $s$:
\[\vol(\text{bottom-left, yellow region in \cref{fig:K3-t-large}}) = \frac{s^3}{6}.\]
Hence, we have:
\begin{equation}
   \forall t \in (1/3,1/2] \qquad  \vol(\tilde{\mathsf L}_t) = \frac{1}{2} - \frac{s^3}{6} = \frac{1}{2} - \frac{(3-1/t)^3}{6}.
\end{equation}

Putting everything together and using \cref{prop:symVol}, we obtain the main result of this section.

\begin{proposition}\label{prop:K3-vol}
The volume ratio of symmetric slices ($p_1 = p_2 = p_3 = t$) between the local and the non-signaling polytopes corresponding to the triangle graph $K_3$ is given by: 
$$
  \frac{\vol(\mathsf L_t)}{\vol(\mathsf N_t)} = \begin{cases}
  \frac{1}{2}, & t \in(0, \frac{1}{3}] \\
  \frac{1}{2} - \frac{(3-1/t)^3}{6}, & t \in (\frac{1}{3}, \frac{1}{2}]\\
  \frac{1}{2} - \frac{(3-1/(1-t))^3}{6}, & t \in (\frac{1}{2}, \frac{2}{3}]\\
  \frac{1}{2}, &  t \in(\frac{1}{3}, 1).
\end{cases}
$$

    The various parameters as defined in \cref{sec:random-marginal} are:
    \begin{align*}
        \text{The fall-off value} \quad \tau(K_3) &= \frac{1}{3}\\
        \text{The initial ratio} \quad \rho_{0+}(K_3) &= \frac{1}{2}\\
        \text{The middle ratio} \quad \rho_{1/2}(K_3) &= \frac{1}{3}.
    \end{align*}
\end{proposition}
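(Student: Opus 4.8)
The plan is to assemble the pieces established in the discussion above into the four-case formula and then to read off $\tau$, $\rho_{0+}$ and $\rho_{1/2}$. By \cref{prop:symVol} it is enough to handle $t \in (0,1/2]$, since the values on $(1/2,1)$ are the images under the involution $t \mapsto 1-t$; and by the rescaling $q_{ij} = tx_{ij}$ recorded above, the volume ratio equals $\vol(\tilde{\mathsf L}_t)/\vol(\tilde{\mathsf N}_t) = \vol(\tilde{\mathsf L}_t)$ because $\tilde{\mathsf N}_t = [0,1]^3$. So everything reduces to computing $\vol(\tilde{\mathsf L}_t)$ in the two regimes $t \le 1/3$ and $1/3 < t \le 1/2$.

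For $t \in (0,1/3]$ I would note that \cref{eq:K3-Lpx-1} reads $x_{12}+x_{13}+x_{23} \ge 3-1/t$ with a non-positive right-hand side, hence is implied by $x_{ij} \ge 0$ and may be discarded; thus $\tilde{\mathsf L}_t$ is the unit cube with the three inequalities \crefrange{eq:K3-Lpx-2}{eq:K3-Lpx-4} imposed. Each of these removes a corner tetrahedron of volume $1/6$ from $[0,1]^3$ — for instance \cref{eq:K3-Lpx-2} removes $\{x \in [0,1]^3 : x_{12}+x_{13}-x_{23} \ge 1\} = \operatorname{conv}\{(1,1,0),(1,0,0),(0,1,0),(1,1,1)\}$ — and any two of \crefrange{eq:K3-Lpx-2}{eq:K3-Lpx-4} together force one coordinate to equal $1$, so the three removed tetrahedra overlap only in Lebesgue-null sets. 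Hence $\vol(\tilde{\mathsf L}_t) = 1 - 3\cdot\tfrac16 = \tfrac12$, consistent with the alternative partition into a corner simplex (volume $1/6$) and a regular tetrahedron (volume $1/3$). In particular the ratio is the constant $1/2$ on $(0,1/3]$.

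For $t \in (1/3,1/2]$ I would set $s := 3-1/t \in (0,1]$; the only change relative to the previous case is that \cref{eq:K3-Lpx-1} becomes active and removes from the $t=1/3$ body exactly the corner simplex $\operatorname{conv}\{(0,0,0),(s,0,0),(0,s,0),(0,0,s)\}$ of volume $s^3/6$. One checks this simplex genuinely lies in $\tilde{\mathsf L}_{1/3}$: on it $x_{12}+x_{13}-x_{23} \le x_{12}+x_{13}+x_{23} \le s \le 1$, and symmetrically for the other two constraints, while the cube constraints are clear. Therefore $\vol(\tilde{\mathsf L}_t) = \tfrac12 - \tfrac{s^3}{6} = \tfrac12 - \tfrac{(3-1/t)^3}{6}$, and combining with \cref{prop:symVol} produces the stated piecewise formula (the last case being the reflection $t \mapsto 1-t$ of the second and third, i.e. valid for $t \in (2/3,1)$).

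It then remains to extract the three parameters. Since the ratio equals $1/2$ on $(0,1/3]$ and equals $\tfrac12 - (3-1/t)^3/6 < \tfrac12$ for every $t \in (1/3,1/2]$, we get $\tau(K_3) = 1/3$; accordingly $\rho_{0+}(K_3)$, the common value of the ratio on $(0,\tau)$ — equivalently its value at $t = \tau/2 = 1/6$ — equals $1/2$; and $\rho_{1/2}(K_3)$ is the value at $t = 1/2$, where $s = 1$, namely $\tfrac12 - \tfrac16 = \tfrac13$. The only step requiring genuine care is the identification of the face structure (equivalently, the $V$-representation) of $\tilde{\mathsf L}_t$ in each regime and the verification that the tetrahedra and simplices being removed meet only in null sets; once that is in hand the volume bookkeeping is elementary.
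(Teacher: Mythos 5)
Your argument is correct and follows the same overall strategy as the paper: rescale by $q_{ij}=t x_{ij}$ so that $\tilde{\mathsf N}_t=[0,1]^3$, observe that \cref{eq:K3-Lpx-1} is vacuous for $t\le 1/3$ and cuts off the corner simplex of volume $s^3/6$ with $s=3-1/t$ for $t\in(1/3,1/2]$, then reflect via \cref{prop:symVol}. The only (minor) divergence is in the regime $t\le 1/3$: the paper passes to the $V$-representation and partitions $\tilde{\mathsf L}_t$ into a corner pyramid of volume $1/6$ and a regular tetrahedron of volume $1/3$, whereas you compute the same volume as the unit cube minus the three corner tetrahedra cut off by \crefrange{eq:K3-Lpx-2}{eq:K3-Lpx-4}, checking (correctly, by summing pairs of the inequalities to force a coordinate to equal $1$) that these overlap only in null sets; your route avoids having to identify the vertex list explicitly, at the cost of the disjointness verification. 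Both computations, and your extraction of $\tau$, $\rho_{0+}$ and $\rho_{1/2}$, are sound.
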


We plot the analytical results alongside numerical calculations using the \texttt{cddlib} library \cite{cddlib} in \cref{fig:vol-ratio-K3-symmetric-slice}.

\begin{figure}[htb]
    \centering
    \includegraphics[width=0.8\textwidth]{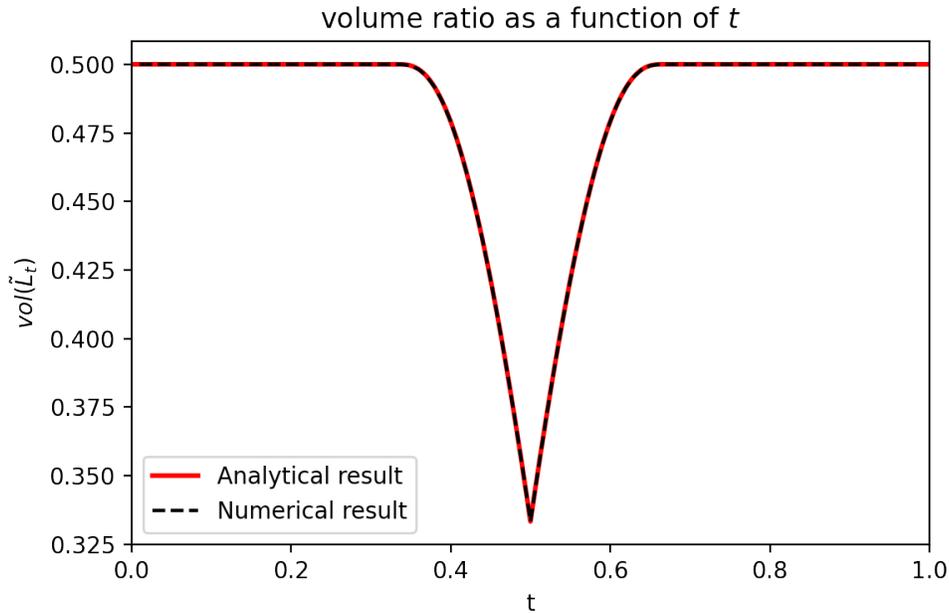}
    \caption{Volume ratio between the local polytope slice and the non-signaling polytope slice, in the symmetric ($p_i = t$) case. The curve is symmetric with respect to $t=1/2$, and constant ($=1/2$) for $t \leq 1/3$.}
    \label{fig:vol-ratio-K3-symmetric-slice}
\end{figure}

\subsection{Skewed slices \texorpdfstring{$(p_1, p_2, p_3) = (t, t, 1/2-t)$}{tt1/2-t}}

We now consider a different, non-symmetric, slice through the local and non-signaling polytopes of the triangle graph. For a parameter $t \in [0, 1/2]$, we consider the slice corresponding to the vertex probabilities 
$$p_1 = p_2 = t, \qquad p_3 = 1/2 - t.$$
Note that this is the only situation of this type that we shall consider in this paper. The previous subsection, as well as all of the subsequent examples will focus on symmetric slices, where all the probabilities associated to vertices are equal.

Using the same symmetry argument as in \cref{prop:symVol}, this slice is isometrically equivalent to the one corresponding to 
$$p_1 = p_2 = 1-t, \qquad p_3 = t-1/2, \qquad \text{for } t \in [1/2,1].$$
As in the previous case, we denote by $\mathsf N_t$ and $\mathsf L_t$ the respective slices. The case $t=0$ yields degenerate sets, so we assume that $t \in (0, 1/2]$. 

Replacing this parameterization for the marginals in \cref{eq:K3-N} and \cref{eq:K3-L}, we have the $H$-representation defining our slice $\mathsf L_t$ as:
\begin{equation}\label{eq:K3-NpC2}
\begin{aligned}
    0 \leq q_{12} &\leq t \\
    0 \leq q_{13}, q_{23} &\leq \min{\{t, \frac{1}{2}-t\}}
\end{aligned}
\end{equation}

\begin{equation}\label{eq:K3-LpC2}
\begin{aligned}
    q_{12} + q_{13} + q_{23} &\geq t - \frac{1}{2} \\
    q_{12} + q_{13} - q_{23} &\leq t \\
     q_{12} - q_{13} + q_{23} &\leq t \\
     -q_{12} + q_{13} + q_{23} &\leq \frac{1}{2}-t
\end{aligned}
\end{equation}
where inequalities \eqref{eq:K3-NpC2} form the $H$-representations of $\mathsf N_t$ and inequalities \eqref{eq:K3-LpC2} are the additional inequalities constraining $\mathsf L_t$. Following a similar approach as in the symmetric case, we substitute \(q_{ij} = tx_{ij}\), obtaining the inequalities for the scaled bodies $\tilde{\mathsf N}_t, \tilde{\mathsf L}_t$:

\begin{equation} \label{eqn:K3-NpxC2}
\begin{aligned}
    0 \leq x_{12} &\leq 1 \\
    0 \leq x_{13}, x_{23} &\leq \min\big{\{1, \frac{1}{2t}-1\big\}}
\end{aligned}
\end{equation}
\vspace{3px}
\begin{subequations} \label{eqn:K3-LpxC2}
\begin{align}
 \label{eq:K3-LpxC2-1}   x_{12} + x_{13} + x_{23} &\geq 1 - \frac{1}{2t} \\
  \label{eq:K3-LpxC2-2}   x_{12} + x_{13} - x_{23} &\leq 1 \\
  \label{eq:K3-LpxC2-3}    x_{12} - x_{13} + x_{23} &\leq 1 \\
  \label{eq:K3-LpxC2-4}    -x_{12} + x_{13} + x_{23} &\leq \frac{1}{2t}-1.
\end{align}
\end{subequations}

Unlike the symmetric slice case discussed previously, the present situation is more involved since $\vol(\tilde{\mathsf N}_t)$ is not equal to $1$ in the entire domain here. We note that inequality \eqref{eq:K3-LpxC2-1} is always satisfied for $t \in (0, 1/2)$ and hence it can be dropped. Inequalities \eqref{eq:K3-LpxC2-2} and \eqref{eq:K3-LpxC2-3} are independent of $t$, while \eqref{eq:K3-LpxC2-4} is trivially satisfied in the range $t \in (0, 1/6]$.

Hence, we start by first looking at the region $t \in (0, 1/6]$. The $H$-representation of $\tilde{\mathsf L}_t$ is given by the relevant inequalities:
\begin{equation} \label{eqn:K3-NpxC2-r1}
    0 \leq x_{ij} \leq 1 \\
\end{equation}
\vspace{2px}
\begin{equation} \label{eqn:K3-LpxC2-r1}
\begin{aligned}
     x_{12} + x_{13} - x_{23} \leq 1 \\
     x_{12} - x_{13} + x_{23} \leq 1
\end{aligned}
\end{equation}
with \cref{eqn:K3-NpxC2-r1} corresponding to $\tilde{\mathsf N}_t$. Hence, $\vol(\tilde{\mathsf N}_t) = 1$. The $V$-representation for $\tilde{\mathsf L}_t$ is easily obtained: 
$$\tilde{\mathsf L}_t = \operatorname{conv}\Big\{(0, 0, 0), (1, 1, 1), (1, 0, 0), (0, 1, 0), (0, 0, 1), (0, 1, 1)\Big\}.$$ 
This region is plotted in Figure 4.

\begin{figure}[htb]
\hspace{-2cm}
\begin{minipage}[b]{.45\textwidth}
    \includegraphics[scale=0.8]{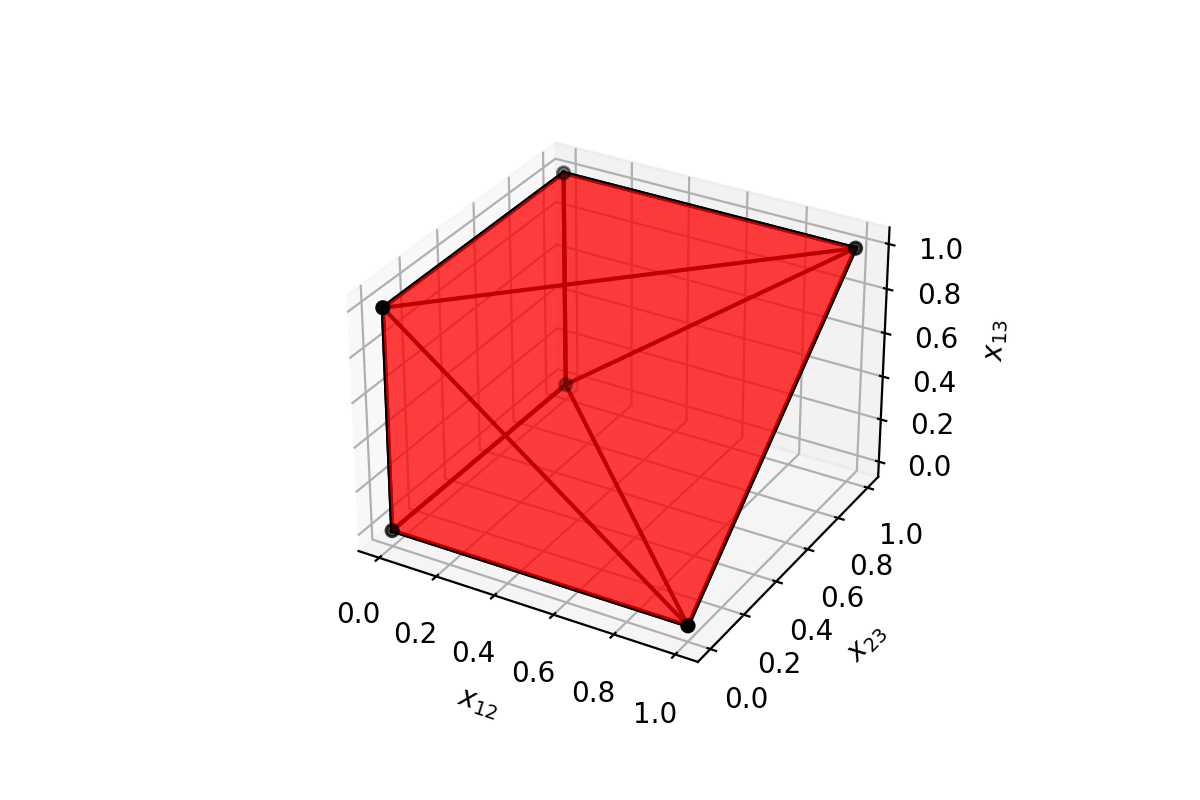}
\end{minipage}\hfill
\begin{minipage}[b]{.45\textwidth}
    \includegraphics[scale=0.5]{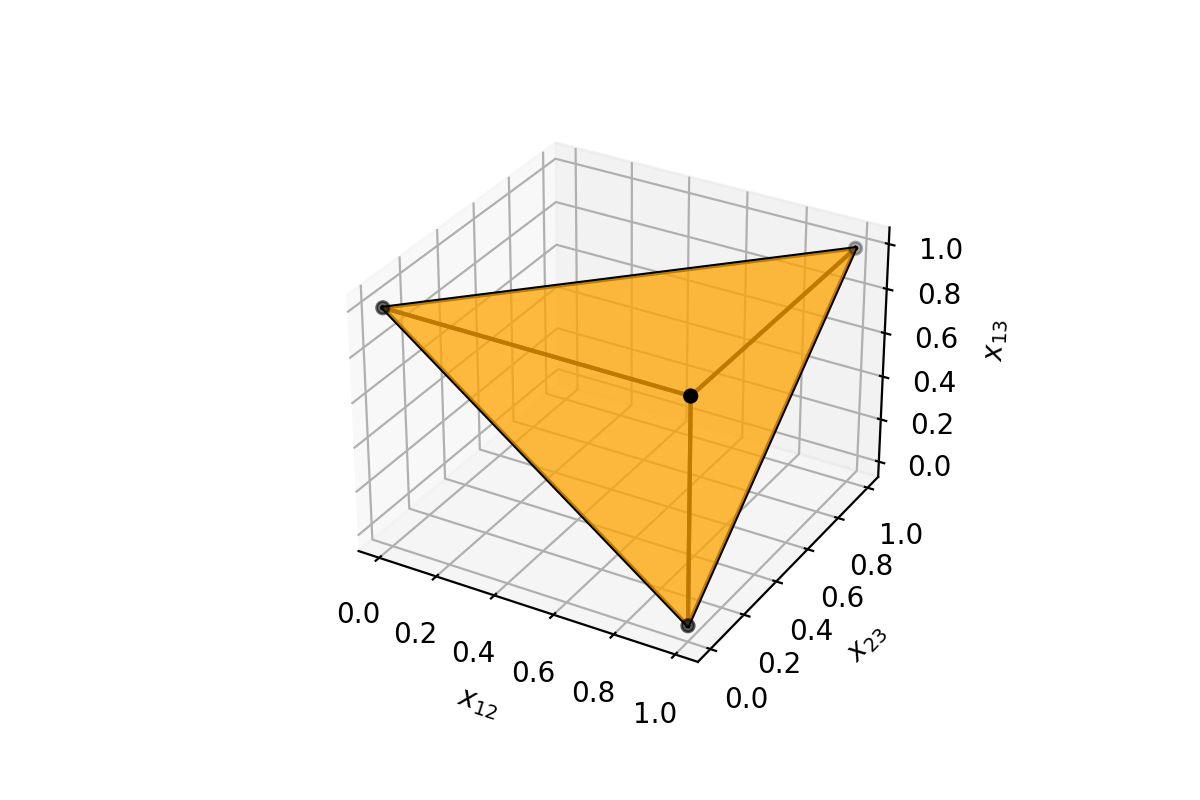} \\
    \includegraphics[scale=0.5]{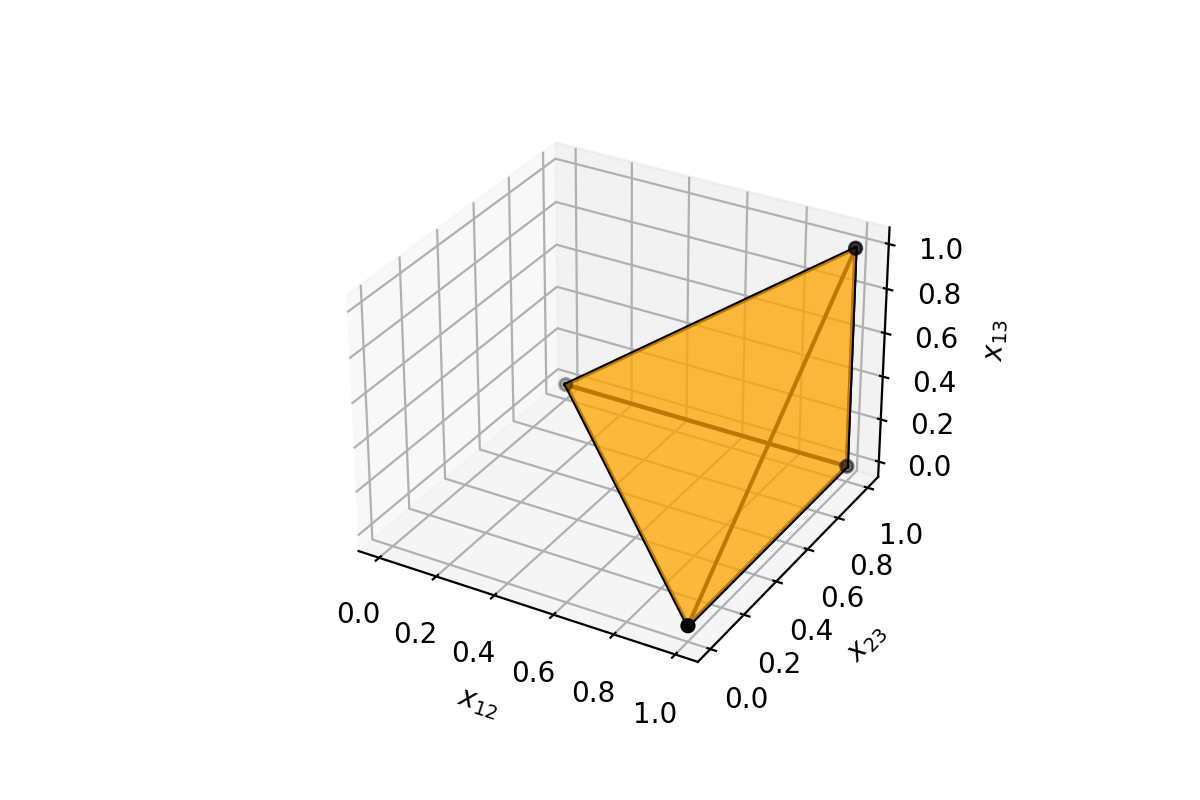}
\end{minipage}
\caption{The scaled local polytope slice $\tilde{\mathsf L}_t$, for $0 < t\leq1/6$ (left). In the right panel, we show the two triangular pyramids removed from the unit cube to obtain figure on the left.}
\label{fig:K3skew-t-small}
\end{figure}

Both $\operatorname{conv}\{(1, 0, 0), (0, 0, 1), (1, 1, 1), (1, 0, 1)\}$ and $\operatorname{conv}\{(1, 0, 0), (0, 0, 1), (1, 1, 1), (1, 1, 0)\}$ are triangular pyramids with volume $1/6$ as can be seen in figure \ref{fig:K3skew-t-small}. The scaled slice $\tilde{\mathsf L}_t$ is just the unit cube with these two regions removed. Hence,
\begin{equation}
    \forall t \in (0,1/6] \qquad \vol(\tilde{\mathsf L}_t) = 1 - 2\cdot\frac{1}{6} = \frac{2}{3}
\end{equation}

Next, we look at the region $t \in (1/6, 1/4]$. The relevant inequalities for the $H$-representation of $\tilde{\mathsf N}_t$ are given by \cref{eqn:K3-NpxC2-r1}. Hence, $\vol(\tilde{\mathsf N}_t) = 1$. 

The $H$-representation of the local slice $\tilde{\mathsf L}_t$ are the inequalities in \cref{eqn:K3-NpxC2-r1} and \eqref{eq:K3-LpxC2-2}-\eqref{eq:K3-LpxC2-4}. The corresponding $V$-representation is 
$$\operatorname{conv}\{(0, 0, 0), (1, 1, 1), (1, 0, 0), (0, 1, 0), (0, 0, 1), (0, 1, k_1), (0, k_1, 1), (k_2, 1, 1)\},$$
where \(k_1 = 1/(2t)-2\) and \(k_2 = 3 - 1/(2t)\). This region is plotted in \cref{fig:K3skew-t-mid}.

\begin{figure}[htb]
    \centering
    \includegraphics[width=1\textwidth]{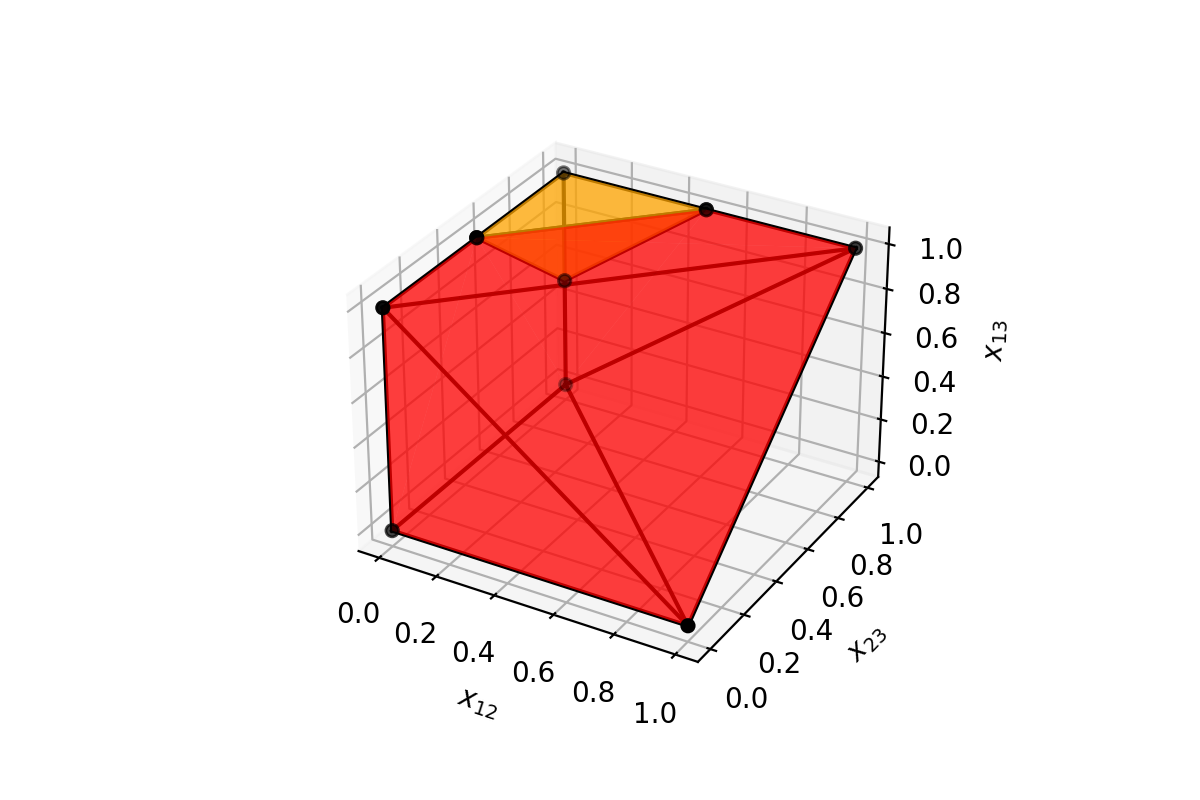}
    \caption{The scaled local polytope slice $\tilde{\mathsf L}_t$, for $1/6 < t\leq 1/4$ (in red). The yellow region represents the difference between this case and the case when $t \leq 1/6$, see \cref{fig:K3skew-t-small}.}
    \label{fig:K3skew-t-mid}
\end{figure}

The volume of the local slice now is just the volume of the local slice obtained for \(t \in (0, 1/6]\) minus the volume of the yellow region. The yellow region is a triangular pyramid scaled by \(3-1/(2t)\) and hence the volume is given by:
\[V_{yellow} = \frac{(3-1/(2t))^3}{6}\]
Hence, we have:
\begin{equation}
    \forall t \in (1/6,1/4] \qquad \vol(\tilde{\mathsf L}_t) = \frac{2}{3} - \frac{(3-1/(2t))^3}{6}
\end{equation}

For the values $t \in (1/4, 1/2]$, we look back at inequalities in \cref{eqn:K3-NpxC2,eqn:K3-LpxC2} and use the substitution \(q_{12} = tx_{12}\), \(q_{23} = (1/2-t)x_{23}\) and \(q_{13} = (1/2-t)x_{13}\). The resulting inequalities are:

\begin{equation} \label{eqn:K3-NpxC2-r3}
    0 \leq x_{ij} \leq 1 \\
\end{equation}
\vspace{2px}
\begin{equation} \label{eqn:K3-LpxC2-r3}
\begin{aligned}
    x_{12} + \Big(\frac{1}{2t}-1\Big)x_{13} - \Big(\frac{1}{2t}-1\Big)x_{23} \leq 1 \\
     x_{12} - \Big(\frac{1}{2t}-1\Big)x_{13} + \Big(\frac{1}{2t}-1\Big)x_{23} \leq 1 \\
     -x_{12} + \Big(\frac{1}{2t}-1\Big)x_{13} + \Big(\frac{1}{2t}-1\Big)x_{23} \leq \frac{1}{2t}-1
\end{aligned}
\end{equation}

In this new substitution, $\vol(\tilde{\mathsf N}_t)$ is unity again. The $V$-representation of $\tilde{\mathsf L}_t$ is \\ 
\begin{align*}
	\operatorname{conv}\Big\{&(0, 0, 0), (1, 1, 1), (1, 0, 0), (0, 1, 0), (0, 0, 1), (1, 1, 1), \\
	&(1/(2t)-1, 1, 1), (2-1/(2t), 1, 0), (2-1/(2t), 0, 1)\Big\}
\end{align*}
We plot this region in \cref{fig:K3skew-t-large}.

\begin{figure}[htb]
\hspace{-2cm}
\begin{minipage}[c]{.45\textwidth}
    \includegraphics[scale=0.8]{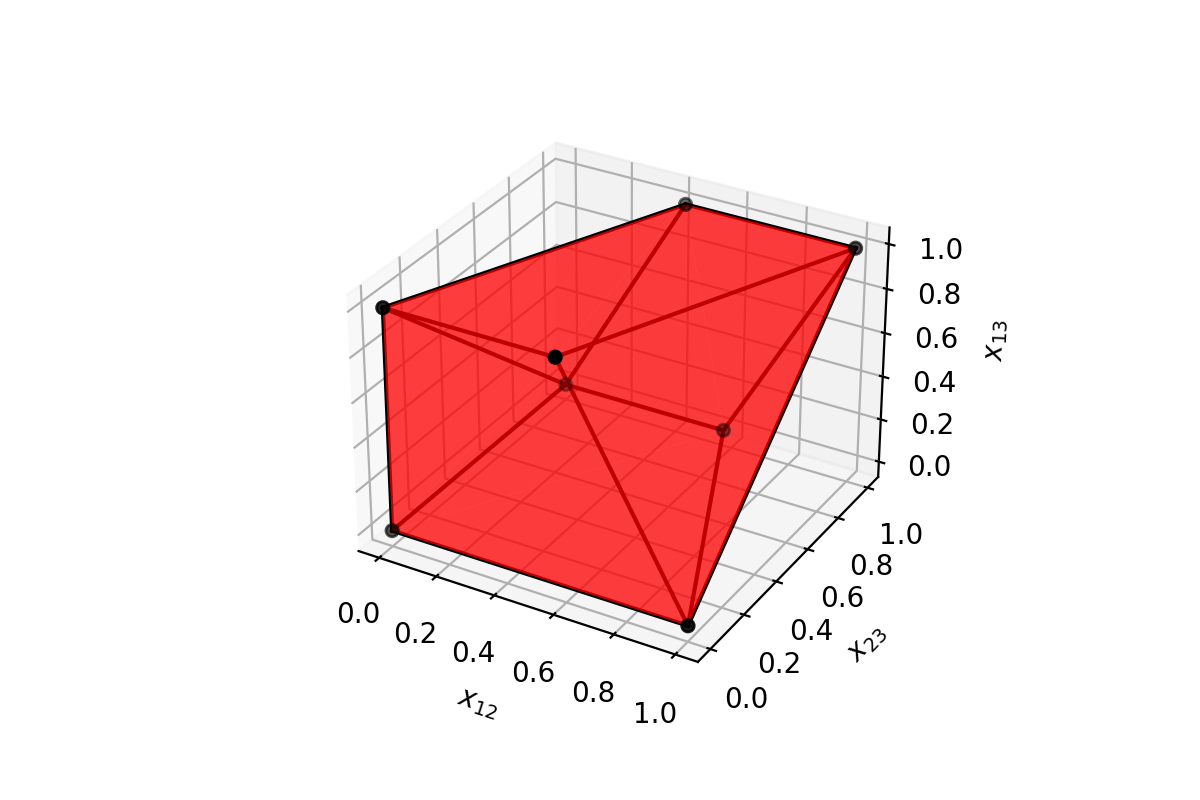}
\end{minipage}\hfill
\begin{minipage}[c]{.45\textwidth}
    \includegraphics[scale=0.4]{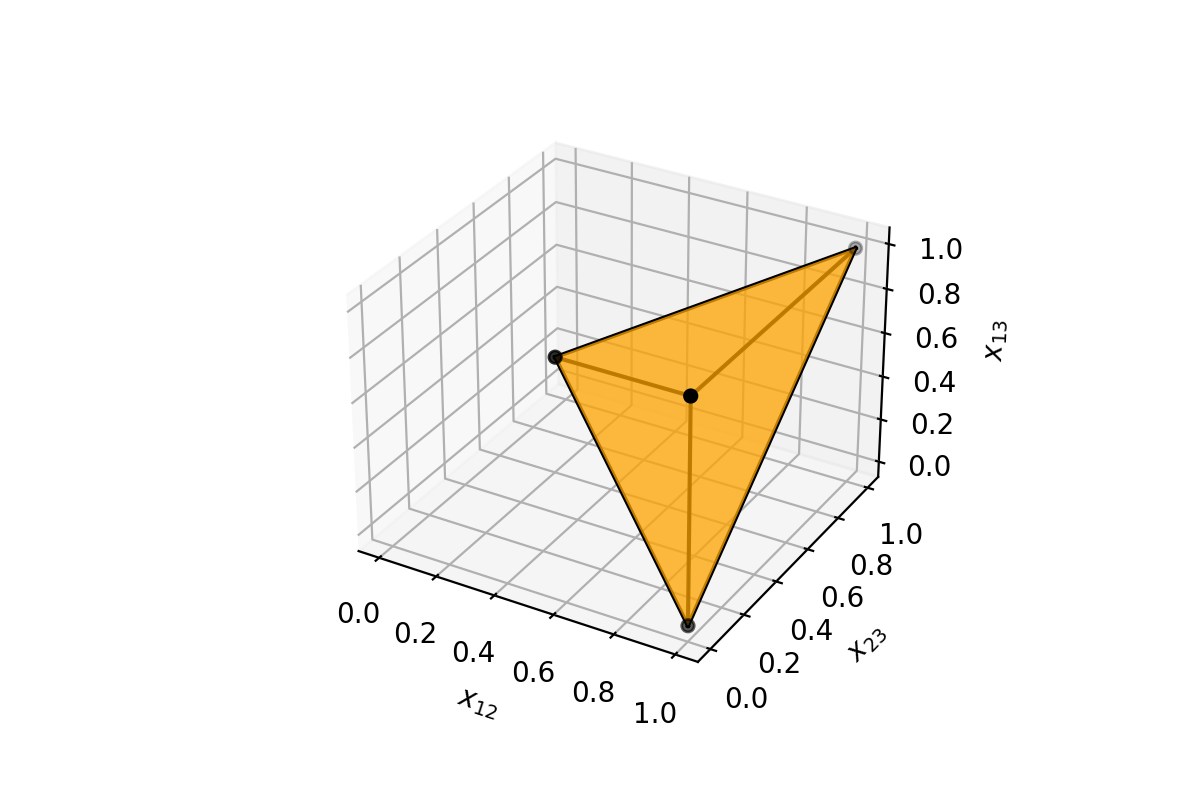} \\
    \includegraphics[scale=0.4]{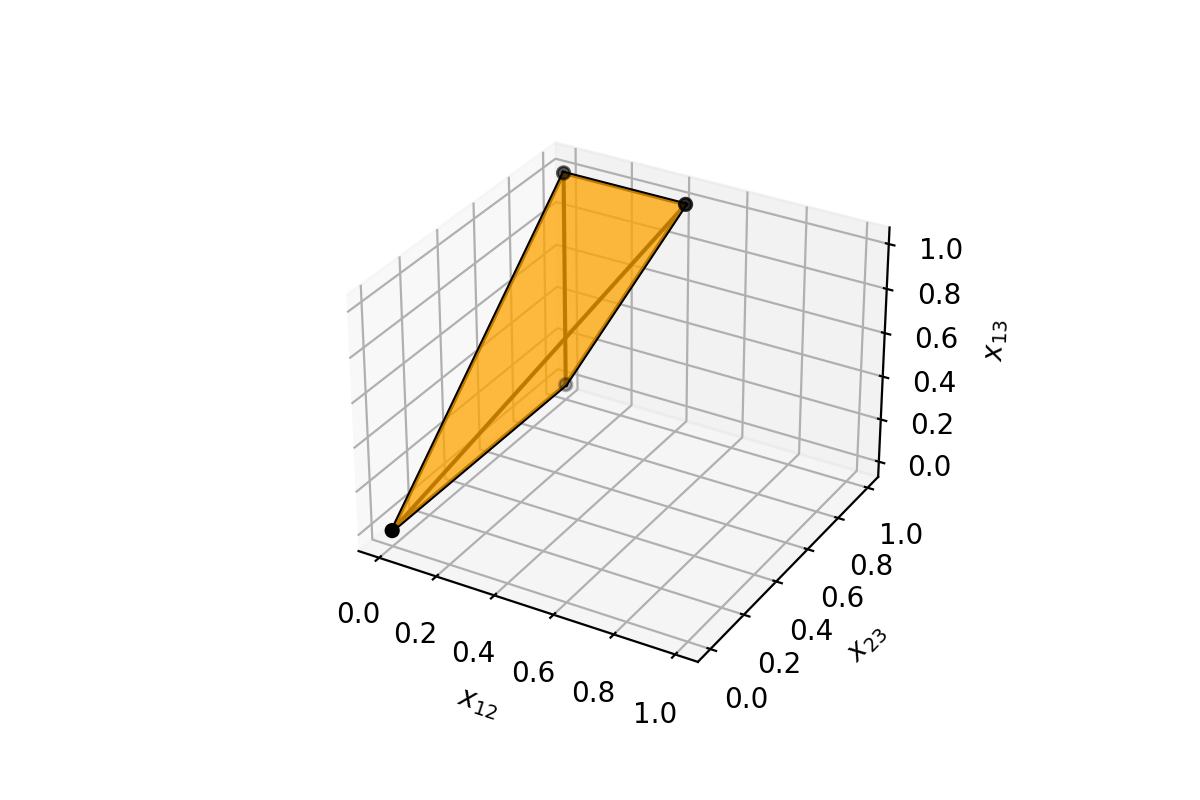} \\
    \includegraphics[scale=0.4]{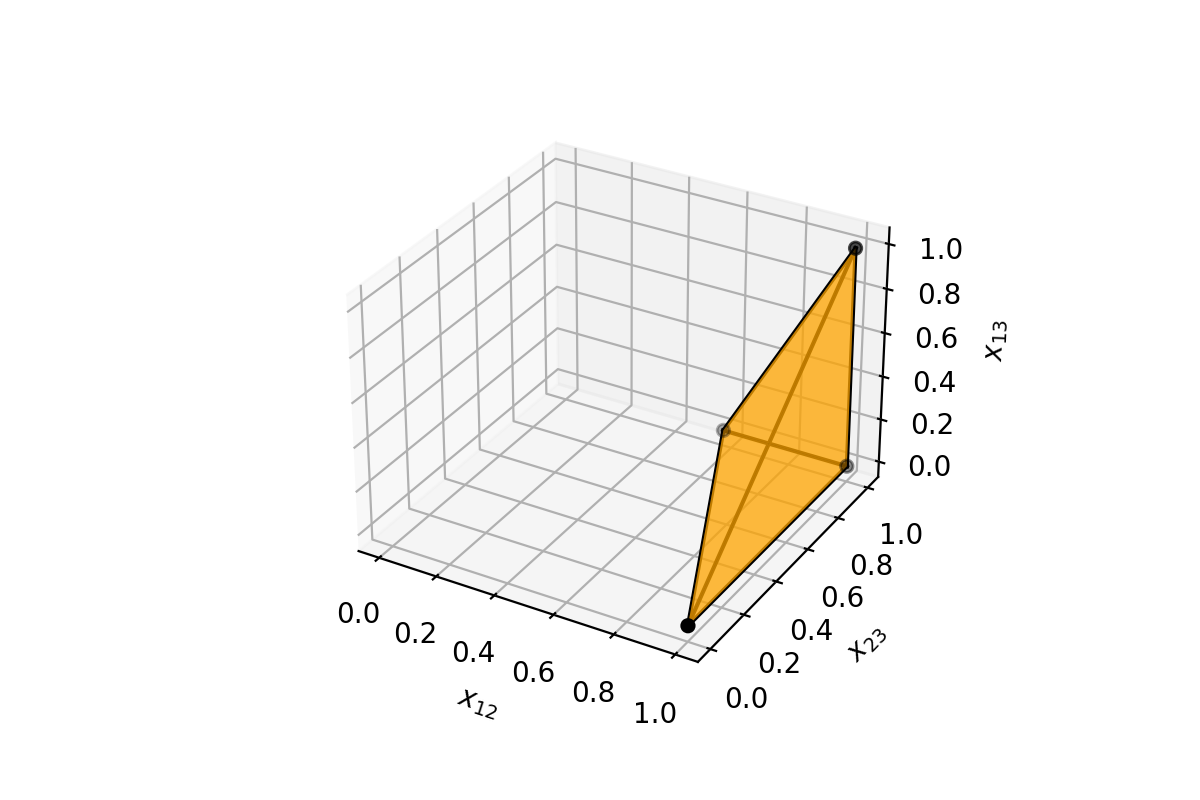}
\end{minipage}
\caption{The scaled local polytope slice $\tilde{\mathsf L}_t$, for $1/4 < t\leq1/2$ (left). In the right panel, we show the three triangular pyramid regions removed from the unit cube to get figure on the left.}
\label{fig:K3skew-t-large}
\end{figure}

It is easy to see that the local slice $\tilde{\mathsf L}_t$ is the unit cube minus the 3 triangular pyramids (shaded yellow) of base area \(1/2\), height \(1/(2t)-1\) and hence of volume,
\[V_{\text{yellow} }= \frac{1}{3}\cdot\frac{1}{2}\cdot(\frac{1}{2t}-1) = \frac{1/(2t)-1}{6}\]

Hence, we have:
\begin{equation}
\forall t \in (1/4,1/2] \qquad \vol(\tilde{\mathsf L}_t) = 1 - \frac{(1/(2t)-1)}{2}
\end{equation}

\begin{proposition}\label{prop:K3-skew-vol}
    The volume ratio of skewed slices ($p_1 = p_2 = t$, $p_3 = 1/2-t$) between the local and no-signaling polytopes corresponding to the triangle graph $K_3$ are given by (see \cref{fig:K3-skew-vol}):
$$
  \frac{\vol(\mathsf L_t)}{\vol(\mathsf N_t)} = \begin{cases}
  \frac{2}{3}, & t \in(0, \frac{1}{6}] \\
  \frac{2}{3} - \frac{(3-1/(2t))^3}{6}, & t \in (\frac{1}{6}, \frac{1}{4}]\\
  1 - \frac{(1/(2t)-1)}{2}, & t \in (\frac{1}{4}, \frac{1}{2})
\end{cases}
$$  
\end{proposition}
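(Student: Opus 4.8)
The plan is to collect the piecewise volume formula from the case analysis carried out above, once we record the two elementary facts that make the ratio easy to read off.

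\emph{Step 1: scale invariance and the denominator.} I would first observe that the volume ratio $\vol(\mathsf L_t)/\vol(\mathsf N_t)$ is unchanged under the affine rescalings introduced in the text — the uniform scaling $q_{ij} = t x_{ij}$ on $(0,1/4]$ and the mixed scaling $q_{12}=tx_{12}$, $q_{13}=q_{23}=(1/2-t)x_{ij}$ on $(1/4,1/2]$ — since each has a nonzero constant Jacobian and the same factor appears in numerator and denominator. In each of the three windows the rescaled non-signaling slice is exactly the unit cube: for $t\le 1/4$ because $\min\{1,\tfrac{1}{2t}-1\}=1$ in \eqref{eqn:K3-NpxC2}, and for $t\in(1/4,1/2]$ by the choice of the mixed scaling, which normalizes each coordinate interval to $[0,1]$ (cf.\ \eqref{eqn:K3-NpxC2-r3}). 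Hence $\vol(\tilde{\mathsf N}_t)=1$ throughout, and the ratio equals $\vol(\tilde{\mathsf L}_t)$.

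\emph{Step 2: locating the breakpoints and the active inequalities.} I would then split $(0,1/2)$ at $1/6$ and $1/4$, which are precisely the values where the combinatorial type of $\tilde{\mathsf L}_t$ changes: inequality \eqref{eq:K3-LpxC2-1} is slack for all $t\in(0,1/2)$; \eqref{eq:K3-LpxC2-4} is slack exactly for $t\le 1/6$; and the box bound on $x_{13},x_{23}$ becomes binding at $t=1/4$. In each window I would pass to the $V$-representation of $\tilde{\mathsf L}_t$ already exhibited and compute its volume by subtracting from the unit cube the corner pyramids cut off by the active local inequalities: two pyramids of volume $1/6$ for $t\le 1/6$, yielding $2/3$; the same configuration with an additional pyramid of linear scale $3-\tfrac{1}{2t}$ removed for $t\in(1/6,1/4]$, yielding $2/3-(3-\tfrac1{2t})^3/6$; and, after the mixed rescaling, three pyramids of base area $1/2$ and height $\tfrac1{2t}-1$ for $t\in(1/4,1/2]$, yielding $1-(\tfrac1{2t}-1)/2$. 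A consistency check at the breakpoints ($t=1/6$ and $t=1/4$) confirms the pieces agree, and in the limit $t\to 1/2$ the ratio tends to $1$, as it must since then $p_3\to 0$ and the problem degenerates to a single free edge.

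\emph{Main obstacle.} The delicate point is the vertex enumeration and decomposition in the third window: after the mixed rescaling the three inequalities in \eqref{eqn:K3-LpxC2-r3} lose the manifest permutation symmetry of the earlier regimes, so one must verify that $\tilde{\mathsf L}_t$ is genuinely the cube with three \emph{disjoint} corner simplices removed — that is, that those simplices do not overlap and that no further cube vertex is excised — before the volume reduces to an elementary sum of simplex volumes. Once this is confirmed, assembling the three cases and dividing by $\vol(\tilde{\mathsf N}_t)=1$ gives the stated formula.
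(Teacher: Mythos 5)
Your proposal is correct and follows essentially the same route as the paper: the same breakpoints at $t=1/6$ and $t=1/4$, the same uniform and mixed rescalings making $\tilde{\mathsf N}_t$ the unit cube, and the same cube-minus-corner-pyramids decompositions yielding $2/3$, $2/3-(3-\tfrac{1}{2t})^3/6$, and $1-(\tfrac{1}{2t}-1)/2$ on the three windows. The consistency checks at the breakpoints are a sensible addition but the argument is otherwise the one given in the text.
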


\begin{figure}[!htb]
    \centering
    \includegraphics[width=0.8\textwidth]{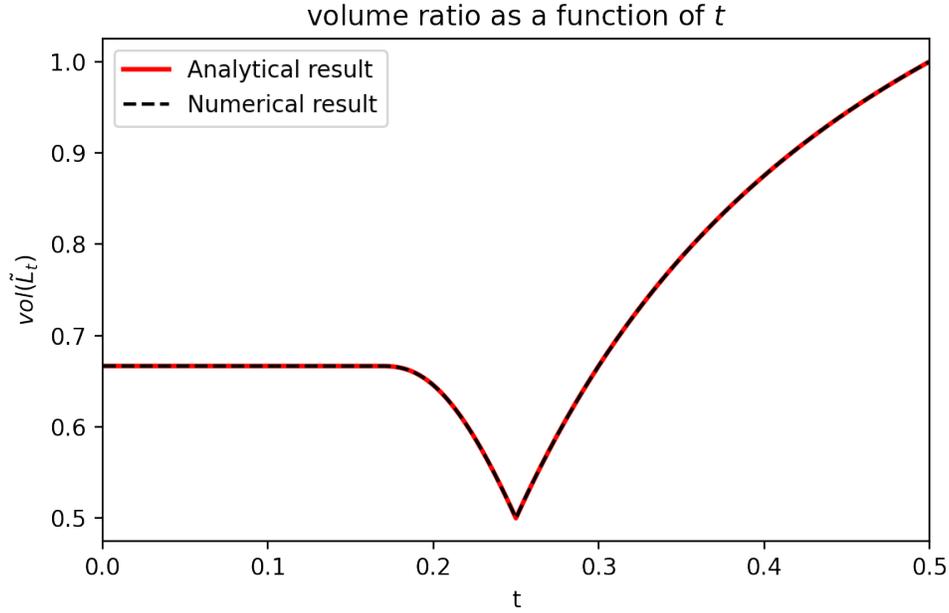}
    \caption{Volume ratio between the local polytope slice and the non-signaling polytope slice, in the skewed ($p_1 = p_2 = t, p_3 = 1/2-t$) case.}
    \label{fig:K3-skew-vol}
\end{figure}

\section{The square graph}\label{sec:square}

In this section, we analyze in detail the case of $K_{2, 2}$ which is the complete-bipartite graph. As explained in \cref{sec:quantum-info}, this case corresponds to the famous CHSH scenario \cite{clauser1969proposed}, and more generally 2-player non-local games with 2 questions and 2 answers. In this work we are focusing only on the local and non-signaling sets of correlations; for an in-depth study of the set of \emph{quantum} correlations, we refer the reader to the excellent work \cite{le2023quantum}. Note that we embed below the graph $K_{2,2}$ in the plane as a square, see also \cref{sec:cycle} for the discussion of arbitrary cycle graphs.

\begin{figure}[htb]
  \centering
  \begin{tikzpicture}[auto, scale = 1.4]

    \tikzstyle{vertex1}=[circle, draw=blue, fill=blue!10!, ultra thick]
    \tikzstyle{vertex2}=[circle, draw=red, fill=red!10!, ultra thick]
    \tikzstyle{edge1}=[draw=black, ultra thick]
    \tikzstyle{edge2}=[draw=red, ultra thick]
    \tikzstyle{plus}=[]

    \node[vertex1] (v1) at (-1.5, 1.5) {$p_1$};
    \node[vertex1] (v2) at (1.5, 1.5) {$p_2$};
    \node[vertex1] (v3) at (1.5, -1.5) {$p_3$};
    \node[vertex1] (v4) at (-1.5, -1.5) {$p_4$};

    \draw[edge1] (v1) edge node{$q_{12}$} (v2);
    \draw[edge1] (v2) edge node{$q_{23}$} (v3);
    \draw[edge1] (v3) edge node{$q_{34}$} (v4);
    \draw[edge1] (v1) edge node{$q_{14}$} (v4);
    
    \end{tikzpicture}
  \label{fig:square-graph}
\end{figure}

The $V$-representation of $\mathsf L(K_{2,2}) = \mathsf{COR} (K_{2,2})$ is given by the convex hull of all rows of the truth table presented in \cref{tab:K22}. This is then used to obtain the $H$-representation below: 

\begin{equation}\label{eq:K22-N}
\begin{aligned}
  0 \leq q_{ij} \leq \min{(p_{i}, p_{j})} \\
  p_i + p_j -p_{ij} \leq 1
\end{aligned}
\end{equation}
\begin{equation}\label{eq:K22-L}
    \begin{aligned}
        0 \leq p_{3} + p_{4} + q_{12} - q_{23} - q_{34} - q_{14}  &\leq 1 \\
        0 \leq p_{1} + p_{4} - q_{12} + q_{23} - q_{34} - q_{14}  &\leq 1 \\
        0 \leq p_{1} + p_{2} - q_{12} - q_{23} + q_{34} - q_{14}  &\leq 1 \\
        0 \leq p_{2} + p_{3} - q_{12} - q_{23} - q_{34} + q_{14}  &\leq 1 \\
     \end{aligned}
\end{equation}

\begin{table}[htb]
   \bgroup
\def\arraystretch{1.5}
\begin{minipage}[b]{0.45\textwidth}
    \begin{center} 
        \begin{tabular}{|cccc|cccc|}
        \hline
\rowcolor[HTML]{C0C0C0}      
        \multicolumn{4}{|c|}{$V$-part}                 & \multicolumn{4}{|c|}{$E$-part}                            \\ \hline
        \rowcolor[HTML]{EFEFEF} 
        1 & 2 & 3 & 4 & 12 & 23 & 34 & 14 \\ \hline
        \hline \rowcolor[HTML]{FFFFFF} 
        0 & 0 & 0 & 0 & 0 & 0 & 0 & 0 \\ \hline
        0 & 0 & 0 & 1 & 0 & 0 & 0 & 0 \\ \hline
        0 & 0 & 1 & 0 & 0 & 0 & 0 & 0\\ \hline
        0 & 0 & 1 & 1 & 0 & 0 & 1 & 0 \\ \hline
        0 & 1 & 0 & 0 & 0 & 0 & 0 & 0 \\ \hline
        0 & 1 & 0 & 1 & 0 & 0 & 0 & 0 \\ \hline
        0 & 1 & 1 & 0 & 0 & 1 & 0 & 0 \\ \hline
        0 & 1 & 1 & 1 & 0 & 1 & 1 & 0 \\ \hline
        \end{tabular}
  \end{center}
\end{minipage}\hfill
\begin{minipage}[b]{0.45\textwidth}
    \begin{center} 
        \begin{tabular}{|cccc|cccc|}
        \hline
\rowcolor[HTML]{C0C0C0}      
        \multicolumn{4}{|c|}{$V$-part}                 & \multicolumn{4}{|c|}{$E$-part}                            \\ \hline
        \rowcolor[HTML]{EFEFEF} 
        1 & 2 & 3 & 4 & 12 & 23 & 34 & 14 \\ \hline
        \hline \rowcolor[HTML]{FFFFFF} 
        1 & 0 & 0 & 0 & 0 & 0 & 0 & 0 \\ \hline
        1 & 0 & 0 & 1 & 0 & 0 & 0 & 1 \\ \hline
        1 & 0 & 1 & 0 & 0 & 0 & 0 & 0 \\ \hline
        1 & 0 & 1 & 1 & 0 & 0 & 1 & 1 \\ \hline
        1 & 1 & 0 & 0 & 1 & 0 & 0 & 0 \\ \hline
        1 & 1 & 0 & 1 & 1 & 0 & 0 & 1 \\ \hline
        1 & 1 & 1 & 0 & 1 & 1 & 0 & 0 \\ \hline
        1 & 1 & 1 & 1 & 1 & 1 & 1 & 1 \\ \hline
        \end{tabular}
  \end{center}
\end{minipage}
\caption{$V$-representation of $\mathsf L(K_{2, 2})$, split in two tables.}\label{tab:K22}
\egroup

\end{table}

The polytopes in question in this case are $8$-dimensional corresponding to $4$ each of vertices ($p_i$) and edges ($q_{ij}$).

\[\mathsf L(K_{2, 2}) = \mathsf{COR}(K_{2,2}) \subseteq \mathsf N(K_{2,2}) = \mathsf{TRA}(K_{2,2}) \subset \mathbb R^8\]

Inequalities in \cref{eq:K22-N} correspond to $\mathsf N(K_{2,2})$ while inequalities in \cref{eq:K22-L} are the additional constraints defining $\mathsf L(K_{2, 2})$; these are the \emph{Bell inequalities}. The volume of these polytopes is given by \cite{lee2020volume}:

\[\vol(\mathsf N(K_{2,2})) = \frac{17}{10800}\qquad \text{and} \qquad \vol(\mathsf L(K_{2,2})) = \frac{1}{630} = \frac{16}{17} \vol(\mathsf N(K_{2,2})).\]

We set $p_1 = p_2 = p_3 = p_4 = t \in (0, 1]$ and study the slices $\mathsf L(p=(t,t,t,t), K_{2,2})$ and $\mathsf N(p=(t,t,t,t), K_{2,2})$. The corresponding $H$-representation for $\mathsf L(p=(t,t,t,t), K_{2,2})$ reads:

\begin{equation}\label{eq:K22-Np}
\begin{aligned}
  \max\{0, 2t-1\} \leq q_{ij} \leq t
\end{aligned}
\end{equation}
\vspace{3px}
\begin{equation}\label{eq:K22-Lp}
    \begin{aligned}
        -2t \leq q_{12} - q_{23} - q_{34} - q_{14}  &\leq 1 - 2t \\
        -2t \leq - q_{12} + q_{23} - q_{34} - q_{14}  &\leq 1 - 2t \\
        -2t \leq - q_{12} - q_{23} + q_{34} - q_{14}  &\leq 1 - 2t \\
        -2t \leq - q_{12} - q_{23} - q_{34} + q_{14}  &\leq 1 - 2t\\
     \end{aligned}
\end{equation}

As before, we define our polytopes of interest by:
\begin{align*}
    \mathsf N_t &:= \mathsf N(p=(t,t,t,t), K_{2,2}) = \{(q_{12},q_{23},q_{34}, q_{14}) \in \mathbb R^4 \, : \, \text{ \cref{eq:K22-Np} holds}\}\\
    \mathsf L_t &:= \mathsf L(p=(t,t,t,t), K_{2,2}) = \{(q_{12},q_{23},q_{34}, q_{14}) \in \mathbb R^4 \, : \, \text{ \cref{eq:K22-Np} and \cref{eq:K22-Lp} hold}\}.
\end{align*}

We'll focus only in the range $t\in(0, 1/2]$ owing to \cref{prop:symVol}. Setting $q_{ij} = tx_{ij}$, we finally have:

\begin{equation}\label{eq:K22-Npx}
\begin{aligned}
  0 \leq x_{ij} \leq 1
\end{aligned}
\end{equation}
\vspace{3px}
\begin{equation}\label{eq:K22-Lpxt}
    \begin{aligned}
        x_{12} - x_{23} - x_{34} - x_{14}  &\leq \frac{1}{t} - 2 \\
        - x_{12} + x_{23} - x_{34} - x_{14}  &\leq \frac{1}{t} - 2 \\
        - x_{12} - x_{23} + x_{34} - x_{14}  &\leq \frac{1}{t} - 2 \\
        - x_{12} - x_{23} - x_{34} + x_{14}  &\leq \frac{1}{t} - 2\\
     \end{aligned}
\end{equation}
\vspace{3px}
\begin{equation}\label{eq:K22-Lpx}
    \begin{aligned}
        - x_{12} + x_{23} + x_{34} + x_{14}  &\leq 2 \\
         x_{12} - x_{23} + x_{34} + x_{14}  &\leq 2 \\
         x_{12} + x_{23} - x_{34} + x_{14}  &\leq 2 \\
         x_{12} + x_{23} + x_{34} - x_{14}  &\leq 2\\
     \end{aligned}
\end{equation}

Following the previous section we define our scaled polytopes: 
\begin{align*}
    \tilde{\mathsf N}_t &:= \{(x_{12},x_{23},x_{34}, x_{14}) \in \mathbb R^4 \, : \, \text{ \cref{eq:K22-Npx} holds}\}\\
    \tilde{\mathsf L}_t &:= \{(x_{12},x_{23},x_{34}, x_{14}) \in \mathbb R^4 \, : \, \text{ \cref{eq:K22-Npx} and \cref{eq:K22-Lpx} hold}\}.
\end{align*}
Since, $\vol(\mathsf N_t) = t^4 \vol(\tilde{\mathsf N}_t)$, and similarly for the local polytope; we have,
$$\frac{\vol(\mathsf L_t)}{\vol(\mathsf N_t)} = \frac{\vol(\tilde{\mathsf L}_t)}{\vol(\tilde{\mathsf N}_t)}.$$

From inequalities in \cref{eq:K22-Npx}, we see $\tilde{\mathsf N}_t$ is the $4$-D unit cube $[0, 1]^4$ with volume given by $\vol (\tilde{\mathsf N}_t) = 1$ independent of $t$. Next, we notice that inequalities \eqref{eq:K22-Lpxt} are trivially satisfied for all $t \in \Big(0, 1/3]$ while the inequalities in \cref{eq:K22-Lpx} are independent of $t$.

Hence, we will start by looking at $t \in (0, 1/3]$, where the only relevant inequalities in the $H$-representation of $\tilde{\mathsf L}_t$ are \eqref{eq:K22-Npx} and \eqref{eq:K22-Lpx}. The $V$-representation is thus given by the convex hull of the following 12 vertices:
\begin{align*}
    (0, 0, 0, 0), (1, 0, 0, 0), (0, 1, 0, 0), (0, 0, 1, 0), (0, 0, 0, 1),(1, 1, 0, 0),\\ (1, 0, 1, 0), (1, 0, 0, 1), (0, 1, 1, 0), (0, 1, 0, 1), (0, 0, 1, 1), (1, 1, 1, 1)
\end{align*}

We note that these are all the vertices of a tesseract but with permutations of $(1, 1, 1, 0)$ removed. Removing each of these permutation amounts to the removal of a $4$-simplex. Using the fact that the volume of an $n$-simplex is given by $1/n!$, we have:
\begin{equation}
    \forall t \in \Big(0,\frac{1}{3}\Big] \qquad \vol(\tilde{\mathsf L}_t) = 1-4\cdot\frac{1}{4!} = \frac{5}{6}.
\end{equation}

Next, looking in the range $t \in (1/2,1/3]$, the inequalities \eqref{eq:K22-Lpxt} start cutting in and the $V$-representation for $\tilde{\mathsf L}_t$ is given by,
\begin{gather*}
        \operatorname{conv}\Big\{(0, 0, 0, 0)\\
        \Big(\frac{1}{t}-2, 0, 0, 0\Big), \Big(1, 3-\frac{1}{t}, 0, 0\Big), \Big(1, 0, 3-\frac{1}{t}, 0\Big), \Big(1, 0, 0, 3-\frac{1}{t}\Big)\\
        \Big(0, \frac{1}{t}-2, 0, 0\Big), \Big(3-\frac{1}{t}, 1, 0, 0\Big), \Big(0, 1, 3-\frac{1}{t}, 0\Big), \Big(0, 1, 0, 3-\frac{1}{t}\Big)\\
        \Big(0, 0, \frac{1}{t}-2, 0\Big), \Big(3-\frac{1}{t}, 0, 1, 0\Big), \Big(0, 3-\frac{1}{t}, 1, 0\Big), \Big(0, 0, 1, 3-\frac{1}{t}\Big)\\
        \Big(0, 0, 0, \frac{1}{t}-2\Big), \Big(3-\frac{1}{t}, 0, 0, 1\Big), \Big(0, 3-\frac{1}{t}, 0, 1\Big), \Big(0, 0, 3-\frac{1}{t}, 1\Big)\\
        (1, 1, 0, 0), (1, 0, 1, 0), (1, 0, 0, 1), (0, 1, 1, 0), (0, 1, 0, 1), (0, 0, 1, 1), (1, 1, 1, 1)\Big\}
\end{gather*}

We observe that each of the inequalities in \eqref{eq:K22-Lpxt} splits each permutation of $(1, 0, 0, 0)$ into $4$ vertices amounting to a removed volume equivalent to a 4-simplex scaled by $3-1/t$. Thus, we have:

\begin{equation}
    \forall t \in \Big(\frac{1}{3},\frac{1}{2}\Big] \qquad \vol(\tilde{\mathsf L}_t) = \frac{5}{6} - 4\cdot\frac{1}{4!}\cdot\Big(3-\frac{1}{t}\Big)^4 = \frac{5-(3-1/t)^4}{6}
\end{equation}

\begin{figure}[htb]
\centering
  \includegraphics[width=.4\linewidth]{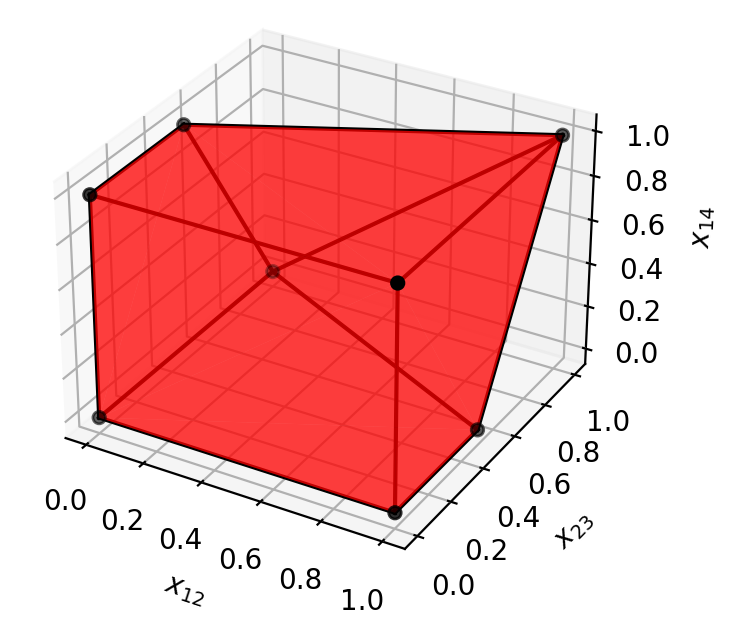}\qquad\qquad
  \includegraphics[width=.4\linewidth]{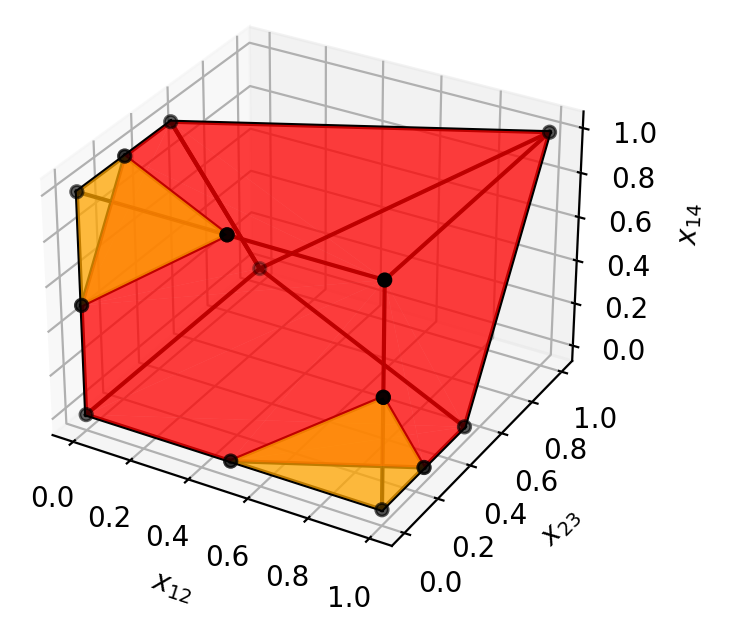}
\caption{The $x_{23} = x_{34}$ slices of $\mathsf L(p, K_{2,2})$. Left: $t \in \Big(0, \frac{1}{3}\Big]$. Right: $t = 0.4$; the yellow sections are the volumes removed due to inequalities \eqref{eq:K22-Lpxt}.}
\label{fig:K22}
\end{figure}

Thus, combining the results of this section, we obtain our main result. This is plotted against numerical result in Figure \ref{fig:vol-ratio-K22}.

\begin{proposition}\label{prop:K22-vol}
The volume ratio of slices ($p_1 = p_2 = p_3 = p_4 = t$) between the local and the non-signaling polytopes corresponding to the complete-bipartite graph $K_{2,2}$ is given by: 
$$
  \frac{ \vol(\mathsf L_t)}{\vol(\mathsf N_t)} = \begin{cases}
  \frac{5}{6}, & t \in(0, \frac{1}{3}] \\
  \frac{5-(3-1/t)^4}{6}, & t \in (\frac{1}{3}, \frac{1}{2}]\\
  \frac{5-(3-1/(1-t))^4}{6}, & t \in (\frac{1}{2}, \frac{2}{3}]\\
  \frac{5}{6}, &  t \in(\frac{1}{3}, 1]
\end{cases}
$$
    The various parameters as defined in \cref{sec:random-marginal} are:
    \begin{align*}
        \text{The fall-off value} \quad \tau(K_{2, 2}) &= \frac{1}{3}\\
        \text{The initial ratio} \quad \rho_{0+}(K_{2, 2}) &= \frac{5}{6}\\
        \text{The middle ratio} \quad \rho_{1/2}(K_{2, 2}) &= \frac{2}{3}.
    \end{align*}
\end{proposition}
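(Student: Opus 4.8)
The plan is to work directly with the $H$-representations already recorded in \eqref{eq:K22-Np}--\eqref{eq:K22-Lpx}, using \cref{prop:symVol} to restrict to $t\in(0,1/2]$ and the rescaling $q_{ij}=tx_{ij}$ to normalize $\mathsf N_t$ to the unit $4$-cube. Since then $\vol(\tilde{\mathsf N}_t)=1$ and $\vol(\mathsf L_t)/\vol(\mathsf N_t)=\vol(\tilde{\mathsf L}_t)$, the entire task reduces to computing $\vol(\tilde{\mathsf L}_t)$, i.e.\ the volume of $[0,1]^4$ after removing the pieces cut off by the eight Bell-type inequalities \eqref{eq:K22-Lpxt}--\eqref{eq:K22-Lpx}. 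First I would observe that the four inequalities \eqref{eq:K22-Lpx} are $t$-independent, each violated only near one of the four corners obtained by permuting $(0,1,1,1)$, whereas the four inequalities \eqref{eq:K22-Lpxt} are vacuous precisely when $1/t-2\ge 1$, i.e.\ $t\le 1/3$, and otherwise each is violated only near one of the four corners permuting $(1,0,0,0)$; this dichotomy is exactly what will produce the two regimes in the statement.

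For $t\in(0,1/3]$ only \eqref{eq:K22-Lpx} bites, and I would compute $\vol(\tilde{\mathsf L}_t)$ by subtracting from the unit cube the four corner caps it removes. The affine substitution $x_{12}\mapsto 1-x_{12}$ (and its cyclic analogues) turns the cap $\{-x_{12}+x_{23}+x_{34}+x_{14}>2\}\cap[0,1]^4$ into the standard simplex $\{\sum_i y_i<1\}\cap[0,1]^4$ of volume $1/4!=1/24$; since the four caps sit at distinct corners they are pairwise disjoint, giving $\vol(\tilde{\mathsf L}_t)=1-4/24=5/6$. For $t\in(1/3,1/2]$ I would additionally subtract the four caps cut off by \eqref{eq:K22-Lpxt}. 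Writing $s:=3-1/t\in(0,1]$, so that $1/t-2=1-s$, the substitution $z:=1-x_{12}$ turns $\{x_{12}-x_{23}-x_{34}-x_{14}>1-s\}\cap[0,1]^4$ into $\{z+x_{23}+x_{34}+x_{14}<s\}$, a simplex of side $s$ and volume $s^4/4!$. After checking disjointness and in-cube containment of all eight caps (for the whole range $t\in(1/3,1/2]$), one gets $\vol(\tilde{\mathsf L}_t)=5/6-4\cdot s^4/24=(5-(3-1/t)^4)/6$, matching the picture in \cref{fig:K22}.

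Finally, \cref{prop:symVol} reflects these two formulas about $t=1/2$ to cover $t\in(1/2,1)$, completing the piecewise expression, and reading off the definitions from \cref{sec:random-marginal} gives $\tau(K_{2,2})=1/3$ (the threshold at which \eqref{eq:K22-Lpxt} first cuts), $\rho_{0+}(K_{2,2})=5/6$, and $\rho_{1/2}(K_{2,2})=(5-(3-2)^4)/6=4/6=2/3$. The only genuine subtlety --- and the step I would be most careful about --- is verifying that the eight removed corner simplices are pairwise disjoint and contained in $[0,1]^4$ for all $t\in(1/3,1/2]$ (equivalently, that no two of the Bell inequalities are simultaneously active on a set of positive measure): this is precisely what lets the volumes simply add rather than requiring an inclusion--exclusion correction. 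A clean way to make this rigorous is to confirm that the $V$-representations listed above indeed coincide with the corresponding $H$-representations (e.g.\ via \texttt{cddlib}) and then to decompose the resulting polytope into the pyramids/simplices indicated in \cref{fig:K22}; the pairwise disjointness can also be seen directly, since, for instance, a cap near a permutation of $(1,0,0,0)$ forces one coordinate close to $1$ while a cap near a permutation of $(0,1,1,1)$ forces that same coordinate (or another) into a contradictory range.
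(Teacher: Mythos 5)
Your proposal is correct and follows essentially the same route as the paper: restrict to $t\in(0,1/2]$ by \cref{prop:symVol}, rescale to the unit $4$-cube, and subtract the corner simplices cut off by \eqref{eq:K22-Lpx} (always) and by \eqref{eq:K22-Lpxt} (only for $t>1/3$), with your explicit disjointness check being a welcome bit of extra care that the paper leaves implicit via its $V$-representations. One cosmetic slip: to map the cap $\{-x_{12}+x_{23}+x_{34}+x_{14}>2\}$ onto $\{\sum_i y_i<1\}$ you should flip the three coordinates carrying positive coefficients (or all four), not $x_{12}$ alone — though any such reflection gives the same volume $1/24$, so nothing downstream is affected.
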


\begin{figure}[htb]
    \centering
    \includegraphics[width=0.8\textwidth]{plots/K22.png}
    \caption{\(\frac{\vol(\mathsf L(p, K_{2,2}))}{\vol(\mathsf N(p, K_{2,2}))}\) as a function of $t$}
    \label{fig:vol-ratio-K22}
\end{figure}

In terms of the CHSH game, \cref{prop:K22-vol} can be interpreted as follows. Consider Alice and Bob sharing a randomly sampled no-signaling box with the condition that the marginal distribution of each of their own questions are fixed to be $t$. In such a case, the highest probability of the randomly sampled box being non-local occurs when $t=1/2$. In fact, the $t=1/2$ value is interesting in its own right, as many important behaviours such as the PR boxes \cite{popescu1994quantum} and the boxes corresponding to maximal quantum violations (where Alice and Bob share a maximally entangled Bell state) all lie on this slice.

\section{Cycle graphs}\label{sec:cycle}

In the previous two sections, we have discussed in great detail the cases of the triangle graph $K_3$ and that of the complete bipartite graph $K_{2,2}$. These are \emph{cycle graphs}, $K_3 = C_3$ and $K_{2,2} = C_4$. Another well-studied cyclic scenario often found in literature is the KCBS \cite{klyachko2008simple} scenario which corresponds to $C_5$. Physically this is equivalent to measuring the two components of a bipartite system along five distinct measurement directions (5 questions) each yielding two possible outcomes (2 answers each). We would not study $C_5$ specifically but rather in this section, we shall establish some general results for cycle graphs of arbitrary order $C_n$. Our analysis builds on the work of Ara\'ujo, T\'ulio Quintino, Budroni, Terra Cunha, and Cabello \cite{araujo2013all}, in which the facets of $L(C_n)$ have been described. and we will use them to derive general results for the volume of slices $L(C_n, p=(t,\ldots, t))$ and $N(C_n, p=(t,\ldots, t))$.
 
\begin{proposition}
    [{{\cite[Theorem 1]{araujo2013all}}}]\label{prop:Cn-contextIneq} For a set of observables $\{X_0, X_1,...,X_{n-1}\}$, all the $2^{n-1}$ tight noncontextuality inequalities for
the n-cycle noncontextual polytope are 
\begin{equation} \label{eq:araujo-X}
    \sum_{i=0}^{n-1}\gamma_i \langle X_i X_{i+1}\rangle \leq n-2,
\end{equation}
where $\gamma_i = \pm 1$ are such that there are odd number of $-1$'s.
\end{proposition}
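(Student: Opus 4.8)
The plan is to translate the statement into a purely combinatorial fact about a $\pm1$-polytope, prove validity and the facet property by short parity arguments, and then dispatch the completeness (``these are \emph{all}'') part via the cut-polytope literature or, alternatively, an edge-contraction induction.

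\emph{Step 1: identify the polytope.} Relabel the dichotomic outcomes so that each $X_i$ takes values in $\{\pm1\}$, write $e_i:=\langle X_iX_{i+1}\rangle$ (indices mod $n$, so $X_n:=X_0$). By Fine's theorem \eqref{eq:Fine-1}, the $n$-cycle noncontextual polytope is the convex hull of the deterministic assignments $X_i\mapsto\lambda_i\in\{\pm1\}$; such an assignment yields the correlation vector $(\lambda_0\lambda_1,\lambda_1\lambda_2,\dots,\lambda_{n-1}\lambda_0)$, whose coordinates multiply to $\prod_i\lambda_i^2=1$. Conversely any $e\in\{\pm1\}^n$ with $\prod_ie_i=1$ is realized by $\lambda_0=1$, $\lambda_{i+1}=\lambda_ie_i$, the cyclic consistency condition being exactly $\prod_ie_i=1$. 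Hence the noncontextual polytope equals $E_n:=\operatorname{conv}\{e\in\{\pm1\}^n:\prod_ie_i=1\}$, and it is full-dimensional for $n\ge3$: the vertex $\mathbf 1=(1,\dots,1)$ together with the vertices obtained from $\mathbf 1$ by flipping two coordinates span $\mathbb R^n$ affinely.

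\emph{Step 2: validity and facets.} Fix $\gamma\in\{\pm1\}^n$ with an odd number of $-1$'s, so $\prod_i\gamma_i=-1$. For any vertex $e$ of $E_n$ one has $\prod_i(\gamma_ie_i)=(\prod_i\gamma_i)(\prod_ie_i)=-1$, so an odd --- in particular nonzero --- number of the signs $\gamma_ie_i\in\{\pm1\}$ equal $-1$; hence $\sum_i\gamma_ie_i\le(n-1)-1=n-2$, which is \eqref{eq:araujo-X}. Equality holds precisely when exactly one term $\gamma_je_j$ equals $-1$, which for each $j$ determines a \emph{unique} vertex $v^{(j)}$: $v^{(j)}_i=\gamma_i$ for $i\ne j$ and $v^{(j)}_j=-\gamma_j$ (this indeed has $\prod_iv^{(j)}_i=1$). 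The differences $v^{(j)}-v^{(0)}=2\gamma_0\varepsilon_0-2\gamma_j\varepsilon_j$ for $j=1,\dots,n-1$ (with $\varepsilon_k$ the $k$-th coordinate vector) are linearly independent, so $v^{(0)},\dots,v^{(n-1)}$ are $n$ affinely independent points of the hyperplane $\sum_i\gamma_ie_i=n-2$. Thus each of the $2^{n-1}$ inequalities in \eqref{eq:araujo-X} is facet-defining, and they are pairwise distinct since their normals $\gamma$ are.

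\emph{Step 3: completeness, and the main obstacle.} It remains to show that, apart from the trivial bounds $-1\le\langle X_iX_{i+1}\rangle\le1$ --- which hold for \emph{any} probabilistic model (they are the facets of the no-disturbance slice $\mathsf N(C_n)$, and for $n=3$ are not even facets of $E_3$) --- the polytope $E_n$ has no further facets. The quickest route: $x_i:=(1-e_i)/2$ is an affine isomorphism between $E_n$ and the cut polytope $\operatorname{CUT}(C_n)$ of the cycle graph, and since $C_n$ has no $K_5$-minor, Barahona--Mahjoub's theorem says $\operatorname{CUT}(C_n)$ is described exactly by the box inequalities $0\le x_i\le1$ and the odd-cycle inequalities, the latter translating term by term into \eqref{eq:araujo-X}. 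Alternatively one can induct on $n$: the base $n=3$ is a direct check that the four inequalities \eqref{eq:araujo-X} already force $|e_i|\le1$ and cut out the tetrahedron $E_3$; for the step, contracting an edge of $C_n$ --- that is, restricting to a facet $\{e_j=1\}$ and projecting out $e_j$ --- identifies that facet with $E_{n-1}$, which lets one lift a minimal inequality description, a Fourier--Motzkin elimination in the spirit of \cref{sec:FM}. Steps 1--2 are elementary parity bookkeeping; the genuine content is Step 3, and the delicate point in the self-contained version is verifying that the elimination never produces inequalities other than the $2^{n-1}$ listed ones --- equivalently, that the absence of chords in $C_n$ prevents any ``partial-cycle'' triangle-type inequality from becoming a new facet.
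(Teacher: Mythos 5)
The paper offers no proof of this proposition --- it is imported verbatim as \cite[Theorem 1]{araujo2013all}, and the only hint at a derivation appears later in \cref{sec:FM}, where it is remarked that the inequalities can be obtained by iteratively gluing triangles $K_3$ along an edge and eliminating that edge by Fourier--Motzkin. Your route is genuinely different: you pass to $\pm1$ correlators, identify the noncontextual polytope with $E_n=\operatorname{conv}\{e\in\{\pm1\}^n:\prod_ie_i=1\}\cong\operatorname{CUT}(C_n)$, and invoke the Barahona--Mahjoub description of cut polytopes of $K_5$-minor-free graphs. Your Steps 1--2 are correct and complete as written: the parity argument for validity, the identification of the $n$ roots $v^{(j)}$, and the affine-independence computation all check out, and the translation $x_i=(1-e_i)/2$ does carry the odd-cycle inequalities of $\operatorname{CUT}(C_n)$ term by term onto \eqref{eq:araujo-X}. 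This buys a clean, non-inductive proof of facetness that the gluing route does not give directly.

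The gap is in Step 3, and it is exactly the reduction you perform in Step 1. The $n$-cycle noncontextual polytope --- both in \cite{araujo2013all} and as the paper uses it, namely as $\mathsf L(C_n)$ sitting in the full space of behaviours with coordinates $(\langle X_i\rangle,\langle X_iX_{i+1}\rangle)$, equivalently $(p_v,q_e)$ --- is a $2n$-dimensional object, and ``these are \emph{all} the tight noncontextuality inequalities'' is a claim about \emph{that} polytope. By projecting to correlators at the outset you prove completeness only for the image $E_n$; this does not exclude facets of $\mathsf L(C_n)$ whose normal has a nonzero component along the marginals $\langle X_i\rangle$, which is precisely what must be ruled out for the paper's subsequent claim that \eqref{eq:K3-N}-type no-disturbance inequalities together with \eqref{eq:araujo-pq} give the full $H$-representation of $\mathsf L(C_n)$. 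The fix stays inside your own framework: under the covariance map, $\mathsf{COR}(C_n)\cong\operatorname{CUT}(\nabla C_n)$ where $\nabla C_n$ is the wheel $W_n$ (apex joined to every rim vertex). The wheel is planar, hence $K_5$-minor-free, so Barahona--Mahjoub applies there too; its chordless cycles are exactly the hub triangles $\{0,i,i+1\}$ (whose odd-cycle inequalities are the trivial/no-disturbance facets) and the rim $C_n$ (whose odd-cycle inequalities are the $2^{n-1}$ inequalities \eqref{eq:araujo-X}), while every longer apex-through cycle has a chord and therefore contributes no facet. With that one extra paragraph your argument proves the full statement; as written, it proves a strictly weaker projected version.
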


Here, $\langle X_iX_{i+1} \rangle = 4q_{ij}-2p_i-2p_j+1$, the non-contextual inequalities in the $H$-representation of $\mathsf L(C_n)$ are given by:
\begin{equation} \label{eq:araujo-pq}
    4\sum_{i=0}^{n-1} \gamma_i q_{i,i+1} - 2\sum_i^{n-1} \gamma_i (p_i + p_{i+1}) + \sum_i^{n-1}\gamma_i\leq n-2
\end{equation}

We can get the corresponding noncontextuality inequalities for the slice $\mathsf L(C_n)_t := \mathsf L(p=\{t,\ldots,t\}, C_n)$ by just substituting $p=\{t,\ldots,t\}$ in ($\ref{eq:araujo-pq}$).

\begin{definition}
    For $p=\{t,\ldots,t\}$, we define the $m$-negative inequality for $\mathsf L(p=\{t,\ldots,t\}, C_n)$ as the inequality with $\gamma_i = -1$ for $m$ values. Note that $m$ can only take odd values for valid noncontextuality inequalities. Thus, the $m$-negative inequalities, after substituting $q_{ij} = tx_{ij}$ are given by:
    \begin{equation} \label{eq:m-neg}
        \sum_{i=0}^{n-1} \gamma_i x_{ij} \leq \frac{m-1}{2t}+n-2m
    \end{equation}
\end{definition}

The remaining inequalities in the $H$-representation of $\mathsf L(C_n)_t$ are of the form given in \cref{eq:K3-N} corresponding to $\mathsf N(C_n)_t$. Under the substitution $p_{ij} = tx_{ij}$ and $t \in \Big(0, \frac{1}{2}\Big]$ these are again of the form:
\begin{equation}\label{eq:Nx}
    0 \leq x_{ij} \leq 1
\end{equation}

Denoting the polytopes of interest again by:
\begin{align*}
    \tilde{\mathsf N}_t &:= \{(x_{12},x_{23},\ldots,x_{1n-1}) \in \mathbb R^n \, : \, \text{ \cref{eq:Nx} holds}\}\\
    \tilde{\mathsf L}_t &:= \{(x_{12},x_{23},\ldots,x_{1n-1}) \in \mathbb R^n \, : \, \text{ \cref{eq:Nx} and \cref{eq:m-neg} hold}\}.
\end{align*}
Again, it is straightforward to see that $\tilde{\mathsf N}_t$ is the unit cube $[0, 1]^n$. We will now look at how the m-inequalities cut into this cube to give shape to $\tilde{\mathsf L}_t$. We start by noting the the $1$-negative inequalities are independent of $t$. The RHS of inequality \eqref{eq:m-neg} becomes $n-2$ for $m=1$. This implies that the vertices of the cube with $n-1$ number of $1$s are no longer part of $\tilde{\mathsf L}_t$. In fact, the resultant body formed after imposing $1$-negative constraint on $\tilde{\mathsf N}_t$ is just the convex hull of the remaining vertices of the cube.

\begin{proposition}
    The volume removed by $1$-negative inequalities from $\tilde{\mathsf N}_t$ is $\frac{1}{n!}$.
\end{proposition}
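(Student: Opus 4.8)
The plan is to show that a single $1$-negative inequality cuts off from the cube $\tilde{\mathsf N}_t=[0,1]^n$ exactly an orthogonal corner simplex, whose volume is the standard $\tfrac{1}{n!}$. First I would pin down the inequality itself. Setting $m=1$ in \eqref{eq:m-neg}, the right-hand side becomes $\tfrac{0}{2t}+n-2=n-2$, independently of $t$; relabelling the $n$ edge coordinates as $x_1,\dots,x_n$ and letting $k$ be the index carrying the sign $\gamma_k=-1$, the constraint reads $\sum_{i\neq k}x_i-x_k\leq n-2$. The region of $[0,1]^n$ that this removes is $R_k:=\{x\in[0,1]^n:\sum_{i\neq k}x_i-x_k>n-2\}$, and since $\tilde{\mathsf N}_t$ is the unit cube this $R_k$ is precisely what must be shown to have volume $\tfrac1{n!}$.

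Next I would perform the affine change of variables $y_i=1-x_i$ for $i\neq k$ and $y_k=x_k$. This is a composition of coordinate reflections, hence has Jacobian determinant $\pm1$ and maps the cube $[0,1]^n$ bijectively onto itself, so it preserves volume. Under it, $\sum_{i\neq k}x_i-x_k=\sum_{i\neq k}(1-y_i)-y_k=(n-1)-\sum_{i=1}^n y_i$, so the defining inequality of $R_k$ becomes $(n-1)-\sum_i y_i>n-2$, i.e.\ $\sum_i y_i<1$. Hence $R_k$ is carried onto $\{y\in[0,1]^n:\sum_i y_i<1\}$.

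Finally I would observe that whenever $y_i\geq0$ for all $i$ and $\sum_i y_i\leq1$ one automatically has $y_i\leq1$, so this set coincides, up to a measure-zero boundary, with the standard simplex $\Delta_n=\{y\in\mathbb R^n_{\geq0}:\sum_i y_i\leq1\}$, whose volume is $\tfrac1{n!}$; this gives $\vol(R_k)=\tfrac1{n!}$, and since changing $k$ only permutes coordinates and reflects, every $1$-negative inequality removes the same volume. I do not expect a genuine obstacle here: the computation is elementary. The only points needing a little care are verifying that the substituted region really is the \emph{full} standard simplex (i.e.\ that the box constraints $y_i\le1$ are automatically implied by positivity and $\sum y_i\le1$), and — if one wishes to chain this with the global volume computation $\vol(\tilde{\mathsf L}_t)=1-\tfrac{1}{(n-1)!}$ for $t\in(0,\tfrac1{n-1}]$ — checking that the $n$ corner simplices cut off by the distinct $1$-negative inequalities are pairwise disjoint, which holds because $R_k$ is concentrated at the unique cube vertex with a single zero coordinate (namely $x_k=0$, $x_i=1$ for $i\neq k$).
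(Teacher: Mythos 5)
Your proof is correct and follows the same route as the paper, which simply asserts that each $1$-negative inequality cuts an $n$-simplex off the unit cube; your reflection $y_i=1-x_i$ (for $i\neq k$) making that simplex explicitly the standard corner simplex of volume $1/n!$ just supplies the detail the paper leaves implicit. Your closing remarks — that the box constraints are implied by $y\geq 0$, $\sum_i y_i\leq 1$, and that the $n$ removed corners are pairwise disjoint — are exactly the right points to check and both hold.
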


This is because each $1$-negative inequality removes a $n$-simplex from the cube.

\begin{proposition}
    Any $m$-negative inequality is trivially satisfied for all $t \in \Big[0, \frac{m-1}{2m}\Big]$.
\end{proposition}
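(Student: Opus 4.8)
\emph{Proof proposal.} The plan is to use the fact, just established above, that $\tilde{\mathsf N}_t$ is the unit cube $[0,1]^n$. An $m$-negative inequality is ``trivially satisfied'' precisely when it already holds at every point of this cube, i.e.\ when the maximum of its left-hand side over $[0,1]^n$ does not exceed its right-hand side. So the whole argument reduces to a one-line linear optimization followed by an elementary comparison of the two sides.

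First I would maximize the left-hand side $\sum_{i=0}^{n-1}\gamma_i x_{i,i+1}$ of \eqref{eq:m-neg} over the cube. Being a linear functional with $\pm 1$ coefficients, it attains its maximum at the vertex with $x_{i,i+1}=1$ whenever $\gamma_i=+1$ and $x_{i,i+1}=0$ whenever $\gamma_i=-1$; since exactly $m$ of the $\gamma_i$ equal $-1$ and hence $n-m$ of them equal $+1$, this maximum is $n-m$. Then I would compare with the right-hand side $\tfrac{m-1}{2t}+n-2m$: the inequality $n-m \le \tfrac{m-1}{2t}+n-2m$ is equivalent to $m \le \tfrac{m-1}{2t}$, and, clearing the positive factor $2t$, to $t \le \tfrac{m-1}{2m}$. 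Hence for $t \in (0,\tfrac{m-1}{2m}]$ the right-hand side dominates the maximal value of the left-hand side on $\tilde{\mathsf N}_t$, so the $m$-negative inequality is redundant for the polytope $\tilde{\mathsf L}_t$; the case $t=0$ is degenerate (then $\tilde{\mathsf N}_t$ and $\tilde{\mathsf L}_t$ both collapse to a point) and needs no argument.

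There is no real obstacle here; the only points requiring a moment's care are that the equivalence of inequalities uses $t>0$ when clearing the denominator, and that $m$ being odd forces $m\ge 1$, so $\tfrac{m-1}{2m}\ge 0$ and the interval is non-empty — with $m=1$ giving the degenerate interval $\{0\}$, consistent with the earlier remark that the $1$-negative inequalities genuinely cut the cube for every $t>0$, and with $m=3$ giving the threshold $1/3$ seen in the $K_3$ and $K_{2,2}$ computations.
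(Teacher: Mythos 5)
Your proof is correct and follows exactly the paper's (one-line) justification: maximize the left-hand side of the $m$-negative inequality over the unit cube $\tilde{\mathsf N}_t=[0,1]^n$ to get $n-m$, then compare with the right-hand side $\frac{m-1}{2t}+n-2m$ to obtain the threshold $t\le\frac{m-1}{2m}$. Your write-up is more careful than the paper's (which even contains a left/right typo), but the approach is the same.
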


This is clear if one notices that maximum value the right hand side of eqn \ref{eq:m-neg} can take is $n-m$.

\begin{proposition} \label{prop:m-splitting}
    For $m\geq 2$, an m-inequality starts cutting into $\tilde{\mathsf N}_t$ by splitting the points of form \[\perm ({1,1,\ldots,1\smash{\llap{$\underbrace{\phantom{1,1,\ldots,1}}_{\text{$n-m$ times}}$}}},
    {0,0,\ldots,0\smash{\llap{$\underbrace{\phantom{0,0,\ldots,0}}_{\text{$m$ times}}$}}})\]\\
    into $n$ points. These n points correspond to either one $0$ being replaced by $m-\frac{m-1}{2t}$ or one $1$ being replaced by $1-m+\frac{m-1}{2t}$.
\end{proposition}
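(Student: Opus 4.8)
The plan is to fix one $m$-inequality and analyse, in corner coordinates, the single vertex of $\tilde{\mathsf N}_t$ that it first slices off. Encode the sign pattern by the set $S := \{i : \gamma_i = -1\}$ with $|S| = m$, so the inequality reads $\sum_i \gamma_i x_i \le R_m(t)$ with $R_m(t) := \frac{m-1}{2t} + n - 2m$. The linear functional $L(x) := \sum_i \gamma_i x_i$ attains its maximum over the cube $[0,1]^n$, with value $n-m$, at the unique vertex $v = v(S)$ given by $v_i = 1$ on $S^c$ and $v_i = 0$ on $S$; this $v$ is a permutation of $(1^{\,n-m},0^{\,m})$, and each such permutation arises for exactly one choice of $S$. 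A short count shows that the second-largest value of $L$ on cube vertices is $n - m - 1$.

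By the preceding proposition the inequality is vacuous for $t \le \frac{m-1}{2m}$, and the elementary computation $R_m(t) < n - m \iff t > \frac{m-1}{2m}$ makes this sharp. As $t$ increases over $[\frac{m-1}{2m}, \frac12]$ the quantity $R_m(t)$ decreases monotonically from $n-m$ to $n-m-1$, so for $t$ in the open interval $(\frac{m-1}{2m}, \frac12)$ — the relevant one, since we have standing assumption $t \le \frac12$ — the value $R_m(t)$ lies strictly between the top two vertex values of $L$, whence $v$ is the only cube vertex that the inequality removes. Thus $\tilde{\mathsf N}_t \cap \{L \le R_m(t)\}$ differs from the cube only in a neighbourhood of $v$.

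Next I would pass to the corner coordinates $y_i := |v_i - x_i|$ (so $y_i = 1-x_i$ on $S^c$ and $y_i = x_i$ on $S$, with $y \ge 0$ near $v$), and observe by a one-line substitution that $L(x) = (n-m) - \sum_i y_i$. Hence the cut becomes $\sum_i y_i \ge c$ with $c := (n-m) - R_m(t) = m - \frac{m-1}{2t}$, and for $t \in (\frac{m-1}{2m}, \frac12)$ one has $0 < c < 1$; consequently $\sum_i y_i \le c$ already forces every $y_i < 1$, the box constraints $y_i \le 1$ stay inactive, and the removed region is exactly the standard simplex $\{y \ge 0,\ \sum_i y_i \le c\}$. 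Its vertices are $y = 0$, that is $v$ itself, which disappears, together with $y = c\,e_i$ for $i = 0, \dots, n-1$. Reading $y = c\,e_i$ back in the $x$-coordinates gives, for $i \notin S$, the point obtained from $v$ by replacing a single $1$ with $1 - c = 1 - m + \frac{m-1}{2t}$, and, for $i \in S$, the point obtained from $v$ by replacing a single $0$ with $c = m - \frac{m-1}{2t}$. That is $(n-m) + m = n$ new vertices, exactly as claimed; ranging over the $\binom{n}{m}$ sets $S$ accounts for every permutation of $(1^{\,n-m},0^{\,m})$.

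The only genuinely delicate point is pinning down the regime in which this local picture is valid: one must check $0 < c < 1$, so that the truncation is a clean corner simplex rather than a facet-heavier shape, and one must confirm that just above the activation threshold no other $m$-inequality and no lower-$m'$ inequality also reaches $v$, so that the ``splitting'' is unambiguous — both reduce to the monotonicity of $R_m$ and elementary bounds on the values of the competing functionals at $v$. Everything else (the two counting facts and the substitution for $L$) is routine. Pushing the statement past the first instant of cutting would require tracking when these new vertices collide with facets coming from the other inequalities, at which point the bookkeeping becomes substantially heavier; but that lies beyond what the proposition asserts.
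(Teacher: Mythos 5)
Your proposal is correct and matches the argument the paper intends (the paper states this proposition without an explicit proof, but the surrounding text makes clear the picture is exactly the corner-cutting one you formalize: the following proposition in the paper identifies the removed region as an $n$-simplex scaled by $m-\frac{m-1}{2t}$). Your change to corner coordinates $y_i = |v_i - x_i|$, the verification that $0 < c < 1$ on $\bigl(\tfrac{m-1}{2m}, \tfrac12\bigr)$ so that only the single maximizing vertex is cut off, and the identification of the $n$ new vertices $c\,e_i$ are precisely the details needed to justify the statement as written.
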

    
\begin{proposition}
    For $m\geq 2$  and $t\geq\frac{m-1}{2m}$, the volume removed by an $m$-inequality from $\tilde{\mathsf N}_t$ is \[\frac{1}{n!}\cdot\Big(m-\frac{m-1}{2t}\Big)^n\].
\end{proposition}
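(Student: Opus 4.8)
The plan is to show that the portion of the cube $[0,1]^n$ cut off by a single $m$-negative inequality becomes, after a volume-preserving affine change of coordinates, a standard corner simplex of edge length exactly $c:=m-\frac{m-1}{2t}$. First I would fix one such inequality and, by relabelling the edges of the cycle, assume without loss of generality that the $m$ indices with $\gamma_i=-1$ are $1,\dots,m$. Writing $y_j$ for the edge variables $x_{i,i+1}$ in this order, the region removed from $\tilde{\mathsf N}_t=[0,1]^n$ by this inequality is
$$R=\Big\{y\in[0,1]^n\ :\ -\sum_{i=1}^m y_i+\sum_{i=m+1}^n y_i>\tfrac{m-1}{2t}+n-2m\Big\}.$$

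Next I would perform the reflection $z_i=1-y_i$ for $i=1,\dots,m$ and $z_i=y_i$ for $i=m+1,\dots,n$; this is an isometry of the cube, hence volume-preserving. A one-line computation turns the defining inequality of $R$ into $\sum_{i=1}^n z_i>n-c$, and applying one more reflection $w=\mathbf 1-z$ turns it into $\sum_{i=1}^n w_i<c$. Thus $\vol(R)$ equals the volume of $\{w\in[0,1]^n:\ w_i\ge 0,\ \sum_i w_i<c\}$.

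Finally I would observe that $c\in[0,1]$ in the stated range of $t$: the hypothesis $t\ge\frac{m-1}{2m}$ is exactly the statement $c\ge 0$, while the standing assumption $t\le 1/2$ gives $\frac{m-1}{2t}\ge m-1$, i.e.\ $c\le 1$. Since $c\le 1$, the constraints $w_i\le 1$ are automatically implied by $\sum_i w_i<c$, so the region is the full corner simplex $\{w\ge 0:\sum_i w_i\le c\}$, which is the dilation by $c$ of the standard simplex and therefore has volume $c^n/n!=\frac{1}{n!}\big(m-\frac{m-1}{2t}\big)^n$, as claimed. The argument is elementary; the only step requiring care is the verification $c\le 1$, since this is precisely what guarantees that the cut-off piece does not reach the far faces of the cube and hence really is a single simplex rather than a truncated one — which is why the hypothesis $t\le 1/2$ (and not just $t\ge\frac{m-1}{2m}$) is needed.
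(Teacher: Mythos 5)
Your proof is correct and follows essentially the same route as the paper: the paper simply asserts (via its vertex-splitting proposition) that the removed region is an $n$-simplex scaled by $c=m-\tfrac{m-1}{2t}$, and your reflection argument is an explicit, self-contained verification of exactly that fact. Your additional check that $c\le 1$ under the standing assumption $t\le 1/2$ — which guarantees the cut-off piece is a genuine corner simplex rather than a truncated one — is a detail the paper leaves implicit, and it is worth making explicit.
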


This follows from \cref{prop:m-splitting} as the region removed is just a $n$-simplex scaled by $m-\frac{m-1}{2t}$. Next, we notice that for $t=\frac{1}{2}$, $m-\frac{m-1}{2t}=1$  and hence, an $m$-inequality takes points with $m$ (odd) number of $1$s to $m-1$ (even) or $m+1$ (even) number of $1$s.

\begin{proposition} \label{prop:Cn-minimum}
    For $t = 1/2$, $\tilde{\mathsf L}_t$ is  demicube with volume given by:
    \[\vol(\tilde{\mathsf L}_t) = 1-\frac{2^{n-1}}{n!}.\]
\end{proposition}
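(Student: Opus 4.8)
The plan is to recognize $\tilde{\mathsf L}_{1/2}$ as the $n$-dimensional demicube by showing that its facet inequalities are precisely those that shave off, from the cube $\tilde{\mathsf N}_{1/2}=[0,1]^n$, the corner simplices at the $2^{n-1}$ ``alternate'' vertices. First I would set $t=1/2$ in the $m$-negative inequality \eqref{eq:m-neg}. Writing $S$ for the set of the $m$ coordinates on which $\gamma_i=-1$ (recall $m$ is odd), it becomes $\sum_{i\notin S}x_i-\sum_{i\in S}x_i\le n-m-1$. On a cube vertex $x\in\{0,1\}^n$ the left-hand side is $\le\sum_{i\notin S}x_i\le |S^c|=n-m$, with equality only at the vertex $v_S$ equal to $1$ off $S$ and $0$ on $S$, which has Hamming weight $n-m$. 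Hence each such inequality deletes exactly one cube vertex, namely $v_S$; and since flipping any single coordinate of $v_S$ decreases the left-hand side by exactly $1$, the bounding hyperplane passes through all $n$ neighbours of $v_S$. So the inequality removes precisely the corner simplex $\operatorname{conv}\big(\{v_S\}\cup\{\text{neighbours of }v_S\}\big)$ — of volume $1/n!$ — and introduces no new vertex.

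Next I would let $S$ range over all $2^{n-1}$ odd-sized subsets of the index set. Then $S^c$ ranges over all subsets whose size has parity opposite to $n$, so the deleted corners $v_S$ are exactly the $2^{n-1}$ cube vertices whose weight is $\not\equiv n\pmod 2$, while the $2^{n-1}$ survivors are the vertices of weight $\equiv n\pmod 2$, i.e.\ one of the two families of alternate cube vertices whose convex hull is the demicube. To finish I would check that the simultaneous removals do not interact: any two deleted vertices have the same weight-parity, hence $\ell^1$-distance $\ge 2$, while the corner simplex at $v_S$ lies in the closed $\ell^1$-ball of radius $1$ about $v_S$, so the removed simplices have pairwise disjoint interiors; moreover the $n$ neighbours of any deleted vertex all survive, so no cut creates a genuinely new vertex. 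It follows that $\tilde{\mathsf L}_{1/2}$ is the convex hull of the surviving cube vertices — the demicube — and
$$\vol(\tilde{\mathsf L}_{1/2})=\vol([0,1]^n)-2^{n-1}\cdot\frac{1}{n!}=1-\frac{2^{n-1}}{n!}.$$

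The step requiring the most care is this last one: verifying that cutting off all $2^{n-1}$ corners simultaneously really yields the demicube, that is, a polytope with exactly the surviving cube vertices and no extra facets, and that the subtracted volumes are essentially disjoint, so that the naive count $1-2^{n-1}/n!$ is exact rather than merely an upper bound. The single-inequality analysis together with the $\ell^1$-separation of the excluded vertices are precisely what make this rigorous; everything else is routine bookkeeping, using that an $n$-element set has $2^{n-1}$ subsets of odd size and that a cube corner simplex has volume $1/n!$. As consistency checks, for $n=3$ the formula gives $1-4/6=1/3$ and for $n=4$ it gives $1-8/24=2/3$, matching the middle ratios $\rho_{1/2}(K_3)$ and $\rho_{1/2}(K_{2,2})$ of \cref{prop:K3-vol} and \cref{prop:K22-vol}.
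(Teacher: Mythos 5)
Your proof is correct and follows essentially the same route as the paper: at $t=1/2$ each of the $2^{n-1}$ odd-negative inequalities shaves a unit corner simplex of volume $1/n!$ off the cube $[0,1]^n$, leaving the demicube and the volume $1-2^{n-1}/n!$. You additionally make explicit the disjointness of the removed corners (via the $\ell^1$-separation of the deleted same-parity vertices) and the fact that no new vertices are created, points the paper leaves implicit.
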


These results completely define the structure of $\vol(\tilde{\mathsf L}_{t=1/2})/\vol(\tilde{\mathsf N}_t)$.

\begin{proposition} \label{prop:Cn_full}
The ratio of volumes of symmetric slices for $C_n$ is given by the following results.
for $t \in (0, 1/2]$
    \begin{align*}
    &\frac{\vol(\mathsf L(p=(t,\ldots,t),  C_n))}{\vol(\mathsf L(p=(t,\ldots,t),  C_n))} = \frac{\vol(\tilde{\mathsf L}_t)}{\vol(\tilde{\mathsf N}_t)} =\\
    &\qquad \qquad \begin{cases}
     1-\sum_{k=1}^{m} \frac{1}{n!}\binom{n}{n-k}\Big(k-\frac{k-1}{2t}\Big)^n & n-m>1, t \in \Big(\frac{m-1}{2m}, \frac{m+1}{2(m+2)}\Big]\\
    1-\sum_{k=1}^{m} \frac{1}{n!}\binom{n}{n-k}\Big(k-\frac{k-1}{2t}\Big)^n & n-m\leq 1, t \in \Big(\frac{m-1}{2m}, \frac{1}{2}\Big]
    \end{cases}
    \end{align*}
    where $m$ must be an odd number not greater than $n$ and $k$ only takes odd values in the sum.
\end{proposition}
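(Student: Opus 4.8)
\section*{Proof proposal}

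The plan is to identify the scaled local polytope $\tilde{\mathsf L}_t$ \emph{explicitly} as the unit cube $[0,1]^n$ with a \emph{disjoint} family of corner simplices removed — one simplex per active noncontextuality inequality — and then simply add up the elementary volumes recorded in the preceding propositions. Since $\tilde{\mathsf N}_t=[0,1]^n$ by \eqref{eq:Nx}, we have $\vol(\tilde{\mathsf N}_t)=1$ and the ratio equals $\vol(\tilde{\mathsf L}_t)$, so the whole task is to compute $\vol(\tilde{\mathsf L}_t)$.

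First I would fix an odd $m\le n$ and one of the $\binom{n}{m}$ sign patterns $\gamma$ with exactly $m$ entries equal to $-1$, and perform the cube-symmetry change of coordinates $z_i = 1-x_i$ when $\gamma_i=+1$ and $z_i = x_i$ when $\gamma_i=-1$. Then $\sum_i\gamma_i x_i = (n-m)-\sum_i z_i$, so the $m$-negative inequality \eqref{eq:m-neg} becomes $\sum_i z_i \ge c_m$ with $c_m := m-\frac{m-1}{2t}$, and the portion of the cube that this inequality \emph{removes} is the open corner simplex $\{z\ge 0:\ \sum_i z_i < c_m\}$. Since $c_m\le 1$ for all $t\le 1/2$, this simplex lies entirely inside the cube (the constraints $z_i\le 1$ are automatic), it is nonempty iff $c_m>0$, i.e.\ iff $t>\frac{m-1}{2m}$ — recovering the activation threshold — and by the scaling argument already used in the text it has volume $c_m^{\,n}/n!$. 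This reproves the per-inequality volume formulas above in a unified fashion. In the original $x$-coordinates this simplex is contained in $\{x\in[0,1]^n:\ \|x-v(\gamma)\|_1 < c_m\}$, where $v(\gamma)$ is the cube vertex with $v(\gamma)_i=1$ exactly when $\gamma_i=+1$.

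The crucial new step is \textbf{disjointness}: for $t\in(0,1/2]$, the removed open simplices attached to two distinct inequalities $\gamma\ne\gamma'$ have empty intersection. If some $x$ lay in both, the triangle inequality for $\|\cdot\|_1$ would give $\|v(\gamma)-v(\gamma')\|_1 \le \|x-v(\gamma)\|_1 + \|x-v(\gamma')\|_1 < c_{m(\gamma)}+c_{m(\gamma')}\le 2$. But $\|v(\gamma)-v(\gamma')\|_1$ is the Hamming distance of the two sign patterns, which is \emph{even} because both patterns carry an odd number of $-1$'s, hence $\ge 2$ — a contradiction. (At $t=1/2$, where all $c_m\equiv 1$, the only boundary case is Hamming distance exactly $2$, and there one uses $\|x-v\|_1+\|x-v'\|_1\ge\|v-v'\|_1=2$ to conclude the open simplices are still disjoint.) Therefore $\tilde{\mathsf L}_t$ is $[0,1]^n$ with a disjoint union of simplices deleted, and
\[
\vol(\tilde{\mathsf L}_t) \;=\; 1 - \sum_{\substack{m\ \mathrm{odd},\ 1\le m\le n\\ t>(m-1)/(2m)}} \binom{n}{m}\,\frac{1}{n!}\Big(m-\tfrac{m-1}{2t}\Big)^{n}.
\]

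Finally I would translate the index set of this sum into the stated ranges. Because $m\mapsto\frac{m-1}{2m}$ is strictly increasing, the active odd values at parameter $t$ are exactly $1,3,\dots,m$, where $m$ is the largest odd integer with $\frac{m-1}{2m}<t$; the interval $t\in(\frac{m-1}{2m},\frac{m+1}{2(m+2)}]$ is precisely where this holds \emph{and} the next odd value $m+2$ is still inactive (first case, requiring $m+2\le n$, i.e.\ $n-m>1$), while if $m+2>n$, i.e.\ $n-m\le 1$, there is no further inequality to turn on and the interval extends to $t=1/2$ (second case); using $\binom{n}{m}=\binom{n}{n-m}$ and renaming the summation index $k$ puts the formula in the displayed form. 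Specializing $t=1/2$ and using $\sum_{k\ \mathrm{odd}}\binom{n}{k}=2^{n-1}$ recovers \cref{prop:Cn-minimum} as a consistency check, and $n=3,4$ recover \cref{prop:K3-vol,prop:K22-vol}. I expect the disjointness claim to be the only genuinely nontrivial point — it is the one ingredient not already isolated as a lemma in the text — and everything else is bookkeeping of which corners get truncated and by how much.
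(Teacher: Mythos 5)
Your proposal follows essentially the same route as the paper: identify $\tilde{\mathsf N}_t$ with the unit cube, observe that each $m$-negative inequality truncates a corner simplex of volume $\frac{1}{n!}\bigl(m-\frac{m-1}{2t}\bigr)^n$ once $t$ exceeds the activation threshold $\frac{m-1}{2m}$, and sum over the $\binom{n}{m}=\binom{n}{n-m}$ sign patterns for each active odd $m$. The one place where you go beyond the paper is the disjointness of the removed simplices: the paper assembles the formula from its per-inequality volume propositions without ever justifying that the truncated regions do not overlap, whereas you prove it cleanly by noting that each removed region is the open $\ell_1$-ball of radius $c_m\le 1$ about the cube vertex $v(\gamma)$, and that two such vertices, each having an odd number of zero coordinates, are at even (hence $\ge 2$) Hamming distance, so the triangle inequality forbids a common point. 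That parity argument is the only genuinely nontrivial ingredient in the whole computation and is exactly what the paper's exposition is missing; your bookkeeping of the activation intervals, the $t=1/2$ demicube check, and the $n=3,4$ specializations all match the stated results.
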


\begin{proposition}\label{prop:cycle-graph-parameters}
    For the cycle graphs $C_n$ ($n \geq 3$), the various volume ratio parameters defined in \cref{sec:random-marginal} are:
    \begin{align*}
        \text{The fall-off value} \quad &\tau(C_n) = \frac{1}{3}\\
        \text{The initial ratio} \quad &\rho_{0+}(C_n) = 1-\frac{1}{(n-1)!}\\
        \text{The middle ratio} \quad &\rho_{1/2}(C_n) = 1-\frac{2^{n-1}}{n!}.
    \end{align*}
\end{proposition}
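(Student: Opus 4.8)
\emph{Proof proposal.} The plan is to read all three quantities directly off the structure of the scaled local slice $\tilde{\mathsf L}_t\subseteq\tilde{\mathsf N}_t=[0,1]^n$, using the facet description \eqref{eq:m-neg} and the structural propositions preceding \cref{prop:Cn-minimum}. Since $\tilde{\mathsf N}_t$ is the unit cube, $\vol(\tilde{\mathsf N}_t)=1$ for every $t\in(0,1/2]$, so each ratio in the statement equals $\vol(\tilde{\mathsf L}_t)$, and $\tilde{\mathsf L}_t$ is the cube with the regions cut off by the $m$-negative inequalities removed (over all odd $m\le n$ and all choices of the $m$ indices carrying $\gamma_i=-1$).

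For the fall-off value, I would first observe that the $1$-negative inequalities are $t$-independent (their right-hand side in \eqref{eq:m-neg} is $n-2$), so the corresponding truncation of the cube does not vary with $t$. The next admissible value is $m=3$ (recall $m$ is odd and $\le n$, and $C_n$ has $3\le n$), and by the proposition that an $m$-negative inequality is trivially satisfied for $t\le\frac{m-1}{2m}$, the $3$-negative inequalities are inactive precisely on $(0,1/3]$; by the proposition computing the removed volume, each such inequality cuts off the strictly positive volume $\tfrac{1}{n!}(3-\tfrac1t)^n$ (at a corner of type $\perm(1^{\,n-3},0^{\,3})$, untouched by the $1$-negative cuts) as soon as $t>1/3$. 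Hence $\vol(\tilde{\mathsf L}_t)$ is constant on $(0,1/3]$ and strictly smaller for $t$ slightly larger, so $\tau(C_n)=1/3$.

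For the initial ratio I evaluate at $t=\tau/2\in(0,1/3)$: the only active nontrivial constraints there are the $\binom n1=n$ one-negative inequalities, each of which (by the relevant structural proposition) clips an $n$-simplex of volume $1/n!$ off the cube, at the corner that is a permutation of $(1,\dots,1,0)$. The one point requiring care is disjointness of these clipped simplices: in the coordinates $y_i=1-x_i$ $(i\ne j)$, $y_j=x_j$, the simplex clipped at the corner with its $0$ in coordinate $j$ is exactly $\{y\in[0,1]^n:\sum_i y_i<1\}$; if a point lay in the analogous simplices for $j\ne j'$, adding the two constraints and using $0\le x_i\le 1$ forces $\sum_{i\ne j,j'}x_i>n-2$, impossible for $n-2$ coordinates in $[0,1]$. (This disjointness is also implicit in the earlier observation that imposing the $1$-negative constraints turns $\tilde{\mathsf N}_t$ into the convex hull of the remaining $2^n-n$ cube vertices.) Therefore $\vol(\tilde{\mathsf L}_t)=1-n/n!=1-1/(n-1)!$ on $(0,1/3]$, giving $\rho_{0+}(C_n)=1-1/(n-1)!$.

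For the middle ratio, $\rho_{1/2}(C_n)=\vol(\tilde{\mathsf L}_{1/2})$ is exactly the content of \cref{prop:Cn-minimum}: at $t=1/2$ the replaced coordinate value in \cref{prop:m-splitting} equals $1$, so every $m$-negative inequality clips a unit-corner simplex of volume $1/n!$ at each of the $\binom nm$ permutations of $(1^{\,n-m},0^{\,m})$, and these $\sum_{m\text{ odd}}\binom nm=2^{n-1}$ corners are precisely the cube vertices with an odd number of $0$'s, i.e. the demicube corners; the coordinate change above, together with the \emph{parity} constraint that the number of $\gamma_i=-1$ is odd, shows these simplices are pairwise disjoint (two clipped at index-sets $S\ne S'$ of odd size would force $|S\,\triangle\, S'|<2$, impossible as $|S\,\triangle\, S'|$ is even and nonzero), whence $\vol(\tilde{\mathsf L}_{1/2})=1-2^{n-1}/n!$. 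I expect the disjointness of the clipped corner simplices to be the only genuinely non-routine ingredient; everything else is bookkeeping on top of the already-established structural lemmas, and as a consistency check all three values can also be obtained by letting $t\to0^+$ and setting $t=1/2$ in the closed piecewise formula of \cref{prop:Cn_full}.
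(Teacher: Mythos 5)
Your proposal is correct and follows essentially the same route as the paper, which obtains all three values from the structural propositions about the $m$-negative inequalities (each $1$-negative inequality clips an $n$-simplex of volume $1/n!$, every $m$-negative inequality is trivial for $t\le\frac{m-1}{2m}$, and the $t=1/2$ body is the demicube of \cref{prop:Cn-minimum}). Your explicit disjointness arguments for the clipped corner simplices (via the $\ell_1$/parity computation) supply a detail the paper leaves implicit, and they are correct.
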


The ratio $\vol \mathsf L(C_n)/\vol\mathsf N(C_n)$ goes to unity in the $n \to \infty$ as already shown in \cite{lee2020volume}. From \ref{prop:Cn_full}, we see the same happens for the symmetric slices and the ratio $\vol \mathsf L_t/\vol\mathsf N_t$ goes to $1$ as $n \to \infty$. We plot this volume ratio for $C_n$ for different values of $n$ in \cref{fig:Cn}. 
\begin{figure}
    \centering
    \includegraphics[width=0.8\textwidth]{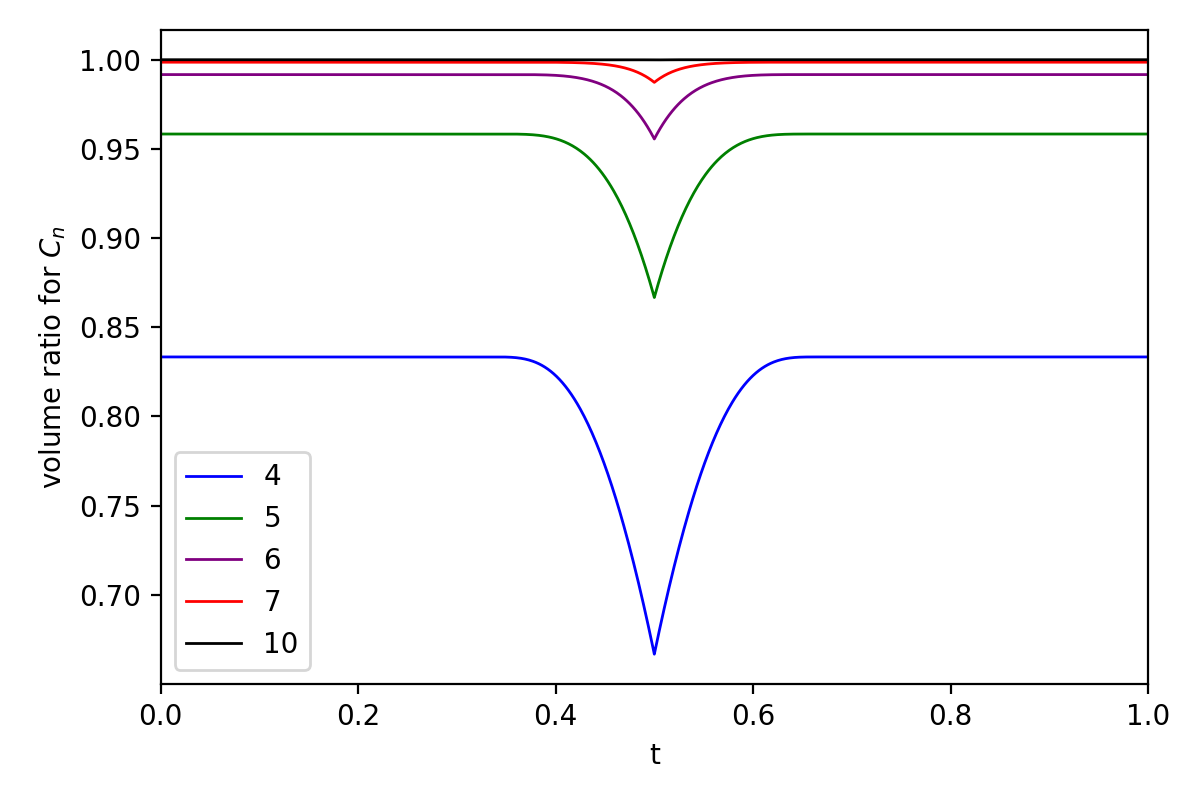}
    \caption{The volume ratio $\vol \mathsf L_t/\vol \mathsf N_t$ plotted as a function of $t$ for $C_n$. The legend shows the value of $n$ for each plot. We see that as $n$ increases, the volume ratio tends to $1$.}
    \label{fig:Cn}
\end{figure}

\section{The complete graph on four vertices}\label{sec:K4}

We now shift our attention to $K_4$, the complete graph on four vertices, see \cref{fig:K4-graph}. In \cref{sec:FM}, we shall see that the inequalities for any graph with $|V|$ vertices can be obtained from the inequalities of $K_{|V|}$, hence the importance of the study of complete graphs. 

\begin{figure}[htb]
  \centering
  \begin{tikzpicture}[auto, scale = 1.4]

    \tikzstyle{vertex1}=[circle, draw=blue, fill=blue!10!, ultra thick]
    \tikzstyle{edge1}=[draw=black, ultra thick]
    \tikzstyle{plus}=[]

    \node[vertex1] (v1) at (-1.5, 1.5) {$p_1$};
    \node[vertex1] (v2) at (1.5, 1.5) {$p_2$};
    \node[vertex1] (v3) at (1.5, -1.5) {$p_3$};
    \node[vertex1] (v4) at (-1.5, -1.5) {$p_4$};

    \draw[edge1] (v1) edge node{$q_{12}$} (v2);
    \draw[edge1] (v2) edge node{$q_{23}$} (v3);
    \draw[edge1] (v3) edge node{$q_{34}$} (v4);
    \draw[edge1] (v1) edge node{$q_{14}$} (v4);
    \draw[edge1] (v2) edge node{} (v4);
    \draw[edge1] (v1) edge node{} (v3);

    \coordinate [distance=0cm,label={$q_{13}$}] (O) at (0.4,-0.95);
    \coordinate [distance=0cm,label={$q_{24}$}] (O) at (0.4, 0.6);

    \end{tikzpicture}
  \caption{The $K_4$ graph}
  \label{fig:K4-graph}
\end{figure}
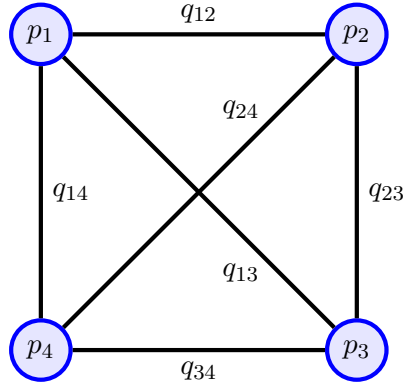

Before we start, let us recall the \emph{Inclusion-Exclusion} inequalities. For a set of events $\{A_i \, : \,  1\leq i\leq n\}$ we have associated probabilities $\{p_i = P(A_i) \, : \,  1\leq i\leq n\}$ and, more generally, 
$$\{p_{i_1i_2\cdots i_k} = P(A_{i_1}\cup A_{i_2}\cup\ldots A_{i_k}) \, : \,  1\leq i_1,\ldots,  i_k \leq n, N\leq n\}$$ as well as  
$$\{q_{i_1i_2\cdots i_k} = P(A_{i_1}\cap A_{i_2}\cap\ldots A_{i_k}) \, : \,  1\leq i_1,\ldots,  i_k \leq n\}.$$
Then the Inclusion-Exclusion inequalities are given by:
\begin{equation}
    p_{i_1i_2\cdots i_k} = \sum_{r=1}^N (-1)^{r+1}\sum_{j_1, \ldots,  j_r \in \{i_1, \ldots, i_k\}} q_{j_1\cdots j_r} \geq 0,
\end{equation}
with $q_i = p_i$ in the expansion. Now, since in our case, we do not deal with hypergraphs, we are limited to an intersection of maximum two events (recall that contexts are of maximum size 2). Thus, we modify the Inclusion-Exclusion inequality in our case as :
\begin{equation}
    p_{i_1i_2\cdots i_k} = \sum_{r=1}^2 (-1)^{r+1}\sum_{j_1, \ldots, j_r \in \{i_1, \ldots,  i_k\}} q_{j_1\cdots j_r} \geq 0.
\end{equation}
We start be recalling a general result about complete graphs.

\begin{proposition}\label{k4:incl-excl}
    \cite[Theorem 2.2]{pitowsky1991correlation} For any graph complete graph $K_{|V|}$ with $|V|\geq 2$, the Inclusion-Exclusion inequalities form facets of $\mathsf L(K_{|V|})$.
\end{proposition}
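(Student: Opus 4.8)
\emph{Strategy.} I would show directly that every inclusion--exclusion inequality defines a facet of $\mathsf L(K_{|V|}) = \mathsf{COR}(K_{|V|})$, in three moves: reduce the inequality to a normal form using a symmetry of the polytope; check validity on the extreme points $u_f$ of \cref{def:correlation-polytope}; and then exhibit $|V|+|E| = |V|+\binom{|V|}{2}$ affinely independent extreme points lying on the inequality — this being the full dimension of $\mathsf{COR}(K_{|V|})$ minus one.

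\emph{Normal form.} Via the probabilistic picture of \cref{prop:Boole}, an inclusion--exclusion inequality is attached to a subset $I\subseteq V$ with $|I|\ge 2$ together with a choice of events $B_i\in\{A_i,\bar A_i\}$ for $i\in I$, and in every case it is equivalent (after moving terms) to $\sum_{i\in I}P(B_i)-\sum_{\{i,j\}\subseteq I}P(B_i\cap B_j)\le 1$. Complementing a single event, $A_i\to\bar A_i$, is realized on $\mathbb R^{|V|+|E|}$ by the affine substitution $p_i\mapsto 1-p_i$, $q_{ij}\mapsto p_j-q_{ij}$ for every edge $ij$ incident to $i$ (all other coordinates untouched); one checks it maps $u_f$ to $u_{f'}$, where $f'$ is $f$ with its $i$-th bit flipped, and is therefore an affine automorphism of $\mathsf{COR}(K_{|V|})$ sending facets to facets. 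Composing these over the $i$ to be complemented, we are reduced to the \emph{clique inequality}
\[
\sum_{i\in I} p_i-\sum_{\{i,j\}\subseteq I} q_{ij}\le 1 ,
\]
which for $|I|=2$ is already one of the inequalities defining $\mathsf N$.

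\emph{Validity and the face.} Evaluating the left-hand side at an extreme point $u_f$ gives $s-\binom s2$ with $s:=|\{i\in I:f(i)=1\}|$. Since $s-\binom s2\le 1$ for all integers $s\ge 0$, with equality exactly when $s\in\{1,2\}$, the inequality holds on all $u_f$, hence on $\mathsf{COR}(K_{|V|})$; it is strict at $u_{f\equiv 0}$, so it cuts out a proper face $F$ whose extreme points are precisely the $u_f$ with $s\in\{1,2\}$. The polytope $\mathsf{COR}(K_{|V|})$ is full-dimensional in $\mathbb R^{|V|+|E|}$ — already $u_{f\equiv 0}$, the ``singletons'' $u_{\mathbf e_i}$ and the ``pairs'' $u_{\mathbf e_i+\mathbf e_j}$ affinely span the space, as one sees by reading off the $V$-coordinates first and then the $E$-coordinates — so $F$ is a facet iff it contains $|V|+|E|$ affinely independent extreme points.

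\emph{Dimension count.} When $I=V$, there are exactly $|V|+\binom{|V|}{2}$ obvious candidates on $F$: the $|V|$ singletons and the $\binom{|V|}{2}$ pairs. Forming differences against one base point, the singleton-differences have vanishing $E$-block and span an $(|V|-1)$-dimensional subspace of the $V$-block, while each pair-difference contributes a distinct standard basis vector $\mathbf e_{\{i,j\}}$ of the $E$-block; this block-triangular shape makes the $|V|-1+\binom{|V|}{2}$ differences linearly independent, so $F$ is a facet. For a proper subset $I\subsetneq V$, one starts from the singletons and pairs supported inside $I$ (which handle the $I$-part of the $V$-block and the edges inside $I$), and then appends extreme points of the forms $u_{\mathbf e_{i_0}+\mathbf e_v}$, $u_{\mathbf e_{i_0}+\mathbf e_v+\mathbf e_w}$ and $u_{\mathbf e_i+\mathbf e_v}$ with $i_0,i\in I$ and $v,w\in V\setminus I$ — each still having at most two $1$'s inside $I$, hence lying on $F$ — to supply, in triangular fashion, the remaining $V$-coordinates (those indexed by $V\setminus I$) and $E$-coordinates (edges meeting $V\setminus I$).

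\emph{Main obstacle.} The delicate, though essentially bookkeeping, step is this last one: choosing an explicit ordering of the coordinates and of the $|V|+|E|$ chosen extreme points so that the matrix of differences is visibly block-lower-triangular with invertible diagonal blocks, and, in the case $I\subsetneq V$, checking that the ``filler'' vertices introduce no redundant directions. The normalization and the validity check are routine.
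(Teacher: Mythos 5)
The paper does not actually prove this statement---it is imported from Pitowsky---so a self-contained argument is a reasonable thing to attempt, and your overall strategy is the standard and correct one: the switching symmetry $p_i\mapsto 1-p_i$, $q_{ij}\mapsto p_j-q_{ij}$ is indeed an affine automorphism permuting the $u_f$ (this is consistent with the paper's lists, e.g.\ the last inequality of \cref{eq:K3-L} is the switching at vertex $3$ of the first), the validity computation $s-\binom{s}{2}\le 1$ with equality iff $s\in\{1,2\}$ is right, and the case $I=V$ is handled correctly.

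The gap is in the dimension count for $I\subsetneq V$. Write $k=|I|$ and $m=|V\setminus I|$. Your list consists of the $k+\binom{k}{2}$ singletons and pairs supported inside $I$, plus fillers that all have \emph{exactly one} coordinate equal to $1$ inside $I$: the $km$ points $u_{\mathbf e_i+\mathbf e_v}$ and the $\binom{m}{2}$ points $u_{\mathbf e_{i_0}+\mathbf e_v+\mathbf e_w}$ (the form $u_{\mathbf e_{i_0}+\mathbf e_v}$ is already among the $u_{\mathbf e_i+\mathbf e_v}$). That is $k+\binom{k}{2}+km+\binom{m}{2}=|V|+|E|-m$ points, i.e.\ $m$ too few, and the deficiency is real rather than an artefact of counting: for each $v\in V\setminus I$, every listed point satisfies the affine relation $p_v=\sum_{i\in I}q_{iv}$ (either $p_v=0$ and all $q_{iv}=0$, or $p_v=1$ and exactly one $q_{iv}=1$), so the affine hull of your points has codimension at least $m+1$ and cannot be a facet. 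The repair is to add, for each $v\in V\setminus I$, one extreme point with \emph{two} ones inside $I$ and a one at $v$, say $u_{\mathbf e_{i_0}+\mathbf e_{i_1}+\mathbf e_v}$ with $i_0\neq i_1\in I$ fixed; this lies on the face (it has $s=2$) and violates the relation above since there $\sum_{i\in I}q_{iv}=2\neq 1=p_v$. With these $m$ extra points your block-triangular scheme does close: for each $v$, the $k+1$ points $u_{\mathbf e_i+\mathbf e_v}$ ($i\in I$) and $u_{\mathbf e_{i_0}+\mathbf e_{i_1}+\mathbf e_v}$, restricted to the coordinates $\big(p_v,(q_{iv})_{i\in I}\big)$, give the rows $(1,\mathbf e_i)$ and $(1,\mathbf e_{i_0}+\mathbf e_{i_1})$, which form an invertible $(k+1)\times(k+1)$ matrix. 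The smallest instance already exhibits the problem: for $K_3$ with $I=\{1,2\}$, the facet $p_1+p_2-q_{12}\le 1$ has six roots among the $u_f$, but your list produces only five, omitting $u_{(1,1,1)}$.
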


These are not all the facets of $K_{|V|}$ in general as pointed in \cite{pitowsky1991correlation} with the Chung inequalities \cite{chung1941probability} being an example of other possible facets.

\begin{proposition} \label{prop:k4-ineq}
    For $K_{4}$, the Inclusion-Exclusion inequalities along with the no-disturbance inequalities give all the facets.
\end{proposition}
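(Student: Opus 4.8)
The plan is to prove the statement in two steps. First, I would verify that every inequality in the two named families is valid for $\mathsf L(K_4)=\mathsf{COR}(K_4)\subset\mathbb R^{10}$ (four vertex- and six edge-coordinates) and is in fact facet-defining; there are $4\cdot\binom 4 2 = 24$ no-disturbance inequalities and $4\cdot\binom 4 3 = 16$ inclusion-exclusion inequalities, so $40$ in all. Second, I would show that $\mathsf L(K_4)$ has no other facets, so that these $40$ constitute its full $H$-representation.

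For the first step, the no-disturbance inequalities $0\le q_{ij}\le\min(p_i,p_j)$ and $p_i+p_j-q_{ij}\le 1$ are valid because $\mathsf L(K_4)\subseteq\mathsf N(K_4)$ (cf.\ \cref{prop:COR-pG-subset-TRA-pG}); each is a facet of the full-dimensional box $\mathsf N(K_4)$, and to see that it remains a facet of $\mathsf L(K_4)$ it is enough to check that the induced face is $9$-dimensional, e.g.\ by noting that it already contains all twelve extreme points $u_f$ of \cref{def:correlation-polytope} with $f$ in the relevant set (for $q_{ij}=0$, the $f$ with $f(i)f(j)=0$, and so on) and that ten of these are affinely independent. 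The inclusion-exclusion inequalities — concretely, for each of the $\binom 4 3$ triples of vertices, the four inequalities obtained by transplanting the system \eqref{eq:K3-L} onto that triple — are valid and facet-defining by \cref{k4:incl-excl}. The single inclusion-exclusion inequality on all four vertices, $\sum_i p_i-\sum_{i<j}q_{ij}\ge 0$, is valid but, as the next step will confirm, redundant; the statement should accordingly be read as: every facet is a no-disturbance inequality or an inclusion-exclusion inequality.

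For the completeness step I would invoke the classical correspondence between correlation polytopes of complete graphs and cut polytopes. Namely, $\mathsf{COR}(K_n)$ is the boolean quadric polytope \cite{padberg1989boolean}, and the covariance map $\mathsf{COR}(K_n)\to\mathrm{CUT}(K_{n+1})$ given by $x_{0i}:=p_i$ and $x_{ij}:=p_i+p_j-2q_{ij}$ is an affine isomorphism carrying the extreme points $u_f$ to cuts and facets to facets (see \cite{deza1994applicationsI,deza1994applicationsII}). For $n=4$ this identifies $\mathsf{COR}(K_4)$ with $\mathrm{CUT}(K_5)$, and it is classical that $\mathrm{CUT}(K_5)$ is completely described by its $4\binom 5 3 = 40$ triangle inequalities (this is the borderline case; $\mathrm{CUT}(K_6)$ already picks up pentagonal facets, mirroring the Chung-type facets \cite{chung1941probability} that appear for $K_5$). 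I would then transport the $40$ triangle facets of $\mathrm{CUT}(K_5)$ back to $\mathsf{COR}(K_4)$ and check that the $\binom 4 2$ triangles through the apex vertex $0$ become exactly the $24$ no-disturbance inequalities, while the $\binom 4 3$ triangles avoiding $0$ become exactly the $16$ inclusion-exclusion inequalities; since $24+16=40$, nothing has been omitted. As an independent confirmation I would feed the $16$ vertices $u_f$ into the double-description algorithm implemented in \texttt{cddlib} \cite{cddlib} and verify that it outputs precisely these $40$ facets.

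The main obstacle is not a hard estimate but the bookkeeping in the completeness step: setting up the covariance map on coordinates, pushing the four sign patterns of each triangle inequality through it, and confirming that the images are precisely the $24$ box inequalities together with the $16$ Bell-type inequalities \eqref{eq:K3-L}, so that the count matches the known facet count of $\mathrm{CUT}(K_5)$. Everything hinges on this count coinciding; by comparison, the facet-lifting verifications of the first step are routine dimension counts.
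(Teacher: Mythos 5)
Your two-step strategy (facet-lifting for the two named families, then completeness via the covariance map $\mathsf{COR}(K_4)\cong\mathrm{CUT}(K_5)$) is a reasonable way to turn this proposition into an actual proof --- the paper itself offers nothing beyond a \texttt{cdd} computation --- but the completeness step rests on a misremembered classical fact, and the error propagates into a wrong facet count. The cut polytope $\mathrm{CUT}(K_n)$ coincides with the metric polytope (i.e.\ is cut out by triangle inequalities alone) only for $n\le 4$; $\mathrm{CUT}(K_5)$ is precisely the \emph{first} cut polytope with pentagonal facets. It has $56$ facets, namely $40$ triangle facets and $16$ pentagonal ones, so your conclusion that $\mathsf L(K_4)$ has only the $24+16=40$ facets you enumerate is off by one level ($K_5$ versus $K_6$) and contradicts the paper's own list in \cref{sec:K4}: besides \eqref{eq:K4-nb} and the first family in \eqref{eq:K4-0}, the $H$-representation contains the six-term inequalities of \eqref{eq:K4-0}, \eqref{eq:K4-1/4} and \eqref{eq:K4-3/8}, which number $4+6+1+4+1=16$ and are exactly the pullbacks of the pentagonal facets. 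The underlying misreading is of the family ``inclusion-exclusion inequalities'': in the paper (following Pitowsky, see the computation in the proof of \cref{prop:incl-excl}) it is indexed by unions of \emph{any} number of the four events \emph{and their complements}, not only by triples; the four-event instances are precisely the pentagonal ones. A smaller slip in the same vein: $\sum_i p_i-\sum_{i<j}q_{ij}\ge 0$ is not valid on $\mathsf{COR}(K_4)$ (take all $A_i$ to be the sure event: $4-6<0$); the valid Bonferroni-type inequality is $\sum_i p_i-\sum_{i<j}q_{ij}\le 1$, and it is one of the $16$ pentagonal facets, not a redundant constraint.

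The route is salvageable and, once corrected, would be genuinely more informative than what the paper does: the covariance map does identify $\mathsf{COR}(K_4)$ with $\mathrm{CUT}(K_5)$, the complete facet description of $\mathrm{CUT}(K_5)$ ($40$ triangle plus $16$ pentagonal, all hypermetric) is classical, and one then checks that the triangle facets through the apex pull back to the $24$ no-disturbance inequalities, the remaining triangle facets to the $16$ triple inclusion-exclusion inequalities, and the pentagonal facets to the $16$ four-event inclusion-exclusion inequalities, for a total of $56$. The paper, by contrast, gives no deductive argument at all: it states that the claim was conjectured by Pitowsky and verified with \texttt{cdd}, which is essentially the cross-check you relegate to your final sentence.
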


This was conjectured by Pitowsky himself and we have checked this with \texttt{cdd} \cite{cddlib}. Hence, once we have all the inequalities in the $H$-representation of $\mathsf L(K_4)$ from \cref{prop:k4-ineq}, we can start studying our sliced polytopes.

Let us list down the inequalities of $\mathsf L(K_4, p = (t, t, t, t))$ ordering them on the basis of the value of $t$, they become active at. In what proceeds, \(i, j, k, l \in \{1, 2, 3, 4\}\) are distinct.

The inequalities cutting in from $t=0$ are:
\begin{equation}\label{eq:K4-nb}
\begin{aligned}
    -q_{ij} \leq 0 \\
    q_{ij} \leq t \\
\end{aligned}
\end{equation}
which we get from the no-disturbance condition. We also have the inequality \(-p_{ij} \leq 1-2t\) but that is not relevant for $0 \leq t \leq 1/2$. The remaining inequalities can be obtained from the Inclusion-Exclusion conditions.
\begin{equation}\label{eq:K4-0}
\begin{aligned}
    - q_{jk} + q_{ij} + q_{ik} \leq t \\
    q_{ij} +q_{ik} + q_{il} -q_{jk} - q_{jl} - q_{kl} \leq t \\
    - q_{ij} - q_{kl} + q_{ik} + q_{il} +q_{jk} + q_{jl} \leq 2t
\end{aligned}
\end{equation}

The inequalities cutting in from $t=1/4$ are:
\begin{equation} \label{eq:K4-1/4}
\begin{aligned}
    -q_{ij}-q_{ik}-q_{il}-q_{jk}-q_{jl}-q_{kl}\leq 1-4t \\
      - q_{ij} -q_{ik} -q_{jk} + q_{jl} + q_{il}+q_{kl} \leq 1-t
\end{aligned}
\end{equation}

The inequalities cutting in from $t=1/3$ are:
\begin{equation} \label{eq:K4-1/3}
\begin{aligned}
    -q_{ij} - q_{ik} -q_{jk} \leq 1-3t
\end{aligned}
\end{equation}

The inequalities cutting in from $t=3/8$ are:
\begin{equation} \label{eq:K4-3/8}
\begin{aligned}
    - q_{ij} -q_{ik} -q_{jk} - q_{jl} - q_{il}-q_{kl} \leq 3-8t
\end{aligned}
\end{equation}

We define the polytopes of interest: 

\begin{align*}
    \mathsf N_t &:= \mathsf N(p=(t,t,t,t), K_4) = \{(q_{12},q_{23},q_{34}, q_{14}, q_{13}, q_{24}) \in \mathbb R^6 \, : \, \text{ \cref{eq:K4-nb} holds}\}\\
    \mathsf L_t &:= \mathsf L(p=(t,t,t,t), K_4) = \{(q_{12},q_{23},q_{34}, q_{14}, q_{13}, q_{24}) \in \mathbb R^6 \, : \, \text{ \cref{eq:K4-nb}-\eqref{eq:K4-3/8} hold}\}.
\end{align*}

Let us now look at the various parameters for $K_4$.

\begin{proposition}
    Since, the first inequalities cutting in at a non-zero $t$ value are given by the inequalities \ref{eq:K4-1/4}, the fall of value for $K_4$ is given by:
    $$\tau(K_4) = \frac{1}{4}.$$
\end{proposition}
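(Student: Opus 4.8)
The plan is to prove the two bounds $\tau(K_4)\ge 1/4$ and $\tau(K_4)\le 1/4$ separately. Throughout I would work, exactly as in \cref{sec:triangle,sec:square}, in the rescaled coordinates $q_{ij}=t\,x_{ij}$; for $t\in(0,1/2]$ the no-disturbance inequalities \eqref{eq:K4-nb} cut out the unit cube $\tilde{\mathsf N}_t=[0,1]^6$ (the remaining no-disturbance bound being inactive for $t\le 1/2$), so $\vol(\tilde{\mathsf N}_t)=1$, and since $\vol(\mathsf N_t)=t^6\vol(\tilde{\mathsf N}_t)$ and likewise for $\mathsf L_t$, the volume ratio equals $\vol(\tilde{\mathsf L}_t)$. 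By \cref{prop:symVol} it suffices to treat $t\in(0,1/2]$.

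For $\tau(K_4)\ge 1/4$ I would first note that, after dividing by $t$, each inequality in \eqref{eq:K4-0} becomes independent of $t$: its left-hand side is homogeneous of degree one in the $q$'s and its right-hand side is a fixed multiple of $t$. On the other hand, the right-hand sides of \eqref{eq:K4-1/4}, \eqref{eq:K4-1/3}, \eqref{eq:K4-3/8} turn into $1/t-4$ and $1/t-1$, then $1/t-3$, then $3/t-8$. Maximizing the corresponding left-hand sides over $[0,1]^6$ (the maxima are $0$, $3$, $0$ and $0$ respectively) shows that these inequalities are redundant on the whole cube, hence on any subset of it, as long as $t<1/4$ for the two in \eqref{eq:K4-1/4}, $t<1/3$ for \eqref{eq:K4-1/3}, and $t<3/8$ for \eqref{eq:K4-3/8}. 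Consequently, for $t\in(0,1/4)$ the polytope $\tilde{\mathsf L}_t$ is cut out by \eqref{eq:K4-nb} and \eqref{eq:K4-0} alone, and these defining inequalities do not depend on $t$; hence $\tilde{\mathsf L}_t$, and with it $\vol(\tilde{\mathsf L}_t)$ and the volume ratio, is literally constant on $(0,1/4)$.

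For the reverse bound $\tau(K_4)\le 1/4$ I would exhibit an inequality that genuinely trims $\tilde{\mathsf L}_t$ the instant $t$ exceeds $1/4$. In the rescaled coordinates, the first inequality of \eqref{eq:K4-1/4} reads $x_{12}+x_{13}+x_{14}+x_{23}+x_{24}+x_{34}\ge 4-1/t$, whose right-hand side is strictly positive for $t>1/4$. Since all three families in \eqref{eq:K4-0} hold with strict slack at the origin, there is a neighbourhood $U$ of $0$ with $U\cap[0,1]^6\subseteq\tilde{\mathsf L}_s$ for every $s\in(0,1/4)$; for $t>1/4$ close to $1/4$, the part of $U\cap[0,1]^6$ on which $x_{12}+x_{13}+x_{14}+x_{23}+x_{24}+x_{34}<4-1/t$ has positive Lebesgue measure and is excluded from $\tilde{\mathsf L}_t$. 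Thus $\vol(\tilde{\mathsf L}_t)<\vol(\tilde{\mathsf L}_s)$ whenever $s<1/4<t$, so the volume ratio fails to be constant on $(0,t)$ for any $t>1/4$; combined with the previous paragraph and \cref{def:fall-off} this yields $\tau(K_4)=1/4$.

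The only step that is not pure bookkeeping is the non-degeneracy claim in the last paragraph: one must know that the corner of the cube at the origin is a genuine full-dimensional vertex of $\tilde{\mathsf L}_s$ for $s<1/4$, which is precisely the strict slackness of \eqref{eq:K4-0} at $0$ used above. I expect this to be the main (and essentially the only) obstacle; the remaining work — maximizing the left-hand sides of \eqref{eq:K4-1/4}, \eqref{eq:K4-1/3}, \eqref{eq:K4-3/8} over $[0,1]^6$ to confirm that their activation thresholds are indeed $1/4$, $1/3$ and $3/8$ — is routine.
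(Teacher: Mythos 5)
Your proof is correct and follows essentially the same route as the paper, which simply reads off $\tau(K_4)=1/4$ from the activation thresholds of the listed facet inequalities \eqref{eq:K4-nb}--\eqref{eq:K4-3/8}. You additionally make explicit the two steps the paper leaves implicit — verifying the thresholds by maximizing each left-hand side over $[0,1]^6$, and checking that the first inequality of \eqref{eq:K4-1/4} removes a set of positive measure near the origin for $t>1/4$ so that the ratio genuinely ceases to be constant — both of which are sound (and both arguments, like the paper's, rest on \cref{prop:k4-ineq} for the completeness of the inequality list).
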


In the region $0 \leq t \leq 1/4$, the relevant inequalities in the $H$-representation of $\mathsf L_t$ are \ref{eq:K4-nb} and \ref{eq:K4-0}. The corresponding $V$-representation is:
\begin{gather*}
        \operatorname{conv}\Big\{(0, 0, 0, 0, 0, 0), (t, t, 0, t, 0, 0), (t, 0, t, 0, t, 0), (0, t, t, 0, 0, t), (0, 0, 0, t, t, t)\\
        (0, t, 0, 0, t, 0), (0, 0, t, t, 0, 0), (t, 0, 0, 0, 0, t), (t, 0, 0, 0, 0, 0), (0, t, 0, 0, 0, 0) \\
        (0, 0, t, 0, 0, 0), (0, 0, 0, t, 0, 0), (0, 0, 0, 0, t, 0), (0, 0, 0, 0, 0, t), (t, t, t, t, t, t)\Big\}
\end{gather*}

The volume of $N_t$ is simply $t^6$. The volume of $L_t$ can be computed (numerically, using \texttt{cdd}) from the convex hull.

\begin{proposition}
    The initial ratio for $K_4$ is given by :
    $$\rho_{0+}(K_4) = \frac{5}{36}.$$
\end{proposition}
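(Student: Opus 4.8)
The plan is to reduce the claim to a single $t$-independent volume and then to evaluate it. First I would check that for $t\in(0,1/4]$ the polytope $\mathsf L_t$ is cut out by \eqref{eq:K4-nb} and \eqref{eq:K4-0} alone. The inequalities in \eqref{eq:K4-1/3}, \eqref{eq:K4-3/8} and the first inequality of \eqref{eq:K4-1/4} have a left-hand side equal to minus a sum of $q_{ij}$'s, hence $\le 0$, while their right-hand sides $1-3t$, $3-8t$, $1-4t$ are $\ge 0$ on $(0,1/4]$; the second inequality of \eqref{eq:K4-1/4} has left-hand side $\le 3t$ (using $0\le q_{ij}\le t$), and its right-hand side $1-t$ satisfies $1-t\ge 3t$ precisely when $t\le 1/4$; finally the no-disturbance bound $2t-q_{ij}\le 1$ is automatic since $2t<1$. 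So on $(0,1/4]$ we have $\mathsf N_t=[0,t]^6$, hence $\vol(\mathsf N_t)=t^6$, and $\mathsf L_t$ is the cube $[0,t]^6$ intersected with the inequalities \eqref{eq:K4-0}.

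Next I would rescale by $q_{ij}=t\,x_{ij}$: then \eqref{eq:K4-nb} becomes $0\le x_{ij}\le 1$ and \eqref{eq:K4-0} becomes the $t$-independent system of the twelve triangle inequalities $x_{ij}+x_{ik}-x_{jk}\le 1$, the four ``star'' inequalities $x_{ij}+x_{ik}+x_{il}-x_{jk}-x_{jl}-x_{kl}\le 1$, and the three ``matching'' inequalities $x_{ik}+x_{il}+x_{jk}+x_{jl}-x_{ij}-x_{kl}\le 2$. Writing $\tilde{\mathsf L}\subset[0,1]^6$ for the resulting polytope, we get $\vol(\mathsf L_t)=t^6\,\vol(\tilde{\mathsf L})$ for all $t\in(0,1/4]$; in particular the ratio $\vol(\mathsf L_t)/\vol(\mathsf N_t)$ is constant on $(0,1/4]$ and equals $\vol(\tilde{\mathsf L})$, which is consistent with $\tau(K_4)=1/4$ and gives $\rho_{0+}(K_4)=\vol(\tilde{\mathsf L})$.

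It then remains to prove $\vol(\tilde{\mathsf L})=5/36$. The defining system is invariant under the $S_4$ action permuting the four underlying vertices (and hence the six edge coordinates), and I would use this symmetry to organise the computation, via any of the following equivalent routes: (i) determine the ($S_4$-invariant) vertex list of $\tilde{\mathsf L}$ and sum simplex volumes over a triangulation compatible with the symmetry, in exact rational arithmetic; (ii) decompose $[0,1]^6\setminus\tilde{\mathsf L}$ into the regions removed by the nineteen inequalities making up \eqref{eq:K4-0} and evaluate by inclusion--exclusion, grouping terms into $S_4$-orbits; or (iii) compute $\int_{[0,1]^6}\mathbf{1}_{\tilde{\mathsf L}}\,dx$ by eliminating the six variables one at a time (Fourier--Motzkin elimination together with piecewise-polynomial integration). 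Any of these yields $\vol(\tilde{\mathsf L})=5/36$, hence $\rho_{0+}(K_4)=5/36$.

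The only nontrivial part is the six-dimensional volume in the last step. For $K_3$ and $K_{2,2}$ the complement of the rescaled local polytope inside the cube was a disjoint union of scaled simplices, so the volume reduced to a one-line inclusion--exclusion; here, by contrast, the simplices cut off by triangle inequalities of distinct triangles of $K_4$ overlap, and the star and matching inequalities remove further, mutually overlapping pieces, so the inclusion--exclusion (equivalently, the triangulation) produces many $S_4$-orbits of terms. The most reliable way to push it to the end --- and effectively what underlies the value quoted in the statement --- is an exact rational-arithmetic polytope-volume computation, with the $S_4$-symmetry used both to cut down the case analysis and as an internal consistency check.
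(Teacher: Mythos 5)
Your proposal is correct and follows essentially the same route as the paper: restrict to the inequalities active at $t=0$ (checking that the remaining families are vacuous on $(0,1/4]$), rescale by $q_{ij}=t\,x_{ij}$ to obtain a $t$-independent polytope $\tilde{\mathsf L}\subset[0,1]^6$, and evaluate $\vol(\tilde{\mathsf L})=5/36$ by an exact polytope-volume computation. The paper likewise delegates this final six-dimensional volume to \texttt{cdd} applied to the $15$-point vertex list of the slice, so your explicit verification that the $t=1/4$, $t=1/3$ and $t=3/8$ inequality families are inactive on $(0,1/4]$ is, if anything, slightly more complete than what is written there.
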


Finally, for $t=1/2$, the $H$-representation of $L_t$ has all the inequalities in \eqref{eq:K4-nb}-\eqref{eq:K4-3/8}. The $V$-representation reads:
\begin{align*}
        \operatorname{conv}\Big\{ &(0.5, 0, 0, 0, 0, 0.5), (0, 0.5, 0, 0, 0.5, 0), (0, 0, 0.5, 0.5, 0, 0), (0.5, 0.5, 0, 0.5, 0, 0), \\ 
        &(0.5, 0, 0.5, 0, 0.5, 0), (0.5, 0.5, 0, 0.5, 0, 0), (0.5, 0, 0.5, 0, 0.5, 0), (0, 0.5, 0.5, 0, 0, 0.5), \\
        &(0, 0, 0, 0.5, 0.5, 0.5), (0.5, 0.5, 0.5, 0.5, 0.5, 0.5)\Big\}.
\end{align*}

The volume of this region is $1/1440$ while the volume of $N_t$ is just $1/2^6$.

\begin{proposition}
    The middle ratio for $K_4$ is given by:
    $$\rho_{1/2}(K_4) = \frac{2}{45}.$$
\end{proposition}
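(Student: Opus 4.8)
\emph{Proof proposal.} The plan is to read off the middle ratio directly from the two volumes
$$\rho_{1/2}(K_4) = \frac{\vol\big(\mathsf L_{1/2}\big)}{\vol\big(\mathsf N_{1/2}\big)}, \qquad \mathsf L_{1/2} \subseteq \mathsf N_{1/2} \subset \mathbb R^6,$$
and the denominator is essentially free. Setting $t = 1/2$ in \eqref{eq:K4-nb} turns the no-disturbance constraints into $0 \le q_{ij} \le 1/2$ for each of the six edges of $K_4$, with the inequality $-p_{ij} \le 1 - 2t$ becoming vacuous; hence $\mathsf N_{1/2} = [0,1/2]^6$ is a cube and $\vol(\mathsf N_{1/2}) = 2^{-6} = 1/64$. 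Everything therefore reduces to computing $\vol(\mathsf L_{1/2})$.

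For the numerator I would first substitute $t = 1/2$ into the full list \eqref{eq:K4-nb}--\eqref{eq:K4-3/8} — at this value all of the listed inequalities are relevant (their right-hand sides become $0$, $1/2$, $-1$, $-1/2$, etc.) — and then pass to the $V$-representation by vertex enumeration, which produces the eight extreme points displayed above. One checks that $\mathsf L_{1/2}$ is full-dimensional in $\mathbb R^6$, since it contains the product point $(1/4,\dots,1/4)$ together with a neighbourhood of it, so its Lebesgue volume is well defined. I would then triangulate the convex hull of the eight vertices and sum the volumes of the resulting simplices (using $\vol(\text{simplex}) = \tfrac{1}{6!}\,|\det(\cdot)|$), obtaining $\vol(\mathsf L_{1/2}) = 1/1440$; as an independent check one runs the exact rational volume routine of \texttt{cdd} on the $H$-representation and gets the same value. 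Combining, $\rho_{1/2}(K_4) = (1/1440)\big/(1/64) = 64/1440 = 2/45$.

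The step I expect to be the main obstacle is the exact evaluation of $\vol(\mathsf L_{1/2})$: a $6$-dimensional polytope bounded by more than a dozen facet hyperplanes has no obviously canonical decomposition, so a careful triangulation is needed to be certain no simplex is double-counted or omitted, and the determinants must be evaluated in exact arithmetic. This can be lightened by exploiting the symmetry group of the slice: the graph automorphisms of $K_4$ permute the six edge coordinates, and for every subset $S \subseteq V$ the partial bit-flip on $S$ induces a coordinate isometry of $\mathsf L_{1/2}$ acting by $q_{ij} \mapsto q_{ij}$ when $|S \cap \{i,j\}|$ is even and $q_{ij} \mapsto 1/2 - q_{ij}$ when it is odd (these exist precisely because $t = 1/2$ is fixed by $t \mapsto 1 - t$). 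Quotienting the triangulation by this group reduces the number of simplex-volume computations one must carry out by hand.
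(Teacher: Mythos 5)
Your proposal is correct and follows essentially the same route as the paper: both reduce the claim to $\vol(\mathsf N_{1/2}) = 1/2^6$ for the cube $[0,1/2]^6$ and to the computation $\vol(\mathsf L_{1/2}) = 1/1440$, the latter obtained by enumerating the vertices of the slice and evaluating the volume with exact computational-geometry software (\texttt{cdd}). Your additional suggestions (explicit triangulation and exploiting the bit-flip/automorphism symmetries at $t=1/2$) are refinements of the same computation rather than a different argument.
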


Finally, we provide the computationally obtained plot for $t$ against the volume ratio in \cref{fig:4-vertices}.

\section{Operations on general graphs} \label{sec:FM}

To obtain the $H$-representation of the local polytope $\mathsf L(G) = \mathsf{COR}(G)$, the obvious method is to list down the $V$-representation using the truth-table approach from \cref{def:correlation-polytope} and solve the convex hull problem. However, there are other approaches to getting the $H$-representation of the local polytope as shown in \cite{budroni2012bell}. For the sake of completeness, we will mention them.

Consider the $V$-representation of the local polytope $\mathsf L(G)$ corresponding to a graph $G(V, E)$. Removing an edge, say \(\{ij\}\) from $E$ is equivalent to removing the corresponding column from the truth-table. Hence, the local polytope $\mathsf L(G')$ corresponding to $G'(V, E\setminus\{ij\})$ is just the projection of $\mathsf L(G)$ from $\mathbb R^{|V|+|E|}$ onto  $\mathbb R^{|V|+|E\setminus\{ij\}|}$. This can be achieved by \emph{Fourier-Motzkin elimination}.

\begin{proposition} \label{prop:collapsing-fm}
    The $H$-representation of the local polytope for the graph $G'(V, E\setminus\{ij\})$ can be obtained by applying Fourier-Motzkin elimination to remove $q_{ij}$ from the $H$-representation of the local polytope of the graph $G(V, E)$, and then throwing away the redundant inequalities.
\end{proposition}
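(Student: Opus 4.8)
The plan is to make precise the correspondence between removing an edge from the graph and projecting away the corresponding coordinate, and then to invoke the standard fact that Fourier--Motzkin elimination computes exactly this projection on the level of $H$-representations. First I would recall the $V$-representation side of the statement: the extremal points of $\mathsf L(G)$ are the vectors $u_f$ from \cref{def:correlation-polytope}, indexed by functions $f\colon V\to\{0,1\}$, with coordinates $u_f(i)=f(i)$ for $i\in V$ and $u_f(e)=f(i)f(j)$ for $e=(i,j)\in E$. Deleting the edge $\{ij\}$ from $E$ simply deletes the coordinate $u_f(\{ij\})$ from every such vector, i.e.\ it applies the coordinate projection $\pi\colon\mathbb R^{|V|+|E|}\to\mathbb R^{|V|+|E|-1}$ that forgets the $q_{ij}$-entry. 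Since $\pi(u_f)$ is exactly the extremal generator of $\mathsf L(G')$ associated to the same $f$ (the remaining $V$- and $E$-coordinates are unchanged), and since a linear image of a convex hull is the convex hull of the images, we get $\pi\big(\mathsf L(G)\big)=\operatorname{conv}\{\pi(u_f)\}=\mathsf L(G')$. This establishes that the two polytopes are related precisely by the coordinate projection $\pi$.

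Next I would bring in the $H$-representation. It is a classical theorem of linear programming / polyhedral geometry (see e.g.\ \cite{ziegler2012lectures}) that if a polyhedron $P\subseteq\mathbb R^d$ is given by a finite system of linear inequalities $Ax+b q\le c$ (writing $q$ for the coordinate to be eliminated), then its image under the projection forgetting the $q$-coordinate is again a polyhedron, and a finite inequality description of it is obtained by Fourier--Motzkin elimination: one pairs each inequality in which $q$ appears with positive coefficient against each inequality in which $q$ appears with negative coefficient, forms the corresponding nonnegative combination cancelling $q$, and keeps (together with these new inequalities) all the original inequalities not involving $q$. Applying this to the known $H$-representation of $\mathsf L(G)$ — eliminating the variable $q_{ij}$ — yields a valid (though generally non-minimal) $H$-representation of $\pi(\mathsf L(G))$, which by the previous paragraph equals $\mathsf L(G')$. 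Discarding the redundant inequalities (a standard post-processing step, algorithmically a collection of small LPs) gives the claimed description.

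The two routine points to spell out are: (i) that deleting a graph edge really does correspond to the coordinate projection on the generator level — this is immediate from \cref{def:correlation-polytope} since none of the other coordinates of $u_f$ refer to the edge $\{ij\}$; and (ii) that Fourier--Motzkin elimination is sound and complete for polyhedral projection, which is entirely standard and can simply be cited. The one genuine subtlety — and the part I would flag as the main obstacle to a fully rigorous write-up — is the bookkeeping around \emph{redundancy}: Fourier--Motzkin typically produces many implied or duplicated inequalities, so the statement ``throwing away the redundant inequalities'' needs the (true, but worth stating) remark that removing redundant constraints does not change the polyhedron, so the minimal $H$-representation one ends up with is still that of $\mathsf L(G')$. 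No new inequalities beyond those generated can appear, because the generated system already cuts out the correct set; this is exactly the completeness half of the Fourier--Motzkin theorem. With these observations assembled, the proposition follows.
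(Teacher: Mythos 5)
Your argument is correct and follows the same route the paper takes: the paper likewise observes that deleting the edge $\{ij\}$ deletes the corresponding column of the truth table, so $\mathsf L(G')$ is the coordinate projection of $\mathsf L(G)$, and then invokes Fourier--Motzkin elimination as the standard tool for computing projections on the level of $H$-representations. Your write-up merely makes explicit the two routine facts (linear images of convex hulls, and soundness/completeness of Fourier--Motzkin) that the paper leaves implicit.
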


As an example, let us derive the $H$-representation of $\mathsf L(C_4)$ from the $H$-representation $\mathsf L(K_4-e)$ which is given by:
\begin{equation} \label{eqn:K4-e-N}
\begin{aligned}
    0 \leq q_{ij} \leq \min{(p_{i}, p_{j})} \\
    p_i + p_j -p_{ij} \leq 1
\end{aligned}
\end{equation}

\begin{subequations} \label{eqn:K4-e-Lneg}
\begin{align}
\label{eq:K4-e-Lneg-1}    p_1 + p_2 + p_3 - q_{12} - q_{23} - q_{13} \leq 1\\
\label{eq:K4-e-Lneg-2}    p_1 + p_3 + p_4 - q_{14} - q_{34} - q_{13} \leq 1\\
\label{eq:K4-e-Lneg-3}    -p_2 + q_{12} + q_{23} - q_{13} \leq 0\\
\label{eq:K4-e-Lneg-4}    -p_4 + q_{14} + q_{34} - q_{13} \leq 0 
\end{align}
\end{subequations}

\begin{subequations} \label{eqn:K4-e-Lpos}
\begin{align}
\label{eq:K4-e-Lpos-1}    -p_3 - q_{12} + q_{23} + q_{13} \leq 0\\
\label{eq:K4-e-Lpos-2}    -p_1 - q_{23} + q_{12} + q_{13} \leq 0\\
\label{eq:K4-e-Lpos-3}    -p_3 - q_{14} + q_{34} + q_{13} \leq 0\\
\label{eq:K4-e-Lpos-4}    -p_1 - q_{34} + q_{14} + q_{13} \leq 0
\end{align}
\end{subequations}

The equations \eqref{eqn:K4-e-N} are the trivial facets corresponding to $\mathsf N(K_4-e)$. Equations \eqref{eqn:K4-e-Lneg} and \eqref{eqn:K4-e-Lpos} are the non-trivial facets for $\mathsf L(K_4-e)$ with negative and positive unity as coefficients of $q_{13}$ respectively. To eliminate $q_{13}$, we just add the opposite signed inequalities. The resulting set of inequalities will have many redundant inequalities which can be removed by checking against a linear program \cite{boyd2004convex}. The minimal set of these inequalities forming the $H$-representation of $\mathsf L(K_4-e)$ are:

\begin{table}[htb]
\begin{center}
\bgroup
\def\arraystretch{1.5}
\begin{tabular}{|cc|c|}
\hline
\rowcolor[HTML]{C0C0C0} 
\multicolumn{2}{|c|}{Inequalities of $\mathsf L(K_4-e)$} & Resultant Inequality for $\mathsf L(C_4)$ \\ 
\hline
\rowcolor[HTML]{EFEFEF} 
\ref{eq:K4-e-Lpos-1} & \ref{eq:K4-e-Lneg-2} & \(p_1+p_4-q_{14}-q_{34}-q_{12}+q_{23} \leq 1\) \\
\rowcolor[HTML]{EFEFEF} 
\ref{eq:K4-e-Lpos-2} & \ref{eq:K4-e-Lneg-4} & \(-p_1-p_4+q_{14}+q_{34}+q_{12}-q_{23} \leq 0\) \\
\ref{eq:K4-e-Lpos-2} & \ref{eq:K4-e-Lneg-2} & \(p_3+p_4-q_{14}-q_{34}+q_{12}-q_{23} \leq 1\) \\
\ref{eq:K4-e-Lpos-1} & \ref{eq:K4-e-Lneg-4} & \(-p_3-p_4+q_{14}+q_{34}-q_{12}+q_{23} \leq 0\) \\
\rowcolor[HTML]{EFEFEF} 
\ref{eq:K4-e-Lpos-3} & \ref{eq:K4-e-Lneg-1} & \(p_1+p_2-q_{14}+q_{34}-q_{12}-q_{23} \leq 1\) \\
\rowcolor[HTML]{EFEFEF} 
\ref{eq:K4-e-Lpos-4} & \ref{eq:K4-e-Lneg-3} & \(-p_1-p_2+q_{14}-q_{34}+q_{12}+q_{23} \leq 0\) \\
\ref{eq:K4-e-Lpos-4} & \ref{eq:K4-e-Lneg-1} & \(p_2+p_3+q_{14}-q_{34}-q_{12}-q_{23} \leq 1\) \\
\ref{eq:K4-e-Lpos-3} & \ref{eq:K4-e-Lneg-3} & \(-p_2-p_3-q_{14}+q_{34}+q_{12}+q_{23} \leq 0\) \\
\hline
\end{tabular}
\egroup
\end{center}
\caption{The inequalities of $\mathsf L(C_4)$ obtained from applying Fourier-Motzkin on inequalities for $\mathsf L(K_4-e)$. The first column shows the inequalities which are added to obtain the resultant inequalities in column 2}
\label{tbl:fm-C4}
\end{table}

Aside from these $8$ inequalities, the other inequalities in the $H$-representation of $\mathsf L(C_4)$ are the $16$ inequalities in \eqref{eqn:K4-e-N} that do not contain $q_{13}$. Comparing to \cref{eq:K22-N,eq:K22-L}, we have the complete $H$-representation of $\mathsf L(C_4)$.

Thus, from (\ref{prop:collapsing-fm}), if the list of facets of $K_n$ is known, the facets of any subgraph with the same set of vertices can be obtained.

Another approach towards unravelling the $H$-representation for a graph is the technique of gluing smaller graphs to get a larger graph as shown in \cite{budroni2012bell}. For the sake of completeness, we will briefly mention this procedure. 

\begin{proposition}\label{prop:lifting-glue}
    Consider two graphs $G_1(V_1, E_1)$ and $G_2(V_2, E_2)$ with corresponding $H$-representations of the local polytope given by $\mathcal{H}_k(p_i \in V_k, q_{ij}\in E_k)$. Now, let the graph formed by gluing $G_1$ and $G_2$ on some vertices be $G(V_1\cup V_2, E_1 \cup E_2)$, such that the induced subgraph on these common vertices is the identical for both graphs. Then, the $H$-representation of $\mathsf L(G)$ is the just $\mathcal{H}(p_i \in V_1 \cup V_2, q_{ij}\in E_1 \cup E_2) \equiv \mathcal{H}_1(p_i \in V_1, q_{ij}\in E_1) \cup \mathcal{H}_2(p_i \in V_2, q_{ij}\in E_2)$.
\end{proposition}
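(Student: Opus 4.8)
The plan is to recast the $H$-representation claim as an equality of polytopes and then split it into a trivial inclusion and a substantive one. Write $V_0 := V_1 \cap V_2$ for the glued vertices, $V := V_1 \cup V_2$, $E := E_1 \cup E_2$, and let $\pi_k \colon \mathbb R^{|V|+|E|} \to \mathbb R^{|V_k|+|E_k|}$ be the coordinate projection onto the $V_k$- and $E_k$-coordinates. Since $\mathsf L(G_k)$ is cut out by $\mathcal H_k$, the set $P := \{u : \pi_1(u) \in \mathsf L(G_1)\text{ and }\pi_2(u) \in \mathsf L(G_2)\}$ is exactly the polytope cut out by $\mathcal H_1 \cup \mathcal H_2$, so it suffices to prove $\mathsf L(G) = P$; the $H$-representation statement then follows by discarding the redundant inequalities among $\mathcal H_1 \cup \mathcal H_2$.

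The inclusion $\mathsf L(G) \subseteq P$ is immediate: for any $f \colon V \to \{0,1\}$ we have $\pi_k(u_f) = u_{f|_{V_k}} \in \mathsf L(G_k) = \mathsf{COR}(G_k)$ for $k=1,2$, so $u_f \in P$, and since $P$ is convex, $\mathsf L(G) = \operatorname{conv}\{u_f\} \subseteq P$. For the reverse inclusion I would glue hidden-variable models. Given $u \in P$, apply \cref{prop:Boole} to $\pi_1(u) \in \mathsf{COR}(G_1)$ to get a law $\mu_1$ on $\{0,1\}^{V_1}$ for indicator variables $(X_i)_{i \in V_1}$ realizing the $V_1$- and $E_1$-coordinates of $u$, and similarly a law $\mu_2$ on $\{0,1\}^{V_2}$ realizing the $V_2$-data. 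The aim is a single law $\mu$ on $\{0,1\}^{V}$ whose $V_1$-marginal is $\mu_1$ and whose $V_2$-marginal is $\mu_2$: reading off the one- and two-point probabilities of $\mu$ recovers $u(i)$ for every $i \in V$ and $u(e)$ for every $e \in E$ — here one uses that $G$ has no edge joining $V_1 \setminus V_0$ to $V_2 \setminus V_0$, so every edge of $G$ lies in $E_1$ or in $E_2$ — and hence $u \in \mathsf{COR}(G) = \mathsf L(G)$ by \cref{prop:Boole} again. The natural construction is the Markov combination $\mu(x) := \mu_1(x_{V_1})\,\mu_2(x_{V_2})/\nu(x_{V_0})$, which makes $(X_i)_{i \in V_1 \setminus V_0}$ and $(X_i)_{i \in V_2 \setminus V_0}$ conditionally independent given $(X_i)_{i \in V_0}$ and returns $\mu_1,\mu_2$ as its marginals, provided the $V_0$-marginals of $\mu_1$ and $\mu_2$ coincide with a common law $\nu$ (zeros of $\nu$ being dealt with by a routine perturbation argument).

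The hard part will be exactly this matching of the overlap marginals, and it is where the hypothesis on the shared subgraph is used. If $V_0$ has at most two vertices — gluing along a single vertex or a single edge, which is the case relevant elsewhere in the paper — the shared data $\pi_0(u)$ (the one-point probabilities on $V_0$, plus the two-point probability if $V_0$ is an edge) pins down the joint law on $\{0,1\}^{V_0}$ uniquely, as in the $2 \times 2$ table of \cref{sec:slices-correlation}, so the $V_0$-marginals of $\mu_1$ and $\mu_2$ automatically agree and the argument closes. When the common induced subgraph is a larger complete graph, I would instead have to show that the $V_0$-marginals realizable by laws matching $\pi_1(u)$ (resp.\ $\pi_2(u)$) exhaust the entire fibre of $\mathsf{COR}(G_1[V_0])$ over $\pi_0(u)$, so that a common $\nu$ can still be selected; equivalently, that the conditional laws of $(X_i)_{i \in V_k \setminus V_0}$ given $(X_i)_{i \in V_0}$ can be re-weighted to absorb a prescribed change of $V_0$-marginal without disturbing the $G_k$-data. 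Proving this surjectivity of the marginal map is the main obstacle, and it is the precise point at which completeness of the shared subgraph — rather than merely its being the same in $G_1$ and $G_2$ — is needed.
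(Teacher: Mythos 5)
Your route is the same as the paper's: the paper justifies the proposition in two sentences by observing that the glued graph has the tree structure of \cref{fig:fmtree} and that assignments on the two parts which ``coincide at the intersection'' extend to a joint assignment --- exactly your Markov combination $\mu=\mu_1\mu_2/\nu$. Your write-up is in fact more careful than the paper's, and the obstacle you flag is genuine rather than a technicality. The paper's phrase ``the values at the intersection coincide'' silently conflates the shared \emph{coordinates} (the one-point probabilities on $V_0$ and the two-point probabilities on edges inside $V_0$) with the full joint law on $\{0,1\}^{V_0}$; these determine each other only when $|V_0|\le 2$, which is precisely your observation, and which covers every use the paper makes of the proposition (gluing on a single vertex in \cref{prop:glue-vertex}, and gluing two triangles along an edge to build $K_4-e$ and the cycle graphs).

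Two further points. First, you are right that completeness of the shared induced subgraph must be added to the hypotheses: as literally stated the proposition is false. Take $G_1$ the path $a\!-\!c\!-\!b$ and $G_2$ the path $a\!-\!d\!-\!b$, glued on $V_0=\{a,b\}$; the induced subgraph on $V_0$ is the same (edgeless) in both graphs, yet $\mathcal H_1\cup\mathcal H_2$ consists only of transportation inequalities because both $G_i$ are trees (\cref{prop:tree}), so it cuts out $\mathsf N(C_4)\supsetneq \mathsf L(C_4)$, while the glued graph is $C_4$. Second, for a common clique on three or more vertices your ``surjectivity of the marginal map'' step is a real gap that the paper's argument does not close either; the clique-sum literature on cut polytopes (Barahona, Deza--Laurent) settles the case of cliques of size at most three by a separate argument, but nothing proved in this paper supplies it. In summary: same approach as the paper, correctly and completely closed for the overlaps of size at most two that the paper actually uses, together with an accurate diagnosis both of a missing hypothesis in the statement and of the step that remains unproved for larger cliques.
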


This is because gluing two graphs in such a way effectively results in a tree structure as shown in (\ref{fig:fmtree}). As long as the two subgraphs admit a probability assignment and the values at the intersection coincide, the overall tree always admits a joint probability assignment. Thus, studying the geometry of the resulting graph involves studying the geometry of the cartesian product of the subgraphs. 
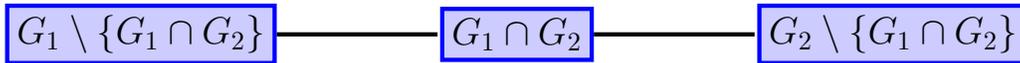
\begin{figure}[htb]
\centering
\begin{tikzpicture}[shorten >=1pt, auto, node distance=3cm, ultra thick,
   node_style/.style={,draw=blue,fill=blue!20!,font=\sffamily\Large\bfseries},
   edge_style/.style={draw=black, ultra thick}]

    \node[node_style] (v1) at (-5,2) {$G_1\setminus\{G_1 \cap G_2\}$};
    \node[node_style] (v2) at (0, 2) {$G_1 \cap G_2$};
    \node[node_style] (v3) at (5, 2) {$G_2\setminus\{G_1 \cap G_2\}$};
    \draw[edge_style]  (v1) edge node{} (v2);
    \draw[edge_style]  (v2) edge node{} (v3);
    \end{tikzpicture}
    \caption{The tree structure generated on gluing graphs}
    \label{fig:fmtree}
\end{figure}

As an example, notice that $K_4-e$ can be formed by gluing two $K_3$ along an edge. Infact, the two sets of inequalities (\ref{eq:K4-e-Lneg-1}, \ref{eq:K4-e-Lneg-3}, \ref{eq:K4-e-Lpos-1}, \ref{eq:K4-e-Lpos-2}) and (\ref{eq:K4-e-Lneg-2}, \ref{eq:K4-e-Lneg-4}, \ref{eq:K4-e-Lpos-3}, \ref{eq:K4-e-Lpos-4}) are isomorphic to (\ref{eq:K3-L}) over some vertex relabellings. Hence, $H$-representation of $\mathsf L(K_4-e)$ can be obtained from the $H$-representation of $\mathsf L(K_3)$.

\begin{center}
\begin{figure}[htb]
\begin{subfigure}{.5\textwidth}
  \centering
  \begin{tikzpicture}[auto]

    \tikzstyle{vertex1}=[circle, draw=blue, fill=blue!10!, ultra thick]
    \tikzstyle{vertex2}=[circle, draw=red, fill=red!10!, ultra thick]
    \tikzstyle{edge1}=[draw=black, thick]
    \tikzstyle{edge2}=[draw=red, thick]
    \tikzstyle{plus}=[]
    
    \node[vertex2] (v1) at (-1.5, 1.5) {$1$};
    \node[vertex1] (v2) at (-1.5, -1.5) {$2$};
    \node[vertex2] (v3) at (1.5, -1.5) {$3$};

    \node[vertex2] (v1') at (0, 1.5) {$\tilde{1}$};
    \node[vertex2] (v3') at (3, -1.5) {$\tilde{3}$};
    \node[vertex1] (v4) at (3, 1.5) {$\tilde{4}$};

    \coordinate [distance=0cm,label={\faPlus}] (O) at (0.75,-0.3);
    
    \draw[edge1]  (v1) edge node{} (v2);
    \draw[edge1]  (v2) edge node{} (v3);
    \draw[edge2]  (v3) edge node{} (v1);

    \draw[edge2]  (v1') edge node{} (v3');
    \draw[edge1]  (v3') edge node{} (v4);
    \draw[edge1]  (v4) edge node{} (v1');
    
    \end{tikzpicture}
  \caption{}
  \label{fig:glue-graph-1}
\end{subfigure}
\begin{subfigure}{.5\textwidth}
  \centering
  \begin{tikzpicture}[auto]

    \tikzstyle{vertex1}=[circle, draw=blue, fill=blue!10!, ultra thick]
    \tikzstyle{vertex2}=[circle, draw=red, fill=red!10!, ultra thick]
    \tikzstyle{edge1}=[draw=black, thick]
    \tikzstyle{edge2}=[draw=red, thick]
    \tikzstyle{plus}=[]
    
    \node[vertex2] (v1) at (-1.5, 1.5) {$1$};
    \node[vertex1] (v2) at (-1.5, -1.5) {$2$};
    \node[vertex2] (v3) at (1.5, -1.5) {$3$};
    \node[vertex1] (v4) at (1.5, 1.5) {$4$};
    
    \draw[edge1]  (v1) edge node{} (v2);
    \draw[edge1]  (v2) edge node{} (v3);
    \draw[edge1]  (v3) edge node{} (v4);
    \draw[edge1]  (v4) edge node{} (v1);
    \draw[edge2]  (v1) edge node{} (v3);
    
    \end{tikzpicture}
  \caption{}
  \label{fig:glue-graph-2}
\end{subfigure}
\caption{$K_4 - e$ formed by joining two $K_3$ as shown in (A). The red components in (B) show the common induced subgraph.}
\label{fig:glue-graph}
\end{figure}
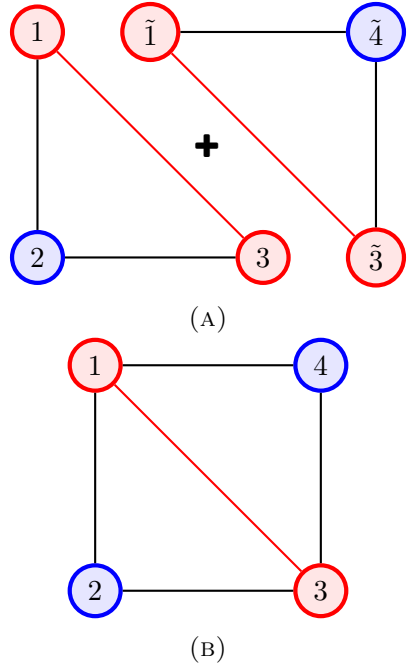  
\end{center}

A direct consequence of (\ref{prop:collapsing-fm}) and (\ref{prop:lifting-glue}) is that all the non-trivial inequalities (\ref{eq:araujo-X}) for $\mathsf L(C_n)$ can be obtained just by iteratively gluing together $K_3$ along an edge and removing it as demonstrated in \cite{araujo2013all}.

Lets start by studying, how \cref{prop:collapsing-fm} and \cref{prop:lifting-glue} affect the volume ration properties defined at the end of \cref{sec:random-marginal}.

\begin{lemma} \label{prop:tau-fm}
    Consider a graph $G'(V, E\setminus\{ij\})$ obtained by removing an edge from the graph $G(V, E)$. Then, $\tau(G')\geq \tau(G)$.
\end{lemma}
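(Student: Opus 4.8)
\emph{Plan.} Write $\pi\colon\mathbb R^{|V|+|E|}\to\mathbb R^{|V|+|E|-1}$ for the coordinate projection forgetting the $q_{ij}$-entry, and (abusing notation) also $\pi\colon\mathbb R^{E}\to\mathbb R^{E\setminus\{ij\}}$ for the induced projection of edge-slices. The idea is that deleting the edge $\{ij\}$ replaces every object attached to $G$ by its $\pi$-image, and that on the initial interval where the volume ratio of $G$ is constant the rescaled local slice is an \emph{honestly fixed} polytope — a property that is obviously inherited by a projection. By \cref{prop:collapsing-fm} (Fourier--Motzkin elimination of $q_{ij}$ produces the $H$-representation of the projected body), or directly from the $V$-representation of \cref{def:correlation-polytope} (deleting the $\{ij\}$-column sends each extreme point $u_f$ to $\pi(u_f)$), one has $\mathsf L(G')=\pi(\mathsf L(G))$; since $\pi$ leaves the vertex coordinates $p$ untouched, intersecting with the affine slice $\{p=(t,\dots,t)\}$ commutes with $\pi$, so for every $t$
$$\mathsf L(p=(t,\dots,t),G')=\pi\bigl(\mathsf L(p=(t,\dots,t),G)\bigr),\qquad \mathsf N(p=(t,\dots,t),G')=\pi\bigl(\mathsf N(p=(t,\dots,t),G)\bigr),$$
the second identity being immediate from the Cartesian-product form of \cref{def:transportation-slice}. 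Rescaling the edge coordinates by $1/t$ commutes with $\pi$, so writing $\tilde{\mathsf L}_t(G):=t^{-1}\,\mathsf L(p=(t,\dots,t),G)$ and likewise $\tilde{\mathsf N}_t$, and recalling that $\tilde{\mathsf N}_t(G)=[0,1]^{E}$ for $t\le 1/2$, we obtain $\tilde{\mathsf L}_t(G')=\pi(\tilde{\mathsf L}_t(G))$ and $\tilde{\mathsf N}_t(G')=[0,1]^{E\setminus\{ij\}}$.

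Next I would argue that $t\mapsto\tilde{\mathsf L}_t(G)$ is a nested decreasing family on $(0,1/2]$. Every valid inequality for $\mathsf{COR}(G)$ reads $\langle c,p\rangle+\langle d,q\rangle\le e$ with $e\ge 0$, because the origin $u_{f\equiv 0}$ belongs to $\mathsf{COR}(G)$; substituting $p_i=t$ and $q=tx$ turns it into $\langle d,x\rangle\le e/t-\sum_i c_i$, whose right-hand side is non-increasing in $t$, while the no-signaling inequalities rescale to the $t$-independent box constraints $0\le x\le 1$. Hence $t_1\le t_2\Rightarrow\tilde{\mathsf L}_{t_2}(G)\subseteq\tilde{\mathsf L}_{t_1}(G)$. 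On $(0,\tau(G))$ the quantity $\vol(\tilde{\mathsf L}_t(G))=\vol(\mathsf L(p=(t,\dots,t),G))/\vol(\mathsf N(p=(t,\dots,t),G))$ is constant by \cref{def:fall-off}, and a nested decreasing family of full-dimensional convex bodies with constant volume must be constant as a set; thus $\tilde{\mathsf L}_t(G)=P_G$ for a fixed polytope $P_G\subseteq[0,1]^{E}$ and all $t\in(0,\tau(G))$.

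Combining the two steps: for $t\in(0,\tau(G))$ we get $\tilde{\mathsf L}_t(G')=\pi(\tilde{\mathsf L}_t(G))=\pi(P_G)$, which does not depend on $t$, so $\vol(\mathsf L(p=(t,\dots,t),G'))/\vol(\mathsf N(p=(t,\dots,t),G'))=\vol(\tilde{\mathsf L}_t(G'))/\vol(\tilde{\mathsf N}_t(G'))=\vol(\pi(P_G))$ is constant on $(0,\tau(G))$; by the definition of $\tau$ as a supremum this gives $\tau(G')\ge\tau(G)$.

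The only delicate point is the measure-theoretic claim that a nested family with constant volume is pointwise constant, which uses full-dimensionality of the slices $\tilde{\mathsf L}_t(G)$ (true in all the computed examples, and automatic whenever $\rho_{0+}(G)>0$). To avoid any dimension hypothesis one can replace that step by a Fubini argument: write $\vol(\tilde{\mathsf L}_t(G))=\int_{\tilde{\mathsf L}_t(G')}\ell_t(x')\,dx'$, where $\ell_t(x')\in[0,1]$ is the length of the $x_{ij}$-fibre over $x'$ (an interval, by convexity); since both the base $\tilde{\mathsf L}_t(G')$ and the integrand $\ell_t(\cdot)$ are pointwise non-increasing in $t$ by the nesting above, constancy of the integral on $(0,\tau(G))$ forces $\vol(\tilde{\mathsf L}_t(G'))$ to be constant there as well, which is exactly what is needed to conclude $\tau(G')\ge\tau(G)$.
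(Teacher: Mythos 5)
Your proof is correct, but it takes a genuinely different route from the paper's. The paper argues directly at the level of the $H$-representation: it assigns to each facet inequality an activation threshold $C_k/(m_k+n_k)$ and checks that when two inequalities are added during Fourier--Motzkin elimination the threshold of the sum is at least the minimum of the thresholds of the summands, so no facet of $\mathsf L(G')$ can become active earlier than the earliest-activating facet of $\mathsf L(G)$. You instead work geometrically: you upgrade the volume-constancy in \cref{def:fall-off} to \emph{set}-constancy of the rescaled slice $\tilde{\mathsf L}_t(G)$ on $(0,\tau(G))$ --- using that the family is nested in $t$ (which follows from $e\ge 0$ for every valid inequality, since the origin lies in $\mathsf{COR}(G)$) and that the slices are full-dimensional --- and then observe that set-constancy survives the coordinate projection $\pi$ realizing edge deletion, via $\mathsf L(G')=\pi(\mathsf L(G))$ from \cref{prop:collapsing-fm}. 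Your approach is longer but more robust: it does not require tracking how individual inequalities combine under elimination (the paper's bookkeeping is a bit informal about what ``becoming active'' means and about non-unit coefficients), and the intermediate fact that $\tilde{\mathsf L}_t(G)$ equals a fixed polytope $P_G$ on the whole constant interval is of independent interest, since it exhibits $\rho_{0+}(G)$ as the volume of an honest polytope. Two small remarks: full-dimensionality of $\mathsf L(p=(t,\dots,t),G)$ for $0<t<1/2$ is not an extra hypothesis but always holds (perturb the independent point $(t^2,\dots,t^2)$ one edge at a time, correlating a single pair while keeping the other variables independent), so your first argument already suffices; and your Fubini fallback as stated only yields $\ell_{t_1}=0$ almost everywhere on $\tilde{\mathsf L}_{t_1}(G')\setminus\tilde{\mathsf L}_{t_2}(G')$, which forces that difference to be null only after invoking the same full-dimensionality (the fibre length of a full-dimensional convex body is positive on the interior of its projection), so it does not truly dispense with the dimension hypothesis --- but since that hypothesis is automatic, the proof stands.
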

\begin{proof}
    Consider an inequality labeled by $k$ to be written in the form:
\begin{equation*}
    [v_1 \quad v_2 \quad\ldots\quad v_{|V|} \quad w_1 \quad w_2 \quad \ldots \quad w_{|E|}] \times [p_i \quad q_{ij}]^\dag \leq C_k
\end{equation*}

Let $m_k$ be the number of positive coefficients $v_i$ and $n_k$ be the number of positive coefficients $w_i$. If this inequality forms the $H$-representation of $\mathsf L(G(V, E))$, then for the symmetric slices $p_i = t \quad \forall i\in V$, the value of $t$ at which the inequality becomes active is given by $C_k/(m_k + n_k)$. 

Now, consider adding two inequalities labelled by $1$ and $2$, to obtain a new inequality labelled by $3$. Then,
\begin{equation}
\begin{aligned}
    m_3 \leq m_1 + m_2 \\
    n_3 \leq n_1 + n_2 \\
    C_3 = C_1+C_2
\end{aligned}
\end{equation}

This new inequality becomes active at $C_3/(m_3 + n_3)$ and we have,
\begin{equation}
    \min \bigg\{\frac{C_1}{m_1+n_1}, \frac{C_2}{m_2+n_2}\bigg\} \leq \frac{C_1+C_2}{m_1+m_2+n_1+n_2} \leq \frac{C_3}{m_3 + n_3}
\end{equation}

Thus, any inequality obtained as a result of applying Fourier-Motzkin elimination becomes active at $t$ larger than its constituents and (\ref{prop:tau-fm}) is implied.
\end{proof}

\begin{proposition}
    For any graph $G(V, E)$,
    \begin{equation*}
        \tau(G) \geq \tau(K_{|V|})
    \end{equation*}
\end{proposition}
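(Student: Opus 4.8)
The plan is to realize $G$ as the end of a chain of edge deletions starting from the complete graph $K_{|V|}$ on the same vertex set, and then to push $\tau$ through this chain using the two structural results already established. Since $G$ has vertex set $V$ and $K_{|V|}$ contains every possible edge, we have $E(G) \subseteq E(K_{|V|})$; enumerate $E(K_{|V|}) \setminus E(G) = \{e_1, \ldots, e_r\}$ in any order and set
\[
K_{|V|} = G_0 \supseteq G_1 \supseteq \cdots \supseteq G_r = G,
\]
where $G_k = (V, E(G_{k-1}) \setminus \{e_k\})$ is obtained from $G_{k-1}$ by deleting the single edge $e_k$.

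The main step is to apply the two earlier results edge-by-edge. By \cref{prop:collapsing-fm}, the $H$-representation of $\mathsf L(G_k)$ is obtained from that of $\mathsf L(G_{k-1})$ by Fourier--Motzkin elimination of the variable $q_{e_k}$, followed by discarding redundant inequalities; in particular $G_k$ is exactly the graph ``$G'(V, E\setminus\{ij\})$'' appearing in \cref{prop:tau-fm}, with $G_{k-1}$ playing the role of $G$. Hence \cref{prop:tau-fm} gives $\tau(G_k) \geq \tau(G_{k-1})$ for each $k \in \{1, \ldots, r\}$. Chaining these inequalities yields
\[
\tau(G) = \tau(G_r) \geq \tau(G_{r-1}) \geq \cdots \geq \tau(G_0) = \tau(K_{|V|}),
\]
which is the claim. (If $r = 0$, i.e.\ $G = K_{|V|}$, there is nothing to prove.)

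Two minor points should be checked but present no real obstacle. First, the intermediate graphs $G_k$ may fail to be connected or may acquire isolated vertices, but $\tau$ is well-defined for every finite graph — this is exactly the content of the existence statement $\tau(G) > 0$ in \cref{sec:random-marginal}, whose proof uses only finiteness of the inequality system. Second, one must make sure that the $H$-representation fed into the Fourier--Motzkin step at stage $k$ is the correct one for $G_{k-1}$; this is guaranteed by applying \cref{prop:collapsing-fm} iteratively along the chain. The argument is thus essentially an induction on the number $r$ of deleted edges, the single-edge case being precisely \cref{prop:tau-fm}, so I do not anticipate any substantive difficulty.
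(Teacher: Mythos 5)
Your proof is correct and follows exactly the paper's argument: the paper likewise deduces this proposition by iteratively removing edges from $K_{|V|}$ and applying \cref{prop:tau-fm} at each step. Your write-up merely makes the induction on the number of deleted edges explicit.
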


This follows from (\ref{prop:tau-fm}) as any graph with $|V|$ vertices can be formed by iteratively removing edges from $K_{|V|}$. 

\begin{proposition} \label{prop:glue-vertex}
    Consider a graph $G(V_1 \cup V_2, E_1 \cup E_2)$ formed by gluing two graphs $G_1(V_1, E_1)$ and $G_2(V_2, E_2)$. Then,
    \begin{equation*}
        \tau(G) = \min\{\tau(G_1), \tau(G_2)\}
    \end{equation*}
\end{proposition}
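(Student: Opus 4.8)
The plan is to reduce the claim to the description of the $H$-representation of $\mathsf L(G)$ given by \cref{prop:lifting-glue} together with the activation-threshold bookkeeping from the proof of \cref{prop:tau-fm}. Write $V_0:=V_1\cap V_2$ and $E_0:=E_1\cap E_2$. By \cref{prop:lifting-glue} the facet list of $\mathsf L(G)$ is the union of the facet lists of $\mathsf L(G_1)$ and $\mathsf L(G_2)$, and since $\mathsf N(G)=\mathsf{TRA}(G)$ is a Cartesian product over $E=E_1\cup E_2$ its trivial facets split the same way. I would first recall, exactly as in the proof of \cref{prop:tau-fm}, that in the symmetric slice $p_i=t$ an inequality $\sum_i v_ip_i+\sum_e w_eq_e\le C$ begins to cut the rescaled cube precisely at the parameter $t=C/(m+n)$, where $m$ (resp.\ $n$) counts the strictly positive coefficients among the $v_i$ (resp.\ $w_e$); consequently $\tau(G)$ equals the least \emph{strictly positive} value of $C_k/(m_k+n_k)$ over the facets $k$ of $\mathsf L(G)$ (see \cref{def:fall-off}). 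Since this value is intrinsic to the inequality and the facet list of $\mathsf L(G)$ is the union of those of $\mathsf L(G_1)$ and $\mathsf L(G_2)$, the equality $\tau(G)=\min\{\tau(G_1),\tau(G_2)\}$ already follows at the level of activation thresholds; the substantive point is to promote ``a facet becomes active'' into ``the volume ratio genuinely changes''.

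The inequality $\tau(G)\ge\min\{\tau(G_1),\tau(G_2)\}$ is the easy direction: for $t$ below that minimum the only facets of $\mathsf L(G)$ that can cut into the cube $[0,1]^{|E|}$ are the $t$-independent ones (those with $C_k\le 0$), so $\widetilde{\mathsf L}(G)_t$ is a polytope not depending on $t$ and the ratio $\vol(\mathsf L(G)_t)/\vol(\mathsf N(G)_t)$ is constant on $(0,\min\{\tau(G_1),\tau(G_2)\})$.

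For the reverse inequality, assume without loss of generality $\tau(G_1)\le\tau(G_2)$ and let $k^\star$ be a facet of $\mathsf L(G_1)$ realizing $\tau(G_1)$; by \cref{prop:lifting-glue} it is also a facet of $\mathsf L(G)$, involves only the edge-variables of $G_1$, and activates at the same $t=\tau(G_1)$. To see that it genuinely lowers the volume I would use the product structure of the gluing: holding the shared coordinates $y:=(q_e)_{e\in E_0}$ fixed, the slice $\mathsf L(G)_t$ factors as the product of a fiber $F_1(y,t)$ in the variables $(q_e)_{e\in E_1\setminus E_0}$ cut out by the $\mathsf L(G_1)_t$-inequalities and a fiber $F_2(y,t)$ in the variables $(q_e)_{e\in E_2\setminus E_0}$ cut out by the $\mathsf L(G_2)_t$-inequalities, so that by Fubini
\[
\vol\big(\widetilde{\mathsf L}(G)_t\big)=\int_{[0,1]^{E_0}}\vol\big(F_1(y,t)\big)\,\vol\big(F_2(y,t)\big)\,\mathrm dy .
\]
Here $\int\vol(F_1(y,t))\,\mathrm dy=\vol(\widetilde{\mathsf L}(G_1)_t)$ drops strictly as $t$ crosses $\tau(G_1)$ — this is what it means for $\tau(G_1)$ to be the fall-off value of $G_1$ — and, again by Fubini, the region that is cut off projects onto a set of $y$ of positive measure on which $\vol(F_1(y,t))$ has strictly decreased; since $\widetilde{\mathsf L}(G_2)_t$ is full-dimensional, $\vol(F_2(y,t))>0$ on a set of $y$ of full measure, and the two sets overlap. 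Hence $\vol(\widetilde{\mathsf L}(G)_t)$ strictly decreases past $\tau(G_1)$, so the ratio cannot be constant on any interval $(0,\tau')$ with $\tau'>\tau(G_1)$, which gives $\tau(G)\le\tau(G_1)=\min\{\tau(G_1),\tau(G_2)\}$.

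I expect this last paragraph to be the main obstacle. The combinatorial bookkeeping makes $\tau(G)=\min\{\tau(G_1),\tau(G_2)\}$ almost immediate, but one must rule out the possibility that the facet switching on at that parameter is redundant once the $\mathsf L(G_2)_t$-constraints are imposed, and this is exactly what the product/Fubini picture and the full-dimensionality of $\widetilde{\mathsf L}(G_2)_t$ handle. One minor point to flag separately: if $G_1$ (say) is a forest then $\tau(G_1)=1/2$ by \cref{prop:tree}, so the formula correctly predicts that gluing on a forest leaves the fall-off value unchanged.
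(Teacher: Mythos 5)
Your argument follows the paper's own route for the core of the claim: by \cref{prop:lifting-glue} the $H$-representation of $\mathsf L(G)$ is the union of those of $\mathsf L(G_1)$ and $\mathsf L(G_2)$, and since the activation threshold $C_k/(m_k+n_k)$ of an inequality is intrinsic to that inequality, the first nontrivial facet of the glued graph activates at $\min\{\tau(G_1),\tau(G_2)\}$. The paper stops there; you correctly flag and close the remaining gap, namely that a facet inherited from $G_1$ could in principle be redundant once the $\mathsf L(G_2)$-constraints are imposed, and your fibered-product/Fubini decomposition over the shared edge coordinates is the right tool for this (and it is what actually justifies the later product formula for the volume ratio under vertex-gluing, where $E_0=\emptyset$). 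One small repair is needed in the last step: $\vol(F_2(y,t))>0$ does not hold for almost every $y\in[0,1]^{E_0}$, only for $y$ in the relative interior of the projection of $\widetilde{\mathsf L}(G_2)_t$ onto the shared coordinates. The overlap you need then follows from the observation that $\widetilde{\mathsf L}(G_1)_t$ and $\widetilde{\mathsf L}(G_2)_t$ have the \emph{same} projection onto those coordinates, namely $\widetilde{\mathsf L}(G_1\cap G_2)_t$ (projecting a correlation polytope onto the coordinates of an induced subgraph yields the correlation polytope of that subgraph), so the positive-measure set of $y$ on which $\vol(F_1(y,t))$ has strictly dropped lies inside the set where $\vol(F_2(y,t))>0$, up to a null set. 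With that patch the argument is complete and is strictly more careful than the one-line justification given in the paper.
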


This follows directly from (\ref{prop:lifting-glue}) as the set of inequalities in $H$-representation of $G$ is just the union of set of inequalities in the $H$-representation of $G_1$ and $G_2$.

Finally in this section we would like to make a conjecture about fall-off value based on the graphs we have studied (see \cref{appendix}) and add concluding remarks section.
\begin{conjecture} \label{cnj:tw}
    For a graph G of treewidth $\tw({G})$, the fall-off value is given by,
    $$\tau(G) = \frac{1}{\tw(G) + 1}$$
\end{conjecture}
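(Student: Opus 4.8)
The plan is to split \cref{cnj:tw} into the two bounds $\tau(G) \ge \frac{1}{\tw(G)+1}$ and $\tau(G) \le \frac{1}{\tw(G)+1}$, using the bookkeeping from \cref{prop:tau-fm}: for a symmetric slice $p_i = t$ a facet $\langle a,p\rangle + \langle b,q\rangle \le C$ of $\mathsf L(G)$ becomes active at the value $t = C/(\sum_v a_v + \sum_{e:\,b_e>0} b_e)$, so $\tau(G)$ is the minimum of these activation values over the Bell facets whose threshold is strictly positive. The whole problem is thus to locate the earliest-activating Bell facet of $\mathsf L(G)$ in terms of the tree-width.

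For the lower bound I would first note that \cref{prop:tau-fm} (deleting an edge is a Fourier--Motzkin elimination of the corresponding $q_{ij}$) extends verbatim to deleting a vertex (eliminate $p_v$ together with all incident $q_{vw}$, one variable at a time), so $\tau$ never decreases when passing to a subgraph. Since a graph of tree-width $k$ is a spanning subgraph of a $k$-tree $H$, it suffices to prove $\tau(H)=\frac{1}{k+1}$ for every $k$-tree, which I would do by induction on $|V(H)|$: a $k$-tree on more than $k+1$ vertices has a simplicial vertex $v$ whose neighbourhood $S$ is a $k$-clique, so $H=H'\cup K_{S\cup\{v\}}$ with $H'=H-v$ a $k$-tree and the two pieces sharing exactly the clique $K_S$; then \cref{prop:lifting-glue,prop:glue-vertex} give $\tau(H)=\min\{\tau(H'),\tau(K_{k+1})\}$. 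The base case $\tau(K_{k+1})=\frac{1}{k+1}$ holds because the truncated inclusion--exclusion inequality $\sum_i p_i-\sum_{i<j}q_{ij}\le 1$ on the clique is a facet (\cref{k4:incl-excl}) activating at exactly $\frac{1}{k+1}$, and — at least for $k\le 3$, invoking \cref{prop:k4-ineq} for $k=3$ — one checks from the known facet list that nothing activates sooner.

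For the upper bound I must exhibit inside $\mathsf L(G)$ a Bell facet with activation value $\le \frac{1}{\tw(G)+1}$. If $G$ contains a clique $K_{\tw(G)+1}$ (e.g.\ $G$ chordal, where $\tw(G)=\omega(G)-1$), subgraph monotonicity together with the base case above finishes the job. The hard case is when the clique number is strictly smaller than $\tw(G)+1$; this already happens for cycles, where $\tau(C_n)=\frac13$ comes not from a triangle but from the $m=3$ tight inequality of \cref{prop:Cn-contextIneq}. For $\tw(G)=2$ this is still easy: $G$ contains an induced cycle $C_\ell$, so $\tau(G)\le \tau(C_\ell)=\frac13$ by subgraph monotonicity, and with the lower bound this proves \cref{cnj:tw} for all series-parallel graphs. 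In general one would want a combinatorial recipe producing, from a width-$k$ tree decomposition (or equivalently from a bramble / well-linked set of order $k+1$), an explicit Bell inequality of $\mathsf L(G)$ activating at $\frac{1}{k+1}$.

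The main obstacle is exactly this last step: since membership testing for $\mathsf{COR}(G)$ is NP-complete and no description of its facets in terms of the tree decomposition is known, producing the required early-activating facet for tree-width $\ge 3$ graphs without large cliques (such as $K_{3,3}$ or grid graphs) seems to need a genuinely new idea. A secondary gap is the base case $\tau(K_{k+1})=\frac{1}{k+1}$ for $k\ge 4$, where the facet structure of the correlation polytope on $k+1$ vertices is not fully understood; there one would instead have to prove the inequality $\sum_v a_v+\sum_{e:\,b_e>0}b_e\le (k+1)C$ directly for every valid inequality, for instance by producing a point of $\mathsf L(K_{k+1})$ that saturates it.
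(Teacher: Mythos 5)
This statement is an open \emph{conjecture} in the paper: the authors themselves prove it only for treewidth $\le 2$ (forests via \cref{prop:forest}, series-parallel graphs via Padberg's odd-cycle facet characterization), show that the inclusion--exclusion facets of complete graphs cannot falsify it (\cref{prop:incl-excl}), and verify it numerically for all graphs on $4$ and $5$ vertices. So a complete proof was not to be expected, and you correctly locate the genuine obstruction in the same place the paper does: the lack of a usable description of the facets of $\mathsf{COR}(G)$ for treewidth $\ge 3$ (equivalently, of the cut polytope beyond small graphs). Your partial results largely parallel the paper's. For treewidth $2$, your route (induced cycle plus subgraph monotonicity of $\tau$ for the upper bound, reduction to $2$-trees for the lower bound) is a genuine alternative to the paper's, which instead uses Padberg's theorem that \emph{all} facets of $\mathsf L(G)$ for series-parallel $G$ are odd-cycle inequalities, checks directly that these are inactive for $t\le 1/3$, and lifts a cycle facet via Padberg's Corollary~2; the paper's lower bound is cleaner because it avoids any gluing argument, while yours is more self-contained given \cref{prop:tau-fm}.

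One concrete soft spot in your lower-bound induction that you do not flag: the step $\tau(H)=\min\{\tau(H'),\tau(K_{k+1})\}$ for a $k$-tree $H$ invokes \cref{prop:lifting-glue,prop:glue-vertex} for a clique-sum along a $k$-clique. The union-of-$H$-representations description of a glued graph is justified in the paper only when the data on the shared part determines the joint distribution there, which holds for a shared vertex or edge but fails in general for a shared clique on $\ge 3$ vertices: two points of $\mathsf L(H')$ and $\mathsf L(K_{S\cup\{v\}})$ can agree on all pairwise marginals of the shared triangle $S$ while forcing incompatible values of the three-point correlation, so the glued point need not lie in $\mathsf L(H)$. (In cut-polytope language, $\mathsf{COR}$ decomposes under clique-sums of size $\le 2$, corresponding to $3$-sums of the suspended graphs, but not in general beyond that.) Thus even the lower bound $\tau(G)\ge \frac{1}{\tw(G)+1}$ is unproven for $\tw(G)\ge 3$ by your argument, independently of the $K_{k+1}$ base case and of the upper-bound gap you already identify. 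Your observation that \cref{prop:tau-fm} extends to vertex deletion (delete incident edges one at a time, then discard the isolated vertex) is correct and is a useful addition not spelled out in the paper.
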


\begin{proposition}\label{prop:incl-excl}
    The facet defining Inclusion-Exclusion inequalities cannot falsify \cref{cnj:tw}.
\end{proposition}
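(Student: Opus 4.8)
The plan is to show that every Inclusion--Exclusion inequality in the $H$-representation of $\mathsf L(G)$ becomes active --- in the sense of the proof of \cref{prop:tau-fm} --- only at a value $t^*\ge 1/(\tw(G)+1)$; since $\tau(G)$ is the smallest value at which some facet of $\mathsf L(G)$ with positive right-hand side genuinely cuts into $\tilde{\mathsf N}_t$, the Inclusion--Exclusion family alone can only force $\tau(G)\ge 1/(\tw(G)+1)$, which is compatible with the conjectured equality. First I would recall the book-keeping behind \cref{prop:tau-fm}: a valid inequality $\sum_i v_i p_i + \sum_e w_e q_e\le C$ for $\mathsf L(G)$, restricted to $p_i=t$ and rescaled by $q_e=tx_e$, reads $\sum_e w_e x_e\le C/t-\sum_i v_i$, so on $\tilde{\mathsf N}_t=[0,1]^{|E|}$ it starts cutting exactly at $t^* = C/(\sum_i v_i+\sum_{w_e>0}w_e)$ (the ratio $C_k/(m_k+n_k)$ of \cref{prop:tau-fm} when the coefficients lie in $\{-1,0,1\}$), and when $C=0$ the rescaled inequality is $t$-independent and hence cannot break constancy of the volume ratio. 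Thus it suffices to bound $t^*$ from below for the Inclusion--Exclusion inequalities with $C>0$.

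Next I would reduce to cliques and parametrize. An Inclusion--Exclusion inequality attached to a vertex set $S$ has a nonzero coefficient on $q_{ij}$ for every pair $\{i,j\}\subseteq S$, so for it to live in the variables of $\mathsf L(G)$ the set $S$ must induce a clique; since every clique of a graph lies inside a single bag of any tree decomposition, $s:=|S|\le\tw(G)+1$. Write the inequality as the level-$2$ Bonferroni bound $\sum_{i\in S}\mathbb P(B_i)-\sum_{\{i,j\}\subseteq S}\mathbb P(B_i\cap B_j)\le 1$ for events $B_i\in\{A_i,\bar A_i\}$, and let $a$ be the number of un-complemented $B_i$ and $b=s-a$ the number of complemented ones. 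Expanding $\mathbb P(B_i)\in\{t,1-t\}$ and $\mathbb P(B_i\cap B_j)\in\{q_{ij},\,t-q_{ij},\,1-2t+q_{ij}\}$ according to the sign pattern, substituting $q_{ij}=tx_{ij}$ and dividing by $t$, one gets an inequality of the form
\begin{equation*}
  X_m-X_{pp}-X_{nn}\ \le\ \frac{C}{t}+D,\qquad C=1-b+\binom{b}{2}=\tfrac12(b-1)(b-2),
\end{equation*}
with $D$ a constant depending only on $a,b$, where $X_{pp},X_{nn},X_m$ collect the $x_{ij}$ over the pos--pos, neg--neg and mixed pairs of $S$; since these three groups of pairs use disjoint edge variables, the maximum of the left-hand side over $[0,1]^{|E|}$ equals $ab$.

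It then remains to run the case analysis on $b$. If $b\in\{1,2\}$ then $C=0$ and the rescaled inequality is $t$-independent, hence harmless. If $b=0$ --- the plain union bound $\mathbb P(\bigcup_{i\in S}A_i)\le 1$ on the clique $S$ --- then $C=1$, $ab=0$, and $t^*=1/a=1/s\ge 1/(\tw(G)+1)$. If $b\ge 3$ then $C=\tfrac12(b-1)(b-2)>0$ and one computes $t^*=\dfrac{\tfrac12(b-1)(b-2)}{a+b(b-2)}$; the inequality $t^*\ge 1/s=1/(a+b)$ is, after clearing denominators, $b(b-3)(a+b-2)\ge 0$, true for all $b\ge 3$ and $a\ge 0$, with equality precisely for $b=3$ --- which matches the $b=3$ facets of $K_4$ activating at exactly $1/4=1/(\tw(K_4)+1)$. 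In every case an Inclusion--Exclusion facet capable of affecting $\tau$ becomes active at $t\ge 1/(\tw(G)+1)$, so such facets cannot pull $\tau(G)$ below the value predicted by \cref{cnj:tw}.

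I expect the main difficulty to be conceptual rather than computational: one must carefully match ``the inequality becomes active'' with \cref{def:fall-off}. The $C=0$ Inclusion--Exclusion facets --- the ``skew'' ones such as $-p_1+q_{12}+q_{13}-q_{23}\le 0$ for $K_3$ --- genuinely cut the cube and already make the volume ratio strictly smaller than $1$ for small $t$, yet they do so in a way independent of $t$ once rescaled, so they leave the ratio constant; isolating this is exactly the rescaling trick of \cref{prop:tau-fm}. Two lesser points also need a line: that the maximum of $X_m-X_{pp}-X_{nn}$ on the cube is $ab$ (disjointness of the three families of pairs), and that the Inclusion--Exclusion inequalities are exhausted by the signed level-$2$ Bonferroni bounds on cliques of $G$; granting that, the only input about $G$ used anywhere is the bound $|S|\le\tw(G)+1$.
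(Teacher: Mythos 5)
Your proof is correct and follows the same core strategy as the paper's: expand each signed level-$2$ inclusion--exclusion (Bonferroni) inequality on the symmetric slice, parametrize by the number $b$ of complemented events, and compute the activation threshold $t^*=\frac{(b-1)(b-2)/2}{a+b(b-2)}$ --- your formula agrees exactly with the paper's (with $a=n$, $b=k$). You do, however, tighten the argument in three places where the paper is looser: you justify the reduction from complete graphs to general $G$ via the fact that the support $S$ of such an inequality must be a clique and hence fits in a bag, giving $|S|\le\tw(G)+1$; you actually prove $t^*\ge 1/(a+b)$ for all $b\ge 3$ via the identity $b(b-3)(a+b-2)\ge 0$, where the paper merely asserts that $k=0,3$ are the first to activate; and you correctly explain that the $b\in\{1,2\}$ (i.e.\ $C=0$) inequalities are harmless because they are $t$-independent after rescaling, rather than the paper's somewhat confusing remark that they ``become active at $t=0$.'' These refinements make the argument complete for arbitrary graphs, not only $K_N$.
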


\begin{proof}
    Consider a complete graph with $N$ vertices with $p_i, i \in \{1,\ldots, N\}$ being the probabilities of the corresponding events. Consider an Inclusion-Exclusion inequality for the graph which is formed by the union of $n$ events $p_i, i \in \{1,\ldots, n\}$ and $k$ complements $p_{\bar{j}} = 1 - p_j, j \in \{n+1,\ldots, N\}$ ($n+k = N$). Such an inequality will be of the form:
    \begin{align*}
        \sum_{i=1}^n p_i + \sum_{j=n+1}^N p_{\bar{j}} - \sum_{i=1}^n\sum_{j=i+1}^n q_{ij} - \sum_{i=1}^n\sum_{j=n+1}^N q_{i\bar{j}} - \sum_{i=n+1}^N\sum_{j=i+1}^N q_{\bar{i}\bar{j}} \leq 1
    \end{align*}
    where $q_{ij}$ is the probability of the intersection of events corresponding to $p_i$ and $p_j$. Fixing our marginals $p_i = t$ for all $i \in \{1,\ldots, N\}$, this inequality becomes:
    \begin{align*}
        \sum_{i=1}^n t + \sum_{j=n+1}^N (1-t) - \sum_{i=1}^n\sum_{j=i+1}^n q_{ij} - \sum_{i=1}^n\sum_{j=n+1}^N (t-q_{ij}) - \sum_{i=n+1}^N\sum_{j=i+1}^N (1-2t+q_{ij}) \leq 1 \\
        nt + k(1-t) - \sum_{i=1}^n\sum_{j=i+1}^n q_{ij} - nkt +  \sum_{i=1}^n\sum_{j=n+1}^N q_{ij} - \binom{k}{2}(1-2t) - \sum_{i=n+1}^N\sum_{j=i+1}^N q_{ij} \leq 1\\
        - \sum_{i=1}^n\sum_{j=i+1}^n q_{ij} - \sum_{i=n+1}^N\sum_{j=i+1}^N q_{ij} + \sum_{i=1}^n\sum_{j=n+1}^N q_{ij}\leq \frac{(k-1)(k-2)}{2} + [(N+3)k-2k^2-N]t
    \end{align*}

    Now, the inequality becomes active only when the maximum value of left hand side of the inequality becomes equal to the minimum value of the right hand side. The maximum value of $q_{ij} = t$ for $p_i = p_j = t$. Thus, on simplifying the value of $t$ at which the inequality becomes active is given by: 
    \begin{equation}
        t = -\frac{(k-1)(k-2)}{2(3k-k^2-N)}
    \end{equation}

    We see for $k = 1, 2$, $t=0$. Hence, the the corresponding inequalities distinguish the body $\mathsf{L}$ and $\mathsf(N)$ for small $t$. The first inequalities to becomes active at non-zero $t$ occur for $k = 0, 3$. The corresponding value of $t = 1/N$ in accordance to \ref{cnj:tw} for complete graphs.
\end{proof}

\begin{proposition}
    \cref{cnj:tw} holds true for all graphs with 4 or less vertices. 
\end{proposition}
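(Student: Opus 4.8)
The plan is to turn the claim into a finite verification, organized by treewidth, dispatching each graph with results already established in the paper.

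First I would reduce to connected graphs. A disjoint union is the degenerate instance of the gluing operation of \cref{prop:lifting-glue} (gluing along the empty vertex set, so that the two $H$-representations simply juxtapose), hence \cref{prop:glue-vertex} gives $\tau(G_1 \sqcup G_2) = \min\{\tau(G_1), \tau(G_2)\}$; on the other hand $\tw(G_1 \sqcup G_2) = \max\{\tw(G_1), \tw(G_2)\}$, and since $x \mapsto 1/(x+1)$ is decreasing, the identity $\tau = 1/(\tw+1)$ is stable under disjoint union. (Graphs with no edges, for which $\mathsf L = \mathsf N$ and the marginal problem is vacuous, are excluded as degenerate.) It therefore suffices to verify the claim for the finitely many connected graphs on at most four vertices: up to isomorphism these are the trees $P_2, P_3, P_4, K_{1,3}$, the cycles $C_3 = K_3$ and $C_4$, the paw (a triangle with one pendant edge), the diamond $K_4 - e$, and $K_4$.

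Next I would go stratum by stratum in the treewidth. \emph{Treewidth one:} every tree $T$ has $\tw(T) = 1$, and \cref{prop:tree} gives $\mathsf L(p,T) = \mathsf N(p,T)$, so the volume ratio is identically $1$ and $\tau(T) = 1/2 = 1/(1+1)$. \emph{Treewidth two:} each of $C_3, C_4$, the paw, and the diamond contains a triangle or a $4$-cycle, so has treewidth at least $2$, and each admits an explicit width-$2$ tree decomposition, so all four have treewidth exactly $2$. For the cycles, \cref{prop:cycle-graph-parameters} gives $\tau(C_n) = 1/3$. The diamond is two copies of $K_3$ glued along an edge (\cref{fig:glue-graph}), so by \cref{prop:glue-vertex}, $\tau(K_4 - e) = \min\{\tau(K_3), \tau(K_3)\} = 1/3$; the paw is $K_3$ glued to $K_2$ at a single vertex, so $\tau(\text{paw}) = \min\{\tau(K_3), \tau(K_2)\} = \min\{1/3, 1/2\} = 1/3$. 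Hence all four satisfy $\tau = 1/3 = 1/(2+1)$. \emph{Treewidth three:} $\tw(K_4) = 3$, and it was shown in \cref{sec:K4}, using the facet list of $\mathsf L(K_4)$ from \cref{prop:k4-ineq}, that $\tau(K_4) = 1/4 = 1/(3+1)$. This exhausts the list.

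The effort is bookkeeping rather than a new idea, and the main obstacle, such as it is, is twofold: (i) one must be sure the enumeration of connected graphs on at most four vertices is complete and the stated treewidths are correct, and (ii) one must take care that \cref{prop:glue-vertex} is legitimately applied to the diamond via a gluing along an \emph{edge} (two common vertices) — this is justified because $\tau$ is governed solely by the smallest $t$ at which a nontrivial facet of $\mathsf L$ becomes active and the facet list of a glued graph is the union of the two facet lists, but it should be spelled out. Beyond this, the only nontrivial external input is the computer-assisted determination of the complete facet list of $\mathsf L(K_4)$ in \cref{prop:k4-ineq} underlying the value $\tau(K_4) = 1/4$; no individual step presents a genuine difficulty.
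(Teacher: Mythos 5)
Your proof is correct, but it takes a genuinely different route from the paper's. The paper disposes of the statement in one line, citing \cref{k4:incl-excl}, \cref{prop:incl-excl} and the monotonicity of treewidth under edge removal: the intended logic is that the facets of $\mathsf L(K_4)$ are the Inclusion--Exclusion inequalities together with the trivial ones (\cref{prop:k4-ineq}), that by \cref{prop:incl-excl} the first nontrivial ones activate at $t=1/4$, and that the cases of proper subgraphs are inherited from $K_4$. You instead enumerate the connected graphs on at most four vertices and verify each treewidth stratum directly: trees via \cref{prop:tree}, the cycles via \cref{prop:cycle-graph-parameters}, the paw and the diamond via the gluing rule \cref{prop:glue-vertex}, and $K_4$ via the explicit computation of \cref{sec:K4}. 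Both routes rest on the same nontrivial external input, namely the computationally verified facet list of $\mathsf L(K_4)$, but your treatment of the treewidth-two graphs is more complete: the paper's edge-removal argument, through \cref{prop:tau-fm}, only yields the one-sided bound $\tau(G)\geq\tau(K_4)=1/4$ for subgraphs of $K_4$ and does not by itself pin down the exact value $1/3$, whereas your gluing decompositions (the diamond as two triangles glued along an edge, the paw as a triangle glued to an edge at a vertex) combined with $\tau(C_3)=\tau(C_4)=1/3$ do. Your reduction to connected graphs via disjoint unions, and the explicit exclusion of edgeless graphs as degenerate for the conjecture, are also points the paper leaves implicit; your caveat about applying \cref{prop:glue-vertex} to a gluing along an edge is well taken and is justified exactly as you say, since \cref{prop:lifting-glue} allows gluing along any common induced subgraph.
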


This simply follows from \cref{k4:incl-excl}, \cref{prop:incl-excl} and the fact that removing an edge from a complete graph lowers the treewidth.

\medskip 

Let us now show that \cref{cnj:tw} holds for \emph{series-parallel} graphs, i.e.~graphs that have treewidth 2. Note that the \cref{cnj:tw} is obviously true for forests (graphs having treewidth 1), since in that case $\mathsf L(G) = \mathsf N(G)$, see \cref{prop:forest}.

Our main tool will the following result, characterizing the facet structure of $\mathsf L(G)$, for a series-parallel graph $G$. 

\begin{proposition}[{{\cite[Theorem 10]{padberg1989boolean}}}]\label{prop:facets-series-parallel}
    For any series-parallel graph $G$, \emph{odd-cycle inequalities} define all the facets of the local polytope $\mathsf L(G)$.
\end{proposition}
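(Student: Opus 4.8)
The plan is to reduce the statement to the classical description of cut polytopes of graphs with no $K_5$ minor. I would begin from the \emph{covariance mapping}: there is an affine isomorphism between $\mathsf L(G)=\mathsf{COR}(G)\subset\mathbb R^{|V|+|E|}$ and the cut polytope $\mathrm{CUT}(\nabla G)$ of the \emph{suspension} $\nabla G$, obtained from $G$ by adjoining one new vertex $0$ joined to every vertex of $V$. Under this map the extreme point $u_f$ of \cref{def:correlation-polytope} is sent to the incidence vector of the cut $\delta(S)$ with $S=f^{-1}(1)$, the apex sitting on the ``$f=0$'' side; so vertices go to vertices, faces to faces, and in particular facets to facets. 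Hence it suffices to list the facets of $\mathrm{CUT}(\nabla G)$ and transport them back through the isomorphism, under the substitution $x_{0i}=p_i$ and $x_{ij}=p_i+p_j-2q_{ij}$.

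I would then record the minor-theoretic equivalence that $G$ is series-parallel --- equivalently, $G$ has no $K_4$ minor --- \emph{if and only if} $\nabla G$ has no $K_5$ minor. Indeed, a $K_5$ minor of $\nabla G$ has five branch sets, exactly one of which contains the apex $0$; deleting that branch set leaves four pairwise-adjacent disjoint connected subsets of $V$ joined only by edges of $G$, hence a $K_4$ minor of $G$; conversely a $K_4$ minor of $G$ together with the single extra branch set $\{0\}$ forms a $K_5$ minor of $\nabla G$. With this in hand I would invoke Barahona's theorem on cut polytopes of graphs with no $K_5$ minor: for such a graph $H$, $\mathrm{CUT}(H)$ is exactly the polytope described by the cycle inequalities (for every cycle $C\subseteq H$ and every odd-cardinality edge subset $F\subseteq C$ one has $\sum_{e\in F}(1-x_e)+\sum_{e\in C\setminus F}x_e\ge 1$), and among these the facet-defining ones are precisely those associated with chordless cycles of $H$. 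Applying this to $H=\nabla G$ yields a complete, irredundant facet list for $\mathrm{CUT}(\nabla G)$.

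The remaining step is to push these inequalities back to $\mathsf L(G)$. A chordless cycle of $\nabla G$ either uses the apex $0$ or not. Since $0$ is adjacent to every vertex, a chordless cycle through $0$ must be a triangle $0\,i\,j\,0$ with $ij\in E$, and its four odd-subset inequalities translate, under the covariance substitution, into exactly the ``trivial'' transportation-type inequalities $0\le q_{ij}$, $q_{ij}\le p_i$, $q_{ij}\le p_j$ and $p_i+p_j-q_{ij}\le 1$ of \cref{eq:K3-N}, i.e.\ the facets of $\mathsf N(G)$. A chordless cycle avoiding $0$ is a chordless cycle of $G$ itself, and its odd-subset inequalities become exactly the \emph{odd-cycle inequalities} of $\mathsf L(G)$, the graph analogues of \cref{eq:araujo-pq}. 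Because $\nabla G$ is connected, both $\mathrm{CUT}(\nabla G)$ and $\mathsf L(G)$ are full-dimensional, so each facet has a unique valid inequality up to positive scaling; combined with the fact that facets map to facets under the affine isomorphism, this proves that the facets of $\mathsf L(G)$ are precisely the trivial inequalities together with the odd-cycle inequalities, which is the assertion.

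The step I expect to be the main obstacle is this last translation: one must check at the level of linear functionals that Barahona's cycle inequalities of $\nabla G$ map \emph{onto} exactly the union of the $\mathsf N$-inequalities and the odd-cycle inequalities of $G$ --- with no omission and nothing spurious --- and that the ``chordless cycle'' facet criterion for $\mathrm{CUT}(\nabla G)$ transports correctly; the minor equivalence and the appeal to Barahona's theorem are essentially black boxes once this dictionary is fixed. A more self-contained alternative would be induction on a series/parallel decomposition of a $2$-connected series-parallel graph, realizing it from triangles by repeatedly gluing a $K_3$ along an edge (handled by \cref{prop:lifting-glue}) and then deleting that edge (handled by \cref{prop:collapsing-fm}, Fourier--Motzkin elimination) --- precisely the triangle-gluing derivation of the $C_n$ inequalities in \cite{araujo2013all}. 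The delicate point there is to prove that Fourier--Motzkin never creates a facet other than an odd-cycle one, which is exactly where the cut-polytope argument is cleaner.
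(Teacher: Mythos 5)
The paper offers no proof of this statement: it is imported as a citation to Theorem~10 of Padberg's paper on the boolean quadric polytope and then used as a black box in the treewidth-two theorem that follows. Your argument is therefore not competing with an in-paper proof; judged on its own it is correct, and it is essentially the standard derivation of Padberg's result from the cut-polytope literature (De~Simone's covariance mapping plus Barahona's theorem). The dictionary is exactly right: the linear map $x_{0i}=p_i$, $x_{ij}=p_i+p_j-2q_{ij}$ sends the vertex $u_f$ to the incidence vector of the cut $\delta(f^{-1}(1))$ of the suspension $\nabla G$, so the two polytopes are affinely isomorphic and facets correspond to facets; and the translation step you single out as the main obstacle is in fact a short computation --- each vertex of a cycle $C\subseteq G$ lies on exactly two edges of $C$, so substituting into the cycle inequality $\sum_{e\in F}x_e-\sum_{e\in C\setminus F}x_e\le|F|-1$ gives $p_v$ the coefficient $+2$, $-2$ or $0$ according to whether $v\in S_0$, $S_2$ or $S_1$, and dividing by $2$ yields exactly \eqref{eq:odd-cycle-ineqaulity}, while for the apex triangles $0ij$ the four odd subsets give exactly the four inequalities \eqref{eq:trans-ineq}. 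Three minor repairs: Barahona's description of $\mathrm{CUT}(H)$ for $K_5$-minor-free $H$ also requires the bounds $0\le x_e\le 1$, which are facet-defining only for edges in no triangle --- in $\nabla G$ these are the edges $0i$ with $i$ isolated in $G$, giving the harmless facets $0\le p_i\le 1$; a $K_5$ minor of $\nabla G$ has \emph{at most} one branch set containing the apex rather than exactly one, but the apex-free case yields a $K_5$, hence $K_4$, minor of $G$ directly, so the equivalence stands; and, as your last sentence of the main argument makes explicit, the proposition's literal wording should be read as ``the transportation inequalities together with the odd-cycle inequalities give all facets'', which is what your argument (and Padberg's theorem) actually delivers. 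Your alternative Fourier--Motzkin/gluing route would indeed be much harder to close, precisely because controlling which inequalities survive elimination is the difficult part; the cut-polytope reduction is the right choice.
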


Let us explain next what are odd-cycle inequalities and how to obtain them; our presentation follows closely \cite[Section 4]{padberg1989boolean}. Let $C$ be a non-trivial, simple cycle of $G$ (seen as a collection of edges) and $M \subseteq C$ a subset of the edges in $C$ of \emph{odd} cardinality $m:=|M|$. If $S\subseteq V$ is the set of vertices incident to the edges in $C$, we define
\begin{align*}
    S_0 &:= \{v \in S \, : \, \exists e \neq f \in M \text{ with } e \cap f = v\}\\
    S_2 &:= \{v \in S \, : \, \exists e \neq f \in C \setminus M \text{ with } e \cap f = v\}\\
    S_1 &:= S \setminus (S_0 \sqcup S_2).
\end{align*}

Then, the \emph{odd-cycle inequality} corresponding to the pair $(C, M)$ is: 
\begin{equation}\label{eq:odd-cycle-ineqaulity}
    \sum_{v\in S_0} p_v - \sum_{v\in S_2} p_v - \sum_{e \in M} q_e + \sum_{e \in C \setminus M} q_e  \leq \left\lfloor \frac m 2 \right\rfloor. 
\end{equation}

Such inequalities are satisfied for all the elements in $\mathsf L(G)$, see \cite{padberg1989boolean}. For example, in the case $G=C=K_3$, taking $M=\{(12)\}$ yields $S_0 = \emptyset$, $S_2 =\{3\}$, and $S_1 =\{1,2\}$, giving the inequality
$$-p_3 -q_{12} + q_{13} + q_{23} \leq 0,$$
which is the last inequality in \cref{eq:K3-L}. The other inequalities can be obtained by varying the subset $M$.  

Let us now consider the correlation slice corresponding to setting $p_v = t$ for all vertices $v$ of a series-parallel graph. \cref{eq:odd-cycle-ineqaulity} reads in this case, after the change of variables $q_e \leftarrow t x_e$,
\begin{equation}\label{eq:odd-cycle-inequality-slice}
    \sum_{e \in M} x_e - \sum_{e \in C \setminus M} x_e \geq |S_0| - |S_2| - \frac{ \lfloor m/2 \rfloor}{t}.
\end{equation}
One can easily see that for $m=1$, the inequality above does not depend on $t$. 

\begin{lemma}
    The inequality \eqref{eq:odd-cycle-inequality-slice} is trivially satisfied for all $t \leq 1/3$.
\end{lemma}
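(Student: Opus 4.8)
The plan is to show that, for $m\ge 3$, the inequality \eqref{eq:odd-cycle-inequality-slice} is a consequence of the box constraints $0\le x_e\le 1$ which already define the scaled non-signaling slice $\tilde{\mathsf N}_t$; since those box constraints do not move with $t$, this makes \eqref{eq:odd-cycle-inequality-slice} redundant throughout the range $t\le 1/3$. (The case $m=1$ is the one singled out immediately before the lemma: there the inequality carries no $t$ at all, so it is not one of the facets whose position depends on $t$; it is what will eventually be responsible for $\rho_{0+}$ rather than for $\tau$.) So ``trivially satisfied'' here means: implied by membership in $\tilde{\mathsf N}_t$.

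First I would minimise the left-hand side of \eqref{eq:odd-cycle-inequality-slice} over the cube: the objective $\sum_{e\in M}x_e-\sum_{e\in C\setminus M}x_e$ separates over coordinates, so its minimum over $[0,1]^{|E|}$ is attained at $x_e=0$ for $e\in M$ and $x_e=1$ for $e\in C\setminus M$, with value $-(|C|-m)=m-|C|$. Hence \eqref{eq:odd-cycle-inequality-slice} holds on all of $\tilde{\mathsf N}_t$ precisely when
\[
  m-|C|\;\ge\;|S_0|-|S_2|-\frac{\lfloor m/2\rfloor}{t},
  \qquad\text{equivalently}\qquad
  \frac{\lfloor m/2\rfloor}{t}\;\ge\;|S_0|-|S_2|+|C|-m .
\]
Using $t\le 1/3$ we get $\lfloor m/2\rfloor/t\ge 3\lfloor m/2\rfloor$, so it suffices to prove the $t$-free bound $3\lfloor m/2\rfloor\ge |S_0|-|S_2|+|C|-m$.

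The crux is the combinatorial identity $|S_0|-|S_2|+|C|-m=m$. To prove it I would walk around the cycle $C$ and cut its edge set into maximal arcs lying entirely in $M$ and maximal arcs lying entirely in $C\setminus M$; on a cycle with $M\neq\emptyset$ and $M\neq C$ these two kinds of arcs alternate, hence occur in equal number, say $b\ge 1$ of each. An $M$-arc with $\ell$ edges contributes exactly its $\ell-1$ internal vertices to $S_0$ (its two endpoints land in $S_1$), so $|S_0|=m-b$, and symmetrically $|S_2|=(|C|-m)-b$; substituting gives $|S_0|-|S_2|+|C|-m=(m-b)-(|C|-m-b)+(|C|-m)=m$. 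The wrap-around case $M=C$, which can occur only when $|C|$ is odd, has to be handled separately: there $S_0=S$, $S_2=\emptyset$, $|C|=m$, and the identity holds again.

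Plugging the identity back, the lemma reduces to $3\lfloor m/2\rfloor\ge m$, which for odd $m=2k+1$ reads $3k\ge 2k+1\iff k\ge 1\iff m\ge 3$ — exactly the regime we are in. I expect the main obstacle to be precisely the bookkeeping in the identity: one must argue carefully that around a cycle the maximal $M$-arcs and $(C\setminus M)$-arcs interleave in equal number, and that the ``$\ell-1$ internal vertices'' count is correct for a genuine arc but fails for the degenerate full-cycle case $M=C$, which therefore needs the separate verification above. The rest — the coordinatewise minimisation and the elementary bound $3\lfloor m/2\rfloor\ge m$ for odd $m\ge 3$ — is routine. This also meshes with what follows: combined with \cref{prop:facets-series-parallel}, it shows that for a series-parallel graph and $t\le 1/3$ the slice $\mathsf L_t$ is the cube cut only by the ($t$-independent) $m=1$ odd-cycle inequalities, so the volume ratio is constant there, while some $m=3$ odd-cycle inequality becomes active exactly at $t=1/3$, giving $\tau(G)=1/3$.
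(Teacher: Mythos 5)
Your proposal is correct and follows essentially the same route as the paper: minimise the left-hand side over the cube to get $-|C\setminus M|$, invoke the identity $|S_0|-|S_2|=|M|-|C\setminus M|$, and reduce the claim to $t\le \lfloor m/2\rfloor/m$, which for odd $m\ge 3$ is weakest at $m=3$ where it gives $1/3$. The only difference is that you prove the combinatorial identity directly via the arc-alternation argument (including the degenerate case $M=C$), whereas the paper simply cites it from Padberg--Rinaldi; your explicit treatment of the $t$-independent $m=1$ inequalities is also a welcome clarification of what ``trivially satisfied'' must mean here.
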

\begin{proof}
    Clearly, the minimum possible value of the left-hand-side of \eqref{eq:odd-cycle-inequality-slice} is $-|C \setminus M|$. One can easily show that for all $C,M$, we have \cite{padberg1989boolean}
    $$|S_0| - |S_2| = |M| - |C \setminus M|.$$
    Hence, \eqref{eq:odd-cycle-inequality-slice} is trivially satisfied whenever
    $$-|C \setminus M| \geq |S_0| - |S_2| - \frac{ \lfloor m/2 \rfloor}{t} \iff t \leq \frac{\lfloor m/2 \rfloor}{m}.$$
    The latter is an increasing function of $m$, attaining its minimum $1/3$ for odd $m$ at $m=3$.
\end{proof}

We have now all the elements to prove one of our main results. 

\begin{theorem}
    \cref{cnj:tw} holds for series-parallel graphs $G$: if $\operatorname{tw}(G)=2$, then $\tau(G) = 1/3$.
\end{theorem}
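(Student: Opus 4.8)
The plan is to establish the two inequalities $\tau(G)\ge 1/3$ and $\tau(G)\le 1/3$ separately, in both cases using \cref{prop:facets-series-parallel}, which guarantees that the \emph{only} facets of $\mathsf L(G)$ are the no-signaling inequalities and the odd-cycle inequalities \eqref{eq:odd-cycle-ineqaulity}. (We may assume $G$ is connected: for a disjoint union $\tau$ is the minimum of the $\tau$'s of the components, and each forest component contributes $1/2$ by \cref{prop:forest}.) For the bound $\tau(G)\ge 1/3$: after the substitution $q_e\leftarrow t x_e$ the non-signaling slice $\tilde{\mathsf N}_t$ is the cube $[0,1]^{|E|}$ for every $t\in(0,1/2]$, and $\tilde{\mathsf L}_t$ is cut out of this cube by the rescaled odd-cycle inequalities \eqref{eq:odd-cycle-inequality-slice}. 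The instances with $m=1$ do not depend on $t$, and by the preceding lemma every instance with $m\ge 3$ is redundant as soon as $t\le 1/3$. Hence the set $\tilde{\mathsf L}_t$ is literally the same for all $t\in(0,1/3]$, so the ratio $\vol(\mathsf L_t)/\vol(\mathsf N_t)=\vol(\tilde{\mathsf L}_t)/\vol(\tilde{\mathsf N}_t)$ is constant there, giving $\tau(G)\ge 1/3$.

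For the bound $\tau(G)\le 1/3$ I must show the ratio strictly drops just past $1/3$. Since $\operatorname{tw}(G)=2$, $G$ is not a forest, so it contains a simple cycle $C$ of length $\ell:=|C|\ge 3$; fix any $M\subseteq C$ with $|M|=3$. For $(C,M)$ the left-hand side of \eqref{eq:odd-cycle-inequality-slice} is $f(x):=\sum_{e\in M}x_e-\sum_{e\in C\setminus M}x_e$, which over the cube attains its minimum value $3-\ell$ exactly on the face $F^0$ where $x_e=0$ for $e\in M$ and $x_e=1$ for $e\in C\setminus M$, and the inequality reads $f(x)\ge (3-\ell)+\bigl(3-\tfrac1t\bigr)$. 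I would first record the bookkeeping already implicit in the lemma: the $(C',M')$-inequality with $|M'|=m$ first becomes non-redundant at $t=\lfloor m/2\rfloor/m$, which equals $1/3$ for $m=3$ and is $\ge 2/5$ for all odd $m\ge 5$. Consequently, for $t\in(1/3,1/3+\varepsilon)$ one has $\tilde{\mathsf L}_t=\tilde{\mathsf L}_{1/3}\cap\bigcap_{|M'|=3}\{f_{C',M'}\ge(3-\ell')+(3-1/t)\}$, so it suffices to show that for our chosen $(C,M)$ the region $\tilde{\mathsf L}_{1/3}\cap\{f<(3-\ell)+(3-1/t)\}$ has positive volume whenever $t>1/3$.

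To produce that positive volume I would argue in two steps. \emph{(i)} The minimum of $f$ over $\tilde{\mathsf L}_{1/3}$ is genuinely $3-\ell$: contract all edges of $C\setminus M$; because $|M|=3$ this gives an honest triangle on three ``super-vertices'' — no $M$-edge can become a loop, since deleting one $M$-edge from $C$ leaves a path that still contains the other two $M$-edges and so is not an all-$(C\setminus M)$ path — so one may choose pairwise disjoint events $B_1,B_2,B_3$ of probability $1/3$ partitioning the sample space, set $A_v:=B_{[v]}$ for $v\in C$, and take independent events of probability $1/3$ for the remaining vertices; by \cref{prop:Boole} the associated correlation vector lies in $\mathsf L(p=(1/3,\dots,1/3),G)$ and its $C$-coordinates are the minimizer on $F^0$. \emph{(ii)} $\tilde{\mathsf L}_{1/3}$ is full-dimensional — the all-independent point $x_e\equiv 1/3$ satisfies every $m=1$ inequality strictly, its left-hand side for a cycle $C'$ being $(2-|C'|)/3>2-|C'|$ — so, picking the boundary minimizer $x_0$ of $f$ from (i) and an interior point $x_1$ (necessarily with $f(x_1)>f(x_0)$ since $f$ is non-constant on the full-dimensional $\tilde{\mathsf L}_{1/3}$), the open segment $(x_0,x_1)$ lies in the interior of $\tilde{\mathsf L}_{1/3}$ and carries values of $f$ arbitrarily close to $3-\ell$; by continuity an open, hence positive-volume, subset of $\tilde{\mathsf L}_{1/3}$ satisfies $f<(3-\ell)+(3-1/t)$ for every $t>1/3$. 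Thus $\vol(\tilde{\mathsf L}_t)<\vol(\tilde{\mathsf L}_{1/3})$ for $t>1/3$, i.e.\ $\tau(G)\le 1/3$, completing the proof.

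I expect step \emph{(ii)} — upgrading ``a new facet switches on at $t=1/3$'' to ``positive volume is lost for $t>1/3$'' — to be the real obstacle: it genuinely requires full-dimensionality of $\tilde{\mathsf L}_{1/3}$ together with the fact that the newly active hyperplane meets the polytope along an honest boundary face, which is precisely why the explicit probabilistic construction in \emph{(i)} is needed and not merely the observation that $F^0$ touches the cube. A secondary, purely combinatorial point to pin down is that contracting the edges of $C\setminus M$ when $|M|=3$ always produces a simple triangle rather than a degenerate multigraph or a graph with a loop.
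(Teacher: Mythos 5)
Your proposal is correct, and while the lower bound $\tau(G)\ge 1/3$ follows the paper verbatim (facet characterization via \cref{prop:facets-series-parallel} plus the redundancy lemma for $m\ge 3$, with the $m=1$ inequalities rightly noted to be $t$-independent), your upper bound takes a genuinely different and more self-contained route. The paper picks the smallest induced subgraph containing a cycle, observes it must be a cycle $H$, invokes $\tau(H)=1/3$ from \cref{prop:cycle-graph-parameters}, and then cites \cite[Corollary 2]{padberg1989boolean} to lift the relevant facet from $\mathsf L(H)$ to $\mathsf L(G)$ — leaving implicit precisely the step you identify as the real obstacle, namely that a newly active facet actually removes positive volume. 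You instead work directly with one $m=3$ odd-cycle inequality on an arbitrary cycle $C\subseteq G$: the contraction argument (with the correct check that no $M$-edge becomes a loop) plus the partition $B_1,B_2,B_3$ and \cref{prop:Boole} exhibits an explicit point of $\tilde{\mathsf L}_{1/3}$ sitting at the cube-wide minimum $3-\ell$ of the left-hand side, and the all-independent point $x_e\equiv 1/3$ certifies full-dimensionality, so the segment argument converts activation into strict volume loss for every $t>1/3$. What your approach buys is a rigorous, citation-free justification of the ``activation $\Rightarrow$ volume drop'' implication and independence from the cycle-graph computations of \cref{sec:cycle}; what the paper's approach buys is brevity and reuse of \cref{prop:cycle-graph-parameters}. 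Both are valid; yours is arguably the more complete argument on the one delicate point.
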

\begin{proof}
    Let $G$ be a series-parallel graph. By \cref{prop:facets-series-parallel}, we know that the polytope $L(G)$ is described only by odd-cycle inequalities, which, by the previous lemma, are trivially satisfied for $t \leq 1/3$; hence $\tau(G) \geq 1/3$. Moreover, since $\operatorname{tw}(G)=2$, $G$ is not a forest, so it contains a cycle. Consider the \emph{smallest} induced subgraph $H$ of $G$ that contains a cycle. The graph $H$ must actually be a cycle, since the presence of extra edges would contradict its minimality. We have shown in \cref{prop:cycle-graph-parameters} that $\tau(H)=1/3$, so $\mathsf L(H)$ has a non-trivial facet that becomes ``active'' at $t=1/3$. By \cite[Corollary 2]{padberg1989boolean}, the same holds for $G$; hence $\tau(G) \leq 1/3$, finishing the proof. 
\end{proof}

\medskip 

We consider the effect of different operations on graphs on the volume taion of the local and non-signaling polytopes. 

\begin{proposition}
    For two graphs $G_1$ and $G_2$ glued together on a single vertex, the resulting $G$ has, for symmetric slices $p = (t, \ldots, t)$,
    \begin{equation*}
        \frac{\vol \mathsf L(p, G)}{\vol \mathsf N(p, G)} = \frac{\vol \mathsf L(p, G_1)}{\vol \mathsf N(p, G_1)}\times\frac{\vol \mathsf L(p, G_2)}{\vol \mathsf N(p, G_2)}
    \end{equation*}
\end{proposition}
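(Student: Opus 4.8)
The plan is to show that, once the vertex marginals are fixed to $p=(t,\dots,t)$, both the local slice $\mathsf L(p,G)$ and the non-signaling slice $\mathsf N(p,G)$ factor as Cartesian products indexed by the two edge sets $E_1$ and $E_2$; the volumes then multiply, and the normalizing factors in the ratio cancel pairwise.

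First I would fix notation: let $v^\star$ be the single vertex on which $G_1=(V_1,E_1)$ and $G_2=(V_2,E_2)$ are glued, so $V=V_1\cup V_2$ with $V_1\cap V_2=\{v^\star\}$ and, crucially, $E=E_1\sqcup E_2$ is a \emph{disjoint} union (no edge is shared, since only a vertex is). For the non-signaling side the factorization is immediate from \cref{def:transportation-slice}: $\mathsf N(p,G)$ is the cartesian product $\bigtimes_{e=(i,j)\in E}\big[\max(0,p_i+p_j-1),\min(p_i,p_j)\big]$, and splitting this product along $E=E_1\sqcup E_2$ gives $\mathsf N(p,G)=\mathsf N(p,G_1)\times\mathsf N(p,G_2)$ (restricting the constant vector $p=(t,\dots,t)$ to $V_1$ or $V_2$ changes nothing), hence $\vol\mathsf N(p,G)=\vol\mathsf N(p,G_1)\cdot\vol\mathsf N(p,G_2)$.

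For the local side I would invoke \cref{prop:lifting-glue}: gluing on a single vertex is precisely the case where the common induced subgraph is the edgeless graph on $\{v^\star\}$, which is trivially identical in $G_1$ and $G_2$, so the hypothesis is satisfied and the $H$-representation of $\mathsf L(G)$ is the union of those of $\mathsf L(G_1)$ and $\mathsf L(G_2)$. Substituting $p_v=t$ for every $v\in V$, each inequality inherited from $\mathsf L(G_1)$ involves only the edge variables $(q_e)_{e\in E_1}$ (every $p_v$ having become the constant $t$), and likewise every inequality from $\mathsf L(G_2)$ involves only $(q_e)_{e\in E_2}$. Since $E_1\cap E_2=\emptyset$, the system defining $\mathsf L(p,G)\subset\mathbb R^{|E|}$ decouples into a block in the $E_1$-coordinates and a block in the $E_2$-coordinates, so $\mathsf L(p,G)=\mathsf L(p,G_1)\times\mathsf L(p,G_2)$ and $\vol\mathsf L(p,G)=\vol\mathsf L(p,G_1)\cdot\vol\mathsf L(p,G_2)$. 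Dividing the two product identities yields the claim.

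I do not anticipate a real obstacle: the argument is essentially an unwinding of \cref{prop:lifting-glue} together with the product structure already built into the transportation slice. The only point to state carefully is that gluing on a single vertex genuinely meets the hypothesis of \cref{prop:lifting-glue} (the shared induced subgraph carries no edge, hence no compatibility constraint), and that passing to the slice turns every $p_v$ into a constant so that no term mixing an $E_1$-variable with an $E_2$-variable can survive. As a consistency check, applying this identity to the fall-off values recovers \cref{prop:glue-vertex}, and the initial and middle ratios $\rho_{0+}$ and $\rho_{1/2}$ are multiplicative for vertex-gluing as well.
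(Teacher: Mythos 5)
Your proposal is correct and follows essentially the same route as the paper: invoking the gluing result (\cref{prop:lifting-glue}) to write the $H$-representation of $\mathsf L(G)$ as the union of the two constituent systems, observing that after fixing $p_v=t$ the inequalities decouple over the disjoint edge sets so that both slices become Cartesian (prism) products, and cancelling the factors in the ratio. Your writeup is somewhat more explicit than the paper's brief justification, but there is no substantive difference in approach.
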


As an example, consider the graph in \cref{fig:K3-glue-K22-1} formed by a product of $K_3$ and $K_{2,2}$. Its volume ratio can be obtained using \cref{prop:K3-vol} and \cref{prop:K22-vol}. This has been plotted against the numerical results in \cref{fig:K3-glue-K22-2}.

\begin{center}
\begin{figure}[htb]
\begin{subfigure}{.25\textwidth}
  \centering
  \resizebox{\columnwidth}{!}{
      \begin{tikzpicture}
        \tikzstyle{vertex1}=[circle, draw=blue, fill=blue!10!, ultra thick]
        \tikzstyle{vertex2}=[circle, draw=red, fill=red!10!, ultra thick]
        \tikzstyle{edge1}=[draw=black, thick]
        \tikzstyle{edge2}=[draw=red, thick]
    
        \node[vertex1] (v1) at (-1.5, 4.5) {$1$};
        \node[vertex1] (v2) at (1.5, 4.5) {$2$};
        \node[vertex2] (v3) at (0, 3.0) {$3$};
        \node[vertex1] (v4) at (1.5, 1.5) {$4$};
        \node[vertex1] (v6) at (-1.5, 1.5) {$6$};
        \node[vertex1] (v5) at (0, 0) {$5$};
    
        \draw[edge1]  (v1) edge node{} (v2);
        \draw[edge1]  (v2) edge node{} (v3);
        \draw[edge1]  (v3) edge node{} (v1);
        \draw[edge1]  (v3) edge node{} (v4);
        \draw[edge1]  (v4) edge node{} (v5);
        \draw[edge1]  (v5) edge node{} (v6);
        \draw[edge1]  (v6) edge node{} (v3);
        
    \end{tikzpicture}
    }
  \caption{Graph formed by joining $K_3$ and $K_{2,2}$ along a vertex}
  \label{fig:K3-glue-K22-1}
\end{subfigure}
\begin{subfigure}{.7\textwidth}
  \centering
    \includegraphics[width=1.2\linewidth]{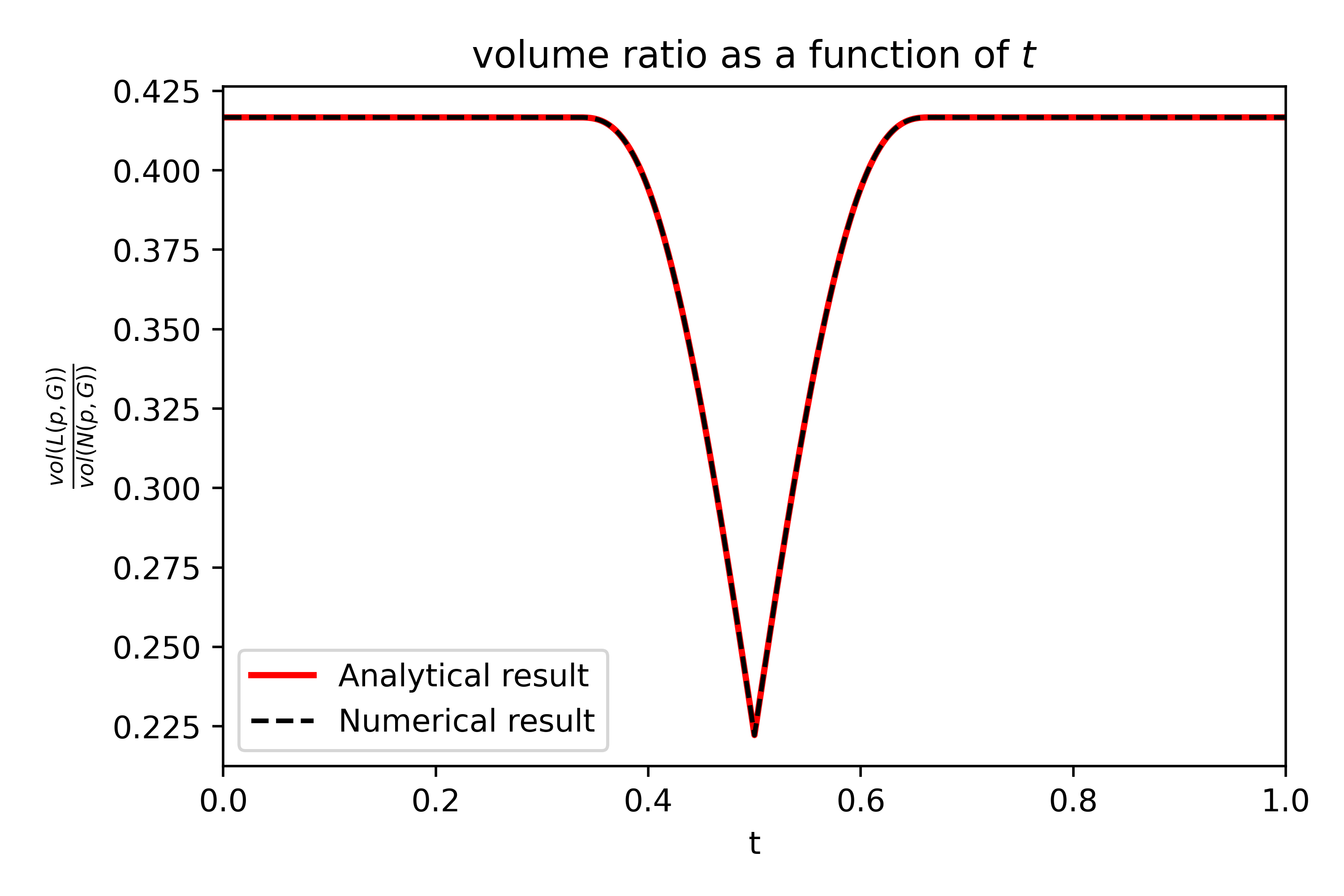}
  \caption{Analytically and numerically calculated ratio of volumes}
  \label{fig:K3-glue-K22-2}
\end{subfigure}
\caption{}
\label{fig:K3-glue-K22}
\end{figure}  
\end{center}

This is because, in terms of the $H$-representation gluing two graphs along a vertex is just taking the combined set of inequalities for both graphs. Note that these two sets would not have any identical inequalities unlike the case where two graphs are glued along an edge. Hence, the resultant polytope is just the prism product of its constituents whose volume is the  product of volumes of its constituents \cite{manning2012fourth, toth2017handbook}.

\begin{proposition}
    Consider a graph $G'$ obtained by gluing a tree to the graph $G$. Then,
    \begin{equation*}
        \frac{\vol \mathsf L(p, G')}{\vol \mathsf N(p, G')} = \frac{\vol \mathsf L(p, G)}{\vol \mathsf N(p, G)}
    \end{equation*}
\end{proposition}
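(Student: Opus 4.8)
The plan is to reduce the statement to two facts already established: that a tree $T$ satisfies $\mathsf L(T)=\mathsf N(T)$ (\cref{prop:tree}), and that gluing two graphs along a single vertex multiplies the volume ratios (the immediately preceding proposition). First I would make precise what ``gluing a tree $T$ to $G$'' means: we are given a common induced subgraph $H$ with $T[S]=G[S]=H$ on a shared vertex set $S$, and $G'=G\cup T$. Since $H$ is a subgraph of the tree $T$, $H$ is a forest, and $T$ is obtained from $H$ by attaching subtrees, each hanging off a single vertex of $H$. Hence $G'$ is obtained from $G$ by attaching finitely many trees, each at a single vertex, and it suffices to treat the single-vertex case and iterate.

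For the single-vertex case, write $G'=G_1\cup G_2$ with $G_1=G$ and $G_2=T$ a tree, sharing exactly one vertex. Applying the preceding proposition on gluing along a vertex (with symmetric slices $p=(t,\ldots,t)$),
\[
\frac{\vol \mathsf L(p, G')}{\vol \mathsf N(p, G')} = \frac{\vol \mathsf L(p, G)}{\vol \mathsf N(p, G)}\cdot\frac{\vol \mathsf L(p, T)}{\vol \mathsf N(p, T)}.
\]
By \cref{prop:tree}, $\mathsf L(p,T)=\mathsf N(p,T)$ for every $p$, and for $p=(t,\ldots,t)$ with $t\in(0,1)$ this common polytope is the box $\bigtimes_{e=(i,j)\in E(T)}[\max(0,2t-1),t]$, which has strictly positive volume; hence the last factor equals $1$, and the ratio for $G'$ equals that for $G$. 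The general case then follows by repeating this once for each attached subtree.

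Alternatively, and without invoking the product-of-ratios proposition, I would argue directly from the $H$-representation: by \cref{prop:lifting-glue} the facets of $\mathsf L(G')$ are the union of those of $\mathsf L(G)$ and those of $\mathsf L(T)$, and since $T$ is a tree the latter are exactly the trivial transportation inequalities on the edges of $T$. Writing $F=E(G')\setminus E(G)$ for the new edges, both $\mathsf L(p,G')$ and $\mathsf N(p,G')$ then decompose as a Cartesian product with the \emph{same} box $\bigtimes_{e\in F}[\max(0,2t-1),t]$ over the $F$-coordinates, so the extra volume factors cancel in the ratio. The only point requiring (routine) care is checking that the transportation inequalities on the \emph{shared} edges contributed by $\mathsf L(T)$ are redundant given the inequalities of $\mathsf L(G)$; this holds because $\mathsf L(G)\subseteq\mathsf N(G)$ already enforces them. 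I do not expect a genuine obstacle here — the only mildly delicate step is pinning down the definition of ``gluing a tree'' so that the product decomposition is clean, which the reduction to repeated single-vertex attachments handles.
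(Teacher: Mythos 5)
Your proof is correct and takes essentially the same route as the paper, whose entire argument is to combine the single-vertex gluing product formula with \cref{prop:tree}. Your extra steps --- reducing a general tree-gluing to iterated single-vertex attachments and checking that the tree's slice is a non-degenerate box so its ratio is exactly $1$ --- only fill in details the paper leaves implicit.
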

\begin{proof}
    This follows from \cref{prop:glue-vertex} and \cref{prop:tree}.
\end{proof}

\section{Conclusion}\label{sec:conclusion}

Throughout this paper we have studied the geometry of correlation and transportation polytopes by looking at fixed (symmetrical) marginal slices and studying their volume ratios. These bodies are closely related to the non-contextual and no-disturbance polytopes which are of prime interest in Quantum Foundations. 

The characterization of the volume ratio for trees (treewidth = 1) is trivial. We provide a complete analysis for the $K_3$ and $K_{2,2}$ case which famously appear in literature as the Bell-Wigner and CHSH polytopes. We then generalize our analysis to all cyclic graphs. Finally, we also prove theorems on the nature of volume-ratio for all graphs of treewidth two. For graphs with three-width $3$, the example of $K_4$ is worked out explicitly and additional examples are given in the appendices.

Over the course of our analysis, we observe that the volume ratio remains constant for some initial range of the fixed marginal before it starts decreasing and reaches a minimum value. In order to quantify this remarkable fact, we look at parameters such as \emph{fall-off value}, \emph{initial-ratio} and \emph{middle-ratio}. We conjecture that the fall-off value depends on inversely on $\tw{(G)} + 1$. We then show that this holds in general for treewidth $1$ and $2$ and holds specifically for some graphs of higher treewidths. To prove the result for a general graph, we need a general way to characterize all facets of a correlation polytopes (or equivalently a cut-polytope). This is a long-standing problem in literature and results exists only for graphs with up to $7$ vertices \cite{grishukhin1990all}. 

Contextual correlations are of particular importance in quantum information theory as they are required to obtain advantage over classical information processing protocols \cite{schmid2018all,singh2023no,wagner2024inequalities}. Thus, it is important to realize scenarios/games where a random sampling yields a contextual behaviour. Through our work, we have showed that in scenarios where n-parties are involved in a contextual scenario with contexts of maximum size $2$ and the constraint that the single-variable marginal for each party is fixed, the highest chance of getting a contextual correlation with random sampling is possible when the fixed marginals are all $1/2$. What is interesting is fixing the marginals in a clever way can infact increase the odds of getting contextual correlations. For example, consider the CHSH case. When all the probabilities are free, there is a $1$ in $17$ chance, that a sampled behaviour is contextual. However, once the marginals are fixed to $t$, the odds can be reduced to $1/3$ when $t=1/2$. We already saw the significance of the $t=1/2$ slice for the CHSH case in \cref{sec:square}. For any general bell game as well, these slices are of importance as they correspond to the players sharing a locally maximally entangled state often linked to maximal quantum violations of local inequalities such as the mermin inequalities \cite{mermin1990extreme} where the n-GHZ states are used.

There is scope for a lot of future work in this direction. Analysis for larger contexts, and poly-variate correlations would offer better insight one the distribution of useful correlation boxes in general scenario. Moreover, it would be interesting to relate the ratio of volumes we consider in this paper to other measures of non-contextuality for scenarios encoded by graphs, such as the contextual fraction \cite{abramsky2017contextual,kujala2019measures}.

\bigskip

\noindent\textbf{Acknowledgments.} I.N. was supported by the ANR projects \href{https://esquisses.math.cnrs.fr/}{ESQuisses}, grant number ANR-20-CE47-0014-01, and \href{https://www.math.univ-toulouse.fr/~gcebron/STARS.php}{STARS}, grant number ANR-20-CE40-0008, as well as by the PHC program \emph{Star} (Applications of random matrix theory and abstract harmonic analysis to quantum information theory). A.K.J.~received support from the \href{https://www.inde.campusfrance.org/about-us}{French Embassy in India} through the \href{https://www.inde.campusfrance.org/france-excellence-charpak-lab-scholarship}{French Excellence Charpak Lab Scholarship} Programme. 

\bibliography{refs}
\bibliographystyle{alpha}

\appendix \label{appendix}

\section{Graphs and Plots}

In this appendix, we provide numerical computations of the volume ratio for symmetric slices of different graphs.

\subsection{Graphs with 4 vertices}

We start with graphs on 4 vertices.

\begin{figure}[H]
\centering
\begin{minipage}[c]{.49\textwidth}
\includegraphics[scale = 0.49]{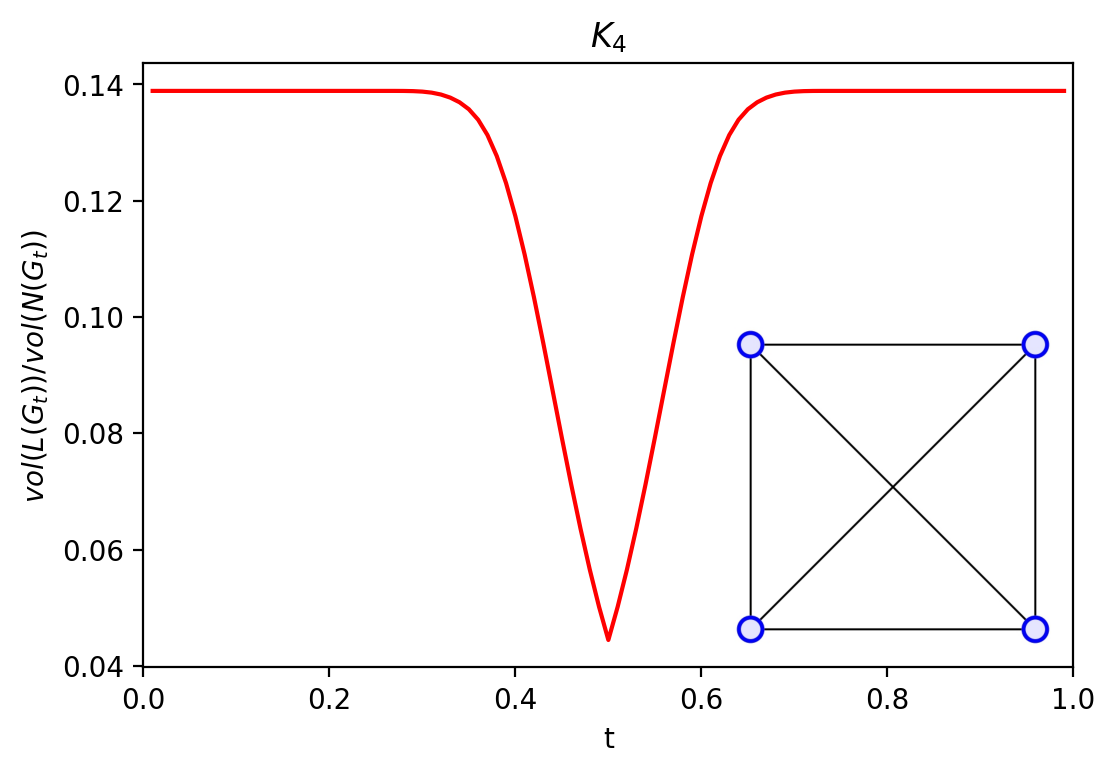}
\includegraphics[scale = 0.49]{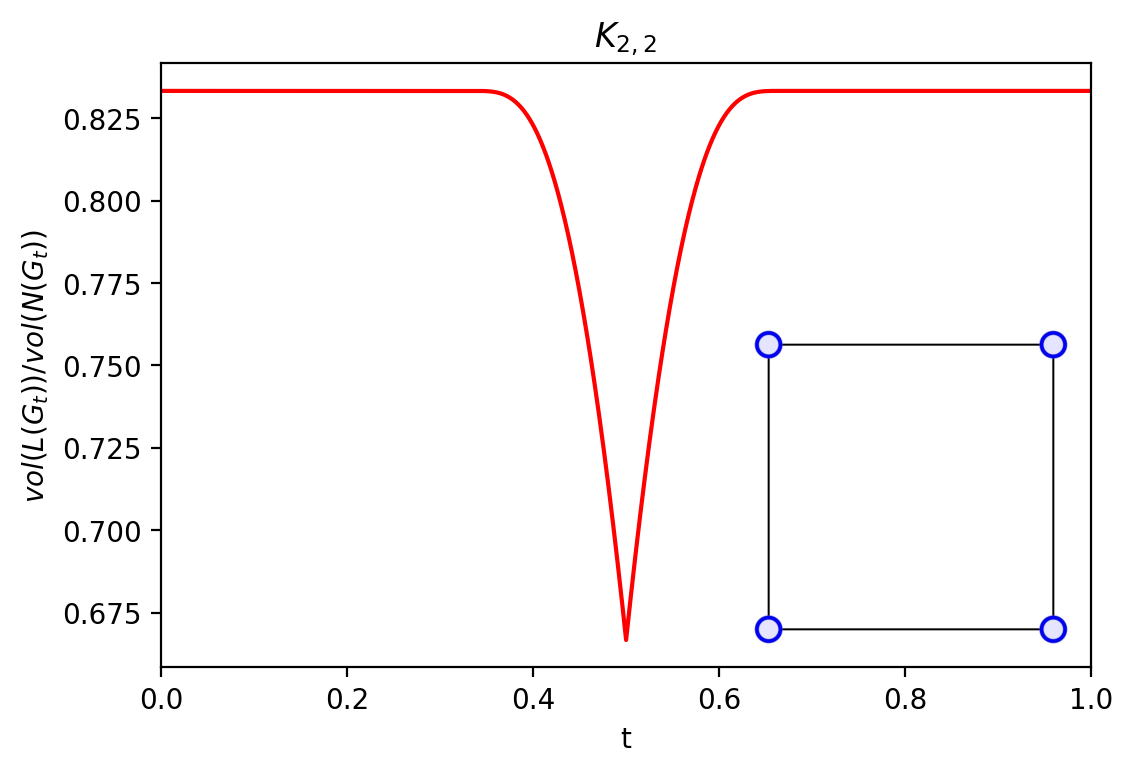}
\end{minipage}\hfill
\begin{minipage}[c]{.45\textwidth}
\includegraphics[scale = 0.49]{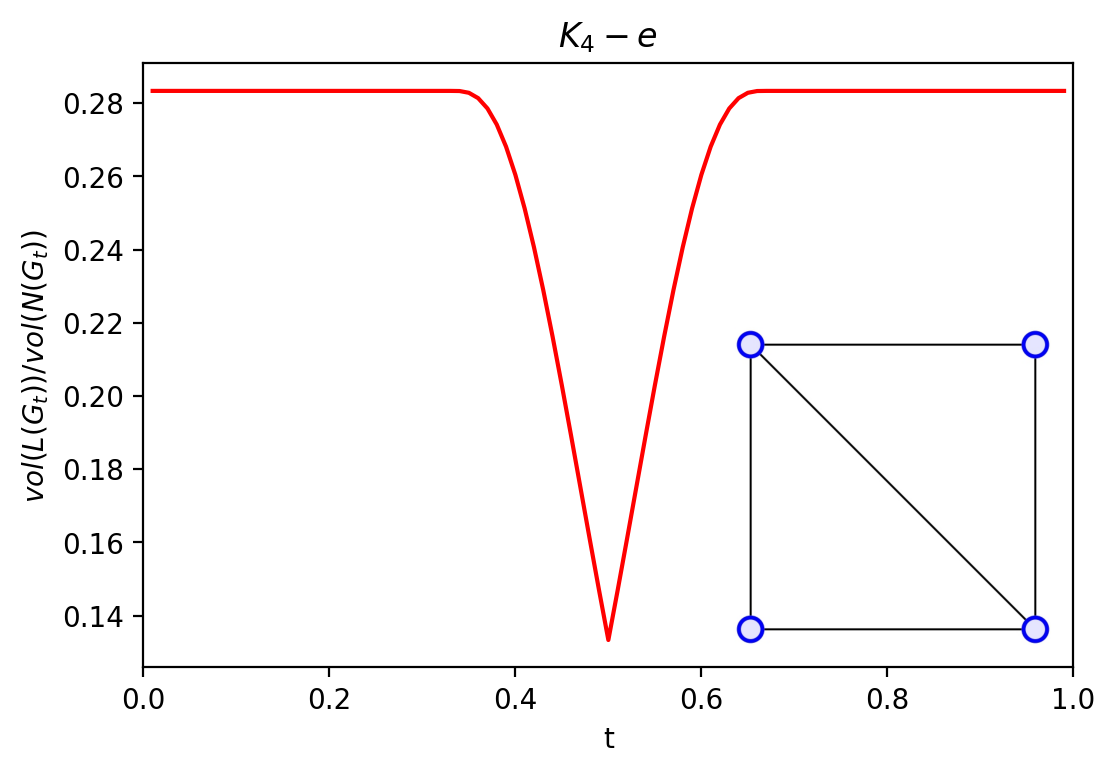}
\includegraphics[scale = 0.49]{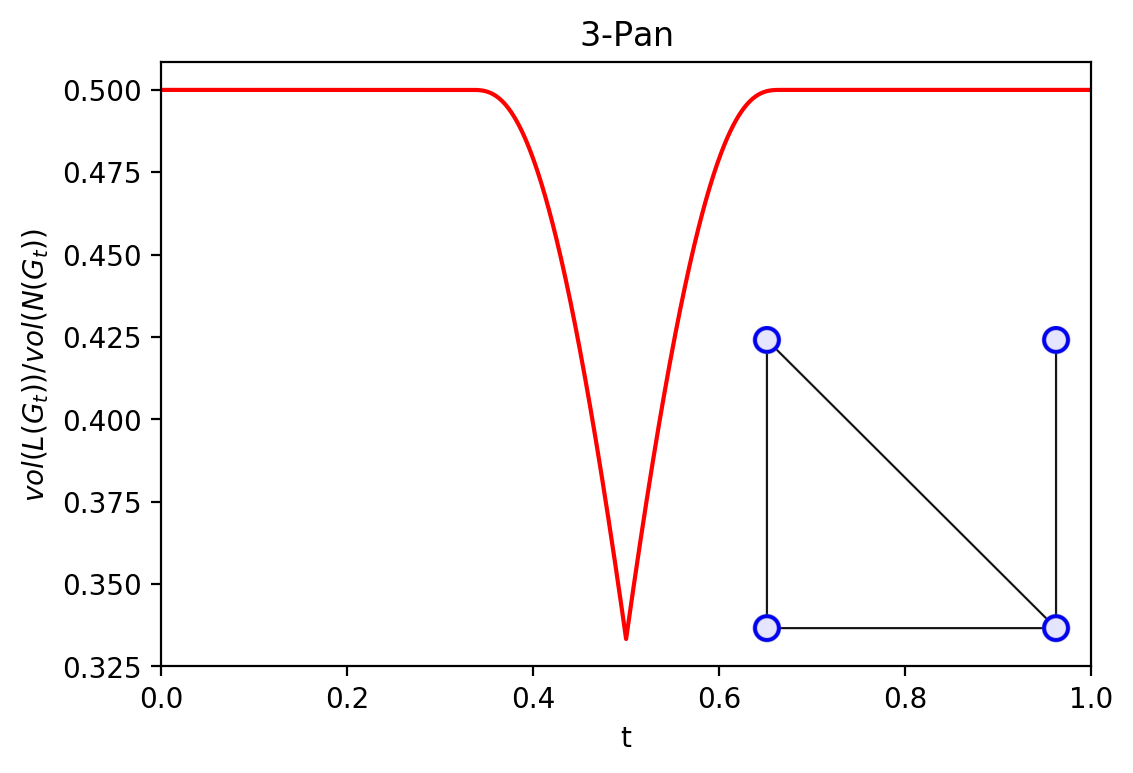}
\end{minipage}
\caption{Plots of volume ratio of symmetric slices given by, $\vol \mathsf L(p=(t,t,t,t), G)/\vol \mathsf N(p=(t,t,t,t), G)$ as a function of $t$ for different graphs with $4$ vertices. The graphs are shown on the lower right corner of each plot.}
\label{fig:4-vertices}
\end{figure}

\begin{table}[htb]
\begin{center}
\bgroup
\def\arraystretch{1.5}
\begin{tabular}{|c|c|c|c|c|}
\hline
\rowcolor[HTML]{C0C0C0} 
Graph $G$ & $\tw(G)$ & $\tau(G)$ & $\rho_{0+}(G)$ & $\rho_{1/2}(G)$\\ 
\hline
$K_4$ & 3 & $\frac{1}{4}$ & $\frac{5}{36}$ & $\frac{2}{45}$\\
\hline
$K_4-e$ & 2 & $\frac{1}{3}$ & $\frac{17}{60}$ & $\frac{2}{15}$\\
\hline
$K_{2,2}$ & 2 & $\frac{1}{3}$ & $\frac{5}{6}$ & $\frac{2}{3}$\\
\hline
pan & 2 & $\frac{1}{3}$ & $\frac{1}{2}$ & $\frac{1}{3}$\\
\hline
\end{tabular}
\egroup
\end{center}
\caption{List of parameters for graphs with 4 vertices: treewidth of the graph $\tw(G)$, the threshold $\tau(G)$ until which the volume ratio is constant, the initial (constant) volume ration $\rho_{0+}(G)$, the volume ration at $p=1/2$, $\rho_{1/2}(G)$.}
\label{tbl:4vertex}
\end{table}

From \cref{tbl:4vertex} we see that \cref{cnj:tw} holds for all graphs with 4 vertices.

\subsection{Graphs with 5 vertices}

Next, we give the numerical results for graphs with 5 vertices.

\begin{figure}[htb]
\centering
\begin{minipage}[c]{.49\textwidth}
\includegraphics[scale = 0.49]{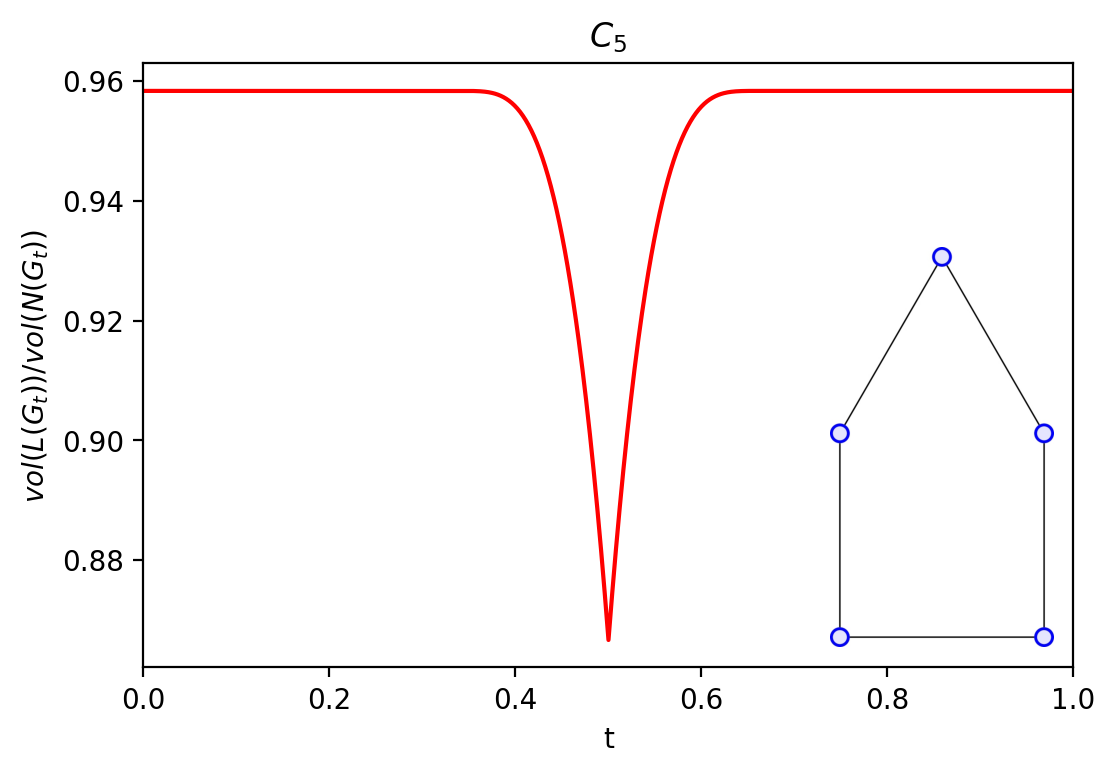}
\includegraphics[scale = 0.49]{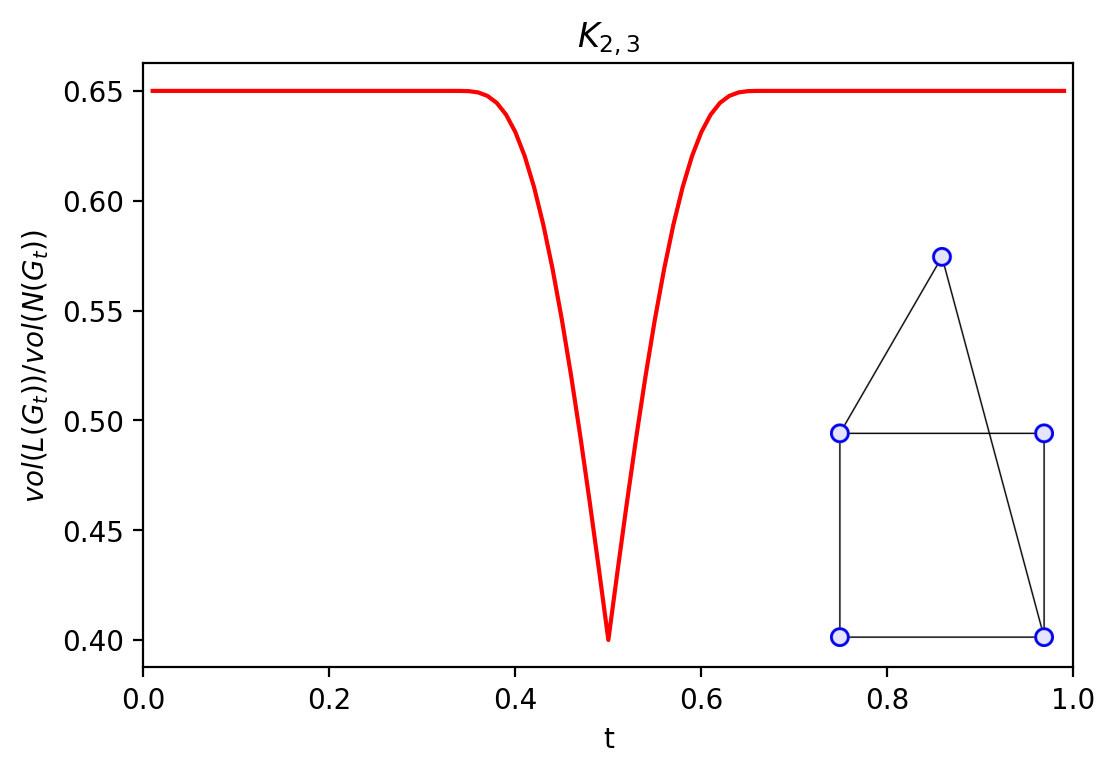}
\includegraphics[scale = 0.49]{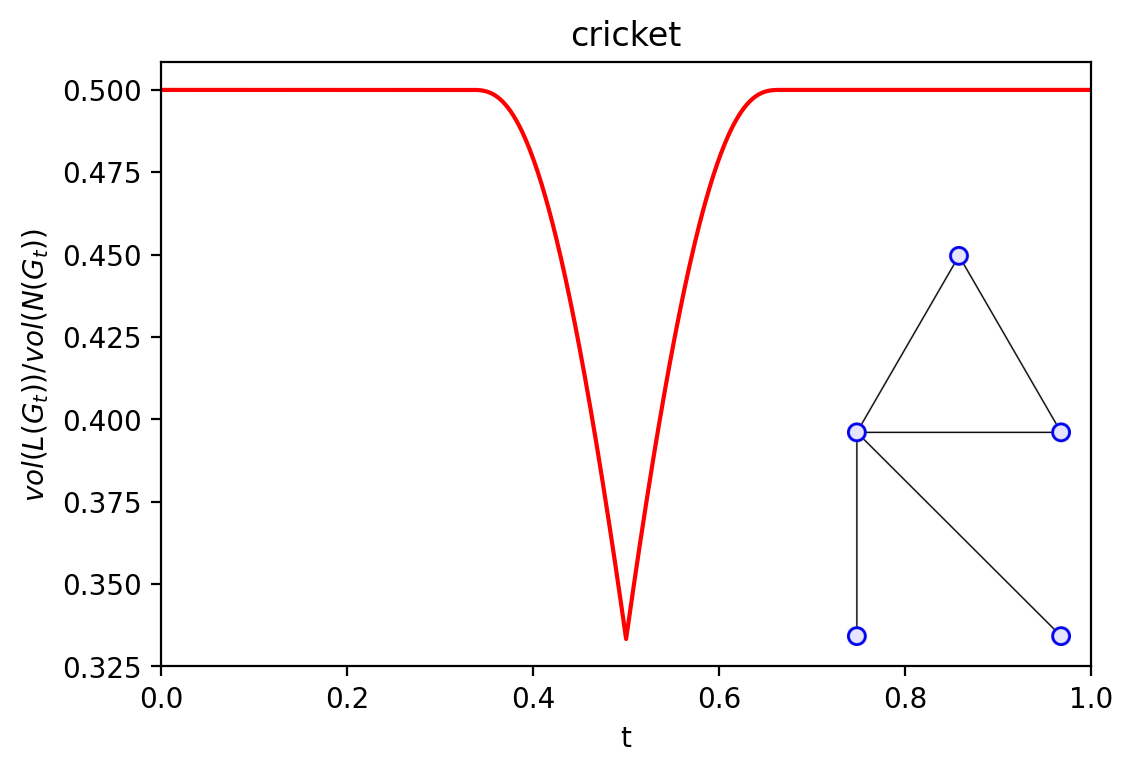}
\includegraphics[scale = 0.49]{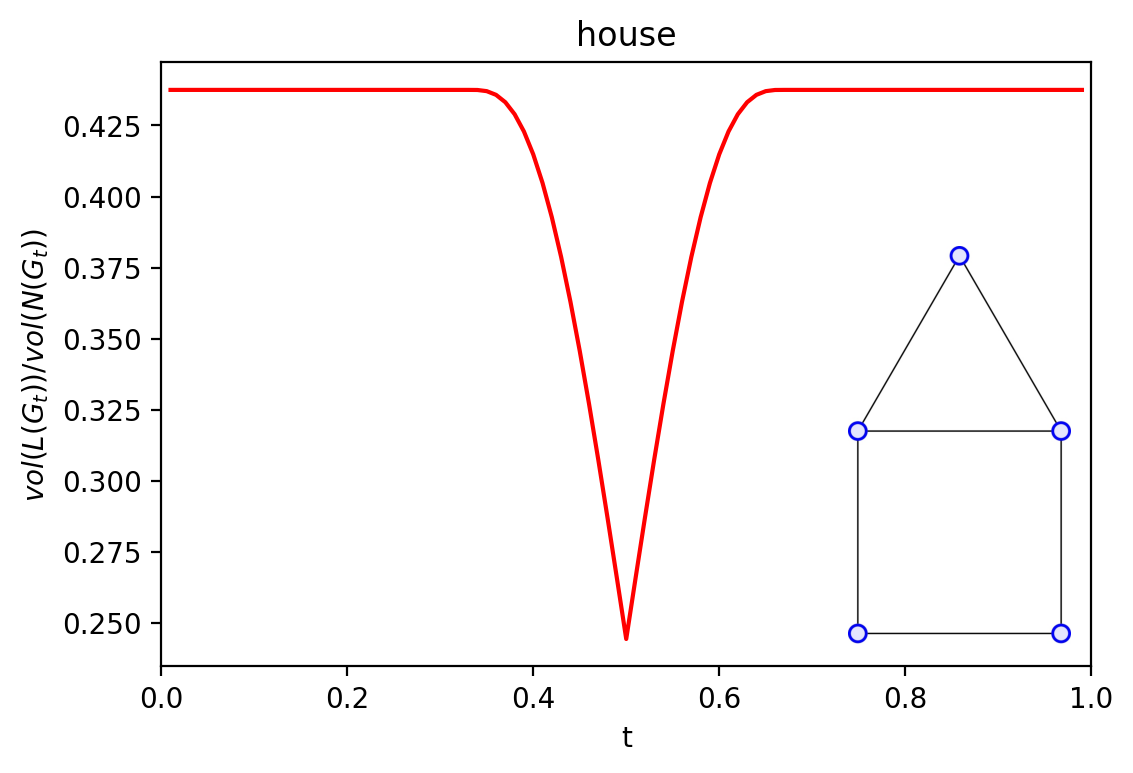}
\includegraphics[scale = 0.49]{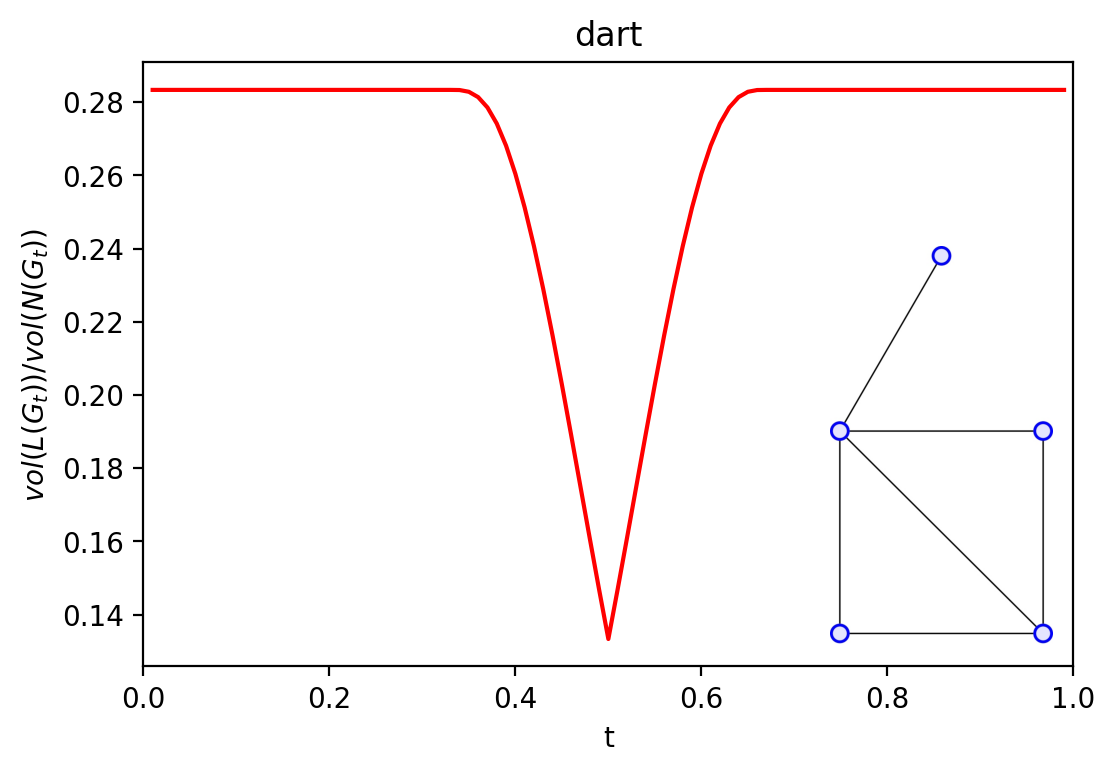}
\end{minipage}\hfill
\begin{minipage}[c]{.45\textwidth}
\includegraphics[scale = 0.49]{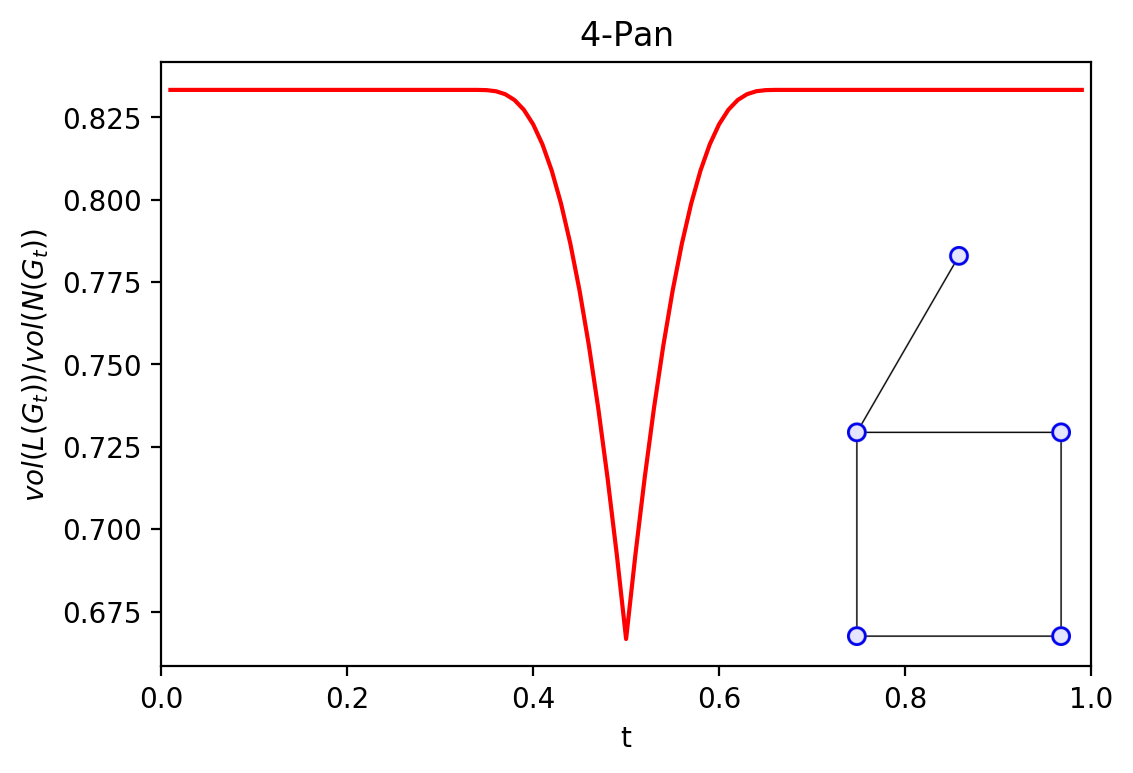}
\includegraphics[scale = 0.49]{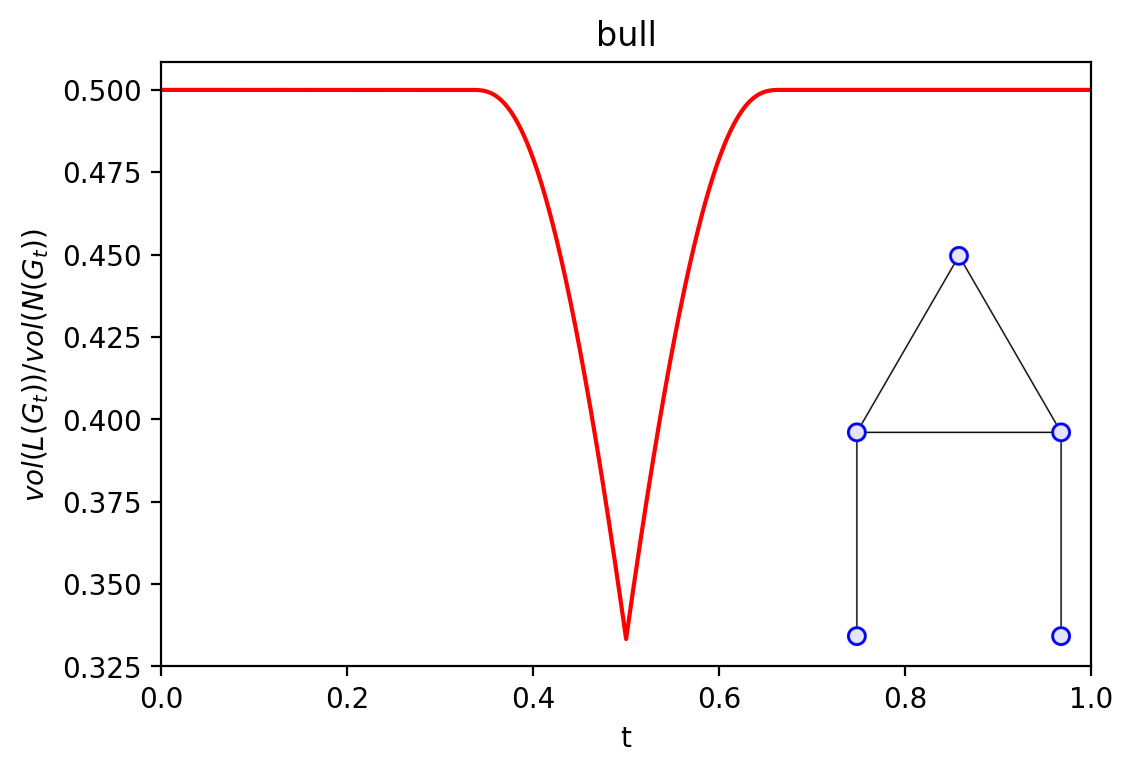}
\includegraphics[scale = 0.49]{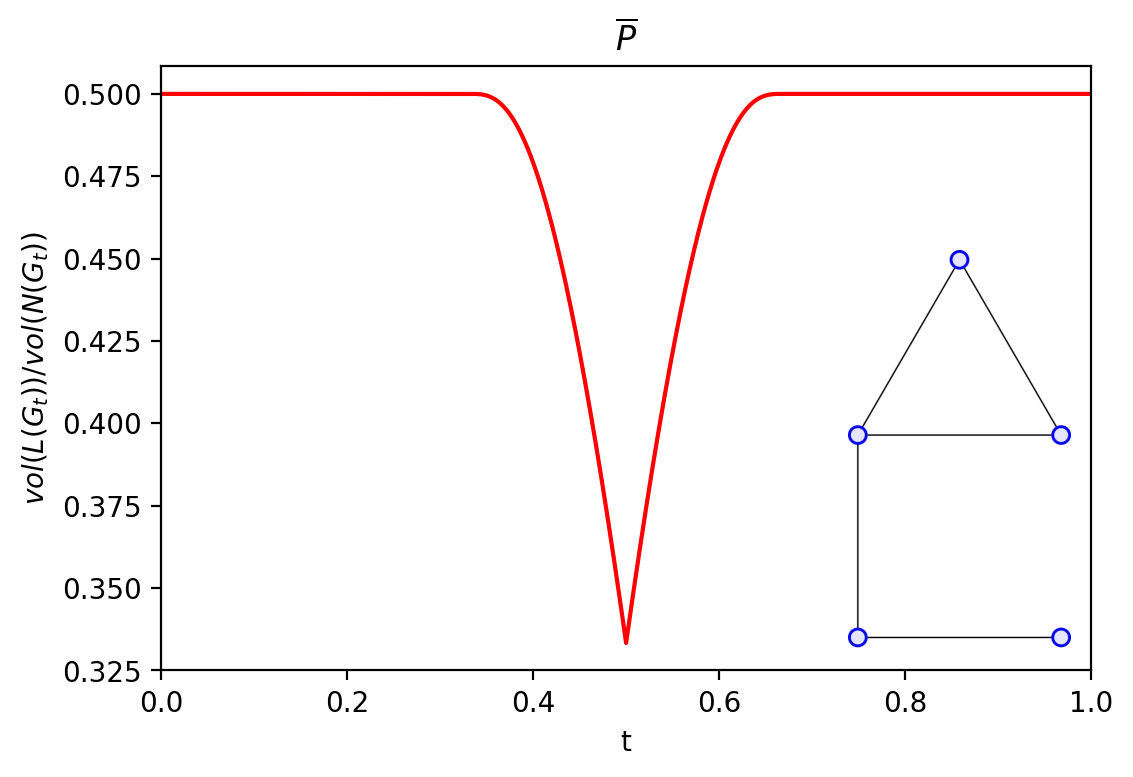}
\includegraphics[scale = 0.49]{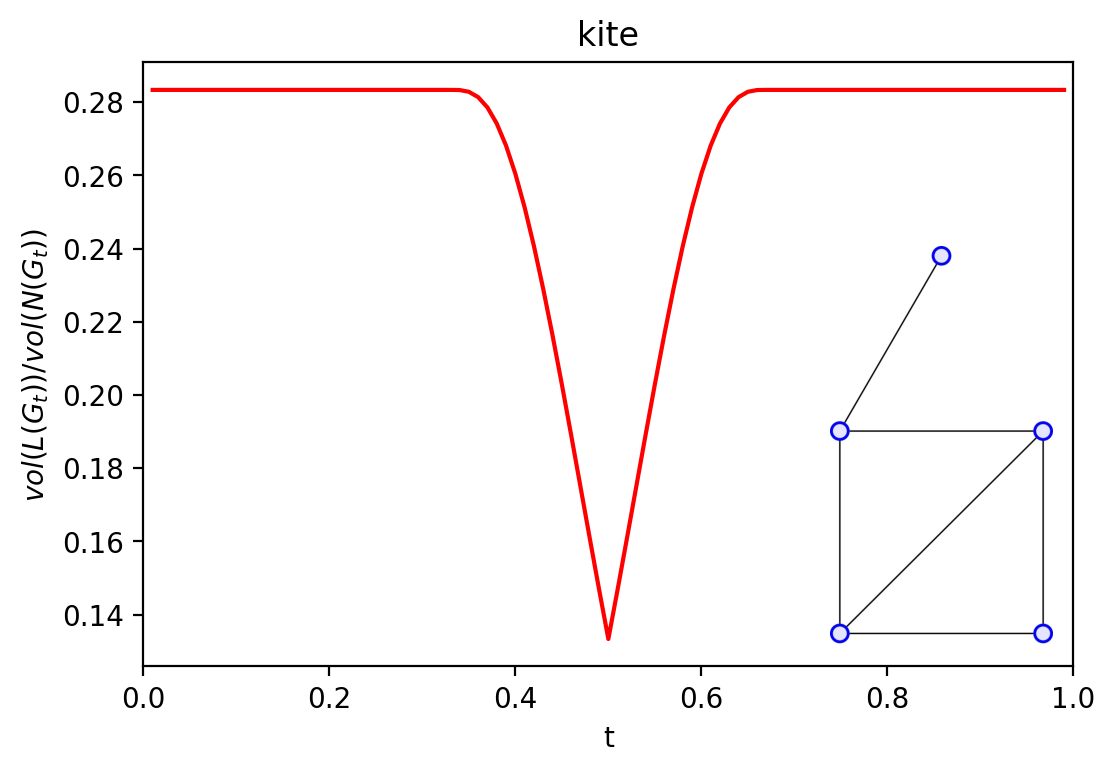}
\includegraphics[scale = 0.49]{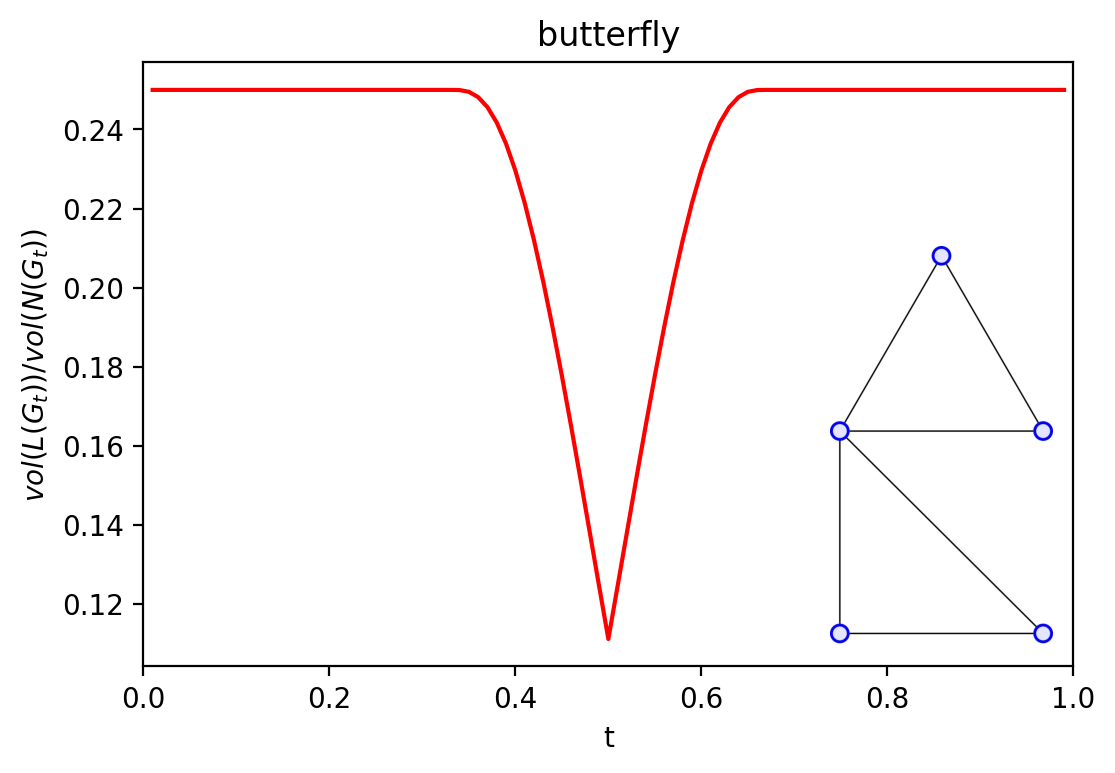}
\end{minipage}
\label{fig:5-vertices-1}
\end{figure}

\begin{figure}[htb]
\centering
\begin{minipage}[c]{.49\textwidth}
\includegraphics[scale = 0.49]{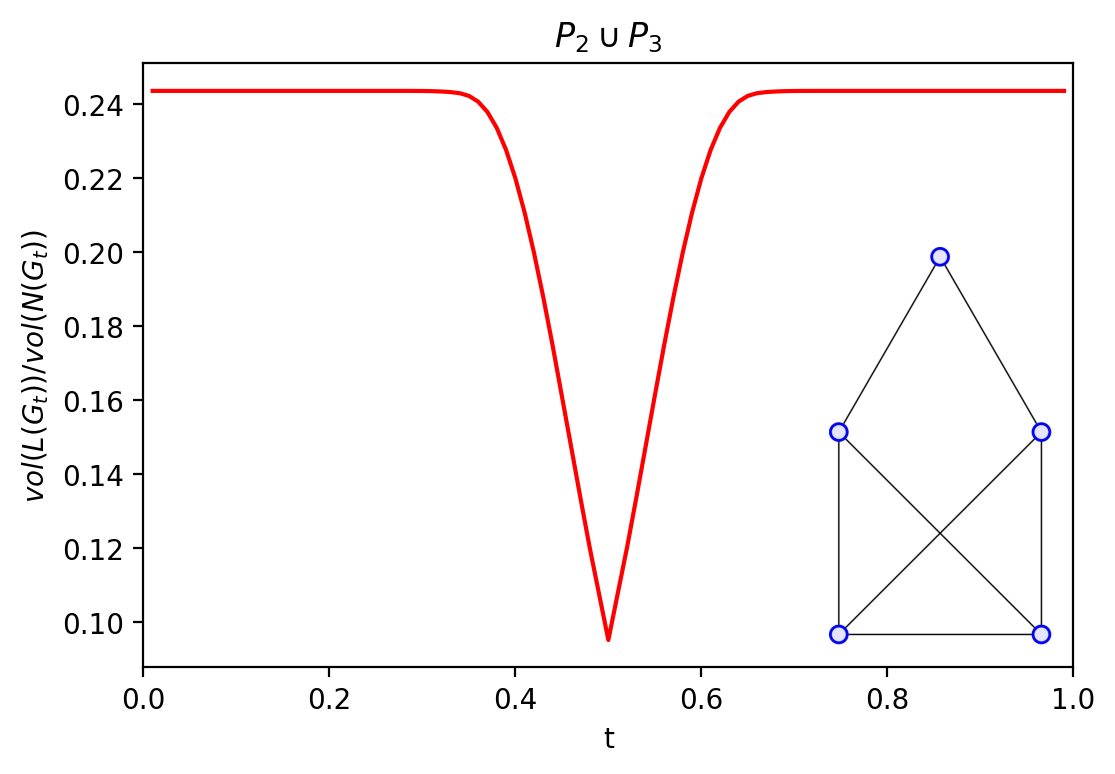}
\includegraphics[scale = 0.49]{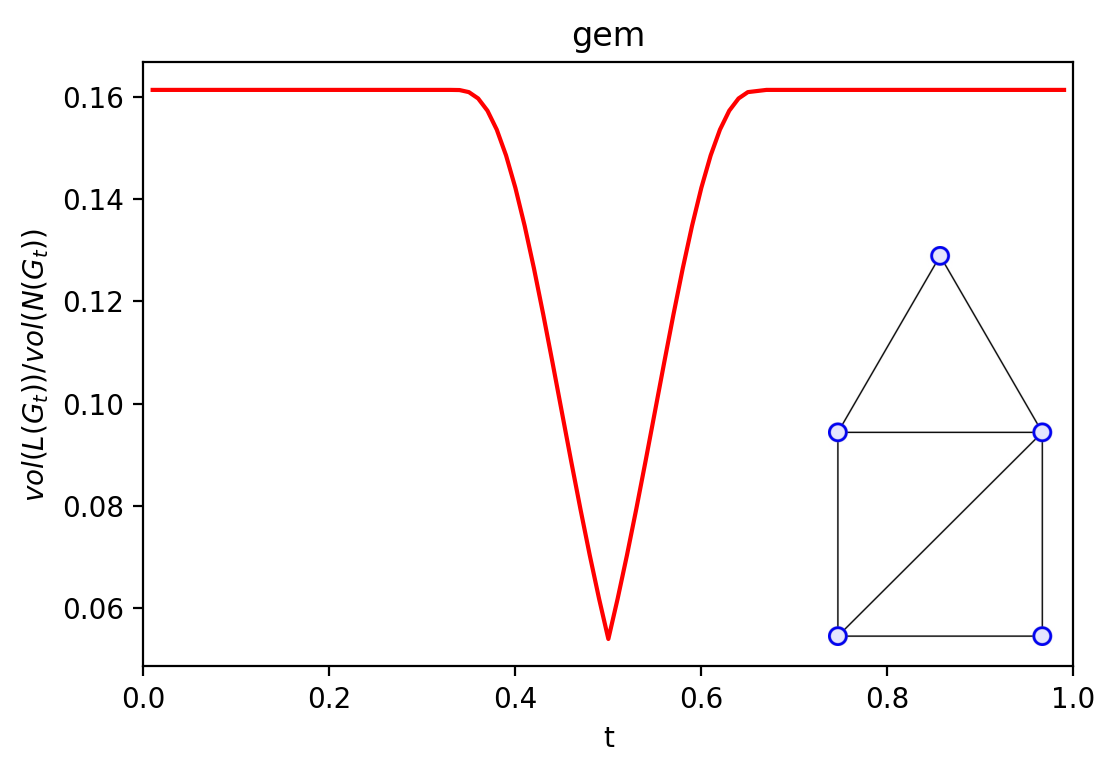}
\includegraphics[scale = 0.49]{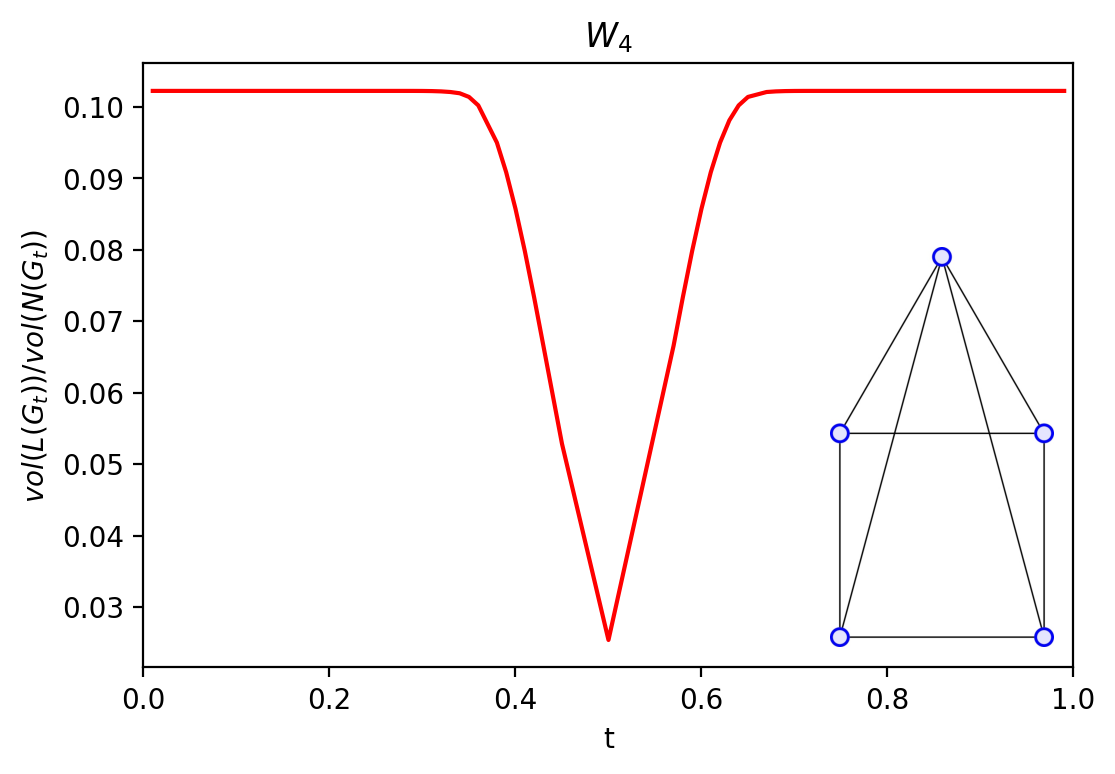}
\end{minipage}\hfill
\begin{minipage}[c]{.45\textwidth}
\includegraphics[scale = 0.49]{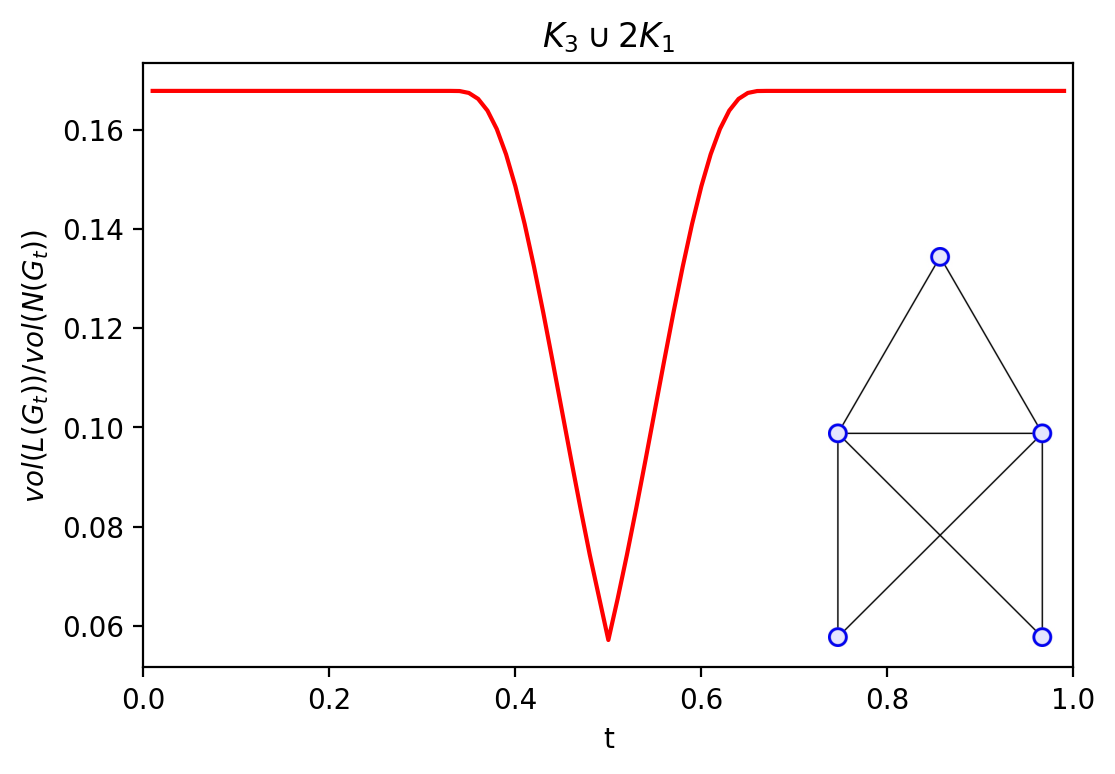}
\includegraphics[scale = 0.49]{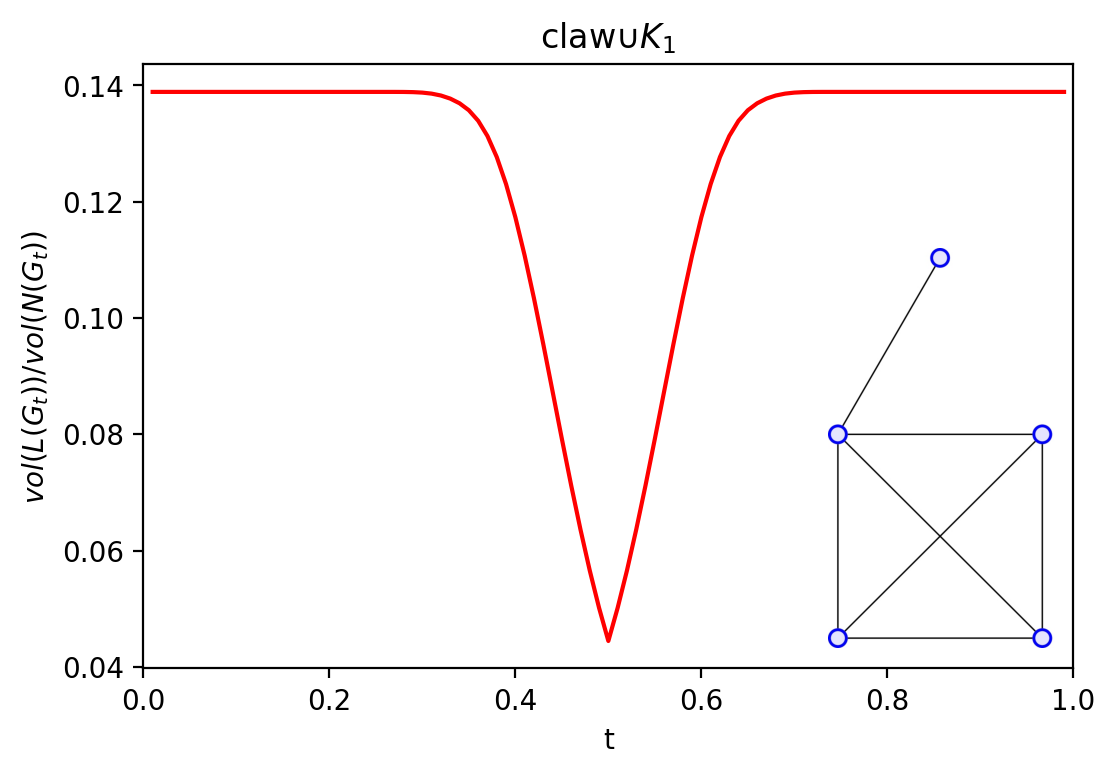}
\includegraphics[scale = 0.49]{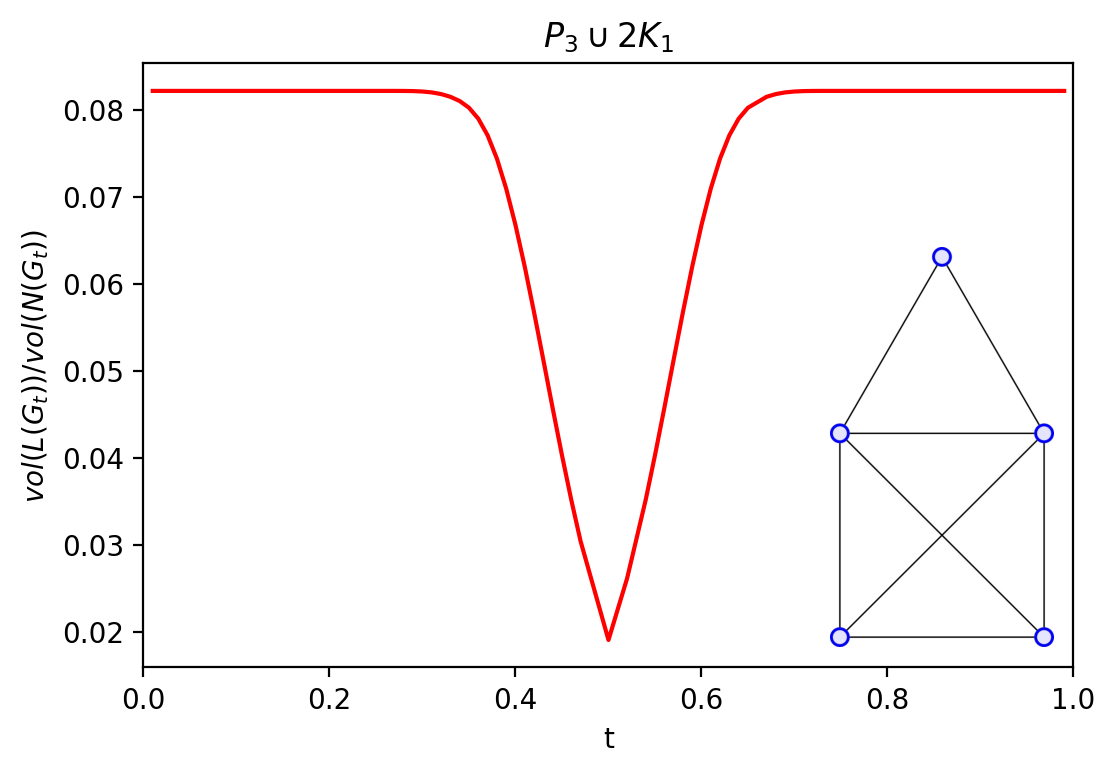}
\end{minipage}
\label{fig:5-vertices-2}
\caption{Plots of volume ratio of symmetric slices given by $\vol \mathsf L(p=(t,t,t,t,t), G)/\vol \mathsf N(p=(t,t,t,t,t), G)$ as a function of $t$ for different graphs with $5$ vertices. The graphs are displayed on the lower right corner of each plot. The plots for graphs $K_5-e$ and $K_5$ are excluded due to computational intractability.}
\end{figure}

\begin{table}[htb]
\begin{center}
\bgroup
\def\arraystretch{1.5}
\begin{tabular}{|c|c|c|c|c|}
\hline
\rowcolor[HTML]{C0C0C0} 
Graph $G$ & $\tw(G)$ & $\tau(G)$ & $\rho_{0+}(G)$ & $\rho_{1/2}(G)$\\ 
\hline
$C_5$ & 2 & $\frac{1}{3}$ & $\frac{23}{24}$ & $\frac{13}{15}$ \\
\hline
$4$-Pan & 2 & $\frac{1}{3}$ & $\frac{5}{6}$ & $\frac{2}{3}$\\
\hline
$K_{2, 3}$ & 2 & $\frac{1}{3}$ & $\frac{13}{20}$ & $\frac{2}{5}$\\
\hline
bull/cricket/$\bar P$ & 2 & $\frac{1}{3}$ & $\frac{1}{2}$ & $\frac{1}{3}$\\
\hline
house & 2 & $\frac{1}{3}$ & $\frac{7}{16}$ & $\frac{11}{45}$\\
\hline
kite/dart & 2 & $\frac{1}{3}$ & $\frac{17}{60}$ & $\frac{2}{15}$\\
\hline
butterfly & 2 & $\frac{1}{3}$ & $\frac{1}{4}$ & $\frac{1}{9}$\\
\hline
$P_2 \cup P_3$ & 3 & $\frac{1}{4}$ & $\frac{307}{1260}$ & $\frac{2}{21}$\\
\hline
$K_3 \cup 2K_1$ & 2 & $\frac{1}{3}$ & $\frac{47}{280}$ & $\frac{2}{35}$ \\
\hline
gem & 2 & $\frac{1}{3}$ & $\frac{271}{1680}$ & $\frac{17}{315}$\\
\hline
claw $\cup K_1$ & 3 & $\frac{1}{4}$ & $\frac{5}{36}$ & $\frac{2}{45}$\\
\hline
$W_4$ & 3 & $\frac{1}{4}$ & $\frac{229}{2240}$ & $\frac{8}{315}$\\
\hline
$P_3 \cup K_1$ & 3 & $\frac{1}{4}$ & $\frac{1657}{20160}$ & $\frac{2}{105}$\\
\hline
$K_5-e$ & 3 & $\frac{1}{4}$ & $\frac{1049}{24192}$ & $\frac{4}{567}$ \\
\hline
$K_5$ & 4 & $\frac{1}{5}$ & $\frac{14731}{725760}$ & $\frac{32}{14175}$\\
\hline
\end{tabular}
\egroup
\end{center}
\caption{List of parameters for graphs with 5 vertices: treewidth of the graph $\tw(G)$, the threshold $\tau(G)$ until which the volume ratio is constant, the initial (constant) volume ration $\rho_{0+}(G)$, the volume ration at $p=1/2$, $\rho_{1/2}(G)$. The volume computations are obtained using a Julia package \cite{kaluba2020polymake} for Polymake \cite{gawrilow2000polymake}.}
\label{tbl:5vertex}
\end{table}

Again we see from \cref{tbl:5vertex} that \cref{cnj:tw} holds for all graphs with 5 vertices.

\end{document}